\documentclass[11pt,letterpaper]{article}
\usepackage[margin=1in]{geometry}

\usepackage{amsmath,amsthm,amsfonts}
\allowdisplaybreaks[4]
\usepackage{amssymb}
\usepackage{thmtools}
\usepackage{thm-restate}
\usepackage{hyperref}
\usepackage{xspace}
\usepackage{url}
\usepackage{xcolor}
\usepackage{bm}
\usepackage[normalem]{ulem}

\usepackage[ruled,linesnumbered,noend]{algorithm2e}
\usepackage[page]{appendix}
\usepackage{enumitem}

\usepackage[capitalize,noabbrev]{cleveref}
\usepackage{booktabs}
\usepackage{graphicx}
\usepackage{subcaption}
\usepackage{comment}

\hypersetup{colorlinks=true,allcolors=blue}

\theoremstyle{plain}
\newtheorem{theorem}{Theorem}[section]
\newtheorem{lemma}[theorem]{Lemma}
\newtheorem{claim}[theorem]{Claim}
\newtheorem*{claim*}{Claim}
\newtheorem{corollary}[theorem]{Corollary}

\newtheorem{fact}[theorem]{Fact}

\crefname{lemma}{lemma}{lemmas}
\Crefname{lemma}{Lemma}{Lemmas}
\Crefname{claim}{Claim}{Claims}
\Crefname{fact}{Fact}{Facts}

\theoremstyle{definition}
\newtheorem{definition}[theorem]{Definition}
\newtheorem{example}[theorem]{Example}\newtheorem*{example*}{Example}

\theoremstyle{remark}

\newcommand{\ProblemName}[1]{\textsc{#1}}
\newcommand{\kMedian}{\ProblemName{$k$-Median}\xspace}
\newcommand{\kMeans}{\ProblemName{$k$-Means}\xspace}
\newcommand{\kzC}{\ProblemName{$(k,z)$-Clustering}\xspace}
\newcommand{\FkzC}{\ProblemName{Fractional $(k,z)$-Clustering}\xspace}
\newcommand{\kCenter}{\ProblemName{$k$-Center}\xspace}

\DeclareMathOperator{\supp}{supp}

\renewcommand{\tilde}[1]{\widetilde{#1}}

\def\fl{{\mathrm{fl}}}

\DeclareMathOperator{\poly}{poly}
\DeclareMathOperator{\polylog}{polylog}
\DeclareMathOperator{\OPT}{OPT}

\DeclareMathOperator{\dist}{dist}

\DeclareMathOperator*{\argmin}{argmin}

\DeclareMathOperator{\cost}{cost}

\newcommand{\R}{\mathbb{R}}
\newcommand{\Z}{\mathbb{Z}}
\newcommand{\asncost}{\sigma}

\newcommand{\LPufl}[1][\lambda]{\mathrm{LP}^{\mathrm{fl}}_{#1}}
\newcommand{\DPufl}[1][\lambda]{\mathrm{DP}^{\mathrm{fl}}_{#1}}

\newcommand{\OPTcl}{\smash{\mathrm{OPT}}^{\mathrm{cl}}_k}
\newcommand{\OPTclfr}{\smash{\widehat{\mathrm{OPT}}}^{\mathrm{cl}}_k}
\newcommand{\OPTfl}[1][\lambda]{\smash{\widehat{\mathrm{OPT}}}^{\mathrm{fl}}_{#1}}

\newcommand{\Ring}{R^{(\beta)}}
\newcommand{\Ball}[1][\beta]{B^{(#1)}}

\newcommand{\x}{x}
\newcommand{\Bx}{\bm{\x}}
\newcommand{\By}{\bm{y}}
\newcommand{\Br}{\bm{r}}
\newcommand{\Bv}{\bm{v}}

\newcommand{\xb}[1][\beta]{\x^{(#1)}}
\newcommand{\yb}[1][\beta]{y^{(#1)}}
\newcommand{\rb}[1][\beta]{r^{(#1)}}

\newcommand{\ralg}{\tilde{r}}

\newcommand{\yalg}{\tilde{y}}

\newcommand{\ytilde}{\tilde{y}}

\newcommand{\xlb}{\underline{\x}}
\newcommand{\xub}{\overline{\x}}
\newcommand{\xmd}{\tilde{\x}}

\newcommand{\Bxb}[1][\beta]{\Bx^{(#1)}}
\newcommand{\Byb}[1][\beta]{\By^{(#1)}}
\newcommand{\Brb}[1][\beta]{\Br^{(#1)}}
\newcommand{\Bvb}{\Bv^{(\beta)}}
\newcommand{\Bralg}{\tilde{\Br}}
\newcommand{\Byalg}{\tilde{\By}}
\newcommand{\Bytilde}{\bm{\ytilde}}

\newcommand{\MPbeta}{\mathrm{MP}_{\beta}}

\newcommand{\Bxlb}{\bm{\underline{\x}}}
\newcommand{\Bxub}{\bm{\overline{\x}}}
\newcommand{\Bxmd}{\bm{\tilde{\x}}}

\newcommand{\bardist}{\cost}
\newcommand{\hatdist}{\tilde{\cost}}

\newcommand{\calF}{\mathcal{F}}

\newcommand{\calC}{\mathcal{C}}

\newcommand{\calR}{\mathcal{R}}

\let\epsilon\varepsilon

\newcommand{\nn}{\widetilde{\mathrm{NN}}_{\Gamma}}

\newcommand{\alphaRS}{\alpha_{\calR}}

\newcommand{\LM}{\mathtt{LM}_{\calR}}
\newcommand{\GM}{\mathtt{GM}_{\calR}}
\newcommand{\Rd}{\mathtt{Rd}_{\calR}} 

\newcommand{\mr}[1]{\mbox{\footnotesize $\triangleright\;$#1}\quad} \DeclareSymbolFont{extraup}{U}{zavm}{m}{n}
\DeclareMathSymbol{\varheartsuit}{\mathalpha}{extraup}{86}
\DeclareMathSymbol{\vardiamondsuit}{\mathalpha}{extraup}{87}

\newcommand{\zeros}{\mathbf{0}}
\newcommand{\ones}{\mathbf{1}}

\title{Round-efficient Fully-scalable MPC algorithms for $k$-Means}
\date{}

\author{
    Shaofeng H.-C. Jiang\thanks{Peking University. Email: \texttt{shaofeng.jiang@pku.edu.cn}
    }
    \and 
    Yaonan Jin\thanks{Hong Kong University of Science and Technology. Email: \texttt{jinyaonan1996@gmail.com}}
    \and 
    Jianing Lou\thanks{Peking University. Email: \texttt{loujn@pku.edu.cn}}
    \and Weicheng Wang
    \thanks{Peking University. Email: \texttt{2300013053@stu.pku.edu.cn}}
}

\begin{document}
    \begin{titlepage}
        \maketitle
       \thispagestyle{empty}
        \begin{abstract}
    We study Euclidean \kMeans under the Massively Parallel Computation (MPC) model,
    focusing on the \emph{fully-scalable} setting.
Our main result is a fully-scalable $O((\log n/\log\log n)^2)$-approximation in $O(1)$ rounds.
    Previously, fully-scalable algorithms for \kMeans either run in super-constant $O(\log\log n \cdot \log\log\log n)$ rounds, albeit with a better $O(1)$-approximation [Cohen-Addad et al., SODA'26], or suffer from bicriteria guarantees [Bhaskara and Wijewardena, ICML'18; Czumaj et al., ICALP'24].
    Our algorithm also gives an $O(\log n/\log\log n)$-approximation for \kMedian,
    which improves a recent $O(\log n)$-approximation [Goranci et al., SODA'26], and this $o(\log n)$ ratio breaks the fundamental barrier of tree embedding methods used therein.

    Our main technical contribution is a new variant of the MP algorithm [Mettu and Plaxton, SICOMP'03] that works for general metrics, whose new guarantee is the Lagrangian Multiplier Preserving (LMP) property, which, importantly, holds even under arbitrary distance distortions. 
    Allowing distance distortion is crucial for efficient MPC implementations and useful for efficient algorithm design in general,
    whereas preserving the LMP property under distance distortion is known to be a significant technical challenge.

    As a byproduct of our techniques, we also obtain an $O(1)$-approximation to the optimal \emph{value} in $O(1)$ rounds, which conceptually suggests that achieving a true $O(1)$-approximation (for the solution) in $O(1)$ rounds may be a sensible goal for future study.
\end{abstract}
     \end{titlepage}

    \tableofcontents
    \thispagestyle{empty}
    \newpage
    \setcounter{page}{1}

    \section{Introduction}
\label{sec:intro}

The classical (Euclidean) \kMeans clustering problem was proposed in the 1950s~\cite{Lloyd82}, and has since become one of the most fundamental problems in data analysis and machine learning. 
The study of \kMeans, and more generally \kzC, has been a central combinatorial optimization problem in the design and analysis of algorithms; 
formally,
for a fixed $z \geq 1$, given as input an integer $k \ge 1$ and a dataset $P \subseteq \mathbb{R}^d$, \kzC asks for a set of $k$ centers $C \subseteq \R^d$ minimizing the objective
\begin{equation*}
    \cost(C) ~:=~ \sum_{p\in P} \min_{c\in C} \|p - c\|_2^z.
\end{equation*}
The \kzC problem captures both \kMeans (when $z = 2$) and another widely studied \kMedian problem (when $z = 1$).

We address the computational challenges of (Euclidean) \kzC on large-scale and high-dimensional datasets, focusing on the \emph{Massively Parallel Computation (MPC)} model. 
The MPC model was introduced in~\cite{KarloffSV10} and is known to serve as an abstraction of several practical parallel computing frameworks, including MapReduce~\cite{DeanG08}, Hadoop~\cite{0035689}, Spark~\cite{ZahariaCFSS10}, and Dryad~\cite{IsardBYBF07}.
In the general MPC model, $N$ input elements are distributed arbitrarily across a set of machines, each with sublinear local memory $s = o(N)$ to store elements and bits. 
The computation proceeds in \emph{synchronous rounds}. 
In each round, every machine may exchange messages with any other machine, subject to the constraint that the total number of elements and bits sent/received is at most $s$, and can perform arbitrary local computation. 
The output of the algorithm may be distributed across the machines.

We aim to devise MPC algorithms that achieve the following ideal efficiency goals: 
a)~\emph{fully-scalable}, i.e., they work with any (small) local memory $s = N^{\epsilon}$ for any constant $\epsilon \in (0,1)$; and 
b)~\emph{round-efficient}, specifically aiming for \emph{constant} round complexity.
Both of these ideal goals have been widely pursued in the study of MPC algorithms~\cite{AndoniNOY14, AndoniSSWZ18, AndoniSZ19, GhaffariU19, CzumajDP21, EpastoMMZ22, AhanchiAHKZ23, CoyC23, CzumajGJK024, ChangZ24, JayaramMNZ24, AzarmehrBJLMZ25, CzumajG0J25, Cohen-addadKP26, GoranciJKKQS26}, and achieving them simultaneously has often proven to be significantly difficult.
In our specific context of Euclidean clustering, we model each coordinate of a point in $\R^d$ as a storage element. 
Thus, the input is of size $O(nd)$, and a fully-scalable algorithm must work with local memory $s = (nd)^{\epsilon}$ for any $\epsilon \in (0,1)$.
This particularly allows $s = o(k)$, which means we cannot afford to store a full solution on any single machine, 
and also renders popular sketch-based techniques, such as coresets~\cite{Har-PeledM04}, less useful, 
as these methods inherently require local memory $\Omega(k)$~\cite{HuangV20, Cohen-AddadLSS22, Huang0024}.

Recent years have witnessed a surge of interest in designing fully-scalable algorithms for clustering problems~\cite{BhaskaraW18, BateniEFM21, Cohen-AddadLNSS21, Cohen-AddadMZ22, CoyCM23, CzumajGJK024, CzumajG0J25, Cohen-addadKP26, GoranciJKKQS26}.
Despite such progress, a significant research gap remains for \kMeans: no fully-scalable algorithm that approximates \kMeans is known to achieve ideal round efficiency (i.e., $O(1)$ rounds).
Indeed, existing $O(1)$-round (fully-scalable) algorithms either guarantee only a bi-criteria approximation (whose solution may use more than $k$ centers to attain the claimed ratio)~\cite{BhaskaraW18, CzumajGJK024}, or work only for special types of input~\cite{Cohen-AddadMZ22}; recently, \cite{Cohen-addadKP26} obtained a constant-approximation that works for general inputs, but this comes at the cost of a super-constant round complexity of $O(\log\log n \cdot \log\log\log n)$.

Technically, this research gap is attributed to fundamental barriers in existing frameworks.
In particular,
the algorithm of~\cite{Cohen-addadKP26} is based on the Jain-Vazirani framework~\cite{JainV01} where one of its crucial steps involves computing \emph{graph ruling sets}.
Unfortunately, graph problems (not just ruling sets) often possess intrinsic complexities that make them difficult to solve within constant rounds,
as evidenced by several (conditional) hardness results~\cite{GhaffariKU19, CzumajDP24}. 
This is the primary bottleneck behind the super-constant round complexity in \cite{Cohen-addadKP26}, and suggests that a more thorough exploitation of the Euclidean structure may be necessary to achieve constant rounds.
On the other hand, a series of works~\cite{Cohen-AddadLNSS21,AhanchiAHKZ23,GoranciJKKQS26} leverage \emph{tree embeddings}, 
which are a powerful tool in Euclidean spaces with applications beyond clustering, 
to achieve a $\polylog(n)$-approximation for \kMedian.
However, tree embeddings are well known to be ill-suited for handling higher powers of distances, and thus struggle to extend beyond \kMedian to general \kzC;
even if tree embeddings could handle \kzC, 
they still suffer from an inherent $\Omega(\log n)$ lower bound on distance distortion~\cite{AlonKPW95,Bartal98}, which prevents this approach from surpassing an $O(\log n)$ approximation ratio, even for \kMedian.
In sum, to bridge the gap in designing fully-scalable constant-round algorithms for Euclidean \kzC, it seems essential to seek a new framework that is more general and better exploits the Euclidean structure.

\subsection{Our Results}
\label{sec:results}

To address this gap, we propose a new framework for clustering that works for general metrics and can be implemented efficiently in Euclidean spaces for MPC, using known geometric primitives such as approximate range queries and metric ruling sets.
This leads to our main theorem (\Cref{thm:clustering-solution}): 
the first round-efficient fully-scalable algorithm for Euclidean \kzC that achieves a nontrivial approximation ratio,\footnote{
    While not explicitly mentioned in the literature, a trivial $\poly(n)$-approximation can be readily obtained by taking a known approximation for \kMedian~\cite{Cohen-AddadLNSS21, GoranciJKKQS26} or \kCenter~\cite{CzumajG0J25}.
    \label{footnote:trivial-bound}
}
and the ratio for \kMedian ($z = 1$) breaks the tree embedding barrier.
Our techniques also lead to an $O(1)$-approximation for estimating the \emph{value} of the optimal solution (\Cref{thm:clustering-value}). 
Although this only estimates the value, 
this $O(1)$-approximation is nevertheless the first of its kind in the literature,
and conceptually it may reveal that a true $O(1)$-approximation (for the solution) is a sensible goal for future study.

We state our main result in \Cref{thm:clustering-solution}.
Our results are presented assuming $d = O(\log n)$ and that the aspect ratio of the dataset is bounded by $\poly(n)$ for simplicity (e.g., a $2^{O(\epsilon d)}$ factor becomes $n^{O(\epsilon)}$); see \Cref{thm:clustering-solution-formal} for the full statement.
Both assumptions are without loss of generality, as they can be ensured by an $O(1)$-round fully-scalable preprocessing; see \Cref{sec:aspect-ratio} for further discussion.

\begin{theorem}[Implied by \Cref{thm:clustering-solution-formal}]
    \label{thm:clustering-solution}
    For any constant $\epsilon\in (0,1)$,
    there exists an MPC algorithm for \kzC that,
    for any $n$-point dataset from $\R^{O(\log n)}$ distributed across machines with local memory $s \ge \polylog n$,
    computes an $O_{\epsilon}((\frac{\log n}{\log\log n})^z)$-approximation
    with high probability.\footnote{
        We use the notation $O_{\epsilon}(\cdot)$ to hide factors of $\epsilon^{-\poly(z)}$; in the context of \kzC, the parameter $z$ is typically considered as a fixed constant.
    }
The algorithm runs in $O(\log_s n)$ rounds and uses total memory $O(n^{1+\epsilon}\cdot \polylog n)$.
\end{theorem}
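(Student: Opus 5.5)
The proof goes through the Lagrangian relaxation / facility location route suggested in the abstract. It has four steps: an LMP guarantee for a variant of the MP algorithm that survives distance distortion, a Lagrangian-based rounding to exactly $k$ centers, an MPC implementation of both in Euclidean space, and finally the parameter choice that gives $\log n/\log\log n$.

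\textbf{Step 1: the LMP primitive.} For a facility cost $\lambda$ I would define the uniform facility location problem with $z$-th power connection costs, together with its LP relaxation $\LPufl$. I would then design an MP-style algorithm on general metrics, in the spirit of Mettu and Plaxton. For each point $p$ it computes a radius $r_p$ satisfying $\sum_{q\in B(p,r_p)}(r_p-\dist(p,q))^z\approx\lambda$. It then scans points in increasing order of radius and opens $p$ unless an already-open facility lies within $O(r_p)$.

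The goal is the LMP bound
\[
\cost_{\mathrm{conn}} + \gamma\,\lambda\,|F| \le \gamma\cdot \OPTfl,
\]
holding even when every distance is only known up to a multiplicative distortion $\beta$. In that case $\gamma$ depends on $\beta$, roughly $\gamma=\beta^{O(z)}$.

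To get this I would use dual fitting. Setting $\alpha_p\approx r_p^z$ gives a dual solution that is feasible up to $\beta^{O(z)}$. Opened facilities have disjoint balls, so each open facility is paid for by the $\alpha$-budget of its ball, and this budget is charged exactly once. That exact-once charging is what produces the LMP property rather than a plain $O(1)$ bound.

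\textbf{Step 2: from facility location to exactly $k$ centers.} I would binary search over $\lambda$ to find $\lambda_1<\lambda_2$ that are close, with $|F_1|\ge k\ge|F_2|$. The LMP property bounds the convex combination of the two solutions by $\gamma\cdot\OPTcl$.

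Rounding the bi-point solution to $k$ centers normally loses a constant via the Jain–Vazirani argument. Here I would replace that argument by a sampling/matching step that is implementable in MPC: match each $F_2$ facility to its approximate nearest neighbour in $F_1$, then randomly choose which side to open. The expected cost loss is $O(\beta)^{z}$.

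The $\lambda$-search can be done in parallel. I would try $O(\log n)$ geometrically spaced values of $\lambda$ in a single round batch, at a cost of a $\log n$ factor in total memory.

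\textbf{Step 3: MPC implementation.} In Euclidean space I would compute the radii with approximate range counting via LSH-style bucketing. The greedy ``open unless covered'' rule I would replace by a metric ruling set computed level by level over geometrically grouped radii. This runs in $O(\log_s n)$ rounds and uses $n^{1+\epsilon}$ memory.

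Each geometric primitive in dimension $O(\log n)$ distorts distances by $\beta=O(1/\epsilon)$, or by $n^{\epsilon}$ in the cheapest version.

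\textbf{Step 4: parameters and the main obstacle.} The crux, and the step I expect to be hardest, is that the rounds of the ruling-set hierarchy compose. A naive analysis compounds $\beta$ over $L$ levels. To avoid this I would choose the grouping of radii with base $\log n$ rather than $2$, giving $L=\log n/\log\log n$ levels. I would then argue that each point's error accumulates additively over the levels, not multiplicatively, so its total distance loss is $O(L)$. Raised to the $z$-th power this yields the stated $O_\epsilon((\log n/\log\log n)^z)$ ratio.

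Proving this additive accumulation while keeping the LMP inequality exact is the delicate part. The plan is to charge each closed point's distance to the nearest open facility through a chain of at most $L$ covering steps. Each step is bounded by the current radius, and the radii grow geometrically with ratio $\log n$, so the chain sum is dominated by $O(L)$ times the point's own radius scale.
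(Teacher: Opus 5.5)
The decisive problem is your Step~4, and it also exposes an unjustified premise in Step~3. The factor $\log n/\log\log n$ in the theorem does not come from the number of radius levels. It is the covering ratio $\alphaRS=O(\epsilon^{-1}\log n/\log\log n)$ of the state-of-the-art fully-scalable Euclidean ruling-set algorithm \cite[Lemma~6.1]{CzumajG0J25}, and the paper's bound is $\alphaRS^{z}\Gamma^{O(z)}$ with $\Gamma=\Theta(1/\epsilon)$ (see \Cref{lem:rounding}). Range queries and ANN are available with distortion $O(1/\epsilon)$ at $n^{O(\epsilon)}$ memory overhead in dimension $O(\log n)$. No ruling-set algorithm with that guarantee is known; if one were, the same pipeline would give an $O_\epsilon(1)$-approximation.

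Your replacement argument also fails on its own terms. If consecutive scales along a covering chain differ by a factor $\log n$, the distances along the chain sum to $O(1)$ times the largest one, so no factor $L=\log n/\log\log n$ appears. An $L$-fold sum would require every step to be comparable to the point's own scale, which geometric grouping rules out. Meanwhile, bucketing radii with base $\log n$ itself costs a factor $\log n$. If the ruling-set threshold is set by the top of a bucket, a point near the bottom is covered only within $\log n$ times its own scale. If it is set by the bottom, the balls of opened facilities are no longer disjoint, and your Step~1 charging needs disjointness.

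So Step~4 does not yield the claimed ratio from the primitives you assume. The correct accounting (cf.\ \Cref{claim:dist-to-nnR}) uses buckets of base $2\Gamma$. It removes every point within $\tau_\ell$ of a lower-level point before computing the level-$\ell$ ruling set. The induction through lower levels is then a geometric series costing only $\Gamma^{O(1)}$ on top of $\alphaRS$, and the number of levels never enters the ratio.

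Step~1 asserts, rather than proves, what the paper identifies as the main difficulty. Charging each open facility to its own disjoint ball is the ordinary, non-LMP Mettu--Plaxton argument. The known LMP analyses \cite{ArcherRS03,BhattacharyaCGL24} may already fail under tiny distortion. The two-point example in the introduction shows that a solution that is LMP for $\tilde{\dist}$ need not be LMP for $\dist$. With parallel per-level ruling sets instead of the sorted greedy, you would also have to control clients in an open facility's ball whose own radius is much smaller than the facility's. The greedy order provides exactly this for free.

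The paper avoids the integral problem altogether. The output of $\MPbeta$ on $\tilde{\dist}$ is sandwiched as $r^{(\beta)}_p\le\tilde r_p\le r^{(\beta^*)}_p$ on the true metric. The local-density bound (\Cref{lem:properties-rp-localdensity}) plus the reassignment bound (\Cref{claim:additive-error}) then give $\cost(\Byalg)\le 2^{O(z^2)}\beta^{-1}\bigl(\cost(\By^{(\beta^*)})+\lambda(\|\By^{(\beta^*)}\|_1-\|\Byalg\|_1)\bigr)$. This suffices because $\By^{(\beta^*)}$ is itself LMP. The Lagrangian step is then just a convex combination of fractional solutions. Integrality comes from a separate CGTS-style rounding built on per-level ruling sets.

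Finally, your Step~2 as described omits opening a uniformly random $(k-|F_2|)$-subset of the unmatched $F_1$ facilities. Without it, a client whose $F_1$ facility is unmatched pays its full distance to $F_2$, even when $F_2$ carries a tiny weight in the convex combination. The cost is then not bounded by the bi-point combination.
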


Our algorithm in \Cref{thm:clustering-solution} works with polylogarithmic local memory and thus directly applies to the fully-scalable regime where $s \ge n^{\epsilon'}$ for any constant $\epsilon' \in (0,1)$. In this case, the round complexity is $O(\log_s n) = O(1/\epsilon')$, which is constant as desired.
Even in the stated polylogarithmic regime $s \ge \polylog n$, our round complexity of $O(\log_s n)$ matches the known lower bound~\cite{BhaskaraW18,RoughgardenVW18} up to a constant factor.
It is also worth emphasizing that the $O_{\epsilon}((\frac{\log n}{\log\log n})^z)$ ratio arises from the use of the state-of-the-art MPC Euclidean ruling-set algorithm~\cite[Lemma 6.1]{CzumajG0J25}; therefore, any future improvement of the latter would directly translate into a better ratio in \Cref{thm:clustering-solution} (see \Cref{sec:rounding} for further details).
As a byproduct, by plugging in an alternative ruling-set algorithm from~\cite[Lemma 4.1]{CzumajG0J25}, we can obtain an $O(1)$-approximation, albeit only fully-scalable when the dimension is small (see \Cref{thm:clustering-solution-lowdim-formal}).

Compared with previous $O(1)$-round fully-scalable algorithms, 
(i) our \Cref{thm:clustering-solution} provides the first nontrivial approximation for \kzC, whereas prior work relied on bi-criteria guarantees~\cite{BhaskaraW18, CzumajGJK024} or input assumptions~\cite{Cohen-AddadMZ22}; and 
(ii) our ratio of $O(\log n / \log \log n)$ for the special \kMedian case ($z=1$) breaks the fundamental $\Omega(\log n)$ tree-embedding barrier~\cite{FakcharoenpholRT04}, thereby automatically improving upon tree-embedding-based algorithms~\cite{Cohen-AddadLNSS21, GoranciJKKQS26}.
Compared with the recent work of~\cite{Cohen-addadKP26} for \kMeans, our result offers an alternative ratio-round tradeoff: it achieves a worse approximation ratio than their $O(1)$-approximation, but is more round-efficient than their super-constant $O(\log \log n \cdot \log \log \log n)$ round complexity.

\paragraph{Value Estimation.}
Our framework also yields an $O(1)$-approximation for estimating the \emph{value} of the optimal solution, with round and space complexity similar to that in \Cref{thm:clustering-solution}; this is not achieved by~\cite{Cohen-addadKP26} or any prior MPC algorithms.

\begin{theorem}[Implied by \Cref{thm:clustering-value-formal}]\label{thm:clustering-value}
    For any constant $\epsilon\in (0,1)$, there exists an MPC algorithm that, for any $n$-point dataset from $\R^{O(\log n)}$ distributed across machines with local memory $s \ge \polylog(n)$, computes a value that is an $O_{\epsilon}(1)$-approximation to the optimal \kzC objective
    with high probability.
    The algorithm runs in $O(\log_s n)$ rounds and uses total memory $O(n^{1+\epsilon}\cdot \polylog n)$.
\end{theorem}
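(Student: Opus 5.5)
We plan to estimate the optimal value through the Lagrangian relaxation of \kzC into facility location, together with an LMP algorithm that tolerates distance distortion, without ever rounding to an integral $k$-center solution. Rounding is exactly the step where the ruling-set losses of \Cref{thm:clustering-solution} enter; we will bypass it by reading the answer off the fractional side.

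\textbf{Step 1: the Lagrangian guarantee.} For an opening cost $\lambda$, we take the MP-style facility-location procedure, run in $\R^{O(\log n)}$ with approximate range queries. It returns a set $F_\lambda$ with
\[
\cost(F_\lambda)+c\,\lambda|F_\lambda| \le O_\epsilon(1)\cdot \OPTfl,
\]
where $c=O_\epsilon(1)$ is the LMP constant. Since this property survives arbitrary bounded distortion, distances in the algorithm may be replaced by their $n^{O(\epsilon)}$- or $O(1)$-distorted surrogates. Here we use the $O(1)$-distortion primitives, because the dimension reduction to $d=O(\log n)$ costs only constant factors.

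\textbf{Step 2: parallel search over $\lambda$.} We run the procedure in parallel for every $\lambda$ in a geometric grid with ratio $(1+\delta)$ over the $\poly(n)$ aspect-ratio range. This gives $O(\log n)$ copies, so rounds are unchanged and memory grows by only a $\polylog$ factor. By monotonicity, at the extremes $|F_\lambda|$ goes from $n$ down to $1$. Let $\lambda_1<\lambda_2$ be adjacent grid points with $|F_{\lambda_1}|>k\ge |F_{\lambda_2}|$. The estimate we output is a suitable combination of $\cost(F_{\lambda_i})$ and $\lambda_i$. Concretely, we take $\cost(F_{\lambda_2})$, which is already an upper bound on $\OPTcl$ because $|F_{\lambda_2}|\le k$, and compare it against the lower bound $\max_\lambda\bigl(\cost(F_\lambda)+c\lambda|F_\lambda|-c\lambda k\bigr)/O(1)\le \OPTcl$. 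The second expression is valid by weak duality: $\OPTfl\le \OPTcl+\lambda k$. All the quantities involved are sums, so they can be aggregated in $O(\log_s n)$ rounds.

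\textbf{Step 3: closing the gap.} The main obstacle is the jump between $\lambda_1$ and $\lambda_2$, since $|F_{\lambda_1}|$ may be far larger than $k$. In Jain--Vazirani this is handled by a convex combination of the two solutions. For the value, however, the combination need not be rounded; only its cost must be bounded. We therefore claim that the fractional solution $a F_{\lambda_1}+b F_{\lambda_2}$ with $a|F_{\lambda_1}|+b|F_{\lambda_2}|=k$ has
\[
a\cost(F_{\lambda_1})+b\cost(F_{\lambda_2})\le O(1)\cdot \OPTcl .
\]
This follows by adding the two LMP inequalities, using $\lambda_2\le(1+\delta)\lambda_1$ and choosing $\delta$ small relative to $c$.

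It remains to relate this fractional value to the integral optimum within $O(1)$. This holds because the integrality gap of the standard \kzC LP in metrics is $O(1)$ (for $z$ fixed). The $O(1)$-factor LP rounding is used only in the analysis and never executed, so no ruling set is needed. Hence the computable quantity $V:=a\cost(F_{\lambda_1})+b\cost(F_{\lambda_2})$, with $a,b$ obtained by solving two scalar equations on one machine, satisfies $\Omega(1)\cdot \OPTcl\le V\le O_\epsilon(1)\cdot\OPTcl$. The delicate point to verify is that the LMP constant and the distortion in Step 1 do not degrade the lower-bound direction. For that direction, we plan to use the dual solution that the MP variant certifies, scaled down by the distortion factor.
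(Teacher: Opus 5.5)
Your skeleton (parallel Lagrangian search over $\lambda$, convex combination at the crossing, integrality gap, no rounding) is the paper's. Your Step~3 cancellation is also sound. With the same constant $c$ on both sides of the LMP inequality, the leftover term is $c(\lambda_2-\lambda_1)\,a(|F_{\lambda_1}|-k)$. The LMP inequality at $\lambda_1$ alone gives $\lambda_1(|F_{\lambda_1}|-k)\le\OPTcl$, so any constant grid ratio works (the paper uses $2^z$); $\delta$ need not be small. Step~1 must read $\cost(F_\lambda)+c\lambda|F_\lambda|\le c\cdot\OPTfl$ with the \emph{same} $c$; with a larger constant on the right, the $\lambda k$ terms no longer cancel.

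The genuine gap is that Step~1, which carries the whole proof, is asserted for an object where it is not available. You posit an \emph{integral} set $F_\lambda$ whose LMP guarantee survives bounded distortion, and nothing like this is known. The two-point example in the introduction shows that an LMP solution computed on $(1+\epsilon)$-distorted distances can fail to be LMP with any finite ratio, and the known LMP variants of MP are not known to be robust.

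What the paper proves is a statement about one specific algorithm with a \emph{fractional} output. This is \Cref{lem:FMP-distance-distortion}, established via the sandwich $\Brb\le\Bralg\le\Brb[\beta^{*}]$, the local-density lemma, and the reassignment bound of \Cref{claim:additive-error}. It says that $\Byalg=\Bralg/\lambda$, produced by $\MPbeta$ with $\beta=2^{-z-1}\Gamma^{-2z}$ on distorted distances, is LMP $2^{O(z^2)}\Gamma^{2z}$ for the true metric; its MPC form is \Cref{thm:lmp-fl}. Since your Step~3 already treats $aF_{\lambda_1}+bF_{\lambda_2}$ as fractional, you should work with fractional $\By^{(\lambda)}$ throughout. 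The endpoints must then be argued directly, not by ``monotonicity'' (which is neither known nor needed). Take $\By^{(0)}=\ones$ at $\lambda_0=1$. At the top, LMP only gives $\|\By^{(L)}\|_1\le 2$, so $k=1$ needs a separate argument.

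Switching to fractional solutions exposes a second missing piece. $\cost(\By)$ for fractional $\By$ is not something you can simply aggregate, because $\cost(p,\By)$ depends on how much opening mass sits at each distance scale around $p$. The paper evaluates it with the range-query estimator of \Cref{lem:cost-estimation} (\Cref{alg:value}), which gives $\cost(p,\By)\le\hatdist(p,\By)\le\Gamma^{2z}\cost(p,\By)$, and outputs $2^{O(z)}\sum_p\hatdist(p,\By)$.

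You also have the delicate direction reversed. The lower bound is immediate: $a\By^{(\lambda_1)}+b\By^{(\lambda_2)}$ is a feasible fractional solution of mass $k$, so $V\ge\OPTclfr\ge 2^{-O(z)}\OPTcl$ by \Cref{lem:integrality-gap}. The estimator only overestimates, so no dual certificate is needed. Distortion matters only for the upper bound, and there it is absorbed entirely by the robust LMP lemma.

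Finally, the constant distortion must be $\Gamma=\Theta(1/\epsilon)$, so that the $2^{O(d/\Gamma)}$ memory overhead of the range queries is $n^{O(\epsilon)}$ when $d=O(\log n)$. This tradeoff, not the dimension reduction, is what produces the $O_\epsilon(1)=2^{O(z^2)}\Gamma^{4z}$ ratio.
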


For \kzC, value estimation is already nontrivial.
This stands in contrast to other fundamental problems such as \textsc{Facility Location} and \textsc{Minimum Spanning Tree}, which admit efficient MPC value estimation by adapting (linear-sketch-based) streaming algorithms~\cite{CzumajJK0Y22,ChenCJLW23} (and this reduction was also observed in e.g.,~\cite{CzumajGJK024,JayaramMNZ24}). However, for \kzC, this approach is unlikely to yield fully-scalable algorithms,
since any streaming algorithm with finite approximation ratio requires $\Omega(k)$ space, even to differentiate whether the optimal is $0$ or not.

Conceptually, our \Cref{thm:clustering-solution,thm:clustering-value} reveal a potential gap between value estimation and finding approximate solutions.
Such a ``value v.s.\ solution'' gap is also seen in various settings for several related fundamental problems;
notable examples include \textsc{Facility Location} in the offline setting~\cite{GoelIV01,BadoiuCIS05},
as well as \textsc{Minimum Spanning Tree} in both the offline setting~\cite{siamcomp/CzumajS09,toc/Har-PeledIM12}
and the MPC setting~\cite{ChenCJLW23,JayaramMNZ24}.
Therefore, bridging this gap for clustering in MPC is an interesting open question.

\subsection{Technical Overview}
\label{sec:tech_overview}

Our MPC algorithm is based on a new offline framework for clustering that works for \emph{general metrics}, 
and can be implemented in MPC via three subroutines (apart from basic operations such as sorting and broadcasting): 
(a) a metric ruling set algorithm, 
(b) an approximate range query algorithm, and 
(c) an approximate nearest neighbor algorithm.
Our main results are then obtained by plugging in existing round-efficient algorithms for these subroutines in Euclidean spaces~\cite{CzumajGJK024,CzumajG0J25}. 
Moreover, any further improvements in these subroutines, in particular for metric ruling sets, would directly lead to improved bounds for our algorithms.

At a high level, this framework consists of two steps: it first computes a \emph{fractional} clustering solution, and then applies a \emph{rounding} procedure to convert it into an integral solution (i.e., a set of centers).

\paragraph{Fractional Algorithm.}
The algorithm that constructs a fractional solution is our main technical contribution, and it immediately implies \Cref{thm:clustering-value} for $O(1)$-approximate value estimation.
The key ingredient of this fractional algorithm is a new variant of the Mettu-Plaxton (MP) algorithm~\cite{MettuP03}, 
which has been widely adapted to various sublinear settings for clustering~\cite{ArcherRS03, BadoiuCIS05, GehweilerLS14, CzumajJK0Y22, BhattacharyaCGL24, BhattacharyaGJQ24, CzumajGJK024}.
As usual, our variant of MP aims to solve the so-called (power-$z$) facility location problem -- a (Lagrangian) relaxation of \kzC,
but importantly, it additionally satisfies the \emph{Lagrangian Multiplier Preserving (LMP)} property for all $z \ge 1$,
even on arbitrarily distorted distances which may not form a metric (not even symmetric); see \Cref{alg:robust-MP} and \Cref{lem:FMP-distance-distortion}.
Indeed,
LMP is a widely considered property that turns approximations to power-$z$ facility location to \kzC via the so-called bi-point rounding procedure~\cite{JainV01, JainMS02, JainMMSV03, LiS16, ByrkaPRST17, Cohen-Addad0LS23, GowdaPST23, BhattacharyaCGL24}.
Moreover, being robust to distance distortion is a natural yet useful requirement,
as it is a common strategy to introduce distance distortions to trade for efficiency.
More specific to our MPC Euclidean setting, this allows us to achieve a round-efficient MPC implementation via an approximate range query~\cite{CzumajGJK024} subroutine (which distorts the distance).

However, 
while certain variants of MP are known to possess LMP~\cite{ArcherRS03,BhattacharyaCGL24}, they may not hold even with tiny distance distortions.
In fact, handling distorted distances while preserving LMP is a general challenge that entails significant technical difficulties, as we discuss below.

\paragraph{Challenges of Preserving LMP under Distance Distortion.}
We begin with some background.
Let us focus the discussion on the (integral) power-$z$ facility location problem,
since this already captures the main challenges of the fractional case.
In power-$z$ facility location, instead of imposing a strict bound $k$ on the number of centers, we are given a parameter $\lambda > 0$ as the opening cost per center,
and the goal is to find a set $F \subseteq P$ minimizing $\lambda |F| + \cost(F)$,
where $\lambda |F|$ is called the opening cost and $\cost(F)$ (identical to the \kzC cost) is called the connection cost.
We say that a solution $F$ is an LMP $\alpha$-approximation for some $\alpha \geq 1$ if
$\cost(F) + \alpha \lambda |F| \le \alpha \cdot \OPT^{\fl}$,
where $\OPT^{\fl}$ denotes the optimal value of the facility location problem. This is clearly a stronger guarantee than a standard $\alpha$-approximation.

While for non-LMP approximations to (power-$z$) facility location, an $\alpha$-factor distortion in distances often translates to an $O(\alpha^z)$ factor in the final approximation ratio,
the stronger LMP property is more fragile: even a $1+\epsilon$ error can cause the resulting solution to be no longer LMP.
Specifically, consider an LMP $\alpha$-approximate solution $F$ computed using the $(1+\epsilon)$-distorted distances, which means
\begin{equation}
\label{eqn:lmp_prime}
\widetilde{\cost}(F) + \alpha \lambda |F| \le \alpha \smash{\widetilde{\OPT}}^{\fl},
\end{equation}
where both $\widetilde{\cost}$ and $\smash{\widetilde{\OPT}}^{\fl}$ are evaluated on the $(1+\epsilon)$-distorted distances.
However, it is possible that this solution $F$ is not LMP with respect to the original distances.
Indeed, while the opening cost $\lambda |F|$ does not change with respect to distance distortion,
$\smash{\widetilde{\OPT}}^{\fl}$ and $\widetilde{\cost}(F)$ may change by a factor of, say, $1+\epsilon$, and therefore the guarantee \eqref{eqn:lmp_prime} only implies
$(1+\epsilon)\cost(F) + \alpha \lambda |F| \le (1+\epsilon)\alpha \OPT^{\fl}$.
This still leaves a gap from the LMP guarantee we actually want, i.e.,
$\cost(F) + \alpha' \lambda |F| \le \alpha' \OPT^{\fl}$,
for (finite) $\alpha' \ge 1$; in fact, this gap cannot be bridged for a generic solution $F$, as shown by the following counterexample.

\begin{example*}
    Consider $z = 1$ and $\lambda = 1 + \epsilon$ (for any given $\epsilon > 0$).
    Let the dataset consist of only two points $x, y$ with $\dist(x,y) = 1$, and distorted distance $\Tilde{\dist}(x,y) = 1+\epsilon$.
    On the original distances, the optimal value is $\OPT^{\fl} = 2 + \epsilon$.
    On the distorted distances, one optimal solution (which is clearly LMP $1$-approximate) is to open both points as facilities, i.e., $F = \{x,y\}$.
    However, on the original distances, since $\cost(F) = 0$, for any finite $\alpha' \ge 1$, we have
    \begin{equation*}
        \cost(F) + \alpha' \lambda |F| = \alpha' \cdot 2(1+\epsilon) > \alpha' \cdot (2+\epsilon) = \alpha' \cdot \OPT^{\fl}.
    \end{equation*}
    Therefore, $F$ is not an LMP solution with any finite approximation ratio on the original distances.
\end{example*}

Since LMP does not hold under distance distortions in general,
one must utilize the properties of a specific algorithm,
or a specialized way of distorting distance.
Indeed, a recent work~\cite{Cohen-addadKP26} manages to implement the seminal Jain-Vazirani algorithm (JV)~\cite{JainV01} in MPC, but it only works through careful distance approximations (e.g., using Euclidean spanners) and highly sophisticated modifications to JV.
Instead, our focus is on the MP algorithm, which is very different from JV, and our goal is stronger in that any distance distortion works (which no longer requires careful design of distance approximation).

\paragraph{Fractional MP with LMP, Robust to Distance Distortions.}
The first variant of the MP algorithm that has the LMP property was introduced in~\cite{ArcherRS03}, and a recent work~\cite{BhattacharyaCGL24} adapts the MP algorithm to the fractional setting and also establishes the LMP property.
However, the LMP property established in these works is not known to be robust to distance distortions, and another limitation is that they only apply to the case $z = 1$.
While our algorithm for $z = 1$ is based essentially on the same algorithm as in~\cite{BhattacharyaCGL24},
our novelty lies in the analysis: we show that the LMP property of this algorithm is robust to distance distortions,
and we further generalize this algorithm and the analysis to general $z \geq 2$ (where we also encounter a fundamental challenge).

Below, we present our insights at a high level, and we assume $\lambda = 1$ for simplicity.
We begin with the $z = 1$ case.
In the standard MP algorithm, one computes a value $0 < r_p \leq 1$ for each $p \in P$ using a sophisticated formula proposed by~\cite{MettuP03}.
In~\cite{BhattacharyaCGL24}, it is shown that for $\beta \in (0,1/4]$,
opening each point $p$ with amount $\rb_p$ as a facility yields an LMP $O(\beta^{-1})$-approximate fractional solution,
where $\rb_p$ is the $r_p$ value computed in a space with distances scaled by $\beta$; we refer to this algorithm as $\MPbeta$.
Our key insight is that the $\beta$-scaling operation in $\MPbeta$ already suffices for handling distance distortion.
Specifically, suppose we run $\MPbeta$ on the distance distorted up to $O(1)$ factor, then the resultant fractional solution, denoted as $\Bralg = (\ralg_p)_{p\in P}$, is $O(\beta^{-1})$-approximate LMP solution with respect to the \emph{original} distance.
Technically, we observe that the resulting solution $\Bralg$ directly translates to the following double-sided approximation guarantee:
\begin{align}
\label{eqn:tilde_r}
    &r_p^{(\Omega(\beta))} \le \tilde{r}_p \le r_p^{(O(\beta))}, \qquad \forall  p\in P.
\end{align}
We remark that this notion of approximation is stronger than a plain $O(1)$-approximation to $\rb_p$ which is achieved in previous works~\cite{BadoiuCIS05,CzumajJK0Y22,CzumajGJK024}.
Crucially, simply having an $O(1)$-approximation to the $r_p$'s does not yield an LMP solution with a finite ratio,
similar to the issue of LMP breaking under distorted distances.

We then prove that any such solution $\Bralg$ satisfying \eqref{eqn:tilde_r} is also an LMP $O(\beta^{-1})$-approximation, which requires novel technical ideas.
For example, one might be tempted to argue that there exists $\beta'$ such that $\ralg_p = r_p^{(\beta')}$ for all $p$,
but this is unfortunately not true,
which makes the plan of directly using the result in the non-distorted case~\cite{BhattacharyaCGL24} infeasible.
In our argument, we employ a metric reassignment argument to carefully relate the opening and connection costs of $\Bralg$ and $\Brb$, via re-distributing mass of assignments to facilities within a metric ball,
and we also establish new crucial properties of the MP algorithm to justify the existence of such reassignment, making this argument valid.
This new proof strategy additionally allows us to generalize to $z \geq 2$,
which bypasses a fundamental obstacle in generalizing the $z = 1$ proof in~\cite{BhattacharyaCGL24} to the case $z \geq 2$,
even for non-distorted distances.
We provide a more detailed overview of our analysis in \Cref{sec:tech_overview:fractional}.

\paragraph{Rounding Algorithm.}
For the rounding part,
the main technical idea is to implement an offline rounding process, which is a modified version of~\cite{CharikarGTS99}, via metric ruling sets (\Cref{def:RS}).
Ruling sets (or the closely related maximal independent set) are often used as a ``pruning'' step in (variants of) primal-dual/JV-based algorithms for facility location~\cite{JainV01,ArcherRS03, AhmadianNSW20, Cohen-AddadEMN22, GrandoniORSV22, Cohen-addadKP26}.
However, in these works, the ruling set is typically defined with respect to hop distances in certain (unweighted) graphs, which is conceptually different from our notion of a metric ruling set, as ours is defined with respect to distances in the underlying metric (in our case, Euclidean space).

This notion of metric ruling set was introduced in a recent work~\cite{CzumajG0J25}, where it is shown that, in Euclidean space, it admits a fully-scalable $O(1)$-round MPC algorithm by exploiting the Euclidean structure via geometric hashing techniques.
On the other hand, graphs generally lack such Euclidean structure, and one often has to rely on ruling set algorithms for general graphs, which typically require super-constant rounds~\cite{GhaffariU19, KothapalliPP20}.
This is also a key reason why a recent work~\cite{Cohen-addadKP26} requires $\omega(1)$ rounds;
while it may be possible to replace the graph ruling sets in~\cite{Cohen-addadKP26} by metric ruling sets,
it would be unlikely to yield (significantly) better results than ours, since both our approximation ratio and round complexity match the state-of-the-art metric ruling set bounds (up to constant factors).
We provide a more detailed overview of the rounding algorithm in \Cref{sec:tech_overview:rounding}.

\subsubsection{\texorpdfstring{Proof Overview: MPC Fractional Clustering}{}}
\label{sec:tech_overview:fractional}

Below, we focus on the case $\lambda = 1$ and provide a proof overview of the LMP property for a solution $\Bralg$ satisfying \eqref{eqn:tilde_r} (which opens each $p$ with amount $\ralg_p$).
For this fractional solution, the opening cost is $\|\Bralg\|_1 = \sum_{p\in P} \ralg_p$, and we define the connection cost $\cost(\Bralg)$ by assigning each point greedily to nearby facilities according to the capacities specified by $\Bralg$, such that the total assigned amount for each point is $1$.
We say that $\Bralg$ is an LMP $\alpha$-approximation for some $\alpha \ge 1$ if
$\cost(\Bralg) + \alpha \|\Bralg\|_1 \le \alpha \OPTfl[~]$,
where $\OPTfl[~]$ denotes the optimal value of the fractional power-$z$ facility location problem.

\paragraph{New Ideas for Proving the Key Lemma.}
We start with the $z = 1$ case,
and in this case, we already know the fact that the undistorted fractional solution $\Brb$ is LMP $\beta^{-1}$-approximate for $\beta \in (0,1/4]$~\cite{BhattacharyaCGL24}, i.e.,
\begin{equation}
    \label{eqn:lmp}
    \cost(\Brb) + \beta^{-1}\|\Brb\|_1 \leq \beta^{-1}\OPTfl[~].
\end{equation}
A natural (and ideal) plan is to show (a) the opening cost $\|\Bralg\|_1 \le \|\Brb\|_1$, which follows immediately from the assumption $\Bralg \le \Brb$, and (b) the connection cost $\cost(\Bralg) \le \alpha \cdot \cost(\Brb)$ for some constant $\alpha \ge 1$.
This would readily imply that $\Bralg$ is an LMP solution with ratio $\alpha\beta^{-1} = O(\beta^{-1})$.
However, what we actually obtain for (b) is a weaker guarantee:
\begin{equation}
    \label{eqn:costty}
    \cost(\Bralg) \leq \alpha \cdot (\cost(\Brb) + \| \Brb \|_1 - \| \Bralg \|_1),
\end{equation}
i.e., with an additive term that roughly corresponds to the gap between $\|\Brb\|_1$ and $\|\Bralg \|_1$.
Luckily, this still suffices to establish the LMP property for $\Bralg$:
\begin{align*}
    &\quad~ \cost(\Bralg) + \alpha\beta^{-1} \cdot \|\Bralg\|_1 \\
    &\leq~ \alpha \cdot \cost(\Brb) + \alpha\cdot (\| \Brb\|_1 - \| \Bralg\|_1) + \alpha \beta^{-1} \cdot  \|\Bralg\|_1 \\
    &\leq~ \alpha \cdot \cost(\Brb) + \alpha\beta^{-1}\cdot (\| \Brb\|_1 - \| \Bralg\|_1) + \alpha \beta^{-1} \cdot  \|\Bralg\|_1\\
    &=~ \alpha \cdot \cost(\Brb) + \alpha\beta^{-1}\cdot  \| \Brb\|_1 \\
    &\leq~ \alpha\beta^{-1}\OPTfl[~],
\end{align*}
where the second inequality holds because $\| \Brb\|_1 - \| \Bralg\|_1 \ge 0$.

\paragraph{A Metric Reassignment Argument and New Properties of MP.}
To establish \eqref{eqn:costty}, we employ a metric reassignment argument.
In particular, we consider the (fractional) assignment $\Bx \in \R_{\ge 0}^{P\times P}$ that realizes $\cost(\Brb)$ (i.e., it assigns $P$ to $\Brb$ according to the nearest neighbor rule),
where $x_{p,q}$ denotes the amount assigned from $p$ to $q$.
This $\Bx$ may not be a feasible assignment with respect to $\Bralg$, since there may exist $p,q$ such that $x_{p,q} > \ralg_q$, i.e., the assigned amount exceeds the opening at $q$.
The plan is to use a reassignment argument to transform $\Bx$ into $\Bx'$ that is feasible with respect to $\Bralg$, with only a small overhead.
Fix a point $p$. For each facility $q$ such that the assigned amount $x_{p,q}$ exceeds the opening $\ralg_q$, we reassign the excess amount
$\x_{p,q} - \ralg_q \le \rb_q - \ralg_q$ (note that $\x_{p,q} \le \rb_q$ since $\Bx$ is feasible with respect to $\Brb$)
to other facilities located in the local area $B(p, O(\ralg_p))$.
This is roughly where the additive term in \eqref{eqn:costty} arises, although part of the cost also needs to be charged to $\cost(\Brb)$.

To make this plan work, we establish a new property of the MP algorithm, which guarantees that the area $B(p, O(\rb_p))$ contains a sufficiently large amount of open facilities, i.e., $\sum_{x \in B(p, O(\rb_p))} \rb_x \ge 1$ for any $\beta \in (0,1)$ and $p\in P$ (see \Cref{lem:properties-rp-localdensity}).
This further translates to a corresponding guarantee for $\Bralg$ via $\Bralg \ge \Brb[\Omega(\beta)]$.

\paragraph{Generalizing to $z \geq 2$: New Steps.}
Finally, we discuss how to generalize the algorithm and analysis to $z \ge 2$.
The first step is to generalize the definition of $r_p$'s,
and we make use of a natural generalization from~\cite{GehweilerLS14, CzumajGJK024}.
While the above argument for $\Bralg$ generally works, it actually relies on one key premise: exact $\Brb$ is LMP (i.e., whether \eqref{eqn:lmp} holds), which has been proved in~\cite{BhattacharyaCGL24} for $z = 1$.
Unfortunately, this premise is not known to hold for $z \geq 2$.
Therefore, we need an added step to prove that $\Brb$ is LMP.
To this end, we try to generalize the analysis of~\cite{BhattacharyaCGL24} from the $z = 1$ case to $z \ge 2$.
The key step in their approach is to define an assignment $\Bx'$ that is feasible with respect to $\Brb$, and to show that its connection cost, which is no larger than $\cost(\Brb)$, is good enough for LMP.
However, we provide a counterexample showing that this assignment $\Bx'$ may not be feasible for $z \ge 2$ (see \Cref{example:counterexample}), which highlights a key difference from the $z = 1$ case.
Nonetheless, we show that a slightly weaker inequality $\cost(\Brb) \le O(1)\cdot \cost(\Bx')$ still holds (\Cref{lem:feasibility}), which is sufficient to establish the LMP of $\Brb$ for the $z \ge 2$ case.
Interestingly, this is achieved using a similar idea as in the above reassignment argument: starting from $\Bx'$, we reassign, for each facility $p$, the excessive amount to a local ball $B(p, O(\rb_p))$.
The key difference is that we cannot afford to charge an additive term as in \eqref{eqn:costty}; instead, we exploit the definition of $\Bx'$ together with a new smoothness property of $\rb_p$ (\Cref{lem:properties-rp-relation2}) to remedy this.

\subsubsection{Proof Overview: MPC Rounding Algorithm}
\label{sec:tech_overview:rounding}

We now discuss our MPC rounding algorithm, which, given a dataset $P$ and a fractional solution $\By \in \R_{\ge 0}^{P}$ with $\|\By\|_1 = k$ (opening each $p$ with amount $y_p$ as a center), returns an integral set $C \subseteq P$ with $|C| \le k$ and comparable clustering cost. We focus on the $z = 1$ case.

At a high level, our MPC rounding algorithm follows the framework of~\cite{CharikarGTS99}.
This framework is designed for the offline setting, and many of its key steps are inefficient for MPC implementation.
We re-cast these inefficient steps
as a non-uniform variant of the ruling set problem: the distance requirements vary across points.
We further reduce this non-uniform variant to the standard metric ruling sets, which allows us to leverage recent progress~\cite{CzumajG0J25} to obtain an $O(1)$-round rounding algorithm.

The framework of~\cite{CharikarGTS99} consists of three main steps:
(i) \emph{sparsification}: sparsify the dataset $P$ into a (weighted) subset of size at most $2k$, where each point is treated as a candidate center;
(ii) \emph{partial rounding}: open each candidate center with amount $1/2$ (half-open) or $1$ (fully open), while ensuring a total opening of at most $k$; and
(iii) \emph{final rounding}: select at most half of the half-open centers to be fully open, and thus there are at most $k$ fully open centers.
The final integral solution is then the set of these fully open centers.
Below, we explain each step in more detail.

\paragraph{Step 1: Sparsification.}
The algorithm in~\cite{CharikarGTS99} sparsifies $P$ as follows:
initially, every point is unmarked;
they process the points $p \in P$ in increasing order of their contribution to $\cost(\By)$, denoted by $\cost(p, \By)$, and mark $p$ if every point within distance $O(1)\cdot \cost(p, \By)$ is unmarked.
The set of marked points (with appropriate weights) then forms the sparsified dataset.

We reinterpret their goal of sparsification as constructing the following variant of a ruling set: find a subset $Q$ such that
(a) for every distinct $p,q \in Q$, $\dist(p,q) \ge \Omega(1)\cdot \max\{\cost(p,\By), \cost(q,\By)\}$, and
(b) for every $p \in P$, $\dist(p, Q) \le O(1)\cdot \cost(p,\By)$.
Our solution is based on a key observation: when all points have the same contribution $\cost(p,\By)$ to the fractional solution $\By$, the problem becomes equivalent to a metric ruling set problem.
Motivated by this, we group points with similar values of $\cost(p,\By)$ and construct a metric ruling set for each group in parallel.
With careful handling across groups, we can obtain the desired set $Q$ (see \Cref{sec:rounding-step1} for details).

\paragraph{Step 2: Partially Rounding.} 
The algorithm of~\cite{CharikarGTS99} for the second step involves only basic operations, such as approximate nearest neighbor queries and sorting, which can already be efficiently implemented in MPC using~\cite{CzumajGJK024} and~\cite{DBLP:conf/isaac/GoodrichSZ11}, respectively.

\paragraph{Step 3: Final Rounding.}
After the first two steps, we obtain a partially rounded solution, where each open center is either half-open or fully open; in the final step, we aim to select at most half of the half-open centers to be fully open.
For this step, the algorithm of~\cite{CharikarGTS99} relies on vertex coloring of forests and is thus inefficient in the MPC setting. 
We reinterpret their algorithm as solving the following variant of the dominating set problem:
let $H$ denote the half-open centers; select $H' \subseteq H$ such that (i) for every $p \in H$, $\dist(p, H') \le O(1)\cdot \dist(p, H \setminus \{p\})$; and (ii) $|H'| \le |H|/2$.
We make the following observation.
When every point has the same distance $\tau > 0$ to its nearest neighbor, it suffices to select a $2\tau$-ruling set of $H$ as $H'$ (which is also a dominating set); moreover, the separation property of ruling sets, i.e., $\forall p \neq q \in H'$, $\dist(p,q) > 2\tau$, ensures that $|H'| \le |H| / 2$.
Building on this observation, we follow a similar approach as in the first step: we group the points in $H$ by nearly equal nearest-neighbor distances and construct a ruling set for each group in parallel, with some differences in the implementation details (e.g., ensuring $|H'| \le |H| / 2$ requires different handling across groups).

\subsection{Related Work}
\label{sec:related}

Fully-scalable MPC algorithms have also been studied for fundamental problems in Euclidean spaces other than \kMeans. 
In the context of clustering, closely related problems include \kCenter~\cite{BateniEFM21,CoyCM23,CzumajG0J25} and \textsc{Facility Location}~\cite{CzumajGJK024}.
To date, the best-known fully-scalable algorithm for \kCenter achieves an $O(\log n /\log\log n)$-approximation in a constant number of rounds~\cite{CzumajG0J25}, and in low dimensions, this approximation ratio can be further improved to $2+\varepsilon$ (or $1+\varepsilon$ for a bi-criteria solution).
For \textsc{Facility Location}, the streaming algorithm of~\cite{CzumajJK0Y22} can be directly translated into an $O(1)$-round MPC algorithm with an $O(1)$-approximation, but this only applies to estimating the optimal value. 
More recently, \cite{CzumajGJK024} resolved the solution version of \textsc{Facility Location} by achieving the same $O(1)$-approximation in $O(1)$ rounds.

Beyond clustering, \textsc{Minimum Spanning Tree (MST)} has attracted considerable interest in Euclidean space.
For this problem, there is a notable gap between the value version and the solution version. 
For the value version, using the streaming algorithm that achieves an $O(1/\varepsilon^2)$-approximation with space $n^{\varepsilon}$~\cite{ChenCJLW23}, one can already obtain an $O(1)$-approximation in $O(1)$ rounds.
In contrast, it remains open to find an $O(1)$-approximate solution for Euclidean MST in $O(1)$ rounds. Indeed, the currently best-known algorithms either find an $O(1)$-approximate MST in $\tilde{O}(\log\log n)$ rounds~\cite{JayaramMNZ24}, or achieve a $\polylog(n)$-approximation in $O(1)$ rounds~\cite{AhanchiAHKZ23}. Not surprisingly, when the dimension is low, say $d = O(1)$, one can find an $O(1)$-approximation in $O(1)$ rounds~\cite{AndoniNOY14}.
Other fundamental problems studied in this context include the \textsc{Earth Mover Distance}~\cite{AndoniNOY14} and \textsc{Max-Cut}~\cite{MenandW26}.

\section{Preliminaries}
\label{sec:prelim}

\paragraph{Notations.} For integer $n\ge 1$, let $[n]:= \{1,2,\dots, n\}$. Moreover, for integers $m\le n$, let $[m:n] := \{m, m + 1, \dots, n\}$. 
For a number $a\in \R$, we use $[a]^+$ to denote $\max\{a, 0\}$.
We use boldface letters, such as $\bm{a} \in \R^P$, to denote vectors indexed by the input dataset $P$, and, unless otherwise specified, the corresponding lowercase letters $a_p \in \R$ to denote the $p$-th component of $\bm{a}$.
For two vectors $\bm{a}, \bm{b} \in \R^P$, comparisons are performed element-wise; for example, $\bm{a} < \bm{b}$ if and only if $a_q < b_q$ for all $q \in P$.

\paragraph{Metric Clustering.}
In this paper, we also consider the \kzC problem on general metrics, where we are given a metric space $(P, \dist)$ (not necessarily Euclidean), and the goal is to select a set $C \subseteq P$ of $k$ centers that minimizes $\cost(C) := \sum_{p \in P} \dist^z(p, C)$, where $\dist(p, C) := \min_{c \in C} \dist(p, c)$.
Let $\OPTcl := \min_{C \subseteq P: |C| \le k} \cost(C)$ denote the optimal \kzC objective.
We then define the \textsc{Euclidean} \kzC problem as the case where $P \subseteq \R^d$ and $\dist(p, q) := \|p - q\|_2$ for every $p, q \in P$.
We note that this definition is slightly different from that in \Cref{sec:intro}, as here centers are restricted to the input dataset $P$ (rather than the entire $\R^d$).
Nonetheless, this restriction is standard, and it is known that the optimal values under the two definitions differ only by a $2^{O(z)}$ factor, which is absorbed in our bounds.

The aspect ratio of the metric space $(P, \dist)$ is defined as 
$\Delta := \frac{\max_{p, q \in P} \dist(p, q)}{\min_{p \neq q \in P} \dist(p, q)}$, and we assume without loss of generality that $2 \le \dist(p, q) \le O(\Delta)$ for all distinct $p, q \in P$.
In fact, in Euclidean space, this can be achieved by a fully-scalable preprocessing in $O(\log_s n)$ rounds; see, e.g.,~\cite{CzumajGJK024}.

\begin{fact}[Generalized Triangle Inequality]
\label{lem:triangle-inequality}
Let $(P,\dist)$ be a metric space.
For any $p, q, t \in P$, it holds that
$\dist^z(p, q) \le 2^{z-1} \cdot \left( \dist^z(p, t) + \dist^z(t, q) \right)$.
\end{fact}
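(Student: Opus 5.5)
The plan is to combine the ordinary triangle inequality with convexity of $x\mapsto x^z$ for $z\ge 1$.

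First, by the triangle inequality in $(P,\dist)$ we have $\dist(p,q)\le \dist(p,t)+\dist(t,q)$. Since both sides are nonnegative and $x\mapsto x^z$ is nondecreasing on $[0,\infty)$, this gives
\begin{equation*}
\dist^z(p,q)\le \bigl(\dist(p,t)+\dist(t,q)\bigr)^z .
\end{equation*}

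Second, I will bound $(a+b)^z$ for $a,b\ge 0$. Write $a+b=2\cdot\frac{a+b}{2}$. Because $x\mapsto x^z$ is convex on $[0,\infty)$ for $z\ge 1$, Jensen's inequality yields
\begin{equation*}
\Bigl(\tfrac{a+b}{2}\Bigr)^z\le \tfrac{a^z+b^z}{2}.
\end{equation*}
Multiplying through by $2^z$ gives $(a+b)^z\le 2^{z-1}(a^z+b^z)$. Setting $a=\dist(p,t)$ and $b=\dist(t,q)$ and chaining this with the first step proves the claim.

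Nothing here is a real obstacle. The only points to check are that convexity requires $z\ge 1$, which holds throughout the paper, and that the degenerate cases $a=0$ or $b=0$ are covered, which they are because Jensen's inequality needs no strict positivity.
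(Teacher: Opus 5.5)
Your proof is correct. The paper states this as a standard fact without giving a proof, and your argument is exactly the standard justification it relies on: the ordinary triangle inequality, monotonicity of $x\mapsto x^z$, and then Jensen's inequality for the convex function $x\mapsto x^z$ (valid for $z\ge 1$) applied at the midpoint.
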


\paragraph{Fractional \kzC.}
A key intermediate problem we consider in this paper is the \emph{fractional} version of the \kzC problem, denoted by \FkzC.
In this problem, a \emph{fractional center set} is represented by a vector $\By \in \R_{\ge 0}^{P}$ satisfying $\|\By\|_1 \ge 1$, where each entry $y_p$ denotes the opening of point $p \in P$, and the condition $\|\By\|_1 \ge 1$ means that at least $1$ unit of total fractional center is opened. 
To define the \emph{clustering cost} of $\By$, we start by defining the \emph{fractional assignments} and the \emph{assignment cost}.
A fractional assignment is represented by a vector $\Bx = (x_{p,q})_{p,q \in P} \in \R_{\ge 0}^{P \times P}$, where each entry $x_{p,q}$ represents the amount by which $p$ is assigned to $q$.
The \emph{assignment cost} of $\Bx$ is defined as
\begin{equation*}
    \asncost(\Bx) ~:=~ \sum_{p\in P}\sum_{q\in P} \x_{p,q}\cdot \dist^z(p,q). 
\end{equation*}
We say that $\Bx$ is \emph{feasible} with respect to a fractional center set $\By$, denoted by $\Bx \sim \By$, if the following constraints are satisfied.
\begin{enumerate}[label=(C\arabic*)]
    \item\label{eq:LP-cl-sum} $\forall p\in P$, $\sum_{q\in P} x_{p,q} \ge 1$, i.e., each point $p$ is assigned to a total of at least $1$ unit of centers.

    \item\label{eq:LP-cl-cover} $\forall p,q\in P$, $x_{p,q} \le y_q$, i.e., the assignment amount of $p$ to $q$ does not exceed the opening of $q$.
\end{enumerate}
We can then define the clustering cost of a fractional center set $\By \in \R_{\ge 0}^{P}$ with $\|\By\|_1\ge 1$ as
\begin{equation*}
    \cost(\By) := \min_{\Bx \sim \By}   \asncost(\Bx).
\end{equation*}
Under these definitions, we can now formulate the \FkzC problem (in general metrics) as follows: given a metric space $(P, \dist)$ and an integer $k \ge 1$, find a fractional center set $\By \in \R_{\ge 0}^{P}$ with $\|\By\|_1 = k$ that minimizes $\cost(\By)$.
Let $\OPTclfr := \min_{\By \in \R_{\ge 0}^{P} : \|\By\|_1 = k} \cost(\By)$ denote the optimal objective value of the \FkzC problem.

Let $\Bx^*\sim \By$ be the optimal feasible assignment with respect to $\By$, that is, $\cost(\By) = \asncost(\Bx^*)$.
We define the individual cost of each point $p \in P$ with respect to $\By$ as
\begin{equation*}
    \cost(p, \By) ~:=~ \sum_{q\in P} x^*_{p,q}\cdot \dist^z(p, q) ~=~ \min_{\Bx_{p} \in \R_{\ge 0}^{P}: \|\Bx_{p}\|_1 \ge 1, \Bx_{p} \le \By} \sum_{q\in P} x_{p,q}\cdot \dist^z(p, q),
\end{equation*}
where the last equality follows from the observation that the Constraints~\ref{eq:LP-cl-sum} to~\ref{eq:LP-cl-cover} are independent for each data point $p$; thus, $\Bx^*$ also minimizes the assignment cost of each individual point to the fractional center set $\By$.

We note that the \FkzC problem is also widely considered in the literature as an LP relaxation of the \textsc{(Integral)} \kzC problem (e.g.~\cite{CharikarGTS99,CharikarL12,LiS16,Cohen-AddadEMN22}), and it is well known that the gap between the optimal value of \FkzC and that of the \textsc{(Integral)} \kzC problem, which is called the \emph{integrality gap} of \kzC, is $2^{O(z)}$ (see, e.g., \cite{CharikarGTS99}), as stated below.

\begin{lemma}[Integrality Gap for Clustering~\cite{CharikarGTS99}]
    \label{lem:integrality-gap}
    $\OPTclfr \le \OPTcl\le 2^{O(z)}\cdot \OPTclfr$.
\end{lemma}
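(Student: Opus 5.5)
The plan is to prove the two inequalities of \Cref{lem:integrality-gap} separately. The lower bound is the easy direction. The upper bound is an LP-rounding argument in the style of \cite{CharikarGTS99}, with the generalized triangle inequality (\Cref{lem:triangle-inequality}) replacing the ordinary one at every step.

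\textbf{Lower bound.} Take an optimal integral center set $C$ with $|C|\le k$; we may assume $|C|=k$, padding with extra points if $|P|>k$, since adding centers never increases the cost. Set $y_q=\mathbf{1}[q\in C]$, so $\|\By\|_1=k$. Let $x_{p,q}=1$ for a nearest center $q\in C$ of $p$ and $0$ otherwise. This $\Bx$ satisfies \ref{eq:LP-cl-sum} and \ref{eq:LP-cl-cover}, hence $\Bx\sim\By$. Therefore $\OPTclfr\le\cost(\By)\le\asncost(\Bx)=\cost(C)=\OPTcl$.

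\textbf{Upper bound.} Fix an optimal fractional $\By$ with $\|\By\|_1=k$, and let $c_p:=\cost(p,\By)$. I would follow the three-step scheme already outlined in \Cref{sec:tech_overview:rounding}.

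\emph{Step 1 (sparsification).} Process points in increasing order of $c_p$. Mark $p$ unless some already-marked $q$ satisfies $\dist^z(p,q)\le 4^z c_p$ (any fixed constant works), and move the demand of $p$ to that $q$. Since $p$ can move to $q$ at cost $2^{O(z)}(c_p+c_q)\le 2^{O(z)}c_p$, this costs only a $2^{O(z)}$ factor by \Cref{lem:triangle-inequality}. Marked points are pairwise far apart. By Markov's inequality, each marked $p$ has at least $1/2$ unit of $\By$-mass within distance $2^{1/z}c_p^{1/z}$, and these balls are disjoint. Hence there are at most $2k$ marked points, and each can be assigned a fractional opening $y'_p\in[1/2,1]$ with total $\le k$. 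The cost of this consolidated solution is $2^{O(z)}$ times that of $\By$, charging each point's excess assignment to its nearest marked neighbor through the generalized triangle inequality.

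\emph{Step 2 (half-integrality).} Let $s(p)$ be the nearest other marked point of $p$. The consolidated cost is then essentially $\sum_p w_p(1-y'_p)\dist^z(p,s(p))$, up to $2^{O(z)}$ factors. This is a linear function of $y'$, so an extreme point of the resulting LP on the forest formed by the edges $p\to s(p)$ is half-integral, and it can be chosen without increasing the objective.

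\emph{Step 3 (final rounding).} Decompose the forest into stars or two-colorable trees. In each tree, fully open a color class of the half-open points of weight at most half, together with all fully-open points. Each closed point then has its $s(p)$, or a neighbor at most $2^{O(1)}$ hops away in a controlled geometric sense, open. This keeps the count at $\le k$ and loses another $2^{O(z)}$.

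Multiplying the constant losses of the three steps gives $\OPTcl\le 2^{O(z)}\OPTclfr$.

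The main obstacle is Step 3 for general $z$. The paths in the forest can be long, so one must use the fact that the distances $\dist(p,s(p))$ are non-increasing along the directed edges. That monotonicity ensures a closed point's open neighbor is within a constant multiple of $\dist(p,s(p))$, so each reassignment loses only a single $2^{O(z)}$ factor rather than compounding along the path. Since \Cref{lem:integrality-gap} is quoted from \cite{CharikarGTS99}, one may alternatively cite that paper and check only that each triangle-inequality step there incurs at most a $2^{O(z)}$ factor.
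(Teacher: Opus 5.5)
Your proposal is correct and follows essentially the same route as the paper. The paper gives no proof of its own: it cites \cite{CharikarGTS99}, and its rounding in \Cref{sec:rounding-overview} implements the same three steps (consolidation, half-integral rounding, final rounding), with ruling sets in place of the sequential marking and the forest coloring. One simplification is available: in the exact offline setting, properly 2-coloring each tree of the $p\to s(p)$ forest already guarantees that every closed half-open point has $s(p)$ itself open, so Step~3 loses only a factor $2$. The graph is indeed a forest because distances are non-increasing along directed edges and ties are broken consistently. The cross-scale monotonicity argument you anticipate as the main obstacle is needed only in the paper's ruling-set version (\Cref{claim:monotone-ANN}).
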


\paragraph{(Fractional) Power-$z$ Facility Location and Lagrangian Multiplier Preserving.}
We define a closely related \textsc{Power-$z$ Facility Location} problem, which replaces the hard size constraint $|C| \le k$ in \kzC with a uniform opening cost $\lambda > 0$ per center in the objective function.
In this paper, we investigate its \emph{fractional version}, \textsc{Fractional Power-$z$ Facility Location}, where the input is a metric space $(P, \dist)$ and a uniform opening cost $\lambda > 0$, and the goal is to find a fractional center set $\By \in \R_{\ge 0}^P$ with $\|\By\|_1 \ge 1$ that minimizes the objective $\cost(\By) + \lambda \cdot \|\By\|_1$.

Finally, we define the \emph{Lagrangian multiplier preserving (LMP)} approximation guarantee for this problem, a property first introduced in~\cite{JainV01}.
Formally, we say that a fractional center set $\By \in \R_{\ge 0}^{P}$ (with $\|\By\|_1 \ge 1$) is an LMP $\alpha$-approximation (for some $\alpha \ge 1$) to \textsc{Fractional Power-$z$ Facility Location} if
\begin{equation*}
    \cost(\By) ~\le~ \alpha \cdot \big(\OPTfl - \lambda \cdot \|\By\|_{1}\big),
\end{equation*}
where $\OPTfl := \min_{\By \in \R_{\ge 0}^{P} : \|\By\|_{1} \ge 1} \cost(\By) + \lambda \cdot \|\By\|_{1}$ denotes the optimal objective of \textsc{Fractional Power-$z$ Facility Location}.
An LMP approximation strengthens the conventional notion of approximation and is crucial for transformations from \textsc{(Fractional) Power-$z$ Facility Location} algorithms to \textsc{(Fractional)} \kzC algorithms.

\section{Fractional MP Algorithm with Robust LMP Approximation}
\label{sec:FL}
In this section, we give a new LMP approximation algorithm for \textsc{Fractional Power-$z$ Facility Location} in general metrics, which is a fractional variant of the Mettu-Plaxton (MP) algorithm~\cite{MettuP03}.
We describe the algorithm in \Cref{alg:robust-MP} and establish its correctness and \emph{robustness} to distance distortions in \Cref{lem:FMP-distance-distortion}.
This result generalizes that of~\cite{BhattacharyaCGL24}, which only provides an LMP approximation for the $z = 1$ case and does not allow any distance distortion.
Finally, based on this robust algorithm, we design a round-efficient fully-scalable MPC algorithm for \textsc{Fractional Power-$z$ Facility Location} in Euclidean space that achieves an LMP approximation; see \Cref{thm:lmp-fl}.

\paragraph{Robust Fractional MP Algorithm.} 
We describe our new algorithm for \textsc{Fractional Power-$z$ Facility Location} in general metrics in \Cref{alg:robust-MP};
throughout the algorithm presentation and analysis, we fix an arbitrary opening cost $\lambda > 0$.
The algorithm is parameterized by a scale factor $\beta \in (0, 1)$, and we refer to it as $\MPbeta$.
The algorithm computes a value $\rb_p \ge 0$ for each point $p \in P$, and then opens $p$ with amount $\rb_p / \lambda$ as a facility; the value $\rb_p$ is computed in a manner similar to~\cite{MettuP03} (specifically, a generalized version as in~\cite{GehweilerLS14,CzumajGJK024} to handle power-$z$), but after scaling the power-$z$ distances by~$\beta$.

\begin{algorithm}[ht]
    \DontPrintSemicolon
    \caption{$\MPbeta$ with $\beta\in (0, 1)$ on input $(P,\dist)$.}
    \label{alg:robust-MP}

    For each point $p\in P$, compute a value $\rb_p > 0$ such that 
    \begin{equation}\label{eq:rp}
        \sum_{q \in P} \big[\rb_{p} - \beta\cdot \dist^{z}(p, q)\big]^{+} = \lambda
    \end{equation}
    \label{alg:robust-MP:rp}

    \Return $\Byb:=(\rb_p / \lambda)_{p\in P}\in \R_{\ge 0}^{P}$
\end{algorithm}

Our main result of this section is the following \Cref{lem:FMP-distance-distortion}, which establishes the correctness and robustness of $\MPbeta$; specifically, we show that, if $\MPbeta$ is run on a distorted distance $\Tilde{\dist}$, then by setting the scale factor $\beta$ to be a sufficiently small constant, the algorithm $\MPbeta$ still yields an LMP approximation with respect to the original distance;
notably, the distorted distance $\Tilde{\dist}$ is not required to be a metric (and need not even be symmetric).

\begin{restatable}{lemma}{RobustMP}
    \label{lem:FMP-distance-distortion}
    For any $\Gamma \ge 1$ and distorted distance $\Tilde{\dist}: P \times P \to \R_{\ge 0}$ satisfying
    \begin{align}
        &\dist(p,q) ~\le~ \Tilde{\dist}(p,q) ~\le~ \Gamma^{2} \cdot \dist(p,q), 
        &&\forall p,q \in P.
        \label{eq:distorted-distanct}
    \end{align}
    Let $\Byalg \in \R_{\ge 0}^{P}$ denote the output of $\MPbeta$ (\Cref{alg:robust-MP}) with $\beta := 2^{-z-1} \cdot \Gamma^{-2z}$ on input $(P, \Tilde{\dist})$.
    Then $\Byalg$ is an LMP $2^{O(z^2)} \Gamma^{2z}$-approximation for \textsc{Fractional Power-$z$ Facility Location} on the original metric space $(P, \dist)$, i.e., 
    \begin{equation*}
        \cost(\Byalg) ~\le~ 2^{O(z^2)}\Gamma^{2z}\cdot \big(\OPTfl - \lambda\cdot \|\Byalg\|_1 \big).
    \end{equation*}
\end{restatable}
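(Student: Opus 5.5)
The plan follows the two-layer structure sketched in \Cref{sec:tech_overview:fractional}. Write $\rb_p$ for the value defined by \eqref{eq:rp} on the \emph{original} metric, and $\ralg_p$ for the value computed by $\MPbeta$ on $\Tilde{\dist}$ with $\beta = 2^{-z-1}\Gamma^{-2z}$, so that $\Byalg = \Bralg/\lambda$.

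\textbf{Step 1 (sandwich).} The left side of \eqref{eq:rp} is nondecreasing in $\rb_p$ and nonincreasing in the distances. Together with \eqref{eq:distorted-distanct}, this gives
\[
\rb[\beta\Gamma^{2z}]_p \le \ralg_p \le \rb[\beta]_p \quad\text{for all } p,
\]
where superscripts denote values computed on the original metric. In particular $\Bralg \ge \Brb[\beta']$ with $\beta' := \beta\Gamma^{2z} = 2^{-z-1}$, which is a fixed constant.

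\textbf{Step 2 (undistorted LMP).} I will show that for every fixed $\beta_0 \le 2^{-z-1}$, the exact solution $\Brb[\beta_0]/\lambda$ is LMP with ratio $2^{O(z^2)}\beta_0^{-1}$. This uses the standard dual-fitting template of \cite{BhattacharyaCGL24}. Set $\alpha_p$ to be proportional to $\rb_p$; then \eqref{eq:rp} and the triangle inequality (\Cref{lem:triangle-inequality}) give a lower bound on $\OPTfl$ via weak duality. One then exhibits an assignment $\Bx'$ in which $p$ is sent to points $q$ with $\beta\dist^z(p,q) < \rb_p$ in amounts proportional to $[\rb_p-\beta\dist^z(p,q)]^+$.

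For $z\ge2$ this $\Bx'$ may violate~\ref{eq:LP-cl-cover}, so I will repair it. Any excess at a facility $q$ is rerouted into a ball $B(q,O(\rb_q^{1/z}))$. Two properties make this work. The first is a \emph{local density} property: the ball $B(p,O((\rb_p/\beta)^{1/z}))$ carries total opening at least $1$. The second is a \emph{smoothness} property: $\rb_q$ and $\rb_p$ are within constant factors whenever $\dist^z(p,q)\lesssim \rb_p/\beta$, which follows by comparing the sums in \eqref{eq:rp} through the generalized triangle inequality. Together they bound the repair cost, yielding $\cost(\Brb/\lambda)\le 2^{O(z^2)}\cdot\asncost(\Bx')$.

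\textbf{Step 3 (transfer to $\Bralg$).} Let $\Bx$ be the optimal assignment for $\Brb[\beta]/\lambda$. Wherever $x_{p,q} > \ralg_q/\lambda$, the excess is at most $(\rb_q-\ralg_q)/\lambda$. I reassign it inside $B(p, O((\ralg_p/\beta')^{1/z}))$, which has enough mass by the local density property applied to $\Brb[\beta']\le\Bralg$. Each unit moved costs $O(\ralg_p/\beta') + 2^{z-1}(\text{original cost})$. Using smoothness to compare $\ralg_p$ with $\ralg_q$ for the relevant $q$, this gives the analogue of \eqref{eqn:costty}:
\[
\cost(\Byalg) \le 2^{O(z^2)}\Gamma^{2z}\big(\cost(\Brb/\lambda) + \lambda\|\Brb/\lambda\|_1 - \lambda\|\Byalg\|_1\big).
\]
The displayed chain in the overview then combines this with Step 2 at $\beta_0=\beta$, where $\beta^{-1}=2^{z+1}\Gamma^{2z}$, to finish the proof.

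\textbf{Main obstacle.} I expect the hardest part to be the charging in Step 3 and the repair in Step 2. The reassigned mass must be charged either to the additive term $\sum_q(\rb_q-\ralg_q)$ or to existing connection cost, without double counting across different $p$. I would first try charging each unit of excess to facility $q$'s deficit $\rb_q-\ralg_q$, scaled by $\ralg_q/\beta'$. Smoothness then converts the cost radius around $p$ into one around $q$, and the key identity is that the total opening absorbed at $q$ over all $p$ is at most $\rb_q$.
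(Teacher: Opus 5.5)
Your overall architecture (sandwich, LMP for the undistorted solution, rerouting paid for by a deficit term) is the right one. However, Step~1 has the sandwich backwards, and Step~3 is built on it. By \Cref{lem:properties-rp-monotone}, $\rb_p$ is nondecreasing in $\beta$. We have $\beta\dist^z(p,q)\le\beta\Tilde{\dist}^z(p,q)\le\beta^{*}\dist^z(p,q)$ with $\beta^{*}:=\beta\Gamma^{2z}=2^{-z-1}$, and larger distances force a larger root of \eqref{eq:rp}. So the truth is $\Brb\le\Bralg\le\Brb[\beta^{*}]$; for instance, two points at distance $d$ with $\lambda\ge\beta d^z$ give $r=(\lambda+\beta d^z)/2$. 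Hence your comparison solution $\Byb$ opens \emph{less} than $\Byalg$. Its optimal assignment is already feasible for $\Byalg$, so there is nothing to reroute. But the term $\lambda(\|\Byb\|_1-\|\Byalg\|_1)$ in your final display is $\le 0$, and LMP for $\Byb$ cannot transfer, because $\OPTfl-\lambda\|\Byalg\|_1\le\OPTfl-\lambda\|\Byb\|_1$.

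The roles must be swapped. Take the optimal assignment of the \emph{upper} solution $\Byb[\beta^{*}]$, which is LMP with ratio $2^{O(z^2)}$ by \Cref{lem:fractional-MP} since $\beta^{*}$ is a constant. Take local density from the \emph{lower} solution, i.e., apply \Cref{claim:additive-error} with $\mu_p=\gamma^{*}\beta^{-1}\rb_p$, which is valid by \Cref{lem:properties-rp-localdensity} and $\Bralg\ge\Brb$. Then $\Gamma^{2z}$ enters exactly once, through $\beta^{-1}$ in $\mu_p$. In your accounting it enters twice, once in the rerouting overhead and once via Step~2 at $\beta_0=\beta$, which would give $\Gamma^{4z}$ even if everything else went through.

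Your proposed charging in Step~3 also fails. The ``key identity'' that the total opening absorbed at $q$ over all clients is at most $\rb_q$ is false for this uncapacitated LP. Constraint \ref{eq:LP-cl-cover} bounds $x_{p,q}\le y_q$ only per client, so arbitrarily many clients can each exceed $\yalg_q$ at the same $q$, and a single deficit cannot pay for all of them. The paper handles this by splitting pairs into far and close.
\begin{itemize}
\item For far pairs, $\beta^{*}\dist^z(p,q)>\rb[\beta^{*}]_p/(3\cdot 2^{z-1})$, so $\rb[\beta^{*}]_p\le\frac34\dist^z(p,q)$. The per-unit rerouting cost $\gamma^{*}\beta^{-1}\rb_p\le\gamma^{*}\beta^{-1}\rb[\beta^{*}]_p$ is then charged to $\cost(\Byb[\beta^{*}])$.
\item For close pairs, \Cref{lem:properties-rp-relation1} applied to the exact vector $\Brb[\beta^{*}]$ gives $\rb[\beta^{*}]_p\le 3\cdot 2^{z-2}\rb[\beta^{*}]_q$ and $p\in\Ball[\beta^{*}]_q(\rb[\beta^{*}]_q/2)$. \Cref{lem:properties-rp-size} then caps the number of such $p$ at $2\lambda/\rb[\beta^{*}]_q$. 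This multiplicity bound is what turns close-pair excess into $O(2^z)\lambda(\|\Byb[\beta^{*}]\|_1-\|\Byalg\|_1)$.
\end{itemize}
Note that smoothness is not available for $\Bralg$ itself, since it is defined through the non-metric $\Tilde{\dist}$.

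Your Step~2 repair has a similar weakness. Constant-factor comparability of $\rb_p$ and $\rb_q$ by itself only bounds close-pair excess by the opening cost, which destroys LMP. The paper needs the additive bound of \Cref{lem:properties-rp-relation2} to charge that excess to the connection cost in $q$'s own row of $\Bxb$. Also, its dual is $\Brb[2\beta]$ via \Cref{claim:lem13.6-BCG24}, not a multiple of $\Brb$ as your ``proportional to $\rb_p$'' suggests.
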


We prove \Cref{lem:FMP-distance-distortion} using \Cref{sec:properties-rp,sec:revisit-fractionalMP,sec:fractional-MP}.
Specifically, in \Cref{sec:properties-rp}, we establish several structural properties of the vector $\Brb = (\rb_p)_{p \in P}$ computed as in Line~\ref{alg:robust-MP:rp} of \Cref{alg:robust-MP} (under undistorted distances); in \Cref{sec:fractional-MP}, we prove the correctness of $\MPbeta$ without distance distortions; and finally, in \Cref{sec:revisit-fractionalMP}, we prove the correctness of $\MPbeta$ under distance distortions, i.e., \Cref{lem:FMP-distance-distortion}.

\paragraph{MPC Implementation in Euclidean Space.}
Finally, we consider the Euclidean space, i.e., $P \subseteq \R^{d}$ and $\dist$ denotes the Euclidean distance, and design a round-efficient fully-scalable MPC algorithm for \textsc{Fractional Power-$z$ Facility Location} with an LMP approximation, based on the robust $\MPbeta$.

\begin{corollary}
\label{thm:lmp-fl}
For any integer $z \ge 1$, there is a deterministic MPC algorithm for \textsc{Fractional Power-$z$ Facility Location} in Euclidean space that,
\begin{itemize}
    \item given parameters $\Gamma, \Delta \ge 1$, an opening cost $\lambda > 0$, integers $n,d \ge 1$, and an $n$-point dataset $P \subseteq \R^d$ of aspect ratio $\Delta$ distributed across machines with local memory $s \ge \poly(d \log(n\Delta))$,
    \item computes an LMP $2^{O(z^2)} \Gamma^{2z}$-approximation for the dataset $P$.
\end{itemize}
The algorithm runs in $O(\log_s n)$ rounds and uses total memory $2^{O(d/\Gamma)}\cdot n \cdot \poly(d\log (n\Delta))$.
\end{corollary}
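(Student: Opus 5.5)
The plan is to obtain the corollary directly from \Cref{lem:FMP-distance-distortion}. It therefore suffices to build an MPC procedure that, for every $p\in P$, outputs $\ralg_p$ equal to the exact solution of \eqref{eq:rp} with $\beta := 2^{-z-1}\Gamma^{-2z}$ under \emph{some} distorted distance $\Tilde{\dist}$ satisfying \eqref{eq:distorted-distanct}. Since the lemma allows $\Tilde{\dist}$ to be arbitrary, non-symmetric, and chosen per pair, we may let $\Tilde{\dist}(p,\cdot)$ be whatever the approximate range-query primitive implicitly produces for the query point $p$. The LMP guarantee $2^{O(z^2)}\Gamma^{2z}$ then transfers verbatim to the original Euclidean metric.

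\textbf{Step 1: discretize the radius.} Rewrite \eqref{eq:rp} as follows. For a threshold $r$, let $f_p(r) := \sum_{q}[r-\beta\Tilde{\dist}^z(p,q)]^+$, which is continuous and increasing in $r$. Layer-cake it as $f_p(r)=\int_0^r |\{q : \beta\Tilde{\dist}^z(p,q) < t\}|\,dt$. Because distances lie in $[2,O(\Delta)]$, only $O(\log_{1+\epsilon}\Delta)$ geometric distance scales $\tau_i = (1+\epsilon)^i$ matter. Define $\Tilde{\dist}(p,q)$ by rounding $\dist(p,q)$ up to the scale reported by the range-count structure. Concretely, use the approximate range counting of~\cite{CzumajGJK024}: for each scale $\tau_i$ and each $p$ in parallel, it returns a count $c_{p,i}$ that counts every point within distance $\tau_i$ and no point beyond $\Gamma\tau_i$, with a $2^{O(d/\Gamma)}$ memory blowup. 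To make this a genuine distorted distance, set $\Tilde{\dist}(p,q)$ to $\Gamma$ times the smallest $\tau_i$ such that $q$ is counted at scale $i$ (after making counts monotone). The counts must be consistent, i.e., nested, as $i$ grows. I expect this to follow from the hashing-based construction; if not, I would redefine the counted sets as nested via prefix unions, at a further $\Gamma$ factor, which is why the lemma allows $\Gamma^2$. Then $\dist \le \Tilde{\dist} \le \Gamma^2\dist$, up to the $(1+\epsilon)$ rounding, which I absorb into $\Gamma$.

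\textbf{Step 2: solve \eqref{eq:rp} exactly.} With $\Tilde{\dist}(p,\cdot)$ piecewise constant over scales, $f_p$ is piecewise linear with $O(\log\Delta)$ breakpoints $\beta(\Gamma\tau_i)^z$, and slopes given by the counts $c_{p,i}$. Each point $p$ holds its $O(\log(n\Delta))$ counts locally, which fits in memory $s\ge\poly(d\log(n\Delta))$. It evaluates $f_p$ at all breakpoints, finds the segment where $f_p$ crosses $\lambda$, and solves the resulting linear equation to get $\ralg_p$ exactly. It then outputs $\ralg_p/\lambda$. This step is local, so the round complexity is dominated by the range-count primitive and sorting/aggregation, both $O(\log_s n)$. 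The total memory is $2^{O(d/\Gamma)}\, n\,\poly(d\log(n\Delta))$ from running the range counting over all scales.

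\textbf{Main obstacle.} The crux is Step 1: certifying that the primitive's approximate counts arise from a single well-defined $\Tilde{\dist}(p,q)$ for each ordered pair. Approximate counts alone are not enough, as the counterexample in the introduction shows. The point is that the computed $\ralg_p$ must be the exact root for a valid distortion, not merely a near-root for the true distance. I would secure this by insisting that the range structure reports membership of an explicit nested family of sets $S_{p,i}$ with $B(p,\tau_i)\subseteq S_{p,i}\subseteq B(p,\Gamma\tau_i)$. The counts then define $\Tilde{\dist}$ implicitly, with no need to ever materialize it.
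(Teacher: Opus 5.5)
Your route is the paper's: approximate range counts at geometric scales define an implicit distortion $\Tilde{\dist}$, each $\ralg_p$ is computed locally as the exact root of a piecewise-linear equation, and \Cref{lem:FMP-distance-distortion} is applied. There is a gap, however, exactly at the step you flag as the crux, and your fallbacks do not close it. The primitive the paper uses, \cite[Theorem~3.1]{CzumajGJK024}, returns only per-point aggregates such as counts, not the underlying sets. So the sizes of ``prefix unions'' of those sets cannot be obtained. ``Insisting'' that it report an explicit nested family is an assumption, not an argument.

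The missing idea is to make the scale ratio equal to the primitive's slack. Write $B(p,r):=\{q\in P:\dist(p,q)\le r\}$. At level $\ell$ the paper queries radius $\Gamma^{\ell-1}$ and gets $|A^{(\ell)}_p|$ for some $B(p,\Gamma^{\ell-1})\subseteq A^{(\ell)}_p\subseteq B(p,\Gamma^{\ell})$. Then $A^{(\ell-1)}_p\subseteq B(p,\Gamma^{\ell-1})\subseteq A^{(\ell)}_p$, so whatever sets the independent runs use are automatically nested, and the counts are automatically monotone. This coarse spacing, not a nesting repair, is where the $\Gamma^{2}$ in \eqref{eq:distorted-distanct} comes from: a point $q\in A^{(\ell)}_p\setminus A^{(\ell-1)}_p$ has $\Gamma^{\ell-2}<\dist(p,q)\le\Gamma^{\ell}$ and receives $\Tilde{\dist}(p,q):=\Gamma^{\ell}$.

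Your $(1+\epsilon)$-scales can also be rescued. Replace $c_{p,i}$ by $c'_{p,i}:=\max_{j\le i}c_{p,j}$; this preserves $|B(p,\tau_i)|\le c'_{p,i}\le|B(p,\Gamma\tau_i)|$. Then realize the $c'_{p,i}$ as nested prefixes of the distance order around $p$. Since $f_p$ depends only on counts, this yields a valid $\Tilde{\dist}$ with distortion $\Gamma(1+\epsilon)$ instead of $\Gamma^{2}$, at the cost of $O(\epsilon^{-1})$ times more scales.

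Second, your rule gives $\Tilde{\dist}(p,p)=\Gamma\tau_0>0$, which violates \eqref{eq:distorted-distanct}, and this is not cosmetic. Under your rule every $q$ has $\Tilde{\dist}(p,q)\ge\Gamma\tau_0$, so $f_p(\ralg_p)=\lambda>0$ forces $\ralg_p>\beta(\Gamma\tau_0)^{z}$ for all $p$. Whenever $\lambda\le\beta(\Gamma\tau_0)^{z}$, this gives $\lambda\|\Byalg\|_1=\|\Bralg\|_1>n\lambda\ge\OPTfl$ (using $\By=\mathbf{1}$ as a witness). No finite LMP ratio can then hold, which is the same failure mode as the example in the introduction. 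The paper avoids this by fixing $A^{(0)}_p=\{p\}$ and $\Tilde{\dist}(p,p):=0$; this is the standalone self-term $\ralg_p$ in the equation of \Cref{alg:ralg}. You need the same separate treatment of $q=p$.
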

\begin{proof}
    We present in \Cref{alg:ralg} a more involved (offline) algorithm than $\MPbeta$ (\Cref{alg:robust-MP}), which is more amenable to MPC implementation.
    The input to \Cref{alg:ralg} is $P \subseteq \R^d$, and it considers the Euclidean distance $\dist$.
    We show that its output is equivalent to that of $\MPbeta$ on some distorted $(P, \Tilde{\dist})$, thereby establishing its correctness (i.e., an LMP approximation) via \Cref{lem:FMP-distance-distortion}, and finally discuss its MPC implementation.

    \begin{algorithm}[ht]
    \DontPrintSemicolon
    \caption{MPC-implementable (offline) variant of $\MPbeta$}
    \label{alg:ralg}

    $L \gets \Theta(\log \Delta)$ be sufficiently large\;
    \label{alg:ralg:L}

    $\beta \gets 2^{-z-1}\cdot \Gamma^{-2z}$
    \tcp*{the same factor as specified in \Cref{lem:FMP-distance-distortion}}
    \label{alg:ralg:beta}

    For each point $p \in P$, compute sets $A^{(0)}_{p} \subseteq \dots \subseteq A^{(L)}_{p}$ such that
    $\{q \in P: \dist(p, q) \le \Gamma^{\ell-1}\} \subseteq A^{(\ell)}_{p} \subseteq \{q \in P: \dist(p, q) \le \Gamma^{\ell}\big\}$, $\forall \ell \in [0 : L]$\;
    \label{alg:ralg:s}

    For each point $p \in P$, compute the solution $\ralg_{p} \ge 0$ to the equation
    $\ralg_{p} + \sum_{\ell \in [L]} (|A^{(\ell)}_{p}| - |A^{(\ell - 1)}_{p}|) \cdot [\ralg_{p} - \beta \cdot \Gamma^{z \ell}]^{+} = \lambda$\;
    \tcp{It is easy to see from \Cref{obs:A} that there exists a unique solution $\ralg_{p}$.}
    \label{alg:ralg:rapx}

    \Return $\Byalg := \Bralg / \lambda$
    \end{algorithm}

    The correctness analysis of \Cref{alg:ralg} begins with the following \Cref{obs:A-dist,obs:A}, concerning the sets $A^{(\ell)}_{p}$ computed in Line~\ref{alg:ralg:s}.

    \begin{fact}
    \label{obs:A-dist}
    $\Gamma^{\ell-2} \le \dist(p, q) \le \Gamma^{\ell}$, for any point $p \in P$, level $\ell\in [L]$, and point $q \in A^{(\ell)}_{p}\setminus A^{(\ell-1)}_{p}$.
    \end{fact}

    \begin{proof}
    This is a direct consequence of Line~\ref{alg:ralg:s}.
    \end{proof}

    \begin{fact}
    \label{obs:A}
    $\{p\} = A^{(0)}_{p} \subseteq \dots \subseteq A^{(L)}_{p} = P$ for every $p \in P$, when $L = \Theta(\log \Delta)$ is sufficiently large.
    \end{fact}

    \begin{proof}
    The inclusion $A^{(0)}_{p} \subseteq \dots \subseteq A^{(L)}_{p}$ is a direct consequence of Line~\ref{alg:ralg:s}.
    The first equality $A^{(0)}_{p} = \{p\}$ holds since $\Gamma^{0} = 1 < 2 \le \min_{p \ne q \in P} \dist(p, q)$, and the second equality $A^{(L)}_{p} = P$ holds since a sufficiently large $L = \Theta(\log \Delta)$ leads to $\Gamma^{L - 1} \ge \max_{p \ne q \in P} \dist(p, q) = O(\Delta)$ (see \Cref{sec:prelim}).
    This finishes the proof of \Cref{obs:A}.
    \end{proof}
    By \Cref{obs:A}, the LHS of the equation in Line~\ref{alg:ralg:rapx} is continuous and increasing in $\ralg_{p}$, which guarantees the existence and uniqueness of $\ralg_p$.

    We define a distorted distance $\Tilde{\dist}$ as follows:
    \begin{itemize}
        \item for every $p\in P$, let $\Tilde{\dist}(p,p) : = 0 = \dist(p,p)$;
        \item for every $p\neq q \in P$, let $\ell \in [L]$ be such that $q \in A_p^{(\ell)}\setminus A_p^{(\ell - 1)}$, and then let $\Tilde{\dist}(p,q) := \Gamma^{\ell}$; the existence and uniqueness of $\ell$ is guaranteed by \Cref{obs:A}.
    \end{itemize} 
    By \Cref{obs:A-dist}, we have $\dist(p,q) \le \Tilde{\dist}(p,q) \le \Gamma^2\cdot \dist(p,q)$ for every $p\neq q\in P$, and therefore the distorted distance $\Tilde{\dist}$ satisfies \eqref{eq:distorted-distanct}.
    Moreover, by \Cref{obs:A}, the definition of $\Tilde{\dist}$, and the computation of $\ralg_p$ (Line~\ref{alg:ralg:rapx}), we can deduce that 
    \begin{equation*}
        \sum_{q \in P} \big[\ralg_p - \beta \cdot \Tilde{\dist}^z(p,q)\big]^{+} = \lambda.
    \end{equation*}
    Hence, the output $\Byalg = \Bralg / \lambda$ of \Cref{alg:ralg} is identical to the output of $\MPbeta$ (\Cref{alg:robust-MP}) with $\beta = 2^{-z-1} \cdot \Gamma^{-2z}$ on input $(P, \Tilde{\dist})$, and therefore \Cref{lem:FMP-distance-distortion} implies that $\Byalg$ is an LMP $2^{O(z^2)} \Gamma^{2z}$-approximation on the original Euclidean $(P, \dist)$.

    Finally, we discuss the MPC implementation of \Cref{alg:ralg}, and it suffices to focus on the nontrivial Lines~\ref{alg:ralg:s} and~\ref{alg:ralg:rapx}.
    \begin{itemize}
        \item Line~\ref{alg:ralg:s} aims to compute the sets $A^{(\ell)}_{p}$, whose faithful implementation in MPC is inefficient.
    Nonetheless, in fact, the latter Line~\ref{alg:ralg:rapx} only requires the sizes $|A^{(\ell)}_{p}|$.
    For a single level $\ell\in [0:L]$, the sizes $\{|A^{(\ell)}_{p}|\}_{p \in P}$ can be computed deterministically using an MPC primitive from~\cite[Theorem 3.1]{CzumajGJK024}.
    By running $L + 1$ instances of this primitive---one per level $\ell \in [0:L]$---in parallel, we obtain all sizes $\{|A^{(\ell)}_{p}|\}_{p \in P, \ell \in [0:L]}$.
    This parallel invocation requires local memory $s \ge \poly(d \log n)$ and uses $O(\log_s n)$ rounds, exactly as in \cite[Theorem 3.1]{CzumajGJK024}; the total memory is $2^{O(d/\Gamma)} \cdot n \cdot \poly(d \log n) \cdot \log \Delta$, which is larger by a factor of $L = \Theta(\log \Delta)$ since we need to replicate the data $L$ times for parallelization.

    \item To implement Line~\ref{alg:ralg:rapx}, we can redistribute the data so that for each point $p \in P$, the machine storing $p$ also stores all sizes $\{|A^{(\ell)}_p|\}_{\ell \in [0:L]}$; then, the value $\ralg_p$ can be computed locally, with no further communication.
    Such redistribution can be MPC-implemented via, e.g., sorting~\cite{DBLP:conf/isaac/GoodrichSZ11}, requiring local memory $s \ge L + 1 = O(\log \Delta)$ (to store all sizes $\{|A^{(\ell)}_p|\}_{\ell \in [0: L]}$ in a single machine), using total memory $O(n L) = O(n \log \Delta)$, and in $O(\log_s n)$ rounds.
    \end{itemize}

    In total, the entire MPC implementation
    requires local memory $s \ge \poly(d \log (n\Delta))$, runs in $O(\log_s n)$ rounds, and uses total memory $2^{O(d/\Gamma)} \cdot n \cdot \poly(d \log (n\Delta))$.
\end{proof}

\subsection{\texorpdfstring{Structural Properties of $\Brb$}{}}
\label{sec:properties-rp}

Throughout this and the following subsections, we fix a metric space $(P, \dist)$. For every $\beta \in (0,1)$, let $\Brb = (\rb_p)_{p \in P}$ denote the vector computed by $\MPbeta$ (in Line~\ref{alg:robust-MP:rp} of \Cref{alg:robust-MP}) on $(P, \dist)$.

In this subsection, we present several structural properties 
of the vector $\Brb = (\rb_p)_{p\in P}$:
\begin{itemize}
    \item \Cref{lem:properties-rp-size} bounds the number of points in every small-radius ball centered at each $p \in P$.
    \item \Cref{lem:properties-rp-monotone,lem:properties-rp-relation1,lem:properties-rp-relation2} derive a set of smoothness properties for the $\rb_p$ value.
    \item \Cref{lem:properties-rp-localdensity} demonstrates a local density property of the $\rb_p$ values, providing lower bounds on the sum of values over the neighborhood of each point (with a sufficiently large radius).
\end{itemize}
These properties play a crucial role in the analyses later in \Cref{sec:fractional-MP,sec:revisit-fractionalMP}.

\paragraph{Additional Notations: Balls and Rings.} 
Before proceeding, we introduce some notations to better characterize the properties of $\rb_p$.
For every point $p \in P$ and any radius $r \ge 0$, we define the ($\beta$-scaled power-$z$) ball $\Ball_p(r)$ centered at $p$ with radius $r$, and the ($\beta$-scaled power-$z$) ring $\Ring_p(r)$ centered at $p$ with inner radius $r$ and outer radius $2r$, as follows.
\begin{align}
    \Ball_{p}(r) ~&:=~ \big\{q \in P:\ \beta \cdot \dist^{z}(p, q) \le r \big\}.
    \label{eq:ball-beta}\\
    \Ring_{p}(r) ~&:=~ \Ball_{p}(2r)\setminus \Ball_{p}(r) ~=~ \big\{q \in P:\ r < \beta \cdot \dist^{z}(p, q) \le 2r \big\}
    \label{eq:ring}
\end{align}
It is easy to verify that the defining equation for $\rb_{p}$ in \eqref{eq:rp} is equivalent to
\begin{align}
\label{eq:r-vector-equivalent}
    \sum_{q \in \Ball_{p}(\rb_{p})} \big(\rb_{p} - \beta \cdot \dist^{z}(p, q)\big)
    ~=~ \lambda.
\end{align}

\paragraph{Size Bounds for Balls.}
The following \Cref{lem:properties-rp-size} provides an upper bound on the size of the small-radius ball $\Ball_{p}(r)$. 
This lemma is adapted from \cite[Claim~4.4]{CzumajGJK024}, and we include its proof for completeness.

\begin{lemma}
\label{lem:properties-rp-size}
$|\Ball_{p}(r)| \le \frac{\lambda}{\rb_p - r}$, for any scale factor $\beta > 0$, point $p\in P$ and radius $0 < r < \rb_p$.
\end{lemma}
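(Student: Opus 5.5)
The plan is to lower bound the left-hand side of the defining equation \eqref{eq:r-vector-equivalent} using only the contribution of points in the small ball $\Ball_p(r)$. Each term in that sum is nonnegative, because every $q \in \Ball_p(\rb_p)$ satisfies $\beta\cdot\dist^z(p,q) \le \rb_p$. This lets us drop terms freely.

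First, since $r < \rb_p$, we have the inclusion $\Ball_p(r) \subseteq \Ball_p(\rb_p)$. Restricting the sum in \eqref{eq:r-vector-equivalent} to $\Ball_p(r)$ therefore gives
\begin{equation*}
    \lambda ~=~ \sum_{q \in \Ball_p(\rb_p)} \big(\rb_p - \beta\cdot\dist^z(p,q)\big) ~\ge~ \sum_{q \in \Ball_p(r)} \big(\rb_p - \beta\cdot\dist^z(p,q)\big).
\end{equation*}

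Second, for every $q \in \Ball_p(r)$, the definition \eqref{eq:ball-beta} gives $\beta\cdot\dist^z(p,q) \le r$. Each remaining term is thus at least $\rb_p - r > 0$, so the right-hand side is at least $|\Ball_p(r)|\cdot(\rb_p - r)$. Dividing by the positive quantity $\rb_p - r$ yields $|\Ball_p(r)| \le \lambda/(\rb_p - r)$.

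There is no real obstacle. The only point needing care is the nonnegativity of the dropped terms, which holds by the definition of $\Ball_p(\rb_p)$; equivalently, one can argue directly from \eqref{eq:rp}, where the $[\cdot]^+$ makes every summand nonnegative.
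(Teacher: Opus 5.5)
Your proposal is correct and follows essentially the same argument as the paper: keep only the terms from $\Ball_p(r)$ in the defining equation, note that each is at least $\rb_p - r > 0$, and divide. The only cosmetic difference is that you start from the restricted-sum form \eqref{eq:r-vector-equivalent}, whereas the paper starts from \eqref{eq:rp}, where the $[\cdot]^+$ makes every summand nonnegative; as you observe, the two are equivalent.
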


\begin{proof}
By definition, each point $q \in \Ball_{p}(r)$ satisfies $\rb_{p} - \beta \cdot \dist^{z}(p, q) \ge \rb_{p} - r>0$. This together with \eqref{eq:rp} gives
$\lambda
= \sum_{q \in P} [\rb_{p} - \beta \cdot \dist^{z}(p, q)]^{+}
\ge |\Ball_{p}(r)| \cdot (\rb_{p} - r)$. The lemma then follows directly.
\end{proof}

\paragraph{Smoothness of $\rb_{p}$.}
We establish a set of \emph{smoothness} properties for $\rb_{p}$ (\Cref{lem:properties-rp-monotone,lem:properties-rp-relation1,lem:properties-rp-relation2}).
First, \Cref{lem:properties-rp-monotone} shows that $\rb_p$ is smooth with respect to $\beta$, in the sense that when $\beta$ is scaled by a factor $c \ge 1$, the value of $\rb_p$ changes by a factor of at most $c$.

\begin{lemma}
\label{lem:properties-rp-monotone}
$(\beta' / \beta) \cdot \Brb \le \Brb[\beta'] \le \Brb$, for any scale factors $\beta \ge \beta' > 0$.
\end{lemma}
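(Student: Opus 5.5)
The plan is to treat, for each fixed point $p\in P$, the left-hand side of \eqref{eq:rp} as a function of the scale factor and compare the two defining equations directly. Define $f_\beta(r) := \sum_{q\in P}[r-\beta\cdot\dist^z(p,q)]^+$. For fixed $\beta$, $f_\beta$ is continuous and nondecreasing in $r$. It is in fact strictly increasing on $r>0$, since the term $q=p$ contributes exactly $r$. Consequently $\rb_p$ is the unique solution of $f_\beta(r)=\lambda$. Also, for fixed $r$, $f_\beta(r)$ is nonincreasing in $\beta$. Both inequalities of the lemma will follow from this monotonicity plus a homogeneity identity.

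\textbf{Upper bound $\Brb[\beta']\le\Brb$.} Since $\beta'\le\beta$, each summand satisfies $[r-\beta'\dist^z(p,q)]^+\ge[r-\beta\dist^z(p,q)]^+$, so $f_{\beta'}(r)\ge f_\beta(r)$ for every $r$. Evaluating at $r=\rb_p$ gives $f_{\beta'}(\rb_p)\ge\lambda=f_{\beta'}(\rb[\beta']_p)$. Strict monotonicity of $f_{\beta'}$ in $r$ then yields $\rb[\beta']_p\le\rb_p$.

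\textbf{Lower bound $(\beta'/\beta)\,\Brb\le\Brb[\beta']$.} Set $c:=\beta'/\beta\in(0,1]$. The key identity is homogeneity: $f_{\beta'}(c r)=\sum_q[c r-c\beta\dist^z(p,q)]^+=c\,f_\beta(r)$. Applying this at $r=\rb_p$ gives $f_{\beta'}(c\,\rb_p)=c\lambda\le\lambda=f_{\beta'}(\rb[\beta']_p)$. Strict monotonicity again gives $c\,\rb_p\le\rb[\beta']_p$.

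No step here is a real obstacle. The only care needed is to justify uniqueness and strict monotonicity of $f_\beta$, which comes from the self-term $q=p$ contributing exactly $r$ (so $f_\beta(r)\ge r$, and $f_\beta$ strictly increases on $r>0$). The scaling identity is the crux of the lower bound.
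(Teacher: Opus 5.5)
Your proposal is correct and is essentially the paper's own argument: the pointwise comparison $\Phi^{(\beta')}_p(r) \ge \Phi^{(\beta)}_p(r)$ gives the upper bound, and the homogeneity identity $\Phi^{(\beta')}_p(\tfrac{\beta'}{\beta} r) = \tfrac{\beta'}{\beta}\,\Phi^{(\beta)}_p(r)$ gives the lower bound. Your use of the self-term $q=p$ to show strict monotonicity on $r>0$ spells out a step the paper covers only by noting that the function is ``increasing'' and $\lambda>0$.
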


\begin{proof}
It suffices to show $\frac{\beta'}{\beta} \cdot \rb_{p} \le \rb[\beta']_{p} \le \rb_{p}$ for each point $p \in P$.
Consider the function $\Phi^{(\beta')}_{p}(r) := \sum_{t \in P} [r - \beta' \cdot \dist^{z}(p, t)]^{+}$, which is increasing in $r$ and satisfies $\Phi_{p}^{(\beta')}(\rb[\beta']_{p}) = \lambda$ by \eqref{eq:rp}; similarly for the other function $\Phi^{(\beta)}_{p}(r)$. (Note that $\lambda > 0$ is nonzero.) Then, we can verify the lemma as follows:
\begin{itemize}
    \item $\Phi^{(\beta')}_{p}(\frac{\beta'}{\beta} \cdot \rb_{p})
    = \frac{\beta'}{\beta} \cdot \Phi^{(\beta)}_{p}(\rb_{p})
    = \frac{\beta'}{\beta} \cdot \lambda
    \le \lambda
    = \Phi^{(\beta')}_{p}(\rb[\beta']_{p})
    \implies
    \frac{\beta'}{\beta} \cdot \rb_{p} \le \rb[\beta']_{p}$.
    \item $\Phi^{(\beta')}_{p}(\rb[\beta]_{p})
    \ge \Phi^{(\beta)}_{p}(\rb[\beta]_{p})
    = \lambda
    = \Phi^{(\beta')}_{p}(\rb[\beta']_{p})
    \implies
    \rb[\beta']_{p} \le \rb_{p}$.
\end{itemize}
This finishes the proof.
\end{proof}

Second, \Cref{lem:properties-rp-relation1} shows that $\rb_{p}$ is smooth with respect to the location of $p \in P$, i.e, nearby points have comparable values of $\rb_{p}$. A similar result can be found from~\cite[Claim 4.4]{CzumajGJK024}.

\begin{lemma}
\label{lem:properties-rp-relation1} 
For any scale factor $\beta > 0$ and any two points $p, q \in P$,
\begin{align*}    
    \rb_{p}
    ~\le~ r^{(2^{z-1}\beta)}_{q} + 2^{z-1} \cdot \beta \cdot \dist^{z}(p, q)
    ~\le~ 2^{z-1} \cdot \big(r^{(\beta)}_{q} + \beta \cdot \dist^{z}(p, q)\big).
\end{align*}
\end{lemma}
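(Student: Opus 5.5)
The plan is to handle the two inequalities separately. The second one is immediate from \Cref{lem:properties-rp-monotone}: applying it with the pair $2^{z-1}\beta \ge \beta$ gives $r^{(2^{z-1}\beta)}_q \le 2^{z-1} r^{(\beta)}_q$. Adding $2^{z-1}\beta\cdot\dist^z(p,q)$ to both sides yields the right-hand bound. So the real work is the first inequality.

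For the first inequality, set $\rho := r^{(2^{z-1}\beta)}_q + 2^{z-1}\beta\,\dist^z(p,q)$ and compare it with $\rb_p$ using the increasing function $\Phi^{(\beta)}_p(r) = \sum_{t\in P}[r - \beta\,\dist^z(p,t)]^+$, exactly as in the proof of \Cref{lem:properties-rp-monotone}. Since $\Phi^{(\beta)}_p(\rb_p) = \lambda$ and $\Phi^{(\beta)}_p$ is increasing, it suffices to show $\Phi^{(\beta)}_p(\rho) \ge \lambda$.

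To prove this, take any $t\in P$ and apply \Cref{lem:triangle-inequality} with the middle point $p$:
\[
\dist^z(q,t) \le 2^{z-1}\big(\dist^z(q,p)+\dist^z(p,t)\big).
\]
Multiplying by $\beta$ and rearranging gives
\[
\beta\,\dist^z(p,t) \ge 2^{1-z}\beta\,\dist^z(q,t) - \beta\,\dist^z(p,q).
\]
Feeding this into the summand bounds each term $[\rho - \beta\,\dist^z(p,t)]^+$ from below. This produces something like $[r' - \beta' \dist^z(q,t)]^+$, where $\beta'$ is $\beta$ rescaled by a power of $2$.

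The main obstacle is choosing the direction of the scaling so that the constants line up with the stated $2^{z-1}\beta$ on $q$'s side. A cleaner route than the naive one is to use the triangle inequality in the form
\[
2^{z-1}\beta\,\dist^z(q,t) \le 2^{z-1}\beta\cdot 2^{z-1}\big(\cdots\big),
\]
which does not directly work. Instead, I would bound
\[
\beta\,\dist^z(p,t) \le 2^{z-1}\beta\,\dist^z(p,q) + 2^{z-1}\beta\,\dist^z(q,t),
\]
again by \Cref{lem:triangle-inequality}, now with middle point $q$. This gives, for every $t$,
\[
\rho - \beta\,\dist^z(p,t) \ge r^{(2^{z-1}\beta)}_q - 2^{z-1}\beta\,\dist^z(q,t).
\]
Taking $[\cdot]^+$ and summing over $t$ yields
\[
\Phi^{(\beta)}_p(\rho) \ge \Phi^{(2^{z-1}\beta)}_q\big(r^{(2^{z-1}\beta)}_q\big) = \lambda,
\]
by \eqref{eq:rp} for the scale factor $2^{z-1}\beta$. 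Monotonicity then gives $\rb_p \le \rho$.

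The remaining care is that $\Phi^{(\beta)}_p$ is increasing, and strictly so where it is positive. Because $\lambda > 0$, the value $\rb_p$ is uniquely determined, so $\Phi^{(\beta)}_p(\rho) \ge \lambda$ indeed implies $\rho \ge \rb_p$.
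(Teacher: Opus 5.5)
Your final argument is correct and is essentially the paper's proof. The paper also derives the second inequality from \Cref{lem:properties-rp-monotone}, and proves the first by applying the generalized triangle inequality with middle point $q$ to obtain $\Phi^{(\beta)}_p(\rho) \ge \Phi^{(2^{z-1}\beta)}_q(r^{(2^{z-1}\beta)}_q) = \lambda = \Phi^{(\beta)}_p(\rb_p)$, then concludes by monotonicity. The abandoned attempt with middle point $p$ is not needed and can be dropped.
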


\begin{proof}
The second inequality, or equivalently $r^{(2^{z-1}\beta)}_{q} \le 2^{z-1} \cdot r^{(\beta)}_{q}$, follows directly from \Cref{lem:properties-rp-monotone}. Below, we prove the first inequality.
By the generalized triangle inequality (\Cref{lem:triangle-inequality}), we have
\begin{align*}
    \Phi_{p}^{(\beta)}\big(\rb[2^{z-1}\beta]_{q} + 2^{z-1} \cdot \beta \cdot \dist^{z}(p, q)\big)
    & ~=~ \sum_{t \in P} \big[\rb[2^{z-1}\beta]_{q} + 2^{z-1} \cdot \beta \cdot \dist^{z}(p, q) - \beta \cdot \dist^{z}(p, t)\big]^{+}\\
    \mr{\Cref{lem:triangle-inequality}}
    & ~\ge~ \sum_{t \in P} \big[\rb[2^{z-1}\beta]_{q} - 2^{z-1} \cdot \beta \cdot \dist^{z}(q, t)\big]^{+}\\
    & ~=~ \Phi_{q}^{(2^{z-1}\beta)}(\rb[2^{z-1}\beta]_{q}) ~=~ \lambda ~=~ \Phi^{(\beta)}_{p}(\rb_{p}).
\end{align*}
This implies $\rb_{p} \le \rb[2^{z-1}\beta]_{q} + 2^{z-1} \cdot \beta \cdot \dist^{z}(p, q)$, since the function $\Phi^{(\beta)}_{p}$ is increasing (and $\lambda > 0$).
\end{proof}

Third, we establish \Cref{lem:properties-rp-relation2},
which is similar to \Cref{lem:properties-rp-relation1} but, via a more careful argument, show that $\rb_{p}$ and $\rb_{q}$ only differ by an additive term.
This difference depends on the (power-$z$) distance $\dist^{z}(p, q)$ as well as the neighborhood structure of $q$ (or $p$, by symmetry).

\begin{lemma}
\label{lem:properties-rp-relation2} 
For any scale factor $\beta > 0$ and any two points $p, q \in P$,
\begin{align*}
    \rb_{p}
     ~\le~& \rb_{q} + 2^{z-1} \cdot \beta \cdot \dist^{z}(p, q) + 
     \frac{1}{|\Ball_{q}(\rb_q/2)|} \cdot \Big(\sum_{t \in \Ball_{q}(\rb_q/2)} (2^{z-1} - 1) \cdot \beta \cdot \dist^{z}(t, q)\\
     &\phantom{\rb_{q} + 2^{z-1} \cdot \beta \cdot \dist^{z}(p, q) + 
     \frac{1}{|\Ball_{q}(\rb_q/2)|} \cdot \Big(}
    + \sum_{t \in \Ring_{q}(\rb_q/2)} \big(\rb_{q} - \beta \cdot \dist^{z}(t, q)\big)\Big).
\end{align*}
\end{lemma}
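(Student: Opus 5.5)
The plan is to reuse the monotone-function argument from the proof of \Cref{lem:properties-rp-relation1}. Let $R$ denote the right-hand side of the claimed inequality. Write $B := \Ball_q(\rb_q/2)$, and let $D$ denote the averaged term, i.e.\ $\frac{1}{|B|}(\cdots)$. Note that $B \ni q$, so $|B| \ge 1$ and $D$ is well defined. As before, $\Phi^{(\beta)}_p(r) := \sum_{t\in P}[r - \beta\dist^z(p,t)]^+$ is increasing in $r$ and satisfies $\Phi^{(\beta)}_p(\rb_p) = \lambda > 0$. So it suffices to show $\Phi^{(\beta)}_p(R) \ge \lambda$.

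\emph{Lower bound on $\Phi^{(\beta)}_p(R)$.} Drop all terms with $t\notin B$ and use $[x]^+\ge x$ on the remaining ones:
\[
\Phi^{(\beta)}_p(R) \ge \sum_{t\in B}\big(R - \beta\dist^z(p,t)\big).
\]
For each $t \in B$, the generalized triangle inequality (\Cref{lem:triangle-inequality}) gives
\[
\beta\dist^z(p,t) \le 2^{z-1}\beta\dist^z(p,q) + 2^{z-1}\beta\dist^z(q,t).
\]
Substituting $R = \rb_q + 2^{z-1}\beta\dist^z(p,q) + D$, the $\dist^z(p,q)$ terms cancel, and each summand is at least
\[
\big(\rb_q - \beta\dist^z(q,t)\big) - (2^{z-1}-1)\beta\dist^z(q,t) + D.
\]

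\emph{Summing over $B$.} The $|B|\cdot D$ contribution exactly cancels $\sum_{t\in B}(2^{z-1}-1)\beta\dist^z(t,q)$ and leaves $\sum_{t\in\Ring_q(\rb_q/2)}(\rb_q-\beta\dist^z(t,q))$. Hence
\[
\Phi^{(\beta)}_p(R) \ge \sum_{t\in B}\big(\rb_q-\beta\dist^z(q,t)\big) + \sum_{t\in \Ring_q(\rb_q/2)}\big(\rb_q-\beta\dist^z(q,t)\big).
\]
Since $B \cup \Ring_q(\rb_q/2) = \Ball_q(\rb_q)$ is a disjoint union, the right-hand side equals $\sum_{t\in\Ball_q(\rb_q)}(\rb_q-\beta\dist^z(q,t))$, which is $\lambda$ by \eqref{eq:r-vector-equivalent}. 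Monotonicity of $\Phi^{(\beta)}_p$ then yields $\rb_p\le R$.

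The only delicate point is the choice of which terms to keep. We restrict to $B$ and apply $[x]^+\ge x$ there, since individual terms may be negative, which is harmless for a lower bound. The ring contribution is then recovered through the averaging term $D$ rather than by summing over the ring directly. That bookkeeping is the whole point of the construction of $D$, and beyond it I expect no real obstacle.
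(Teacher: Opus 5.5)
Your proof is correct and follows essentially the same route as the paper: both compare $\Phi^{(\beta)}_p$ at the right-hand side against $\lambda$, apply the generalized triangle inequality through $q$, keep only $t \in \Ball_q(\rb_q/2)$, and let the averaged term absorb the $(2^{z-1}-1)$ excess and recover the ring contribution so that \eqref{eq:r-vector-equivalent} applies. The only cosmetic difference is that the paper applies the triangle inequality inside $[\cdot]^+$ for all $t$ before restricting to the ball, whereas you restrict first; this changes nothing.
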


\begin{proof}
Let
\begin{equation*}
    \gamma := \frac{1}{|\Ball_{q}(\rb_q/2)|} \cdot \Big(\sum_{t \in \Ball_{q}(\rb_q/2)} (2^{z-1} - 1) \cdot \beta \cdot \dist^{z}(t, q) + \sum_{t \in \Ring_{q}(\rb_q/2)} \big(\rb_{q} - \beta \cdot \dist^{z}(t, q)\big)\Big).
\end{equation*}
By the generalized triangle inequality (\Cref{lem:triangle-inequality}), we have
\begin{align*}
    \Phi^{(\beta)}_{p}(\rb_{q} + 2^{z-1} \cdot \beta \cdot \dist^{z}(p, q) + \gamma)
    & ~=~ \sum_{t \in P} \big[\rb_{q} + 2^{z-1} \cdot \beta \cdot \dist^{z}(p, q) + \gamma - \beta \cdot \dist^{z}(t, p) \big]^{+} \\
    \mr{\Cref{lem:triangle-inequality}}
    & ~\ge~ \sum_{t \in P} \big[\rb_{q} + \gamma - 2^{z-1} \cdot \beta \cdot \dist^{z}(t, q)\big]^{+} \\
    & ~\ge~ \sum_{t \in \Ball_{q}(\rb_q/2)} \big(\rb_{q} + \gamma - 2^{z-1} \cdot \beta \cdot \dist^{z}(t, q)\big) \\
    & ~=~ \sum_{t \in \Ball_{q}(\rb_q/2)} \big(\rb_{q}
    - \beta \cdot \dist^{z}(t, q)\big)\\
    & \phantom{~\le~}\quad 
    + \sum_{t \in \Ring_{q}(\rb_q/2)} \big(\rb_{q} - \beta \cdot \dist^{z}(t, q)\big) \\
    \mr{$\Ball_{q}(\rb_{q} ) = \Ball_{q}(\rb_{q} / 2) \sqcup \Ring_{q}(\rb_{q}/2)$}
    & ~=~ \sum_{t \in \Ball_{q}(\rb_{q} )} \big(\rb_{q} - \beta \cdot \dist^{z}(t, q)\big) \\
    & ~=~ \lambda ~=~ \Phi^{(\beta)}_{p}(\rb_{p}).
\end{align*}
This implies $\rb_{p}\le \rb_{q} + 2^{z-1} \cdot \beta \cdot \dist^{z}(p, q) + \gamma$, since the function $\Phi^{(\beta)}_{p}$ is increasing (and $\lambda > 0$).
\end{proof}

\paragraph{Local Density.} 
Finally, we establish \Cref{lem:properties-rp-localdensity}, which shows that for each point $p \in P$, the sum of $\rb_{q}$ over all points $q$ within distance $O(\rb_{p})$ (sufficiently large) is at least $\lambda$.
This property is straightforward in the $z = 1$ case by combining the definition of $\rb_{p}$ (in \eqref{eq:rp}) with \Cref{lem:properties-rp-relation1}. However, this simple argument cannot extend to all $z \ge 1$ cases, because of the $2^{z-1}$ factor in the statement of \Cref{lem:properties-rp-relation1}, and a more involved argument is needed for the extension.

\begin{lemma}
\label{lem:properties-rp-localdensity}
Let $\gamma^{*} := 2^{2z^{2} + z} = 2^{O(z^{2})}$.
For any scale factor $\beta > 0$ and every point $p \in P$,
\begin{align*}
    \sum_{q \in \Ball_{p}(\gamma^{*} \cdot \rb_{p})} \rb_{q} ~\ge~ \lambda.
\end{align*}
\end{lemma}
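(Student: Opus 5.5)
The plan is a counting argument built on a per-point lower bound. For every $q \in P$, each summand of \eqref{eq:r-vector-equivalent} is at most $\rb_q$, so
\[
\lambda \le \rb_q \cdot |\Ball_q(\rb_q)|, \qquad\text{that is,}\qquad \rb_q \ge \frac{\lambda}{|\Ball_q(\rb_q)|}.
\]
Now fix a summation set $Q = \Ball_p(\rho)$ with $\rho \le \gamma^{*}\rb_p$. Suppose every $q \in Q$ satisfies $|\Ball_q(\rb_q)| \le |Q|$. Then
\[
\sum_{q \in \Ball_p(\gamma^{*}\rb_p)} \rb_q \;\ge\; \sum_{q\in Q} \rb_q \;\ge\; \lambda \sum_{q \in Q} \frac{1}{|\Ball_q(\rb_q)|} \;\ge\; \lambda .
\]
So the lemma reduces to finding a radius $\rho \le \gamma^{*}\rb_p$ for which the balls $\Ball_q(\rb_q)$, $q \in \Ball_p(\rho)$, are no larger than $\Ball_p(\rho)$ itself. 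For $z=1$ this is easier (one can also use \Cref{lem:properties-rp-relation1} directly, as remarked before the statement). The issue for general $z$ is the $2^{z-1}$ blow-up in every triangle-inequality step.

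\emph{Step 1: the balls $\Ball_q(\rb_q)$ stay inside a slightly larger ball around $p$.} For $q \in \Ball_p(\rho)$, \Cref{lem:properties-rp-relation1} (with $p,q$ swapped) and \Cref{lem:properties-rp-monotone} give $\rb_q \le 2^{z-1}(\rb_p + \rho)$. By \Cref{lem:triangle-inequality}, any $t \in \Ball_q(\rb_q)$ satisfies $\beta\dist^z(p,t) \le 2^{z-1}(\rho + \rb_q)$. Hence, when $\rho \ge \rb_p$, we get $\Ball_q(\rb_q) \subseteq \Ball_p(C\rho)$ with $C := 2^{2z}$.

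\emph{Step 2: choose the scale.} Consider the radii $\rho_i := C^i \rb_p$. If for some $i$ we have $|\Ball_p(\rho_{i+1})| \le |\Ball_p(\rho_i)|$, then taking $Q = \Ball_p(\rho_i)$ finishes the proof by the reduction above. The trouble is that $|\Ball_p(\rho_i)|$ may strictly increase at every scale, so monotonicity alone never terminates. The first refinement I would try replaces the crude bound $\rb_q \ge \lambda/|\Ball_q(\rb_q)|$ by an averaged one. If $|\Ball_p(\rho_{i+1})| \le K|\Ball_p(\rho_i)|$, the sum is at least $\lambda/K$. To recover a full $\lambda$, I would restrict to the points $q \in \Ball_p(\rho_i)$ whose own value $\rb_q$ is much smaller than $\rho_i$. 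For such $q$, Step 1 gives the much tighter containment $\Ball_q(\rb_q) \subseteq \Ball_p(2^{z}\rho_i)$ rather than $\Ball_p(C\rho_i)$.

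\emph{Step 3: handle points with large $\rb_q$.} The remaining points have large $\rb_q$, and I would charge them directly: a single $q$ with $\rb_q \ge \lambda$ already suffices. Otherwise, \Cref{lem:properties-rp-size} applied at $q$ forces $\Ball_q(\rb_q/2)$ to have at most $2\lambda/\rb_q$ points, which limits how many points can cluster near $q$.

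I expect Step 2 to be the main obstacle. In particular, this pigeonhole argument has no a priori upper bound on $|\Ball_p(\gamma^{*}\rb_p)|$ to cap the growth of the ball sizes across scales. The exponent in $\gamma^{*} = 2^{2z^2+z}$ suggests the intended argument iterates about $z$ times with a per-step factor around $2^{2z+1}$. Each iteration presumably either terminates with a valid $Q$ or moves to a nearby point $p'$ with $\rb_{p'} \le \rb_p/2$, and $\Ball_{p'}$ still lies inside $\Ball_p(\gamma^{*}\rb_p)$. The point of moving is that a point with a small $r$-value has a dense small neighborhood by \eqref{eq:r-vector-equivalent}.

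A potential-function argument on the value $\rb$, rather than on ball sizes, is what I would pursue to make this terminate. One must check that at most $z$ halvings occur, or otherwise control the depth: the $2^{z-1}$ loss in \Cref{lem:properties-rp-relation1} is exactly what must be absorbed by each halving.
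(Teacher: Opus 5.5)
There is a genuine gap. Your Step~1 containment is correct, and so is the reduction ``if every $q\in\Ball_p(\rho)$ has $|\Ball_q(\rb_q)|\le|\Ball_p(\rho)|$, then the sum is at least $\lambda$.'' But you never show that such a $\rho\le\gamma^*\rb_p$ exists. Steps~2--3 and the closing paragraph list possible refinements rather than give an argument.

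Moreover, ball cardinalities are the wrong currency here. The estimate $\rb_q\ge\lambda/|\Ball_q(\rb_q)|$ can be arbitrarily lossy: points at scaled distance just below $\rb_q$ inflate $|\Ball_q(\rb_q)|$ while contributing almost nothing to \eqref{eq:r-vector-equivalent}. For instance, take $z=\beta=\lambda=1$ and a tree metric with an edge of length $1$ from $p$ to $q$, plus $M\ge 2$ leaves attached to $q$ by edges of length $\tfrac12-\tfrac1{2M}$. Then $\rb_p=1$ and $\rb_q=\tfrac12$, yet $|\Ball_q(\rb_q)|=M+1$. So your criterion fails for every $\rho<\tfrac32-\tfrac1{2M}$, even though $\rb_p$ alone already equals $\lambda$. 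Nothing in your sketch controls such spoilers across all scales up to $\gamma^*\rb_p$.

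The missing idea is to test $\rb_q$ against the summands $\rb_p-\beta\dist^z(p,q)$ of \eqref{eq:r-vector-equivalent} themselves. When that test fails, you pass to a point with smaller $r$-value whose ball is nested inside $p$'s. The paper does this in two stages.

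(i) Base claim, by induction over points sorted by $\rb$. If $\rb_q\ge 2^{-z}(\rb_p-\beta\dist^z(p,q))$ for all $q\in\Ball_p(\rb_p)$, summing \eqref{eq:r-vector-equivalent} gives $\lambda/2^z$. Otherwise a violating $t$ has $\rb_t<\rb_p$ and $\Ball_t(2^z\rb_t)\subseteq\Ball_p(2^z\rb_p)$, so the induction hypothesis at $t$ applies. Because the balls nest, the depth of this descent is irrelevant, and there is no need to bound the number of halvings.

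(ii) Amplification. Assume every point $x$ has mass $\alpha\lambda$ inside $\Ball_x(\gamma\rb_x)$, and let $q^*$ minimize $\rb$ over $\Ball_p(2^z\gamma\rb_p)$.
If no $t\in\Ball_{q^*}(\rb_{q^*})$ has $\rb_t<\rb_{q^*}$, then \eqref{eq:r-vector-equivalent} at $q^*$ yields $\lambda$ inside $\Ball_p(2^{2z}\gamma\rb_p)$.
Otherwise, such a $t$ lies outside $\Ball_p(2^z\gamma\rb_p)$ by minimality of $q^*$. This forces $\Ball_t(\gamma\rb_t)\cap\Ball_p(\gamma\rb_p)=\emptyset$, and both balls lie in $\Ball_p(2^{2z}\gamma\rb_p)$, giving $2\alpha\lambda$.

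Starting from $(\gamma,\alpha)=(2^z,2^{-z})$ and doubling $z$ times gives exactly $\gamma^*=2^z\cdot(2^{2z})^z$. So the $z$ iterations double the captured mass; they are not $z$ halvings of $\rb$ as you guessed.
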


\begin{proof}
We show \Cref{lem:properties-rp-localdensity} by induction; specifically, \Cref{claim:local-density-base-case} verifies the base case, and \Cref{claim:local-density-induction} verifies the induction step.

\begin{claim}[Base Case for \Cref{lem:properties-rp-localdensity}]
\label{claim:local-density-base-case}
For any scale factor $\beta > 0$ and every point $p \in P$,
\begin{align*}
    \sum_{q \in \Ball_{p}(2^{z} \cdot \rb_{p})} \rb_{q}
    ~\ge~ (1 / 2^{z}) \cdot \lambda.
\end{align*}
\end{claim}

\begin{proof}
To ease the presentation, we reindex all points $P = \{p_{1}, \dots, p_n\}$ such that $\rb_{p_{1}} \le \dots \le \rb_{p_n}$; we will prove \Cref{claim:local-density-base-case} by induction on this ordering of points.

\textit{Base Case:}
The first point $p_{1}$ satisfies $\rb_{p_{1}} \le \rb_{q},\ \forall q \in P$, which together with~\eqref{eq:r-vector-equivalent} gives
\begin{align*}
    \sum_{q \in \Ball_{p_{1}}(2^{z} \cdot \rb_{p_{1}})} \rb_{q}
    ~\ge~ \sum_{q \in \Ball_{p_{1}}(\rb_{p_{1}})} \big(\rb_{p_{1}} - \beta \cdot \dist^{z}(p_{1}, q)\big)
    ~=~ \lambda
    ~\ge~ (1 / 2^{z}) \cdot \lambda.
\end{align*}

\textit{Induction Step:}
Fix an index $i \in [2 : n]$. We assume, as an induction hypothesis, that \Cref{claim:local-density-base-case} holds for points $p_{1}, \dots, p_{i - 1}$.
We next establish \Cref{claim:local-density-base-case} for $p_{i}$. Below, we rewrite $p = p_{i}$ for notational brevity and conduct a case analysis.

\textit{Case~1:}
for every point $q \in \Ball_{p}(\rb_{p})$, it holds that
$\rb_{q} \ge \frac{1}{2^{z}} \cdot (\rb_{p} - \beta \cdot \dist^{z}(p, q))$.
In this case, it follows directly that 
\begin{align*}
    \sum_{q \in \Ball_{p}(\rb_{p})} \rb_{q}
    ~\ge~ \sum_{q \in \Ball_{p}(\rb_{p})} (1 / 2^{z}) \cdot \big(\rb_{p} - \beta \cdot \dist^{z}(p, q)\big)
    ~=~ (1 / 2^{z}) \cdot \lambda,
\end{align*}
which completes our analysis of Case~1.

\textit{Case~2:}
there exists some point $t \in \Ball_{p}(\rb_{p})$ such that
$\rb_{t} < \frac{1}{2^{z}} \cdot (\rb_{p} - \beta \cdot \dist^{z}(p, t))$.
In this case, we make the following observations:
\begin{itemize}
    \item $t = p_{j}$ for some index $j \in [i - 1]$, since $\rb_{t} < \frac{1}{2^{z}} \cdot (\rb_{p} - \beta \cdot \dist^{z}(p, t)) < \rb_{p}$.
    
    \item $\Ball_{t}(2^{z} \cdot \rb_{t}) \subseteq \Ball_{p}(2^{z} \cdot \rb_{p})$. To see this, consider a specific point $q \in \Ball_{t}(2^{z} \cdot \rb_{t})$; we have $\beta \cdot \dist^{z}(q, t) \le 2^{z} \cdot \rb_{t}$.
    Moreover, by the premise of Case~2, the point $t$ satisfies $\beta \cdot \dist^{z}(p, t) \le \rb_{p}$ and $\rb_{t} < \frac{1}{2^{z}} \cdot (\rb_{p} - \beta \cdot \dist^{z}(p, t)) \le \frac{1}{2^{z}} \cdot \rb_{p}$.
    By the generalized triangle inequality (\Cref{lem:triangle-inequality}), we have
    \begin{align*}
        \beta \cdot \dist^{z}(q, p)
        & ~\le~ \beta \cdot 2^{z-1} \cdot \big(\dist^{z}(q, t) + \dist^{z}(p, t)\big) \\
        & ~\le~ 2^{z-1} \cdot \big(2^{z} \cdot \rb_{t} + \rb_{p}\big) \\
        & ~\le~ 2^{z-1} \cdot \big(2^{z} \cdot \tfrac{1}{2^{z}} + 1\big) \cdot \rb_{p} \\
        & ~=~ 2^{z} \cdot \rb_{p}.
    \end{align*}
    This implies $q \in \Ball_{p}(2^{z} \cdot \rb_{p})$ for every $q \in \Ball_{t}(2^{z} \cdot \rb_{t})$, i.e., $\Ball_{t}(2^{z} \cdot \rb_{t}) \subseteq \Ball_{p}(2^{z} \cdot \rb_{p})$.
\end{itemize}
Combining the above observations and the induction hypothesis completes our analysis in Case~2, as follows:
\begin{align*}
    \sum_{q \in \Ball_{p}(2^{z} \cdot \rb_{p})} \rb_{q}
    ~\ge~ \sum_{q \in \Ball_{t}(2^{z} \cdot \rb_{t})} \rb_{q}
    ~\ge~ (1 / 2^{z}) \cdot \lambda.
\end{align*}

Combining both cases completes the induction step, which immediately implies \Cref{claim:local-density-base-case}.
\end{proof} 

\begin{claim}[Induction Step for \Cref{lem:properties-rp-localdensity}]
\label{claim:local-density-induction}
Given parameters $\alpha \in (0,1)$ and $\gamma \ge 1$, if it holds that
\begin{align}
    \label{eq:local-density-induction-condition}
    & \sum_{q \in \Ball_{p}(\gamma \cdot \rb_{p})} \rb_{q} ~\ge~ \alpha\cdot \lambda,
    && \forall \beta > 0,\ \forall p \in P,
\end{align}
then it also holds that
\begin{align}
    \label{eq:local-density-induction-conclusion}
    & \sum_{q \in \Ball_{p}(2^{2z} \cdot \gamma \cdot \rb_{p})} \rb_{q} ~\ge~ \min\{2\alpha,\, 1\} \cdot \lambda,
    && \forall \beta > 0,\ \forall p \in P.
\end{align}

\end{claim}

\begin{proof}
We fix a scale factor $\beta >0$ and a point $p\in P$.
Let $q^{*} := \argmin_{q \in \Ball_{p}(2^{z} \cdot \gamma \cdot \rb_{p})} \rb_{q}$; we have $\beta \cdot \dist^{z}(q^{*}, p) \le 2^{z} \cdot \gamma \cdot \rb_{p}$ and $\rb_{q^{*}} \le \rb_{p}$.
We conduct a case analysis.

\textit{Case~1:}
for every point $t \in \Ball_{q^{*}}(\rb_{q^{*}})$, it holds that
$\rb_{t} \ge \rb_{q^{*}}$.
In this case, we observe that $\Ball_{q^{*}}(\rb_{q^{*}}) \subseteq \Ball_{p}(2^{2z} \cdot \gamma \cdot \rb_{p})$.
To see this, consider a specific point $t \in \Ball_{q^{*}}(\rb_{q^{*}})$; we have $\beta \cdot \dist^{z}(t, q^{*}) \le \rb_{q^{*}}$.
By the generalized triangle inequality (\Cref{lem:triangle-inequality}), we have
\begin{align*}
    \beta \cdot \dist^{z}(t, p)
    & ~\le~ \beta \cdot 2^{z-1} \cdot \big(\dist^{z}(t, q^{*}) + \dist^{z}(p, q^{*})\big)\\
    & ~\le~ 2^{z-1} \cdot \big(\rb_{q^{*}} + 2^{z} \cdot \gamma \cdot \rb_{p}\big)\\
    \mr{$\rb_{q^{*}} \le \rb_{p}$}
    & ~\le~ 2^{2z} \cdot \gamma \cdot \rb_{p}.
\end{align*}
This implies $t \in \Ball_{p}(2^{2z} \cdot \gamma \cdot \rb_{p})$ for every $t \in \Ball_{q^{*}}(\rb_{q^{*}})$, i.e., $\Ball_{q^{*}}(\rb_{q^{*}}) \subseteq \Ball_{p}(2^{2z} \cdot \gamma \cdot \rb_{p})$.
Combining this observation with the premise of Case~1 and \eqref{eq:r-vector-equivalent} completes our analysis in Case~1, as follows:
\begin{align*}
    \sum_{t \in \Ball_{p}(2^{2z} \cdot \gamma \cdot \rb_{p})} \rb_{t}
~\ge~ \sum_{t \in \Ball_{q^{*}}(\rb_{q^{*}})} \big(\rb_{q^{*}} - \beta \cdot \dist^{z}(q^{*}, t)\big)
    ~=~ \lambda.
\end{align*}

\textit{Case~2:}
there exists some point $t \in \Ball_{q^{*}}(\rb_{q^{*}})$ such that
$\rb_{t} < \rb_{q^{*}} \le \rb_{p}$.
In this case, we make the following observations:
\begin{itemize}
    \item $\Ball_{t}(\gamma \cdot \rb_{t}) \cap \Ball_{p}(\gamma \cdot \rb_{p}) = \emptyset$.
    To see this, assume for contradiction that there exists some point $q \in \Ball_{t}(\gamma \cdot \rb_{t}) \cap \Ball_{p}(\gamma \cdot \rb_{p})$.
    It follows that
    $\beta \cdot \dist^{z}(p, t)
    \le \beta \cdot 2^{z-1} \cdot (\dist^{z}(p, q) + \dist^{z}(t, q))
    \le 2^{z-1} \cdot \gamma \cdot (\rb_{t} + \rb_{p})
    \le 2^{z} \cdot \gamma \cdot \rb_{p}
    \implies
    t \in \Ball_{p}(2^{z} \cdot \gamma \cdot \rb_{p})$; given the premise of Case~2 that $\rb_{t} < \rb_{q^{*}}$, this contradicts the choice of $q^{*} = \argmin_{q \in \Ball_{p}(2^{z} \cdot \gamma \cdot \rb_{p})} \rb_{q}$.

    \item $\Ball_{t}(\gamma \cdot \rb_{t}) \subseteq \Ball_{p}(2^{2z} \cdot \gamma \cdot \rb_{p})$.
    To see this, 
    consider a specific point $q \in \Ball_{t}(\gamma \cdot \rb_{t})$; we have $\beta \cdot \dist^{z}(t, q) \le \gamma \cdot \rb_{t}$. By applying the generalized triangle inequality (\Cref{lem:triangle-inequality}) twice, we have
    \begin{align*}
        \beta \cdot \dist^{z}(p, q)
& ~\le~ \beta \cdot 2^{z-1} \cdot \big(\dist^{z}(p, q^{*}) + 2^{z-1} \cdot \big(\dist^{z}(q^{*}, t) + \dist^{z}(t, q)\big)\big)\\
        & ~\le~ 2^{z-1} \cdot \big(2^{z} \cdot \gamma \cdot \rb_{p} + 2^{z-1} \cdot \big(\rb_{q^{*}} + \gamma \cdot \rb_{t}\big)\big)\\
        \mr{$\rb_{t} < \rb_{q^{*}} \le \rb_{p}$}
        & ~\le~ 2^{2z} \cdot \gamma \cdot \rb_{p}.
    \end{align*}
    This implies $q \in \Ball_{p}(2^{2z} \cdot \gamma \cdot \rb_{p})$ for all $q \in \Ball_{t}(\gamma \cdot \rb_{t})$, i.e., $\Ball_{t}(\gamma \cdot \rb_{t}) \subseteq \Ball_{p}(2^{2z} \cdot \gamma \cdot \rb_{p})$.
\end{itemize}
Combining the above observations and the premise of \Cref{claim:local-density-induction} completes our analysis in Case~2, as follows:
\begin{align*}
    \sum_{q \in \Ball_{p}(2^{2z} \cdot \gamma \cdot \rb_{p})} \rb_{q} ~\ge~ \sum_{q \in \Ball_{t}(\gamma \cdot \rb_{t})} \rb_{q}
    + \sum_{q \in \Ball_{p}(\gamma \cdot \rb_{p})} \rb_{q}
    ~\ge~ 2\alpha\cdot \lambda.
\end{align*}

Combining both cases completes the proof of \Cref{claim:local-density-induction}.
\end{proof}

Let us return back to the proof of \Cref{lem:properties-rp-localdensity}; applying \Cref{claim:local-density-base-case,claim:local-density-induction} gives, for every integer $i \ge 0$,
\begin{align*}
    \sum_{q \in \Ball_{p}(2^{(2i + 1) z} \cdot \rb_{p})} \rb_{q} ~\ge~ \min\{2^{i - z},\, 1 \} \cdot \lambda.
\end{align*}
By setting $i = z$, and thus $\gamma^{*} = 2^{2z^{2} + z}$, \Cref{lem:properties-rp-localdensity} follows.
\end{proof}

\subsection{\texorpdfstring{Analysis of $\MPbeta$ without Distance Distortions}{}}
\label{sec:fractional-MP}

In this subsection, we analyze $\MPbeta$ (\Cref{alg:robust-MP}) on input a metric space $(P, \dist)$ without any distance distortion, and show that it yields an LMP approximation for \textsc{Fractional Power-$z$ Facility Location}.
The result is formally stated in the following lemma, which can be viewed as a generalization of \cite[Lemma~13.1]{BhattacharyaCGL24} from the $z = 1$ case to any $z \ge 1$.

\begin{restatable}{lemma}{FractionalMP}
\label{lem:fractional-MP}
Let $\beta^{*} := 2^{-z-1}$ and let $(P, \dist)$ be a metric space.
For any scale factor $\beta \in (0, \beta^{*}]$, the algorithm $\MPbeta$ (\Cref{alg:robust-MP}) on input $(P, \dist)$ returns a $2^{O(z^{2})}\beta^{-1}$-approximation $\Byb$ for \textsc{Fractional Power-$z$ Facility Location}, i.e., 
\begin{align*}
    \cost(\Byb)
    ~\le~ 2^{O(z^{2})} \cdot \beta^{-1} \cdot (\OPTfl - \lambda \cdot \| \Byb \|_{1}).
\end{align*}
\end{restatable}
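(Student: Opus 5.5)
The plan is a dual-fitting argument. Consider the LP relaxation of fractional power-$z$ facility location with variables $x_{p,q}, y_q$ and constraints \ref{eq:LP-cl-sum}--\ref{eq:LP-cl-cover}. Dropping the constraint $\|\By\|_1\ge 1$ can only decrease the optimum. Its dual reads: maximize $\sum_p \alpha_p$ subject to $\sum_{p}[\alpha_p-\dist^z(p,q)]^+\le\lambda$ for every $q\in P$. By weak duality, $\OPTfl\ge\sum_p\alpha_p$ for every dual-feasible $\Bv=(\alpha_p)$.

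Note that $\lambda\|\Byb\|_1=\sum_p\rb_p$. So it suffices to exhibit a dual-feasible $\Bv$ with
\begin{equation*}
    \sum_p\alpha_p-\sum_p\rb_p ~\ge~ 2^{-O(z^2)}\beta\cdot\cost(\Byb).
\end{equation*}
I will aim for $\alpha_p:=\rb_p+\theta\beta\cdot\cost(p,\Byb)$ for a suitably small $\theta=2^{-O(z^2)}$. Alternatively, the cleaner surrogate $\alpha_p:=\rb_p+\theta\beta\cdot c_p$ can be used, where $c_p$ is an explicit upper bound on $\cost(p,\Byb)$ defined below.

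\textbf{Step 1 (upper bound on connection cost).} Fix $p$. By \Cref{lem:properties-rp-localdensity}, the ball $\Ball_p(\gamma^*\rb_p)$ carries total opening $\sum_q\rb_q/\lambda\ge 1$. Assign $p$ greedily by nearest facilities. This shows that $\cost(p,\Byb)$ is at most $\gamma^*\rb_p/\beta$, and more precisely at most
\begin{equation*}
    c_p ~:=~ \sum_q x'_{p,q}\dist^z(p,q),
\end{equation*}
where $\Bx'$ is an explicit assignment. In $\Bx'$, the point $p$ first takes $\min\{\rb_q/\lambda,\,[\rb_p-\beta\dist^z(p,q)]^+/\lambda\}$ from each $q$, which is feasible by \ref{eq:LP-cl-cover}. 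By \eqref{eq:rp} these amounts would sum to $1$ if no $\rb_q$ truncated them. The deficit is covered inside $\Ball_p(\gamma^*\rb_p)$ using \Cref{lem:properties-rp-localdensity}. The deficit comes only from facilities $q$ with $\rb_q<\rb_p-\beta\dist^z(p,q)$, so its cost is charged to those $q$. Following the overview, I only need $\cost(\Byb)\le O(1)\cdot\sigma(\Bx')$ up to $2^{O(z^2)}$ factors, as in the paper's Lemma "feasibility".

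\textbf{Step 2 (dual feasibility).} This is the main obstacle. Fix a facility $q$. I need
\begin{equation*}
    \sum_p\bigl[\rb_p+\theta\beta c_p-\dist^z(p,q)\bigr]^+ ~\le~ \lambda.
\end{equation*}
Since $\beta\le 2^{-z-1}$, a positive term requires $\dist^z(p,q)<\rb_p+\theta\beta c_p$, which forces $p$ to be fairly close to $q$ relative to $\rb_p$. For such $p$ I bound $\rb_p$ in terms of $\rb_q$ using \Cref{lem:properties-rp-relation2}, rather than the lossy \Cref{lem:properties-rp-relation1}. This gives $\rb_p\le\rb_q+2^{z-1}\beta\dist^z(p,q)+\gamma_q$, where $\gamma_q$ is the neighborhood term. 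The term $2^{z-1}\beta\dist^z(p,q)\le\tfrac14\dist^z(p,q)$ is absorbed by the $-\dist^z(p,q)$ term, leaving at least $\tfrac34\dist^z(p,q)$ of slack.

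The remaining sum then resembles
\begin{equation*}
    \sum_p\bigl[\rb_q+\gamma_q+\theta\beta c_p-\tfrac34\dist^z(p,q)\bigr]^+ .
\end{equation*}
I compare it with the defining identity $\sum_p[\rb_q-\beta\dist^z(p,q)]^+=\lambda$. The number of contributing $p$ is controlled by \Cref{lem:properties-rp-size}. I split the clients $p$ into those in $\Ball_q(\rb_q/2)$ and the rest. For the first group, the averaged term $\gamma_q$ is exactly paid by the slack between $\tfrac34\dist^z$ and $\beta\dist^z$ summed over $\Ball_q(\rb_q/2)\cup\Ring_q(\rb_q/2)$. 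For the second group, the points are far, and their slack is large.

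The delicate part is showing that the extra $\theta\beta c_p$ fits into the slack. Here I use $c_p\le\gamma^*\rb_p/\beta$ and choose $\theta\le 1/(4\gamma^*)$. The slack used must be the genuine gap $(1-\beta)\dist^z(p,q)$ together with the margin $\rb_q-\rb_p$ when $\rb_p<\rb_q$. If this direct comparison fails for $z\ge2$, I will fall back to splitting $\Bv$ into two feasible duals and averaging them. The first is $\rb_p$ itself, which is feasible since $\beta\le1$ and by \Cref{lem:properties-rp-relation1}. The second is a scaled copy $\theta'\cdot\beta\cdot c_p$, certified with the same local-density argument.

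\textbf{Step 3 (conclusion).} Given feasibility, we get
\begin{equation*}
    \OPTfl-\lambda\|\Byb\|_1 ~\ge~ \theta\beta\sum_p c_p ~\ge~ 2^{-O(z^2)}\beta\cdot\cost(\Byb),
\end{equation*}
using Step 1. This is exactly the LMP $2^{O(z^2)}\beta^{-1}$ bound of \Cref{lem:fractional-MP}.
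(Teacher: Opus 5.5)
\textbf{Gap: Step~2 is unproved, and none of the routes you offer proves it.} Your whole argument rests on the dual feasibility of $\alpha_p=\rb_p+\theta\beta c_p$.

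\emph{Domination by $\Brb[2\beta]$ fails.} The one easily certified dual above $\Brb$ is $\Brb[2\beta]$. By \Cref{claim:lem13.6-BCG24}, its excess over $\rb_p$ pays for $\beta\sum_q \xb_{p,q}\dist^z(p,q)$, not for the true cost $\cost(p,\Byb)$. For $z\ge 2$ these differ pointwise by an unbounded factor. Take $z=2$, $\lambda=1$, and place on a line: $p$; then $q$ with $\beta\dist^2(p,q)=a$; then many nearly co-located points on the same side, just beyond $\beta\dist^2(p,\cdot)=(1+a)/2$. One gets $\rb_p=(1+a)/2$, $\rb[2\beta]_p=\tfrac12+a$, and $\rb_q\approx(\sqrt{(1+a)/2}-\sqrt a)^2$. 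Hence $\beta\cost(p,\Byb)=\Theta(\sqrt a)$, while $\rb[2\beta]_p-\rb_p=a/2$. So $\alpha\le\Brb[2\beta]$ fails for every fixed $\theta$, and feasibility would have to be argued from scratch.

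\emph{Your check via \Cref{lem:properties-rp-relation2} loses exactly the slack you need.} The quantity $|\Ball_q(\rb_q/2)|\cdot\gamma_q$ contains the whole ring term $\sum_{t\in\Ring_q(\rb_q/2)}(\rb_q-\beta\dist^z(t,q))$. That term is precisely the ring's share of $\lambda$ in the defining identity. What remains to absorb $\theta\beta\sum_p c_p$ is only of order $\sum_{p\in\Ball_q(\rb_q/2)}\dist^z(p,q)$. Concretely, suppose $\Ball_q(\rb_q/2)=\{q\}$ but $\rb_q<\lambda$. Then for $p=q$ the lemma gives $\rb_q\le\rb_q+\gamma_q=\lambda$, so your bound on $q$'s own term alone is already $\lambda+\theta\beta c_q>\lambda$, with $c_q>0$ since $\yb_q<1$. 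The constraint may well hold there, but only via an argument that keeps $\lambda-\rb_q$ as slack, which your accounting spends on $\gamma_q$. The remark about ``the genuine gap plus the margin $\rb_q-\rb_p$'' is not an argument.

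\emph{The fallback destroys LMP.} Averaging $\Brb$ with another feasible dual gives value $\tfrac12\sum_p\rb_p+\tfrac12\theta'\beta\sum_p c_p$. After subtracting $\lambda\|\Byb\|_1=\sum_p\rb_p$, what is left can be negative. An LMP bound needs a single feasible dual containing all of $\sum_p\rb_p$ plus an $\Omega(\beta)$ share of the cost. Summing two feasible duals instead is not feasible either.

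\textbf{How the paper avoids needing a pointwise dual.} It splits the claim into two parts.

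First, \Cref{lem:fractional-MP:LMP} proves the LMP inequality only for the surrogate $\asncost(\Bxb)$, where $\xb_{p,q}=[\rb_p-\beta\dist^z(p,q)]^+/\lambda$ may violate $\xb_{p,q}\le\yb_q$. The dual is $\Brb[2\beta]$, which is feasible by \Cref{lem:properties-rp-relation1} because $2^z\beta\le\tfrac12$. Then \Cref{claim:lem13.6-BCG24} gives $\sum_p(\rb[2\beta]_p-\rb_p)\ge\beta\,\asncost(\Bxb)$.

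Second, \Cref{lem:feasibility} shows $\cost(\Byb)\le 2^{O(z^2)}\asncost(\Bxb)$, but only \emph{globally}. \Cref{claim:additive-error} reroutes the deficits $[\xb_{p,q}-\yb_q]^+$ inside $\Ball_p(\gamma^*\rb_p)$ using local density. Far pairs are charged to their own $\Bxb$-cost. Close pairs satisfy $p\in\Ball_q(\rb_q/2)$. For these, the $\gamma_q$ term of \Cref{lem:properties-rp-relation2} is weighted by $\rb_p\le 3\cdot 2^{z-2}\rb_q$ and summed over exactly $|\Ball_q(\rb_q/2)|$ points. It is therefore charged to the terms $\xb_{q,t}\dist^z(t,q)$ of the \emph{center} $q$.

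So the deficit cost of $p$ is paid by other points' surrogate cost, which a pointwise dual $\rb_p+\theta\beta c_p$ cannot express. If you replace $c_p$ by $\sum_q\xb_{p,q}\dist^z(p,q)$, Step~2 becomes immediate. What you then still need is this global comparison, not the per-point bound of Step~1.
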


Before proving \Cref{lem:fractional-MP}, we present the following technical claim, which relates the clustering cost $\cost(\By)$ of $\By$ to the assignment cost $\asncost(\Bx)$ for some assignment $\Bx$ that may be infeasible with respect to $\By$.

\begin{restatable}{claim}{AdditiveError}
\label{claim:additive-error}
Suppose that a fractional center set $\By \in \R_{\ge 0}^{P}$, a fractional assignment $\Bx \in \R_{\ge 0}^{P \times P}$, and a vector $\bm{\mu}\in \R_{\ge 0}^{P}$ satisfy the following conditions:
\begin{enumerate}
    \item Each point $p \in P$ satisfies $\sum_{q \in P: \dist^{z}(p, q) \le \mu_{p}} y_{q} \ge 1$.
    
    \item Each point $p \in P$ satisfies $\sum_{q \in P} \x_{p, q} = 1$.
    
    \item Each pair $p, q \in P$ with $x_{p, q} > 0$ satisfies $\dist^{z}(p, q) \le \mu_{p}$.
\end{enumerate}
Then, it holds that 
\begin{align*}
    \cost(\By) ~\le~ \asncost(\Bx) + \sum_{p, q \in P} \left[\x_{p, q} - y_{q} \right]^{+} \cdot \mu_{p}.
\end{align*}
\end{restatable}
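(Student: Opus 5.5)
Since $\cost(\By)=\sum_{p}\cost(p,\By)$ and the feasibility constraints decouple across points, the plan is to prove the bound separately for each $p\in P$:
\begin{equation*}
    \cost(p,\By) ~\le~ \sum_{q\in P} x_{p,q}\,\dist^z(p,q) + \sum_{q\in P}\big[x_{p,q}-y_q\big]^{+}\cdot \mu_p ,
\end{equation*}
and then sum over $p$. For a fixed $p$, I will exhibit an explicit vector $\Bx'_p\in\R^{P}_{\ge 0}$ with $\|\Bx'_p\|_1\ge 1$ and $\Bx'_p\le \By$. By the characterization of $\cost(p,\By)$ as a minimum over such vectors, $\cost(p,\By)$ is at most the cost of $\Bx'_p$.

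\textbf{Construction.} First truncate: set $x'_{p,q}:=\min\{x_{p,q},y_q\}$. This satisfies Constraint~\ref{eq:LP-cl-cover}, and its cost is at most $\sum_q x_{p,q}\dist^z(p,q)$. The total removed mass is $e_p:=\sum_q[x_{p,q}-y_q]^+$, and by condition~2 the truncated vector has total mass exactly $1-e_p$.

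Next, refill the deficit $e_p$ using the spare capacity $y_q-x'_{p,q}\ge 0$ of facilities $q$ in the local ball $S_p:=\{q:\dist^z(p,q)\le\mu_p\}$. The available slack there is
\begin{equation*}
    \sum_{q\in S_p}\big(y_q-x'_{p,q}\big) ~\ge~ 1-\sum_{q\in S_p}x'_{p,q} ~=~ 1-(1-e_p) ~=~ e_p .
\end{equation*}
The inequality uses condition~1. The equality uses condition~3: every $q$ with $x'_{p,q}>0$ has $x_{p,q}>0$, so $q\in S_p$, hence all the truncated mass lies in $S_p$. We can therefore add exactly $e_p$ units of mass within $S_p$ without exceeding any $y_q$. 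Every $q\in S_p$ satisfies $\dist^z(p,q)\le\mu_p$, so this refill costs at most $e_p\mu_p$.

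\textbf{Conclusion and main obstacle.} The resulting $\Bx'_p$ has total mass $1$ and satisfies $\Bx'_p\le\By$, so it is feasible. Its cost is at most $\sum_q x_{p,q}\dist^z(p,q)+\mu_p\sum_q[x_{p,q}-y_q]^+$, which gives the per-point bound; summing over $p$ finishes the proof. The only delicate step is the slack count, namely verifying that the truncated mass lives entirely inside $S_p$ so that condition~1 yields enough spare capacity. This is exactly where condition~3 is needed. The remaining steps are routine.
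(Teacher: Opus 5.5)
Your proof is correct and follows essentially the same route as the paper. The paper also truncates to $\min\{x_{p,q},y_q\}$ and tops up to mass exactly $1$ inside the ball $\{q:\dist^z(p,q)\le\mu_p\}$ under the capacities $y_q$; it phrases this as the existence of a vector sandwiched between $\min\{x_{p,q},y_q\}$ and $y_q\cdot\mathbb{I}[\dist^z(p,q)\le\mu_p]$, and it uses conditions~1 and~3 exactly where you do. The only difference is bookkeeping: the paper bounds the extra cost via $\sum_q[\tilde{x}_{p,q}-x_{p,q}]^+=\sum_q[x_{p,q}-\tilde{x}_{p,q}]^+$, whereas you charge the truncation and refill steps separately.
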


\begin{proof}
We introduce two auxiliary assignments $\Bxlb = (\xlb_{p, q})_{p, q \in P}$ and $\Bxub = (\xub_{p, q})_{p, q \in P}$:
\begin{itemize}
    \item $\Bxlb$ is defined as $\xlb_{p, q} := \min\{\x_{p, q},\ y_{q}\}$, $\forall p, q \in P$.
    
    \item $\Bxub$ is defined as $\xub_{p, q} := y_{q} \cdot \mathbb{I}[\dist^{z}(p, q) \le \mu_{p}]$, $\forall p, q \in P$, where $\mathbb{I}[\cdot]$ is the indicator function.
\end{itemize}
The premises of \Cref{claim:additive-error} guarantee $\Bxlb \le \Bxub$ and for every $p\in P$,
$\sum_{q \in P} \xlb_{p, q }
\le \sum_{q \in P} \x_{p, q}
= 1
\le \sum_{q \in P: \dist^{z}(p, q)
\le \mu_{p}} y_{q}
= \sum_{q \in P} \xub_{p, q}$.
This further implies the existence of an assignment $\Bxmd = (\xmd_{p, q})_{p, q \in P}$ with $\Bxlb \le \Bxmd \le \Bxub$ and $\sum_{q \in P} \xmd_{p, q} = 1$, $\forall p \in P$. Note that $\xmd_{p, q} \le \xub_{p, q} \le y_{q}$, $\forall p, q \in P$.
Hence, this is a feasible assignment $\Bxmd \sim \By$, and we can deduce that
\begin{align*}
    \cost(\By)
    ~\le~ \asncost(\Bxmd)
    & ~=~ \asncost(\Bx) + \sum_{p, q \in P} (\xmd_{p, q} - \x_{p, q}) \cdot \dist^{z}(p, q)\\
    & ~\le~ \asncost(\Bx) + \sum_{p, q \in P} [\xmd_{p, q} - \x_{p, q}]^{+} \cdot \dist^{z}(p, q)\\
    & ~\le~ \asncost(\Bx) + \sum_{p, q \in P} [\xmd_{p, q} - \x_{p, q}]^{+} \cdot \mu_{p}\\
    & ~=~ \asncost(\Bx) + \sum_{p, q \in P} [\x_{p, q} - \xmd_{p, q}]^{+} \cdot \mu_{p}\\
    \mr{$\xmd_{p, q} \ge \xlb_{p, q} = \min\{\x_{p, q},\, y_{q}\}$}
    & ~\le~ \asncost(\Bx) + \sum_{p, q \in P} [\x_{p, q} - \min\{\x_{p, q},\, y_{q}\}]^{+} \cdot \mu_{p}\\
    & ~=~ \asncost(\Bx) + \sum_{p, q \in P} [\x_{p, q} - y_{q}]^{+} \cdot \mu_{p}.
\end{align*}
Here, the third inequality holds since $[\xmd_{p, q} - \x_{p, q}]^{+} > 0$ only if $0 < \xmd_{p, q} \le \xub_{p, q}$, and then only if $\dist^{z}(p, q) \le \mu_{p}$.
The second equality holds since $\sum_{q \in P} \xmd_{p, q} 
= \sum_{q \in P} \x_{p, q} = 1 \implies \sum_{q \in P} [\xmd_{p, q} - \x_{p, q}]^{+} = \sum_{q \in P} [\x_{p, q} - \xmd_{p, q}]^{+}$, $\forall p \in P$.
This completes the proof of \Cref{claim:additive-error}.
\end{proof}

We are now ready to prove \Cref{lem:fractional-MP}.

\begin{proof}[Proof of \Cref{lem:fractional-MP}]

Let $\Brb = (\rb_p)_{p\in P}$ denote the vector computed by $\MPbeta$ (in Line~\ref{alg:robust-MP:rp} of \Cref{alg:robust-MP}); then, the output of $\MPbeta$ is $\Byb = \Brb / \lambda$.
The proof follows a primal-dual approach. 
To begin, we present the primal and dual LP's for 
\textsc{Fractional Power-$z$ Facility Location}, denoted by $\LPufl$ and $\DPufl$, respectively.
\begin{alignat}{2}
\LPufl:\qquad
\min \quad & \lambda \cdot \sum_{p \in P} y_{p} + \sum_{p, q \in P} x_{p, q} \cdot \dist^{z}(p, q) \tag{L}\label{eq:LP-ufl-objective} \\
\text{s.t.}\quad
& \sum_{q \in P} x_{p, q} \ge 1 && \forall p \in P \tag{L1} \label{eq:LP-ufl-sum} \\
& x_{p, q} \le y_{q}            && \forall p, q \in P \tag{L2} \label{eq:LP-ufl-cover}\\
& x_{p, q},y_{p}\ge 0   && \forall p, q \in P \tag{L3} \label{eq:LP-ufl-nonneg}\\
\DPufl: \qquad 
\max \quad & \sum_{p \in P} v_{p} \tag{D} \label{eq:DP-ufl-objective}\\
\text{s.t.} \quad 
& \sum_{p \in P} \big[v_{p} - \dist^{z}(p, q)\big]^{+} \le \lambda,\quad  && \forall q \in P \tag{D1} \label{eq:DP-ufl-constraint}\\
& v_{p} \ge 0, &&\forall p \in P \tag{D2} \label{eq:DP-ufl-nonneg}
\end{alignat}
The optimal primal value is exactly $\OPTfl = \min_{\By \in \R_{\ge 0}^{P}: \|\By\|_{1} \ge 1} \lambda \cdot \|\By\|_{1} +  \cost(\By)$, since Constraints~\eqref{eq:LP-ufl-sum} to \eqref{eq:LP-ufl-cover} are identical to Constraints~\ref{eq:LP-cl-sum} to \ref{eq:LP-cl-cover} (see \Cref{sec:prelim}).
Further, by weak duality, every feasible dual solution $\Bv$ yields a smaller value $\|\Bv \|_{1} = \sum_{p \in P}v_{p} \le \OPTfl$.

We define a concrete fractional assignment $\Bxb = (\xb_{p, q})_{p,q \in P}$ as follows:
\begin{align}
    \label{eq:infeasible-assignment}
    & \xb_{p, q} ~:=~ (1 / \lambda) \cdot \big[\rb_{p} - \beta \cdot \dist^{z}(p, q)\big]^{+},
    && \forall p, q \in P.
\end{align}
To prove \Cref{lem:fractional-MP}, we use the assignment cost $\asncost(\Bxb)$ as an intermediate bridge. 
Specifically, we show in \Cref{lem:fractional-MP:LMP} that the cost $\asncost(\Bxb)$ satisfies the desired LMP property, and then show in \Cref{lem:feasibility} that $\cost(\Byb)$ is not much larger than $\asncost(\Bxb)$.
We note that the proof of \Cref{lem:fractional-MP:LMP} follows almost the same approach as in~\cite{BhattacharyaCGL24}, except that we use our new smoothness property in \Cref{lem:properties-rp-relation1}, rather than the property they use which only holds for the $z = 1$ case.

\begin{lemma}
\label{lem:fractional-MP:LMP}
$\asncost(\Bxb) \le \beta^{-1} \cdot (\OPTfl - \lambda \cdot \|\Byb\|_{1})$.
\end{lemma}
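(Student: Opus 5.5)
The plan is a primal–dual argument with the dual solution built from the primal quantities themselves. Write $\sigma_p := \sum_{q\in P} \xb_{p,q}\,\dist^z(p,q)$, so that $\asncost(\Bxb) = \sum_p \sigma_p$. By \eqref{eq:rp}, $\sum_q \xb_{p,q} = 1$ for every $p$. Hence $\beta\sigma_p$ is the $\xb_{p,\cdot}$-weighted average of $\beta\dist^z(p,q)$ over $q\in\Ball_p(\rb_p)$, which gives $\beta\sigma_p \le \rb_p$. Since $\lambda\|\Byb\|_1 = \sum_p \rb_p$, the target inequality is equivalent to
\begin{equation*}
\sum_{p\in P} \big(\rb_p + \beta\sigma_p\big) ~\le~ \OPTfl .
\end{equation*}
So I would set $v_p := \rb_p + \beta\sigma_p$ and show that $\Bv$ is feasible for $\DPufl$. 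Weak duality then gives $\|\Bv\|_1 \le \OPTfl$, which is exactly the claim. Nonnegativity \eqref{eq:DP-ufl-nonneg} is immediate, so everything reduces to checking \eqref{eq:DP-ufl-constraint}.

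Fix a facility $q$. We need $\sum_p [v_p - \dist^z(p,q)]^+ \le \lambda$, and I would compare this sum termwise with the defining identity $\sum_p [\rb_q - \beta\dist^z(p,q)]^+ = \lambda$ from \eqref{eq:rp}. The two tools are the bound $v_p \le 2\rb_p$ and the smoothness bound of \Cref{lem:properties-rp-relation1}, namely $\rb_p \le r^{(2^{z-1}\beta)}_q + 2^{z-1}\beta\,\dist^z(p,q)$. Because $\beta \le \beta^* = 2^{-z-1}$, the term $2^{z}\beta\dist^z(p,q)$ is at most $\tfrac12\dist^z(p,q)$. The idea is that this lets the subtracted $\dist^z(p,q)$ in the dual constraint absorb the distance terms produced by smoothness.

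I would split the points $p$ into two groups. For points with $\beta\dist^z(p,q)$ small compared to $\rb_q$, I would try to show $v_p - \dist^z(p,q) \le \rb_q - \beta\dist^z(p,q)$ directly. For points farther away, I would show that $v_p \le \dist^z(p,q)$, so that they contribute nothing. Summing over $p$ would then give the bound $\lambda$.

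The main obstacle is the constant factor in this termwise comparison. The crude bound $v_p \le 2\rb_p$ combined with \Cref{lem:properties-rp-relation1} only gives $v_p - \dist^z \lesssim 2^z \rb_q$, which is too weak by a $2^z$ factor. To overcome this, I would use the finer structure of $\sigma_p$ rather than $v_p \le 2\rb_p$. Concretely, I would write $v_p$ as the $\xb_{p,\cdot}$-average of $\rb_p + \beta\dist^z(p,t)$ over $t$. I would then apply the generalized triangle inequality through each intermediate point $t$, relating $p$ to $q$ via $t$, and exploit the slack $\beta \le 2^{-z-1}$, as in the proof of \cite[Lemma~13.1]{BhattacharyaCGL24}. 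If the exact constant $\beta^{-1}$ proves out of reach, my fallback is to scale the dual down to $v_p/2^{O(z)}$. This would yield the statement with an extra $2^{O(z)}$ factor, which is still absorbed in the $2^{O(z^2)}$ of \Cref{lem:fractional-MP}.
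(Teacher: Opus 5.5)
\textbf{There is a genuine gap: the step that proves dual feasibility fails, and its key ingredient is missing.}

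Your reduction is sound. With $\sigma_p := \sum_q \xb_{p,q}\dist^z(p,q)$, the lemma is equivalent to $\sum_p(\rb_p+\beta\sigma_p)\le\OPTfl$, and your $v_p=\rb_p+\beta\sigma_p$ is in fact dual-feasible. The problem is how you propose to verify \eqref{eq:DP-ufl-constraint}.

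Your close-range comparison $v_p-\dist^z(p,q)\le\rb_q-\beta\dist^z(p,q)$ already fails at $p=q$. There it reads $\rb_q+\beta\sigma_q\le\rb_q$, which is false whenever $\sigma_q>0$. For instance, take $z=1$, $\lambda=1$, $\beta=1/4$ and two points at distance $1$: you get $\rb_q=5/8$, $\sigma_q=3/8$, and $v_q=23/32>\rb_q$. No triangle inequality through intermediate points can repair this term, since it involves $q$ alone.

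In any termwise comparison, the reference for the $p=q$ term must be some $\rb[\beta']_q\ge\rb_q+\beta\sigma_q$. So you need a lower bound on how fast $\rb_q$ grows with the scale. That is the missing idea, \Cref{claim:lem13.6-BCG24} (i.e.\ \cite[Lemma~13.6]{BhattacharyaCGL24}): $\rb_p+\beta\sigma_p\le\rb[2\beta]_p$. It is a one-point statement about the root of \eqref{eq:rp} when $\beta$ is doubled, and it uses no metric structure.

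With this claim the rest is short:
\begin{itemize}
\item It gives $\Bv\le\Brb[2\beta]$.
\item $\Brb[2\beta]$ is dual-feasible. \Cref{lem:properties-rp-relation1} at scale $2\beta$, together with $2^{z+1}\beta\le1$, gives $[\rb[2\beta]_p-\dist^z(p,q)]^+\le[\rb[2^z\beta]_q-2^z\beta\dist^z(p,q)]^+$. These terms sum to $\lambda$ by \eqref{eq:rp} at scale $2^z\beta$.
\item Since \eqref{eq:DP-ufl-constraint} is monotone in $\Bv$, your $\Bv$ is feasible.
\end{itemize}
So the right comparison is against $\rb[2^z\beta]_q$ at scale $2^z\beta$, not $\rb_q$ at scale $\beta$. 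This is the paper's proof, which simply takes $\Brb[2\beta]$ as the dual.

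Your fallback does not rescue the argument either. Dividing the whole dual by $c=2^{O(z)}$ also divides the opening part $\rb_p$. Weak duality then gives only $\lambda\|\Byb\|_1+\beta\,\asncost(\Bxb)\le c\,\OPTfl$, that is, $\asncost(\Bxb)\le\beta^{-1}(c\,\OPTfl-\lambda\|\Byb\|_1)$. This is not an LMP bound, even with a worse constant. It does not control $\asncost(\Bxb)$ by any multiple of $\OPTfl-\lambda\|\Byb\|_1$, which is the form \Cref{lem:FMP-distance-distortion} and \Cref{lem:correctness-kzC} consume.

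To lose only a constant, you would have to shrink just the $\beta\sigma_p$ part. But then the $p=q$ term again needs a growth bound of the same kind as \Cref{claim:lem13.6-BCG24}. In any case, the statement asks for the exact factor $\beta^{-1}$.
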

\begin{proof}
We establish this lemma by analyzing a concrete \emph{dual solution} $\Bvb := \Brb[2\beta]$.
We first verify that this dual solution is feasible; by weak duality, this means $\|\Bvb\|_{1} \le \OPTfl$.
Constraint~\eqref{eq:DP-ufl-nonneg} trivially holds since $\Brb[2\beta] \in \R_{\ge 0}^{P}$, and we can verify Constraint~\eqref{eq:DP-ufl-constraint} via \Cref{lem:properties-rp-relation1}:
\begin{align*}
    \sum_{p \in P} \big[\rb[2\beta]_{p} - \dist^{z}(p, q)\big]^{+}
    & ~\le~ \sum_{p \in P} \big[\rb[2^{z} \cdot \beta]_{q} + 2^{z} \cdot \beta \cdot \dist^{z}(p, q) - \dist^{z}(p, q)\big]^{+}\\
    \mr{$\beta \le \beta^{*} = \frac{1}{2^{z+1}}$}
    & ~\le~ \sum_{p \in P} \big[\rb[2^{z} \cdot \beta]_{q} - 2^{z} \cdot \beta \cdot \dist^{z}(p, q)\big]^{+}\\
    \mr{\eqref{eq:rp}}
    & ~=~ \lambda.
\end{align*}
Moreover, the following \Cref{claim:lem13.6-BCG24} essentially restates \cite[Lemma~13.6]{BhattacharyaCGL24} under our notation.

\begin{restatable}[{\cite[Lemma~13.6]{BhattacharyaCGL24}}]{claim}{BCGLemma}
\label{claim:lem13.6-BCG24}
For each point $p \in P$,
\begin{align*}
    \rb_{p} + \sum_{q \in P} (1 / \lambda) \cdot \big[\rb_{p} - \beta \cdot \dist^{z}(p, q) \big]^{+} \cdot \beta \cdot \dist^{z}(p, q)
    ~\le~ \rb[2\beta]_{p}.
\end{align*}
\end{restatable}

\begin{proof}
The proof is almost identical to that of~\cite[Lemma 13.6]{BhattacharyaCGL24}, although they only consider the special case with $z=1$ and $\beta = 1/4$.
Specifically, rescaling all distances in their original proof directly gives the extension to $\beta \in (0, 1)$.
Also, replacing their distance function $d$ with $\dist^{z}$ directly gives the extension to $z \ge 1$; in this regard, we emphasize that their original proof does not require the triangle inequality, so all the arguments carry over.
\end{proof}

By combining everything, we can deduce \Cref{lem:fractional-MP:LMP} as follows: 
\begin{align*}
    \asncost(\Bxb)
    & ~=~ \sum_{p \in P} \sum_{q \in P} \xb_{p, q} \cdot \dist^{z}(p, q)\\
    \mr{\eqref{eq:infeasible-assignment}}
    & ~=~ \sum_{p \in P} \sum_{q \in P} (1 / \lambda) \cdot \big[\rb_{p} - \beta \cdot \dist^{z}(p, q)\big]^{+} \cdot \dist^{z}(p, q)\\
    \mr{\Cref{claim:lem13.6-BCG24}}
    & ~\le~ \beta^{-1} \cdot \sum_{p \in P} \big(\rb[2\beta]_{p} - \rb[\beta]_{p}\big)\\
    \mr{$\Byb = (1 / \lambda) \cdot \Brb[\beta]$}
    & ~=~ \beta^{-1} \cdot \big(\|\Brb[2\beta]\|_{1} - \lambda \cdot \|\Byb\|_{1}\big)\\
    \mr{weak duality}
    & ~\le~ \beta^{-1} \cdot \big(\OPTfl - \lambda \cdot \|\Byb\|_{1}\big).
    \qedhere
\end{align*}
\end{proof}

In the $z = 1$ case considered in~\cite{BhattacharyaCGL24}, the assignment $\Bxb$ is feasible with respect to $\Byb$ (simply by \Cref{lem:properties-rp-relation1}). 
However, this feasibility does not extend to $z \ge 2$, and we provide a counterexample in \Cref{example:counterexample} at the end of this section.
To address this issue, we additionally establish the following \Cref{lem:feasibility}.

\begin{lemma}
\label{lem:feasibility}
$\cost(\Byb) \le 2^{O(z^{2})} \cdot  \asncost(\Bxb)$.
\end{lemma}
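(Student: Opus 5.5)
The plan is to apply \Cref{claim:additive-error} with $\By := \Byb = \Brb/\lambda$, $\Bx := \Bxb$, and $\mu_p := \gamma^{*}\cdot \rb_p/\beta$, where $\gamma^{*}=2^{O(z^2)}$ comes from \Cref{lem:properties-rp-localdensity}. Then it remains to bound the resulting excess term by $2^{O(z^2)}\cdot \asncost(\Bxb)$.

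\textbf{Checking the premises.} Condition~2 holds with equality, since $\sum_q \xb_{p,q} = \lambda^{-1}\sum_q[\rb_p-\beta\dist^z(p,q)]^+ = 1$ by \eqref{eq:rp}. Condition~3 holds because $\xb_{p,q}>0$ forces $\beta\dist^z(p,q) < \rb_p \le \gamma^*\rb_p$. Condition~1 is exactly \Cref{lem:properties-rp-localdensity}: $\sum_{q\in\Ball_p(\gamma^*\rb_p)}\rb_q\ge\lambda$. The claim then gives
\begin{equation*}
\cost(\Byb)\le \asncost(\Bxb)+\frac{\gamma^*}{\beta\lambda}\sum_{p,q}\big[[\rb_p-\beta\dist^z(p,q)]^+-\rb_q\big]^+\cdot\rb_p .
\end{equation*}

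\textbf{Bounding each excess term.} For a pair with positive excess, I would apply \Cref{lem:properties-rp-relation2} with the roles of $p$ and $q$ as stated. This gives
\begin{equation*}
\rb_p-\beta\dist^z(p,q)-\rb_q\le (2^{z-1}-1)\beta\dist^z(p,q)+\gamma_q ,
\end{equation*}
where $\gamma_q$ is the neighborhood average over $\Ball_q(\rb_q/2)$ and $\Ring_q(\rb_q/2)$. Multiplying the first summand by $\rb_p/(\beta\lambda)$ yields terms of order $\dist^z(p,q)\,\rb_p/\lambda$. I would split according to $q$'s position relative to $p$:
\begin{itemize}
\item If $q\in\Ball_p(\rb_p/2)$, then $\xb_{p,q}\ge \rb_p/(2\lambda)$, so the term is at most $2\,\xb_{p,q}\dist^z(p,q)$, which is charged directly to $\asncost(\Bxb)$.
\item If $q\in\Ring_p(\rb_p/2)$, I would not use $\dist^z$ but the trivial bound. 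The excess is at most $\xb_{p,q}$, so its contribution is $\lesssim \xb_{p,q}\,\rb_p/\beta$.
\end{itemize}
The second case still needs to be related to $\sum_q \xb_{p,q}\dist^z(p,q)$. Since $\beta\dist^z(p,q)>\rb_p/2$ there, $\rb_p/\beta < 2\dist^z(p,q)$, so it is also charged to $p$'s own cost. In fact this observation suggests dropping \Cref{lem:properties-rp-relation2} for the ring part entirely and using only $[\cdot]^+\le\xb_{p,q}$ there.

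\textbf{Main obstacle: the ball part.} The difficulty is $q\in\Ball_p(\rb_p/2)$, where $\dist^z(p,q)$ may be tiny compared with $\rb_p/\beta$. Here the bound must come from \Cref{lem:properties-rp-relation2}, and the $\gamma_q$ term must be paid for. My approach is to charge $\gamma_q$ to points $t$ near $q$:
\begin{itemize}
\item Each $t\in\Ball_q(\rb_q/2)$ contributes $(2^{z-1}-1)\beta\dist^z(t,q)$. Since $\rb_t\ge \Omega_z(\rb_q)$ there by \Cref{lem:properties-rp-relation1} (applied with roles swapped, together with \Cref{lem:properties-rp-monotone}), we get $\xb_{t,q}\ge\Omega_z(\rb_q/\lambda)$. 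Hence $\beta\dist^z(t,q)\cdot\rb_q/\lambda\lesssim\beta\,\xb_{t,q}\dist^z(t,q)$, which is part of $\asncost(\Bxb)$.
\item Each ring point $t\in\Ring_q(\rb_q/2)$ contributes $\rb_q-\beta\dist^z(t,q)\le\rb_q/2$. Since $\beta\dist^z(t,q)>\rb_q/2$, this is again at most $\beta\dist^z(t,q)$, and it is charged in the same way provided $\rb_t$ is comparable to $\rb_q$.
\end{itemize}
What remains to be controlled is the multiplicity of this charging. The average over $|\Ball_q(\rb_q/2)|$ should cancel against the number of $p$ that charge $q$ with positive excess: for such $p$ we have $\rb_q<\rb_p$ and $q$ close to $p$, and \Cref{lem:properties-rp-size} bounds how many such $p$ exist relative to $\lambda/\rb_p$. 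I expect this double-counting to be the delicate step, costing the $2^{O(z^2)}$ factor. If it fails, my fallback is to choose $\mu_p$ adaptively, for instance $\mu_p$ proportional to $\max\{\dist^z(p,q)\}$ over the support of $\xb_{p,\cdot}$, so that the large-$\rb_p$ prefactor is avoided.
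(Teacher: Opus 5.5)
There is a genuine gap, and it sits in the near-pair step. Your skeleton matches the paper's: \Cref{claim:additive-error} with $\mu_p=\gamma^{*}\rb_p/\beta$, far pairs paid by $\rb_p/\beta\lesssim\dist^z(p,q)$, and \Cref{lem:properties-rp-relation2} on near pairs. The first problem is your cutoff. The $\gamma_q$-part of the excess carries the prefactor $\rb_p$, so you need $\rb_p\le O_z(\rb_q)$. Positive excess only gives $\rb_q<\rb_p$, which is the wrong direction. With your cutoff $\beta\dist^z(p,q)\le\rb_p/2$, \Cref{lem:properties-rp-relation1} yields only $\rb_p\le 2^{z-1}\rb_q+2^{z-2}\rb_p$, which is vacuous for $z\ge 2$. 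The paper instead cuts at $\beta\dist^z(p,q)\le\frac{2}{3}\cdot 2^{-z}\rb_p$; your far-pair argument still works at that threshold, losing only $2^{O(z)}$. Then $\rb_p\le 3\cdot 2^{z-2}\rb_q$ and $\beta\dist^z(p,q)\le\rb_q/2$, so every near pair has $p\in\Ball_q(\rb_q/2)$. This also removes your multiplicity worry entirely. For fixed $q$, $\sum_{p\in\Ball_q(\rb_q/2)}\rb_p\gamma_q\le 3\cdot 2^{z-2}\rb_q\,|\Ball_q(\rb_q/2)|\,\gamma_q$, and the normalization $1/|\Ball_q(\rb_q/2)|$ inside $\gamma_q$ cancels exactly. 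No counting via \Cref{lem:properties-rp-size} is needed.

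The second problem is the charging target. You charge the $\gamma_q$-terms to $\xb_{t,q}$, which requires $\rb_t-\beta\dist^z(t,q)\ge\Omega_z(\rb_q)$. Nothing gives this: at radius $\rb_q/2$, \Cref{lem:properties-rp-relation1} with roles swapped is again vacuous for $z\ge 2$. In fact it is false. Take $z=2$, $\beta=\lambda=1$ (rescaling positions handles other $\beta$), and on a line place $q$ at $0$, $t$ at $x$ with $x^2<1/7$, and $N$ points at $2x$. Then $\rb_t=x^2+\frac{1-x^2}{N+2}$ and $\rb_q=4x^2+\frac{1-7x^2}{N+2}$, so $t\in\Ball_q(\rb_q/4)$. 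Yet $\xb_{t,q}=\frac{1-x^2}{N+2}\to 0$ while $\rb_q/\lambda\to 4x^2$. Ring points have the same defect: ``$\rb_t$ comparable to $\rb_q$'' does not make $\xb_{t,q}$ large.

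The fix is to charge to the reverse assignment $\xb_{q,t}$, which also appears in $\asncost(\Bxb)$. For $t\in\Ball_q(\rb_q/2)$, $\lambda\xb_{q,t}=\rb_q-\beta\dist^z(q,t)\ge\rb_q/2$, so $\rb_q\cdot\beta\dist^z(t,q)\le 2\lambda\beta\,\xb_{q,t}\dist^z(q,t)$. For $t\in\Ring_q(\rb_q/2)$, $\rb_q-\beta\dist^z(q,t)=\lambda\xb_{q,t}$ and $\rb_q<2\beta\dist^z(q,t)$, so $\rb_q(\rb_q-\beta\dist^z(q,t))\le 2\lambda\beta\,\xb_{q,t}\dist^z(q,t)$. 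Summing and multiplying by $\gamma^{*}/(\beta\lambda)$ gives $2^{O(z^2)}\asncost(\Bxb)$.

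Your fallback of an adaptive $\mu_p$ is unnecessary. Shrinking $\mu_p$ below $\gamma^{*}\rb_p/\beta$ would also jeopardize premise~1 of \Cref{claim:additive-error}, which is what forces that radius.
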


\begin{proof}
We first upper-bound the difference between $\cost(\Byb)$ and $\asncost(\Bxb)$ via \Cref{claim:additive-error}. To this end,
we verify the three premises of \Cref{claim:additive-error} for the choices $\By := \Byb = (1 / \lambda) \cdot \Brb$, $\Bx := \Bxb$, and $\bm{\mu} := \gamma^{*} \cdot \beta^{-1} \cdot \Brb$ (for the constant $\gamma^{*} = 2^{O(z^2)}$ specified in \Cref{lem:properties-rp-localdensity}) as follows:
\begin{enumerate}
    \item For every $p \in P$, by \Cref{lem:properties-rp-localdensity}, we have $\sum_{q \in P: \dist^{z}(p, q) \le \mu_{p}} y_{q}
    = \sum_{q \in \Ball_{p}(\gamma^{*}\cdot \rb_{p})} \rb_{q}/\lambda
    \ge 1$.
    
    \item For every $p \in P$, 
    $\sum_{q \in P} \x_{p, q}
    = \sum_{q \in P} \xb_{p, q}
    = \sum_{q \in P} (1 / \lambda) \cdot [\rb_{p} - \beta \cdot \dist^{z}(p, q)]^{+} = 1$, where the second step applies \eqref{eq:infeasible-assignment}, and the last step follows from \eqref{eq:rp}.
    
    \item For every $p, q \in P$ with $0 < \x_{p, q} = \xb_{p, q} = (1 / \lambda) \cdot [\rb_{p} - \beta \cdot \dist^{z}(p, q)]^{+}$, it follows directly that
    $\dist^{z}(p, q)
    \le \beta^{-1} \cdot \rb_{p}
    \le \gamma^{*} \cdot \beta^{-1} \cdot \rb_p
    = \mu_{p}$.
\end{enumerate}
Then, by applying \Cref{claim:additive-error}, we can derive
\begin{align}
    \label{eq:additive-error-yb}
    \cost(\Byb) ~\le~ \asncost(\Bxb) + \gamma^{*} \cdot \beta^{-1} \cdot \sum_{p, q \in P} [\xb_{p, q} - \yb_{q}]^{+} \cdot \rb_{p}.
\end{align}
We divide all pairs $P \times P = \calF \sqcup \calC$ into the \textit{``far'' pairs} $\calF$ and the \textit{``close'' pairs} $\calC$, as follows:
\begin{align*}
    \calF & ~:=~ \big\{(p, q) \in P \times P:\ \beta \cdot \dist^{z}(p, q) > \tfrac{2}{3} \cdot 2^{-z} \cdot \rb_{p}\big\},\\
\calC & ~:=~ \big\{(p, q) \in P \times P:\ \beta \cdot \dist^{z}(p, q) \le \tfrac{2}{3} \cdot 2^{-z} \cdot \rb_{p}\big\}.
\end{align*}
Below, \Cref{claim:boundF,claim:boundC} investigate these two kinds of pairs separately.

\begin{restatable}{claim}{BoundF}
\label{claim:boundF}
$\sum_{(p, q) \in \calF} [\xb_{p, q} - \yb_{q}]^{+} \cdot \rb_{p} \le 2^{O(z)} \cdot \beta \cdot \asncost(\Bxb)$.
\end{restatable}

\begin{proof}
For every far pair $(p, q) \in \calF$, by definition it satisfies that $\rb_{p} \le \frac{3}{2} \cdot 2^{z} \cdot \beta \cdot \dist^{z}(p, q)$. Hence, we have $\sum_{(p, q) \in \calF} [\xb_{p, q} - \yb_{q}]^{+} \cdot \rb_{p}
\le \sum_{(p, q) \in \calF} \xb_{p, q} \cdot (\frac{3}{2} \cdot 2^{z} \cdot \beta \cdot \dist^{z}(p, q))
\le 2^{O(z)} \cdot \beta \cdot \asncost(\Bxb)$.
\end{proof}

\begin{restatable}{claim}{BoundC}
\label{claim:boundC}
$\sum_{(p, q) \in \calC} [\xb_{p, q} - \yb_{q}]^{+} \cdot \rb_{p} \le 2^{O(z)} \cdot \beta \cdot \asncost(\Bxb)$.
\end{restatable}

\begin{proof}
For every close pair $(p, q) \in \calC$, by definition we have
$\beta \cdot \dist^{z}(p, q)
\le \frac{2}{3} \cdot 2^{-z} \cdot \rb_{p}
\le \frac{1}{3} \cdot \rb_{p}$.
This in combination with \eqref{eq:infeasible-assignment} implies
$\rb_{p} \le \frac{3}{2} \cdot \lambda \cdot \xb_{p, q}$, and in combination with (\Cref{lem:properties-rp-relation1}) $\rb_{p} \le 2^{z-1}\cdot (\rb_q + \beta\cdot \dist^z(p,q))$ implies  $\rb_p \le  3\cdot 2^{z-2} \cdot \rb_{q}$ and then $\beta \cdot \dist^{z}(p, q) \le \rb_{q} / 2$. Therefore, we have $\calC \subseteq \{(p, q) \in P \times P:\ p \in \Ball_{q}(\rb_{q} / 2)\}$.

Below, we simply write $\Ball_{q} = \Ball_{q}(\rb_{q} / 2)$ and $\Ring_{q} = \Ring_{q}(\rb_{q} / 2)$ for notational brevity.
Following the definitions of $\Byb$ and $\Bxb$ (see \Cref{lem:fractional-MP} and \eqref{eq:infeasible-assignment}), we can deduce that
\begin{align*}
    \sum_{(p, q) \in \calC} \big[\xb_{p, q} - \yb_{q}\big]^{+} \cdot \rb_{p}
    & ~=~ (1 / \lambda) \cdot \sum_{(p, q) \in \calC} \rb_{p} \cdot \big[\rb_{p} - \beta \cdot \dist^{z}(p, q) - \rb_{q}\big]^{+}\\
    \mr{$\calC \subseteq \{(p, q):\ p \in \Ball_{q}\}$}
    & ~\le~ (1/\lambda) \cdot \sum_{q\in P}\sum_{p\in \Ball_q} \rb_{p} \cdot \big[\rb_{p} - \beta \cdot \dist^{z}(p, q) - \rb_{q}\big]^{+}\\
    \mr{\Cref{lem:properties-rp-relation2}}
    & ~\le~ (1 / \lambda) \cdot \sum_{q\in P}\sum_{p\in \Ball_q} \rb_{p} \cdot (2^{z-1} - 1) \cdot \beta \cdot \dist^{z}(p, q)\\
    & \phantom{~=~}\quad + (1 / \lambda) \cdot \sum_{q \in P}\sum_{p\in \Ball_q} \rb_{p} \cdot \frac{1}{|\Ball_{q}|} \cdot \Big(\sum_{t \in \Ball_{q}} (2^{z-1} - 1) \cdot \beta \cdot \dist^{z}(t, q)\\
    & \phantom{~=~\quad + (1 / \lambda) \cdot \sum_{q\in P}\sum_{p\in \Ball_q} \rb_{p} \cdot \frac{1}{|\Ball_{q}|} \cdot \Big(} + \sum_{t \in \Ring_{q}} \big(\rb_{q} - \beta \cdot \dist^{z}(t, q)\big)\Big) \\
    \mr{$\rb_{p} \le \frac{3}{2} \cdot \lambda \cdot \xb_{p, q}$}
    & ~\le~ (1 / \lambda) \cdot \sum_{q\in P}\sum_{p\in \Ball_q} \big(\tfrac{3}{2} \cdot \lambda \cdot \xb_{p, q}\big) \cdot (2^{z-1} - 1) \cdot \beta \cdot \dist^{z}(p, q)\\
    \mr{$\rb_{p} \le 3\cdot 2^{z-2}\cdot \rb_{q}$}
    & \phantom{~=~}\quad + (1 / \lambda) \cdot \sum_{q \in P} \big(3\cdot 2^{z-2}\cdot \rb_{q}\big) \cdot \Big(\sum_{t \in \Ball_{q}} (2^{z-1} - 1) \cdot \beta \cdot \dist^{z}(t, q)\\
& \phantom{~=~\quad + (1 / \lambda) \cdot \sum_{q \in P} \big(3\cdot 2^{z-2}\cdot \rb_{q}\big) \cdot \Big(} + \sum_{t \in \Ring_{q}} \big(\rb_{q} - \beta \cdot \dist^{z}(t, q)\big)\Big)\\
    & ~=~ 2^{O(z)} \cdot \beta \cdot \sum_{q\in P}\sum_{p\in \Ball_q} \xb_{p, q} \cdot \dist^{z}(p, q)\\
    & \phantom{~=~}\quad + 2^{O(z)} \cdot (1 / \lambda) \cdot \sum_{q \in P} \Big(\beta \cdot \sum_{t \in \Ball_{q}} \rb_{q} \cdot \dist^{z}(t, q)\\
    & \phantom{~=~\quad + 2^{O(z)} \cdot (1 / \lambda) \cdot \sum_{q \in P} \Big(} + \sum_{t \in \Ring_{q}} \rb_{q} \cdot \big(\rb_{q} - \beta \cdot \dist^{z}(t, q)\big)\Big).
\end{align*}
For every  $t \in \Ball_{q}$, $\beta \cdot \dist^{z}(q, t) \le \rb_{q} / 2$ (see \eqref{eq:ball-beta}) $\implies \rb_{q} \le 2 \cdot [\rb_{q} - \beta \cdot \dist^{z}(q, t)]^{+} = 2\lambda \cdot \xb_{q, t}$.
For every $t \in \Ring_{q}$, $\beta \cdot \dist^{z}(t, q) \le \rb_{q} \le 2\beta \cdot \dist^{z}(t, q)$ (see \eqref{eq:ring}) $\implies \rb_{q} - \beta \cdot \dist^{z}(q, t) = [\rb_{q} - \beta \cdot \dist^{z}(q, t)]^{+} = \lambda \cdot \xb_{q, t}$. Plugging these into the above equation completes the proof:
\begin{align*}
    \sum_{(p, q) \in \calC} \big[\xb_{p, q} - \yb_{q}\big]^{+} \cdot \rb_{p}
    & ~\le~ 2^{O(z)} \cdot \beta \cdot \sum_{q\in P}\sum_{p\in \Ball(q)} \xb_{p, q} \cdot \dist^{z}(p, q)\\
    & \phantom{~=~}\quad + 2^{O(z)} \cdot (1 / \lambda) \cdot \sum_{q \in P} \Big(\beta \cdot \sum_{t \in \Ball_{q}} \big(2\lambda \cdot \xb_{q, t}\big) \cdot \dist^{z}(t, q)\\
    & \phantom{~=~\quad + 2^{O(z)} \cdot (1 / \lambda) \cdot \sum_{q \in P} \Big(} + \sum_{t \in \Ring_{q}} \big(2\beta \cdot \dist^{z}(t, q)\big) \cdot \big(\lambda \cdot \xb_{q, t}\big)\Big)\\
& ~\le~ 2^{O(z)} \cdot \beta \cdot \sum_{p, q \in P} \xb_{p, q} \cdot \dist^{z}(p, q)\\
    & ~=~ 2^{O(z)} \cdot \beta \cdot \asncost(\Bxb).
    \qedhere
\end{align*}
\end{proof}

Applying \Cref{claim:boundF,claim:boundC} to \eqref{eq:additive-error-yb} completes the proof of \Cref{lem:feasibility}, as follows:
\begin{align*}
    \cost(\Byb)
    & ~\le~ \asncost(\Bxb) + \gamma^{*} \cdot \beta^{-1} \cdot \sum_{(p, q) \in \calF \sqcup \calC} \big[\xb_{p, q} - \yb_{q}\big]^{+} \cdot \rb_{p}\\
    & ~\le~ \asncost(\Bxb) + \gamma^{*} \cdot 2^{O(z)} \cdot \asncost(\Bxb)\\
    \mr{$\gamma^{*} = 2^{O(z^2)}$}
    &~\le~ 2^{O(z^{2})} \cdot \asncost(\Bxb).
    \qedhere
\end{align*}
\end{proof}

\Cref{lem:fractional-MP} follows directly by combining \Cref{lem:fractional-MP:LMP,lem:feasibility}.
\end{proof}

\begin{example}[Counterexample to $\Bxb$'s Feasibility]\label{example:counterexample}
    Let $\lambda=\beta=1$ and $z=2$. 
Consider points on the $1$-dimensional Euclidean line: place $p$ at $0$, $q$ at $1/2$, and $n\ge 1$ points at $\sqrt{5/8}$. 
Then $\rb_{p}=5/8$ and therefore $\xb_{p,q}=3/8$. 
On the other hand, it is easy to see that
$\rb_{q}\le (\sqrt{5/8}-1/2)^2 + O(1/n)\approx 0.0844 + O(1/n)$.
For sufficiently large $n$, we have $\yb_{q} = \rb_{q} < \xb_{p,q}$, 
which violates Constraint~\eqref{eq:LP-ufl-cover}.
As a result, for this instance, the assignment $\Bxb$ is not feasible with respect to $\Byb$.
\end{example}

\subsection{\texorpdfstring{Analysis of $\MPbeta$ under Distance Distortions (Proof of \Cref{lem:FMP-distance-distortion})}{}}
\label{sec:revisit-fractionalMP}

In this subsection, we prove \Cref{lem:FMP-distance-distortion}, which analyzes $\MPbeta$ (\Cref{alg:robust-MP}) under distance distortions, and we restate it below.

\RobustMP*

\begin{proof}Let $\beta^{*} := 2^{-z-1}$; then $\beta = \beta^{*}/\Gamma^{2z}$.
Let $\Byb := (1 / \lambda) \cdot \Brb$ and $\Byb[\beta^{*}] := (1 / \lambda) \cdot \Brb[\beta^{*}]$, which are the outputs of $\MPbeta$ and $\mathrm{MP}_{\beta^{*}}$ on the original metric space $(P, \dist)$, respectively.
Let $\Bralg = (\ralg_p)_{p\in P}$ where $\ralg_p$ is computed by $\MPbeta$ when run on the distorted $(P, \Tilde{\dist})$, i.e., 
\begin{align*}
    &\sum_{q\in P}\big[\ralg_p - \beta\cdot \Tilde{\dist}^z(p,q)\big]^{+} ~=~ \lambda, && \forall p\in P.
\end{align*}
We first show that the vector $\Bralg$ satisfies $\Brb\le \Bralg \le \Brb[\beta^{*}]$.
To see this, observe from \eqref{eq:distorted-distanct} that, for every $p\in P$,
\begin{align*}
    \sum_{q\in P} \big[\ralg_p - \beta^{*}\cdot \dist^z(p,q)\big]^{+}
    &~\le~ 
    \sum_{q\in P} \big[\ralg_p - \beta\cdot \Tilde{\dist}^z(p,q)\big]^{+} 
    ~=~
    \lambda
    ~=~
    \sum_{q\in P} \big[\rb[\beta^{*}]_p - \beta^{*}\cdot \dist^z(p,q)\big]^{+}\\
    \sum_{q\in P} \big[\ralg_p - {\beta}\cdot {\dist}^z(p,q)\big]^{+}
    &~\ge~ \sum_{q\in P} \big[\ralg_p - {\beta}\cdot \Tilde{\dist}^z(p,q)\big]^{+} 
    ~=~
    \lambda
    ~=~
    \sum_{q\in P} \big[\rb_p - {\beta}\cdot \dist^z(p,q)\big]^{+}
\end{align*}
Since the expression $\sum_{q \in P} [r - \beta \cdot \dist^{z}(p, q)]^{+}$ is increasing in $r$ (and $\lambda > 0$ is nonzero), we infer from the above two equations that $\rb[\beta]_{p} \le \ralg_{p} \le \rb[\beta^{*}]_{p}$; therefore, 
$\Brb[{\beta}] \le \Bralg \le \Brb[\beta^{*}]$ holds.

Since $\Byalg = \Bralg / \lambda$, we have $\Byb \le \Byalg \le \Byb[\beta^{*}]$.
Finally, we use this fact to prove the LMP property of $\Byalg$. 
Recall that $\beta^{*} = 2^{-z-1}$ and $\Byb[\beta^{*}]$ is the output of $\mathrm{MP}_{\beta^{*}}$ on the original space $(P,\dist)$; by \Cref{lem:fractional-MP}, we have 
\begin{equation}\label{eq:LMP-Bybbetastar}
    \cost(\Byb[\beta^*])
    ~\le~ 2^{O(z^{2})}\cdot (\OPTfl - \lambda \cdot \| \Byb[\beta^{*}] \|_{1}).
\end{equation}
The basic idea is to relate the clustering cost $\cost(\Byalg)$ to the assignment cost $\asncost(\Bxb[\beta^{*}]) = \cost(\Byb[\beta^{*}])$, where $\Bxb[\beta^{*}] \sim \Byb[\beta^{*}]$ is the optimal feasible assignment to $\Byb[\beta^{*}]$.
To this end, we apply \Cref{claim:additive-error}.
We verify the premises of \Cref{claim:additive-error} for the choices $\By := \Byalg = (1 / \lambda) \cdot \Bralg$, $\Bx:= \Bxb[\beta^{*}]$, and $\bm{\mu} := \gamma^{*} \cdot \beta^{-1} \cdot \Brb$ (for the constant $\gamma^{*} = 2^{O(z^2)}$ specified in \Cref{lem:properties-rp-localdensity}):
\begin{enumerate}
    \item For every $p \in P$, by definition we have 
    $\tilde{r}_{p} \ge \rb_p$.
It then follows that
    $\sum_{q \in P: \dist^{z}(p, q) \le \mu_{p}} \ralg_{q}
    \ge \sum_{q \in P: \dist^{z}(p, q) \le \gamma^{*}\cdot \beta^{-1}\cdot \rb_{p}} \rb_{q}
    = \sum_{q \in \Ball_{p}(\gamma^{*} \cdot \rb_{p})} \rb_{q}
    \ge \lambda$,
    where the second step follows from the definition of $\Ball_{p}(\gamma^{*} \cdot \rb_{p})$ in \eqref{eq:ball-beta}, and the third step applies \Cref{lem:properties-rp-localdensity}.
    This further implies that $\sum_{q \in P: \dist^{z}(p, q) \le \mu_{p}} y_{q} \ge 1$.

    \item For every $p \in P$, $\sum_{q \in P} x_{p, q} = \sum_{q \in P} \xb[\beta^{*}]_{p, q} = 1$, given the optimality of $\Bxb[\beta^{*}] \sim \Byb[\beta^{*}]$.

    \item 
    Since $\By = \Byalg \le \Byb[\beta^{*}]$, we have 
    $\sum_{q \in P: \dist^{z}(p, q) \le \mu_{p}} \yb[\beta^{*}]_{q} \ge \sum_{q \in P: \dist^{z}(p, q) \le \mu_{p}} y_{q}\ge 1$ for every $p\in P$, i.e., the nearby points $\{q \in P: \dist^{z}(p, q) \le \mu_{p}\}$ already have sufficient opening for $p$.
So, in the optimal fractional assignment $\Bxb[\beta^{*}] \sim \Byb[\beta^{*}]$ under Constraints~\ref{eq:LP-cl-sum} and~\ref{eq:LP-cl-cover} (see \Cref{sec:prelim}), every point $q \in P$ that $p$ is assigned to (i.e., $x_{p, q} = \xb[\beta^{*}]_{p, q} > 0$) satisfies $\dist^{z}(p, q) \le \mu_{p}$.
\end{enumerate}
By applying \Cref{claim:additive-error} and $\asncost(\Bxb[\beta^{*}]) = \cost(\Byb[\beta^{*}])$, we can derive 
\begin{align}
    \cost(\Byalg)
    & ~\le~ \cost(\Byb[\beta^{*}]) + \sum_{p, q \in P} [\xb[\beta^{*}]_{p, q} - \yalg_{q}]^{+} \cdot \gamma^{*} \cdot \beta^{-1} \cdot \rb_{p}\\
    \mr{$\Brb\le \Brb[\beta^*]$}
    & ~\le~ \cost(\Byb[\beta^{*}]) + \sum_{p, q \in P} [\xb[\beta^{*}]_{p, q} - \yalg_{q}]^{+} \cdot \gamma^{*} \cdot \beta^{-1} \cdot \rb[\beta^{*}]_{p}.
    \label{eq:correctness-lmp-fl-additive-error}
\end{align}
We divide all pairs $P \times P = \calF \sqcup \calC$ into the ``far'' pairs $\calF$ and the ``close'' pairs $\calC$, as follows:
\begin{align*}
    \calF & ~:=~ \big\{(p, q) \in P \times P:\ \beta^{*} \cdot \dist^{z}(p, q) > \tfrac{1}{3\cdot 2^{z-1}} \cdot \rb[\beta^{*}]_{p}\big\},\\
    \calC & ~:=~ \big\{(p, q) \in P \times P:\ \beta^{*} \cdot \dist^{z}(p, q) \le \tfrac{1}{3\cdot 2^{z-1}} \cdot \rb[\beta^{*}]_{p}\big\}.
\end{align*}
Below, \Cref{claim:boundF1,claim:boundC1} reason about these two kinds of pairs separately.

\begin{claim}
\label{claim:boundF1}
$\sum_{(p, q) \in \calF} [\xb[\beta^{*}]_{p, q} - \yalg_{q}]^{+} \cdot \rb[\beta^{*}]_{p} \le \frac{3}{4} \cdot \cost(\Byb[\beta^{*}])$.
\end{claim}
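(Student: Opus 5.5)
Mirroring \Cref{claim:boundF} in the undistorted analysis, the plan is to drop the positive part and the subtraction of $\yalg_q$, and bound each summand by $\xb[\beta^{*}]_{p,q}\cdot \rb[\beta^{*}]_{p}$. Two facts about the summands are needed.

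\begin{itemize}
\item \textbf{Nonnegativity.} Since $\yalg_q\ge 0$, we have $[\xb[\beta^{*}]_{p,q}-\yalg_q]^{+}\le \xb[\beta^{*}]_{p,q}$, because $\xb[\beta^{*}]_{p,q}\ge 0$.
\item \textbf{The far condition.} For every $(p,q)\in\calF$, the definition of $\calF$ gives
\[
\rb[\beta^{*}]_p< 3\cdot 2^{z-1}\cdot \beta^{*}\cdot \dist^z(p,q).
\]
Plugging in $\beta^{*}=2^{-z-1}$, this reads
\[
\rb[\beta^{*}]_p< 3\cdot 2^{z-1}\cdot 2^{-z-1}\,\dist^z(p,q)=\tfrac34\,\dist^z(p,q).
\]
\end{itemize}

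Combining the two facts termwise gives
\[
\sum_{(p,q)\in\calF}[\xb[\beta^{*}]_{p,q}-\yalg_q]^{+}\,\rb[\beta^{*}]_p\le \tfrac34\sum_{(p,q)\in\calF}\xb[\beta^{*}]_{p,q}\,\dist^z(p,q).
\]

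Extending the sum on the right to all pairs in $P\times P$ is allowed because every term is nonnegative. This bounds the right-hand side by $\tfrac34\,\asncost(\Bxb[\beta^{*}])$. Recall that $\Bxb[\beta^{*}]$ was chosen in this proof as the optimal feasible assignment for $\Byb[\beta^{*}]$, so $\asncost(\Bxb[\beta^{*}])=\cost(\Byb[\beta^{*}])$, which is the claimed bound.

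There is no real obstacle here. The only point to check is the arithmetic $3\cdot 2^{z-1}\cdot 2^{-z-1}=3/4$. That constant matters downstream: it must be strictly less than $1$ relative to the factor $\gamma^{*}\beta^{-1}$ in \eqref{eq:correctness-lmp-fl-additive-error}. More precisely, there the far-pair contribution will be multiplied by $\gamma^{*}\beta^{-1}=2^{O(z^2)}\Gamma^{2z}$, and this stays within the target ratio.
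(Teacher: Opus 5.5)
Your proof is correct and is essentially the paper's argument: bound $[\xb[\beta^{*}]_{p,q}-\yalg_q]^{+}\le \xb[\beta^{*}]_{p,q}$, use the far-pair condition to get $\rb[\beta^{*}]_p\le \frac34\,\dist^z(p,q)$, then extend to the full assignment cost, which equals $\cost(\Byb[\beta^{*}])$ because $\Bxb[\beta^{*}]$ is the optimal feasible assignment. Your closing remark is slightly off, though: the constant $\frac34$ need not be below $1$, since it is multiplied by $\gamma^{*}\beta^{-1}$ anyway, and any $2^{O(z)}$ constant would work equally well.
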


\begin{proof}
Each far pair $(p, q) \in \calF$ satisfies
$\rb[\beta^{*}]_{p} \le 3 \cdot 2^{z-1} \cdot \beta^{*} \cdot \dist^{z}(p, q)
= \frac{3}{4} \cdot \dist^{z}(p, q)$. 
Hence, we have 
$\sum_{(p, q) \in \calF} [\xb[\beta^{*}]_{p, q} - \yalg_{q}]^{+} \cdot \rb[\beta^{*}]_{p}
\le \sum_{(p, q) \in \calF} \xb[\beta^{*}]_{p, q} \cdot \frac{3}{4} \cdot \dist^{z}(p, q)
\le \frac{3}{4} \cdot \asncost(\Bxb[\beta^{*}])
= \frac{3}{4} \cdot \cost(\Byb[\beta^{*}])$.
\end{proof}

\begin{claim}
\label{claim:boundC1}
$\sum_{(p, q) \in \calC} [\xb[\beta^{*}]_{p, q} - \yalg_{q}]^{+} \cdot  \rb[\beta^{*}]_{p}
\le 3 \cdot 2^{z-1} \cdot \lambda \cdot (\|\Byb[\beta^{*}]\|_{1} - \|\Byalg\|_{1})$.
\end{claim}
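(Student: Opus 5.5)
The plan is to exploit the fact that, for a close pair, $\rb[\beta^{*}]_p$ is comparable to $\rb[\beta^{*}]_q$. The excess $[\xb[\beta^{*}]_{p,q} - \yalg_q]^+$ is at most $[\yb[\beta^{*}]_q - \yalg_q]^+ = (\rb[\beta^{*}]_q - \ralg_q)/\lambda$, which is nonnegative because $\Bralg \le \Brb[\beta^{*}]$. After this substitution, summing over $q$ should reproduce the gap $\|\Byb[\beta^{*}]\|_1 - \|\Byalg\|_1$, provided the weights $\sum_{p:(p,q)\in\calC}\rb[\beta^{*}]_p/\lambda$ are bounded by a constant for each $q$.

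\textbf{Step 1 (bounded weights).} First use feasibility, $\xb[\beta^{*}]_{p,q} \le \yb[\beta^{*}]_q$ (Constraint~\ref{eq:LP-cl-cover}), to get
\begin{equation*}
[\xb[\beta^{*}]_{p,q} - \yalg_q]^+ \le \yb[\beta^{*}]_q - \yalg_q = (\rb[\beta^{*}]_q - \ralg_q)/\lambda.
\end{equation*}
Next, for $(p,q)\in\calC$ we have $\beta^{*}\dist^z(p,q) \le \rb[\beta^{*}]_p/(3\cdot 2^{z-1})$. Applying \Cref{lem:properties-rp-relation1} at scale $\beta^{*}$ gives
\begin{equation*}
\rb[\beta^{*}]_p \le 2^{z-1}\rb[\beta^{*}]_q + \rb[\beta^{*}]_p/3, \qquad\text{so}\qquad \rb[\beta^{*}]_p \le \tfrac{3}{2}\cdot 2^{z-1}\rb[\beta^{*}]_q.
\end{equation*}
Substituting back, $\beta^{*}\dist^z(p,q) \le \rb[\beta^{*}]_q/2$. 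Hence every close pair has $p\in \Ball[\beta^{*}]_q(\rb[\beta^{*}]_q/2)$.

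\textbf{Step 2 (counting).} For fixed $q$, \Cref{lem:properties-rp-size} with $r = \rb[\beta^{*}]_q/2$ gives $|\Ball[\beta^{*}]_q(\rb[\beta^{*}]_q/2)| \le 2\lambda/\rb[\beta^{*}]_q$. Therefore
\begin{equation*}
\sum_{p:(p,q)\in\calC}\rb[\beta^{*}]_p \le \tfrac{3}{2}\cdot 2^{z-1}\rb[\beta^{*}]_q\cdot \frac{2\lambda}{\rb[\beta^{*}]_q} = 3\cdot 2^{z-1}\lambda.
\end{equation*}

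\textbf{Step 3 (sum over $q$).} Combining the previous steps,
\begin{equation*}
\sum_{(p,q)\in\calC}[\xb[\beta^{*}]_{p,q}-\yalg_q]^+\rb[\beta^{*}]_p \le \sum_q \frac{\rb[\beta^{*}]_q-\ralg_q}{\lambda}\cdot 3\cdot 2^{z-1}\lambda = 3\cdot 2^{z-1}\lambda\big(\|\Byb[\beta^{*}]\|_1-\|\Byalg\|_1\big),
\end{equation*}
where the last equality uses $\yb[\beta^{*}]_q - \yalg_q = (\rb[\beta^{*}]_q - \ralg_q)/\lambda$ and the nonnegativity of every term.

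The main obstacle is the constant bookkeeping in Step 1: the factor $\tfrac{1}{3\cdot 2^{z-1}}$ in the definition of $\calC$ must be exactly what makes $\rb[\beta^{*}]_p$ comparable to $\rb[\beta^{*}]_q$ and simultaneously places $p$ inside the half-radius ball. If $\rb[\beta^{*}]_q$ could be $0$, the ball-size bound would be degenerate, but \eqref{eq:rp} with $\lambda>0$ forces $\rb[\beta^{*}]_q>0$.
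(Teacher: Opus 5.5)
Your proposal is correct and follows essentially the same route as the paper. You use feasibility $x^{(\beta^*)}_{p,q}\le y^{(\beta^*)}_q$ to bound the excess by $y^{(\beta^*)}_q-\tilde y_q$, then \Cref{lem:properties-rp-relation1} to get $r^{(\beta^*)}_p\le 3\cdot 2^{z-2}r^{(\beta^*)}_q$ and place $p$ in the half-radius ball around $q$, and finally \Cref{lem:properties-rp-size} to bound that ball by $2\lambda/r^{(\beta^*)}_q$ before summing over $q$; the only cosmetic difference is that you count only close pairs, whereas the paper first enlarges the sum to all $p$ in that ball, where the same bound on $r^{(\beta^*)}_p$ still holds.
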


\begin{proof}
Each close pair $(p, q) \in \calC$ satisfies
$\beta^{*} \cdot \dist^{z}(p, q) \le \frac{1}{3 \cdot 2^{z-1}} \cdot \rb[\beta^{*}]_{p}$. In combination with (\Cref{lem:properties-rp-relation1}) $\rb[\beta^{*}]_{p} \le 2^{z-1} \cdot (\rb[\beta^{*}]_{q} + \beta^{*} \cdot \dist^{z}(p, q))$, we can deduce that
$\rb[\beta^{*}]_{p} \le 3 \cdot 2^{z-2} \cdot \rb[\beta^{*}]_{q}$
and then $\beta^{*} \cdot \dist^{z}(p, q) \le \frac{1}{2} \cdot \rb[\beta^{*}]_{q}$.
Therefore, we have $\calC \subseteq \{(p, q) \in P \times P:\ p \in \Ball[\beta^{*}]_{q}(\rb[\beta^{*}]_q/2)\}$.
It follows that
\begin{align*}
    \sum_{(p, q) \in \calC} [\xb[\beta^{*}]_{p, q} - \yalg_{q}]^{+} \cdot \rb[\beta^{*}]_{p}
    & ~\le~ \sum_{(p, q) \in P \times P: p \in \Ball[\beta^{*}]_{q}(\rb[\beta^{*}]_q/2)} [\xb[\beta^{*}]_{p, q} - \yalg_{q}]^{+} \cdot \rb[\beta^{*}]_{p}\\
    \mr{$\xb[\beta^{*}]_{p, q} \le \yb[\beta^{*}]_{q}$ (feasibility)}
    & ~\le~ \sum_{(p, q) \in P \times P: p \in \Ball[\beta^{*}]_{q}(\rb[\beta^{*}]_q/2)} [\yb[\beta^{*}]_{q} - \yalg_{q}]^{+} \cdot \rb[\beta^{*}]_{p}\\
    & ~\le~ \sum_{(p, q) \in P \times P: p \in \Ball[\beta^{*}]_{q}(\rb[\beta^{*}]_q/2)} [\yb[\beta^{*}]_{q} - \yalg_{q}]^{+} \cdot 3 \cdot 2^{z-2} \cdot \rb[\beta^{*}]_{q}\\
    & ~\le~ \sum_{q \in P} |\Ball[\beta^{*}]_{q}(\rb[\beta^{*}]_q/2)| \cdot [\yb[\beta^{*}]_{q} - \yalg_{q}]^{+} \cdot 3 \cdot 2^{z-2} \cdot \rb[\beta^{*}]_{q}\\
    \mr{\Cref{lem:properties-rp-size}}
    & ~\le~ \sum_{q \in P} (2\lambda / \rb[\beta^{*}]_{q}) \cdot [\yb[\beta^{*}]_{q} - \yalg_{q}]^{+} \cdot 3 \cdot 2^{z-2} \cdot \rb[\beta^{*}]_{q}\\
    \mr{$\Byalg \le \Byb[\beta^{*}]$}
    & ~=~ 3 \cdot 2^{z-1} \cdot \lambda \cdot (\|\Byb[\beta^{*}]\|_{1} - \|\Byalg\|_{1}).
\end{align*}
This completes the proof of \Cref{claim:boundC1}.
\end{proof}

Applying \Cref{lem:fractional-MP,claim:boundF1,claim:boundC1} to \eqref{eq:correctness-lmp-fl-additive-error} finishes the proof of \Cref{lem:FMP-distance-distortion}, as follows:
\begin{align*}
    \cost(\Byalg)
& ~\le~ \cost(\Byb[\beta^{*}])
    + \gamma^{*} \cdot \beta^{-1} \cdot \sum_{(p, q) \in \calF \sqcup \calC} [\xb[\beta^{*}]_{p, q} - \yalg_{q}]^{+} \cdot \rb[\beta^{*}]_{p}\\
    \mr{\Cref{claim:boundF1,claim:boundC1}}
    & ~\le~ \cost(\Byb[\beta^{*}])
    + \gamma^{*} \cdot \beta^{-1} \cdot
    \left(\tfrac{3}{4} \cdot \cost(\Byb[\beta^{*}])
    + 3 \cdot 2^{z-1} \cdot \lambda \cdot (\|\Byb[\beta^{*}]\|_{1} - \|\Byalg\|_{1})\right)\\
\mr{$\gamma^{*} = 2^{O(z^2)}$}
    & ~\le~ 2^{O(z^2)}\cdot \beta^{-1} \cdot
    \left(\cost(\Byb[\beta^{*}])
    + \lambda \cdot \big(\|\Byb[\beta^{*}]\|_{1} - \|\Byalg\|_{1}\big)\right)\\
    \mr{\eqref{eq:LMP-Bybbetastar}}
    & ~\le~ 2^{O(z^2)}\cdot \beta^{-1}\cdot  \left(2^{O(z^2)}\cdot \big(\OPTfl - \lambda\cdot \|\Byb[\beta^{*}]\|_{1}\big)+\lambda \cdot \big(\|\Byb[\beta^{*}]\|_{1} - \|\Byalg\|_{1}\big)\right)\\
& ~\le~ 2^{O(z^2)} \cdot \beta^{-1} \cdot \left(\OPTfl - \lambda \cdot \|\Byalg\|_{1}\right),
\end{align*}
where the last step holds since $\OPTfl - \lambda\cdot \|\Byb[\beta^{*}]\|_{1} \ge 0$ (by \Cref{lem:fractional-MP}) and $\|\Byb[\beta^{*}]\|_{1} - \|\Byalg\|_{1} \ge 0$.
\end{proof}

\section{\texorpdfstring{MPC Value Estimation for Euclidean \kzC}{}}
\label{sec:value}

In this section, we present an MPC algorithm for estimating the optimal objective value of the \textsc{Euclidean} \kzC problem, as stated below.

\begin{restatable}{theorem}{ClusteringValue}
    \label{thm:clustering-value-formal}
    For any integer $z\ge 1$,
    there is a randomized MPC algorithm for estimating the optimal objective value of \textsc{Euclidean} \kzC that
    \begin{itemize}
        \item given as input parameters $\Gamma, \Delta \ge 1$, integers $n, d, k \ge 1$, and an $n$-point dataset $P \subseteq \R^d$ of aspect ratio $\Delta$ distributed across MPC machines with local memory $s \ge \poly(d \log(n\Delta))$,
       \item computes a value $\eta > 0$ such that, with high probability, $\OPTcl \le \eta \le 2^{O(z^2)} \Gamma^{4z} \cdot \OPTcl$, where $\OPTcl$ denotes the optimal objective value of \textsc{Euclidean} \kzC on $P$.
    \end{itemize} 
    The algorithm runs in $O(\log_s n)$ rounds and uses total memory $2^{O(d/\Gamma)} \cdot O(n\cdot \poly(d\log(n\Delta)))$.
\end{restatable}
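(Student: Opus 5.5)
We will reduce value estimation to the LMP facility-location algorithm of \Cref{thm:lmp-fl}, using a Lagrangian search over $\lambda$ together with the standard bi-point argument. Let $\alpha := 2^{O(z^2)}\Gamma^{2z}$ denote the LMP ratio from \Cref{thm:lmp-fl}. For each $\lambda$ on a geometric grid $\{(1+\delta)^i\}$ covering the relevant range (from roughly $\Delta^{-z}/n$ up to $n\Delta^z$ after normalization, so $O(\log(n\Delta))$ values), we run \Cref{thm:lmp-fl} in parallel. This yields $\Byalg_\lambda$ with
\[
\cost(\Byalg_\lambda)+\alpha\lambda\|\Byalg_\lambda\|_1\le \alpha\,\OPTfl .
\]
The total memory grows only by an $O(\log(n\Delta))$ factor, and the round count stays $O(\log_s n)$. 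We will also compute $\|\Byalg_\lambda\|_1$ and, for every $\lambda$, a constant-factor estimate of $\cost(\Byalg_\lambda)$. The cost estimate is the step that needs care; see below.

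Next we use weak duality and bi-point convexity. For any $\lambda$, $\OPTfl\le \OPTclfr+\lambda k$. Hence if $\|\Byalg_\lambda\|_1\le k$, then $\cost(\Byalg_\lambda)\le \alpha(\OPTclfr+\lambda(k-\|\Byalg_\lambda\|_1))$. We will pick consecutive grid values $\lambda_1<\lambda_2\le(1+\delta)\lambda_1$ such that $k_1:=\|\Byalg_{\lambda_1}\|_1\ge k\ge k_2:=\|\Byalg_{\lambda_2}\|_1$. Such a pair exists by monotonicity in the extreme regimes: at tiny $\lambda$ the opening mass is $\ge n\ge k$, and at huge $\lambda$ it is $\le 1$. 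Choose $a$ with $ak_1+(1-a)k_2=k$ and form the convex combination $\By:=a\Byalg_{\lambda_1}+(1-a)\Byalg_{\lambda_2}$, which has $\|\By\|_1=k$. Since $\cost(\cdot)$ is convex in $\By$ (it is an LP value whose constraints are linear in $\By$), combining the two LMP inequalities with weight $a$ and $1-a$ gives, up to the $(1+\delta)$ slack between $\lambda_1$ and $\lambda_2$,
\[
\cost(\By)\le a\cost(\Byalg_{\lambda_1})+(1-a)\cost(\Byalg_{\lambda_2})\le O(\alpha)\,\OPTclfr .
\]
We then set $\eta:=c\cdot 2^{O(z)}\cdot \widehat{C}$, where $\widehat{C}$ is our estimate of $a\cost(\Byalg_{\lambda_1})+(1-a)\cost(\Byalg_{\lambda_2})$. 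The lower bound follows from $\OPTclfr\le\cost(\By)$ and \Cref{lem:integrality-gap}, which gives $\OPTcl\le 2^{O(z)}\OPTclfr\le 2^{O(z)}\cost(\By)$. The upper bound follows from $\OPTclfr\le\OPTcl$. Together these give a ratio of $2^{O(z^2)}\Gamma^{2z}$ times the estimation error.

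The main obstacle is estimating $\cost(\Byalg_\lambda)$ in MPC, because the optimal assignment is not locally computable. We plan to bound it by a quantity that is locally computable: $\sum_p \lambda\ralg_p/\lambda$-type terms, i.e., $\lambda\cdot$ a local proxy. From the proof of \Cref{lem:FMP-distance-distortion}, the $\Bralg$ with distorted distances is sandwiched between $\Brb$ and $\Brb[\beta^*]$. In addition, $\sum_p \rb_p$ relates to $\OPTfl$: the dual solution $\Brb[2\beta]$ is feasible up to rescaling, and \Cref{claim:lem13.6-BCG24} gives the matching cost bound. We therefore use the value $V_\lambda:=\sum_p\ralg_p$ itself, computed exactly by summation, as the estimate. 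By \Cref{lem:properties-rp-monotone} together with the dual feasibility in \Cref{lem:fractional-MP:LMP}, $V_\lambda$ is within a $\Gamma^{2z}2^{O(z)}$ factor of a feasible dual value, hence at most $\OPTfl$ up to that factor. Conversely, $\cost(\Byalg_\lambda)+\lambda\|\Byalg_\lambda\|_1\ge\OPTfl$. This lets us replace the unknown costs by $\Gamma^{2z}$-approximations, which yields the overall $\Gamma^{4z}$ factor.

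Concretely, the estimate is $\eta\propto\max_\lambda\big(V'_\lambda-\lambda k\big)$, where $V'_\lambda$ is a dual value built from $\Bralg$. Weak duality for the Lagrangian gives the lower direction, and the bi-point argument above gives the upper direction. Verifying the dual-feasibility loss under the distortion is where we expect most of the work. Randomization enters only through the range-count primitive of \Cref{thm:lmp-fl} and the preprocessing.
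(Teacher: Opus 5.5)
\paragraph{Where the proposal stands.} Your Lagrangian search and bi-point combination match the paper's \Cref{thm:clustering-fractional} (\Cref{alg:kzC}). They correctly give $\By$ with $\|\By\|_1=k$ and $\cost(\By)\le 2^{O(z^2)}\Gamma^{2z}\OPTclfr$. The gap is the step that turns this into a number. You dismiss estimating $\cost(\By)$ as infeasible, but that is exactly the ingredient the paper uses (\Cref{lem:cost-estimation}). The assignment constraints decouple across points, so $\cost(p,\By)$ is the greedy quantity obtained by filling the nearest $y$-mass up to one unit. Bucket distances at scales $\Gamma^{\ell}$ and take range sums $s^{(\ell)}_p=\min\{\sum_{q\in A^{(\ell)}_p}y_q,\,1\}$, using the same range-sum primitive already used in \Cref{thm:lmp-fl}. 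Then $\hatdist(p,\By)=\sum_\ell(s^{(\ell)}_p-s^{(\ell-1)}_p)\Gamma^{\ell z}$ lies in $[\cost(p,\By),\,\Gamma^{2z}\cost(p,\By)]$. Outputting $\eta=2^{O(z)}\sum_p\hatdist(p,\By)$ and applying \Cref{lem:integrality-gap} gives $2^{O(z^2)}\Gamma^{4z}$; the second $\Gamma^{2z}$ comes from this estimation step.

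\paragraph{Why your substitute does not close the gap.} First, $V_\lambda=\sum_p\ralg_p=\lambda\|\Byalg_\lambda\|_1$ is literally the opening cost, so it carries no information about $\cost(\Byalg_\lambda)$. Second, a multiplicative approximation of $\OPTfl$ is useless here. $\OPTclfr$ is visible only through the difference $\OPTfl-\lambda k$, which can be negligible compared with $\OPTfl$ (for example, at large $\lambda$). Hence any constant-factor ``dual-feasibility loss'' ruins $\max_\lambda(V'_\lambda-\lambda k)$ in both directions. For that estimator you need two things:
\begin{enumerate}
\item[(i)] $V'_\lambda$ must be an exactly feasible dual value;
\item[(ii)] you need an LMP inequality measured against $V'_\lambda$ itself, namely $\cost(\By_\lambda)\le C\,(V'_\lambda-\lambda\|\By_\lambda\|_1)$ for solutions whose masses cross $k$.
\end{enumerate}
Your ``bi-point argument above'' uses LMP against $\OPTfl$, so it bounds $\cost(\By)$, not the number you compute. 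You prove neither (i) nor (ii), and you explicitly leave them open.

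Both can in fact be supplied:
\begin{itemize}
\item $\Brb\le\Bralg\le\Brb[\beta^{*}]$, and $\Brb[\beta^{*}]$ is dual-feasible by the argument of \Cref{lem:fractional-MP:LMP}. So $\|\Bralg\|_1\le\OPTfl$ with no loss.
\item Apply the inequality preceding weak duality in \Cref{lem:fractional-MP:LMP}, together with \Cref{lem:feasibility}, to the undistorted $\Byb[\beta/2]$. This gives $\cost(\Byb[\beta/2])\le 2^{O(z^2)}\beta^{-1}(\|\Bralg\|_1-\lambda\|\Byb[\beta/2]\|_1)$.
\item The bi-point argument with $\lambda_{\ell+1}=2^z\lambda_\ell$ then shows that $\max_\ell(\|\Bralg^{(\ell)}\|_1-\lambda_\ell k)$ is within $2^{O(z^2)}\Gamma^{2z}$ of $\OPTclfr$.
\end{itemize}
That is a genuinely different and even slightly sharper route, but it is not the argument you give.

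\paragraph{Minor points.} $\|\Byalg_\lambda\|_1>1$ always holds for $n\ge 2$, so it is never ``$\le 1$'' at large $\lambda$. Consequently, for $k=1$ no crossing pair exists, and that case needs separate handling. The paper samples a uniform center, which is also its only source of randomness; the range-count primitive is deterministic.
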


\begin{proof}
    The algorithm of \Cref{thm:clustering-value-formal} is composed of two subroutines: one that computes an approximate solution $\By \in \R_{\ge 0}^{P}$ to the \textsc{Fractional Euclidean} {\kzC} problem, and one that computes an estimate $\hatdist(p, \By)$ of $\cost(p, \By)$ for each point $p \in P$. Both subroutines are also useful in \Cref{sec:rounding} for deriving MPC algorithms for Euclidean \textsc{(Integral)} \kzC.

    The two subroutines are stated below.
    \begin{restatable}[Subroutine for \textsc{Fractional Euclidean} \kzC]{lemma}{ClusteringFractional}
        \label{thm:clustering-fractional}
        For any integer $z\ge 1$,
        there is a randomized MPC algorithm for \textsc{Fractional Euclidean} \kzC that,
        \begin{itemize}
            \item given parameters $\Gamma, \Delta \ge 1$, integers $n, d, k \ge 1$, and an $n$-point dataset $P \subseteq \R^d$ with aspect ratio $\Delta$ distributed across machines with local memory $s \ge \poly(d \log (n\Delta))$,
            \item computes a $2^{O(z^2)} \cdot \Gamma^{2z}$-approximation for the dataset 
            $P$ with high probability.
        \end{itemize}
The algorithm runs in $O(\log_s n)$ rounds and uses total memory $2^{O(d/\Gamma)} \cdot n\cdot \poly(d\log(n\Delta))$.
    \end{restatable}
    \begin{proof}
        The proof can be found in \Cref{sec:clustering-fractional}.
    \end{proof}

    \begin{restatable}[Subroutine for Cost Estimation]{lemma}{CostEstimation}
    \label{lem:cost-estimation}
    For any integer $z \ge 1$, there is a deterministic MPC algorithm for estimating the cost of a given solution to \textsc{Fractional Euclidean} {\kzC}. Specifically,
    \begin{itemize}
        \item given parameters $\Gamma, \Delta \ge 1$, integers $n, d\ge 1$, an $n$-point dataset $P \subseteq \R^d$ with aspect ratio $\Delta$, and a fractional center set $\By \in \R_{\ge 0}^{P}$ such that $\|\By\|_1 \ge 1$, distributed across machines with local memory $s \ge \poly(d \log (n \Delta))$,\footnote{
            A fractional center set $\By \in \R_{\ge 0}^{P}$ is said to be distributed across machines if 
for every $p \in P$, the machine storing $p$ also stores the value $y_p$.
        }
        \item the algorithm computes for each point $p\in P$ a value $\hatdist(p, \By)$ such that 
        \begin{align*}
            \cost(p, \By) ~\le~ \hatdist(p,\By) ~\le~ \Gamma^{2z} \cdot \cost(p, \By).
        \end{align*}
    \end{itemize}
    The algorithm runs in $O(\log_s n)$ rounds and uses total memory $2^{O(d/\Gamma)} \cdot n\cdot \poly(d\log(n\Delta))$.
\end{restatable}
\begin{proof}
    The proof can be found in \Cref{sec:clustering-value}.
\end{proof}

We now return to the proof of \Cref{thm:clustering-value-formal}.
Our complete algorithm for \Cref{thm:clustering-value-formal} is as follows:
\begin{enumerate}
    \item Run the algorithm from \Cref{thm:clustering-fractional} to obtain a fractional solution $\By \in \R_{\ge 0}^{P}$ with $\|\By\|_1 = k \ge 1$.
    
    \item Run the cost-estimation algorithm from \Cref{lem:cost-estimation} to obtain values $\{\hatdist(p, \By)\}_{p \in P}$.
    
    \item Return $\eta := \alpha \cdot \sum_{p\in P} \hatdist(p, \By)$ for some $\alpha = 2^{O(z)}$.
\end{enumerate}
Suppose the algorithm from \Cref{thm:clustering-fractional} succeeds, i.e., $\cost(\By)\le 2^{O(z^2)} \cdot \Gamma^{2z}\cdot \OPTclfr$, which happens with high probability. 
By combining the guarantees of \Cref{thm:clustering-fractional,lem:cost-estimation}, and the integrality gap (\Cref{lem:integrality-gap}), we obtain the following guarantees:
\begin{align*}
    \eta &~\ge~ \alpha\cdot \cost(\By) ~\ge~ \alpha\cdot \OPTclfr ~\ge~ \alpha\cdot 2^{-O(z)} \cdot \OPTcl,\\
    \eta &~\le~ 2^{O(z)} \cdot \Gamma^{2z} \cdot \cost(\By) ~\le~ 2^{O(z^2)} \cdot \Gamma^{4z} \cdot \OPTclfr ~\le~ 2^{O(z^2)} \cdot \Gamma^{4z} \cdot \OPTcl.
\end{align*}
Hence, by picking $\alpha = 2^{O(z)}$ to be sufficiently large, we conclude that $\eta$ is the desired estimate for $\OPTcl$ as stated in \Cref{thm:clustering-value-formal}.

The complexity of the algorithm is the sum of those of \Cref{thm:clustering-fractional,lem:cost-estimation}; therefore, it requires local memory $s \ge \poly(d \log (n \Delta))$, runs in $O(\log_s n)$ rounds, and uses total memory $2^{O(d/\Gamma)} \cdot n \cdot \poly(d \log(n \Delta))$.
\end{proof}

\subsection{\texorpdfstring{Subroutine for Fractional Euclidean \kzC (Proof of \Cref{thm:clustering-fractional})}{}}
\label{sec:clustering-fractional}

\ClusteringFractional*

\begin{proof}
Our MPC algorithm, shown in \Cref{alg:kzC}, is adapted from the algorithm of~\cite[Section~14]{BhattacharyaCGL24}, which invokes an LMP algorithm for \textsc{Fractional Power-$z$ Facility Location} as a subroutine. 
We replace that subroutine with our MPC algorithm from \Cref{thm:lmp-fl}.
Below, we describe and analyze \Cref{alg:kzC} assuming $k \ge 2$, as the algorithm cannot handle the case $k = 1$ (specifically, the index $\ell^{*}$ computed in Line~\ref{alg:kzC:istar} may not exist).
For the case $k = 1$, we instead use a much simpler uniform sampling-based algorithm, which we discuss at the end of the proof.

At a high level, this algorithm invokes \Cref{thm:lmp-fl} with different values of $\lambda$ to yield a series of fractional center sets $\{\By^{(\ell)}\}_{\ell \in [0:L]}$. 
It then identifies two particular fractional center sets $\By^{(\ell^{*}-1)}$ and $\By^{(\ell^{*})}$, such that $\|\By^{(\ell^{*}-1)}\|_{1} \ge k$ and $\|\By^{(\ell^{*})}\|_{1} \le k$, and outputs their linear combination $\By$ satisfying $\|\By\|_1 = k$.
The LMP approximation guarantee established in \Cref{thm:lmp-fl} is key to showing that $\By$ is a good solution to \textsc{Fractional Euclidean} \kzC.

\begin{algorithm}[h]
\DontPrintSemicolon
\caption{MPC Algorithm for \textsc{Fractional Euclidean} \kzC}
\label{alg:kzC}

$L \gets \Theta(\log(n\Delta))$ be sufficiently large,
and $\lambda_{\ell} \gets 2^{\ell z}$ for every $\ell\in [0: L]$ 
\label{alg:kzC:L}\;

$\By^{(0)} \gets \ones$ (i.e., $y^{(0)}_p\gets 1$ for every point $p\in P$)
\label{alg:kzC:y0}\;

For each $\ell\in [L]$, $\By^{(\ell)} \gets$ the fractional center set computed by \Cref{thm:lmp-fl} with $\lambda = \lambda_\ell$
\label{alg:kzC:yi}\;

Find an arbitrary $\ell^{*} \in [L]$ such that {$\|\By^{(\ell^{*}-1)}\|_{1} \ge k$ and $\|\By^{(\ell^{*})}\|_{1} \le k$}
\label{alg:kzC:istar}\;
\tcp{The existence of such an $\ell^{*}$ is guaranteed by \Cref{obs:lambda0,obs:lambdaL}.}

Compute $\alpha\in (0,1)$ such that $\alpha\cdot \|\By^{(\ell^{*}-1)}\|_{1} + (1-\alpha)\cdot \|\By^{(\ell^{*})}\|_{1} = k$
\label{alg:kzC:alpha}\;

\Return $\By\gets \alpha\cdot \By^{(\ell^{*} - 1)} + (1-\alpha)\cdot \By^{(\ell^{*})}$
\label{alg:kzC:return}
\end{algorithm}

The correctness analysis of \Cref{alg:kzC} is (almost) identical to that in~\cite[Section~14]{BhattacharyaCGL24}, which is summarized in the following \Cref{lem:correctness-kzC} using our language.\footnote{While the analysis in~\cite[Section~14]{BhattacharyaCGL24} only considers the case $z = 1$, it can be easily extended to general cases $z \ge 1$, as stated in \Cref{lem:correctness-kzC}.}
For completeness, we provide a full proof in Appendix~\ref{sec:missing-proof-value}.

\begin{lemma}[Correctness of \Cref{alg:kzC}]
    \label{lem:correctness-kzC}
    \Cref{alg:kzC} returns a fractional center set $\By\in \R_{\ge 0}^{P}$ such that $\|\By\|_{1} = k$ and $\cost(\By) \le 2^{O(z^2)} \cdot \Gamma^{2z} \cdot \OPTclfr$.
\end{lemma}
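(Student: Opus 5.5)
Our plan is the standard bi-point (Lagrangian) argument of Jain--Vazirani, driven by the LMP guarantee of \Cref{thm:lmp-fl}. We first justify that $\ell^*$ exists; we expect this to be the content of \Cref{obs:lambda0,obs:lambdaL}. For $\ell=0$ we have $\|\By^{(0)}\|_1 = n \ge k$. For $\ell = L$, with $\lambda_L = 2^{Lz}$ exceeding $n\cdot(\text{diameter})^z \cdot 2^{O(z^2)}\Gamma^{2z}$, we argue as follows. $\OPTfl[\lambda_L] \le \lambda_L + \sum_p \dist^z(p,p_0)$, by opening one point fully. The LMP guarantee gives $\lambda_L\|\By^{(L)}\|_1 \le \OPTfl[\lambda_L]$. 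Hence $\|\By^{(L)}\|_1 \le 1 + o(1)$, which is at most $k$ when $k\ge2$. Since the sequence runs from a value $\ge k$ to a value $\le k$, some consecutive pair crosses $k$. Then $\alpha\in[0,1]$ is well defined and $\|\By\|_1=k$ holds by construction.

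For the cost bound, write $\By_1 = \By^{(\ell^*-1)}$ with $\lambda_1 = \lambda_{\ell^*-1}$, and $\By_2 = \By^{(\ell^*)}$ with $\lambda_2 = 2^z\lambda_1$. Let $A = 2^{O(z^2)}\Gamma^{2z}$ be the LMP ratio. Let $\By^\star$ be an optimal fractional $k$-clustering, so that $\|\By^\star\|_1=k$ and $\cost(\By^\star)=\OPTclfr$. For $i=1,2$, LMP gives $\cost(\By_i) + A\lambda_i\|\By_i\|_1 \le A(\OPTclfr + \lambda_i k)$. We use convexity of $\cost(\cdot)$: the cost is the value of a minimization LP whose feasible region scales linearly, so mixing feasible assignments shows $\cost(\alpha\By_1+(1-\alpha)\By_2)\le \alpha\cost(\By_1)+(1-\alpha)\cost(\By_2)$. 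Write $k_i=\|\By_i\|_1$. Then
\[
\cost(\By) \le A\,\OPTclfr + A\big(\alpha\lambda_1(k-k_1) + (1-\alpha)\lambda_2(k-k_2)\big).
\]
The first term in the parentheses is nonpositive because $k_1 \ge k$. The second is nonnegative, and $\alpha(k_1-k) = (1-\alpha)(k-k_2)$. If $\lambda_1=\lambda_2$ the parentheses would vanish. Because $\lambda_2 = 2^z\lambda_1$ instead, we get
\[
\alpha\lambda_1(k-k_1)+(1-\alpha)\lambda_2(k-k_2) = (1-\alpha)(k-k_2)(\lambda_2-\lambda_1) \le (2^z-1)\lambda_1(1-\alpha)(k-k_2).
\]

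The main obstacle is bounding this residual $(2^z-1)\lambda_1(1-\alpha)(k-k_2)$ by $2^{O(z)}\cdot$ the total. We would instead compare $\By_2$'s inequality at $\lambda_2$ directly. Divide the LMP inequality for $\By_2$ by $A$ to get $\lambda_2(k-k_2) \ge \cost(\By_2)/A - \OPTclfr$, which is the wrong direction. So we plan a cleaner trick: evaluate both LMP inequalities at the common multiplier $\lambda_2$. Applying LMP to $\By_1$ with opening cost $\lambda_1$ and then multiplying by $2^z$ gives $\cost(\By_1)\cdot 2^z + A\lambda_2 k_1 \le 2^zA(\OPTclfr+\lambda_1 k) = 2^zA\,\OPTclfr + A\lambda_2 k$. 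Combining this with $\By_2$'s inequality at $\lambda_2$ using weights $\alpha,1-\alpha$, the $\lambda_2$ terms cancel exactly by the choice of $\alpha$. This yields $\cost(\By)\le \alpha\cost(\By_1)+(1-\alpha)\cost(\By_2)\le 2^z A\,\OPTclfr$, which is the claimed $2^{O(z^2)}\Gamma^{2z}$ bound. Finally, the probability of success comes from \Cref{thm:lmp-fl}, which is deterministic here, so the only care needed is the existence argument for $\ell^*$. We would double-check there that $L=\Theta(\log(n\Delta))$ suffices given the normalization $2\le\dist\le O(\Delta)$.
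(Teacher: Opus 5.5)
Your final argument is exactly the paper's proof. You multiply the LMP inequality for $\By^{(\ell^*-1)}$ by $2^z$ so both inequalities carry opening cost $\lambda_{\ell^*}$. You then average with weights $\alpha,1-\alpha$, so the opening terms cancel via $\|\By\|_1=k$, and bound $\cost(\By)$ by convexity, mixing the two optimal assignments. Your bound $\|\By^{(L)}\|_1\le 2$ is \Cref{obs:lambdaL}. The detour through the residual $(2^z-1)\lambda_1(1-\alpha)(k-k_2)$ is unnecessary.

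There is one unjustified step: applying LMP to $\By_1$ when $\ell^*=1$. Here $\By_1=\By^{(0)}=\ones$ is set directly by the algorithm, not produced by \Cref{thm:lmp-fl}, so its LMP inequality must be verified by hand. That is what \Cref{obs:lambda0} actually supplies; it is not merely the fact $\|\By^{(0)}\|_1=n\ge k$.

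The check is easy. All nonzero distances are at least $1$, and $x_{p,p}\le y_p$, so every $\By$ satisfies $\cost(\By)+\|\By\|_1\ge\sum_{p}\big([1-y_p]^+ + y_p\big)\ge n$. Hence $\OPTfl[\lambda_0]=n$ is attained by $\ones$, and $\cost(\ones)=0= A\cdot(\OPTfl[\lambda_0]-\lambda_0\cdot n)$. So the LMP inequality you use for $i=1$ holds in this case too, and with this patch your proof is complete.
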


\paragraph{MPC Implementation.}
It remains to provide an MPC implementation of \Cref{alg:kzC}:
Line~\ref{alg:kzC:L} is trivial.
Lines~\ref{alg:kzC:y0} and~\ref{alg:kzC:yi} for computing the fractional center sets $\{\By^{(\ell)}\}_{\ell \in [0:L]}$ can be easily parallelized; the round complexity and local memory requirement are the same as those of \Cref{thm:lmp-fl}, while the total memory scales that of \Cref{thm:lmp-fl} by a factor of $L = \Theta(\log(n\Delta))$ (due to parallelization).
Line~\ref{alg:kzC:istar} requires the level $\ell^{*}$, which can be found by aggregating all pairs $(\ell, \|\By^{(\ell)}\|_{1})$ to a single (leader) machine;
this requires local memory $s \ge \Theta(L + 1) = O(\log(n\Delta))$.
Line~\ref{alg:kzC:alpha} requires the value $\alpha$, which can be found naively. Then, the computation of $\By = \alpha \cdot \By^{(\ell^{*} - 1)} + (1 - \alpha) \cdot \By^{(\ell^{*})}$ can be easily performed in a single round.

In sum, the above implementation of \Cref{alg:kzC} requires local memory $s\ge \poly(d \log (n\Delta))$, runs in $O(\log_s n)$ rounds, and uses total memory $2^{O(d/\Gamma)} \cdot n\cdot \poly(d\log (n\Delta))$. 
We also note that the above implementation of \Cref{alg:kzC} (for $k \ge 2$) is deterministic.

\paragraph{Special Case where $k = 1$.}
To complete the proof, we finally discuss the case $k = 1$.
In this case, it is well known that selecting a center uniformly at random from $P$ provides a $2^{O(z)}$-approximation in expectation (see, e.g.,~\cite{DBLP:conf/soda/ArthurV07}).
We can boost the success probability to $1 - \frac{1}{\poly(n)}$ by independently sampling $O(\log n)$ such centers in parallel and selecting the best one as the final solution. 
Since both uniform sampling and computing the cost of a singleton center set can be efficiently implemented in the MPC model (via broadcast and converge-cast~\cite{Ghaffari19}), this case can be addressed within the claimed number of rounds and space.
\end{proof}

\subsection{\texorpdfstring{Subroutine for Cost Estimation (Proof of \Cref{lem:cost-estimation})}{}}
\label{sec:clustering-value}

\CostEstimation*

\begin{proof}
    We present the algorithm for cost estimation in \Cref{alg:value}.

\begin{algorithm}[ht]
\DontPrintSemicolon
\caption{Cost Estimation for a fractional solution $\By \in \R_{\ge 0}^{P}$ with $\|\By\|_{1} \ge 1$}
\label{alg:value}

$L \gets \Theta(\log \Delta)$ be sufficiently large
\label{alg:value:L}\;

For each point $p\in P$, compute sets $A^{(0)}_p \subseteq A^{(1)}_p \subseteq \cdots \subseteq A_p^{(L)}$ such that 
$\left\{q\in P: \dist(p,q) ~\le~ \Gamma^{\ell-1}\right\}
    ~\subseteq~ A^{(\ell)}_p ~\subseteq~ 
    \left\{q\in P: \dist(p,q) ~\le~ \Gamma^{\ell}\right\}$, $\forall \ell \in [0:L]$
\label{alg:value:A}\;

For each point $p\in P$ and each $\ell\in [0: L]$, let $s^{(\ell)}_p \gets \min\{\sum_{q\in A^{(\ell)}_p} y_q,\  1\}$
\label{alg:value:s}\;

For each $p\in P$, 
$\hatdist(p,\By) \gets \sum_{\ell \in [L]} (s^{(\ell)}_p - s^{(\ell-1)}_p) \cdot \Gamma^{\ell z}$
\label{alg:value:phi}\;

\Return $\{\hatdist(p,\By)\}_{p\in P}$
\end{algorithm}

We begin our analysis of \Cref{alg:value} with the following facts, which (along with their proofs) are identical to \Cref{obs:A-dist,obs:A}.

\begin{fact}
\label{obs:A-dist-value}
$\Gamma^{\ell-2} \le \dist(p, q) \le \Gamma^{\ell}$, for any point $p \in P$, level $\ell\in [L]$, and point $q \in A^{(\ell)}_{p} \setminus A^{(\ell-1)}_{p}$.
\end{fact}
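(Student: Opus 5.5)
This is a direct consequence of Line~\ref{alg:value:A} of \Cref{alg:value}, argued exactly as in \Cref{obs:A-dist}. Fix $p \in P$, a level $\ell \in [L]$, and a point $q \in A^{(\ell)}_{p} \setminus A^{(\ell-1)}_{p}$.

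For the upper bound, $q \in A^{(\ell)}_{p}$. The right-hand inclusion of Line~\ref{alg:value:A} at level $\ell$ gives $A^{(\ell)}_{p} \subseteq \{q' \in P : \dist(p,q') \le \Gamma^{\ell}\}$, so $\dist(p,q) \le \Gamma^{\ell}$.

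For the lower bound, $q \notin A^{(\ell-1)}_{p}$. Apply the left-hand inclusion of Line~\ref{alg:value:A} at level $\ell-1 \in [0:L]$, which is a valid index since $\ell \ge 1$. It reads $\{q' \in P : \dist(p,q') \le \Gamma^{\ell-2}\} \subseteq A^{(\ell-1)}_{p}$. By contraposition, $q \notin A^{(\ell-1)}_{p}$ forces $\dist(p,q) > \Gamma^{\ell-2}$, and in particular $\dist(p,q) \ge \Gamma^{\ell-2}$.

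Combining the two bounds yields $\Gamma^{\ell-2} \le \dist(p,q) \le \Gamma^{\ell}$. The only step needing any care is the index shift in the lower bound, namely checking that level $\ell-1$ is in range; the rest is unpacking the containments.
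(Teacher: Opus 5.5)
Your proof is correct and matches the paper's argument, which simply cites the two containments in Line~\ref{alg:value:A} (as in \Cref{obs:A-dist}). You use the right-hand inclusion at level $\ell$ for the upper bound, and the contrapositive of the left-hand inclusion at level $\ell-1$ for the lower bound; your check that $\ell-1$ is a valid index is the only detail the paper leaves implicit.
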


\begin{fact}
\label{obs:A-value}
$\{p\} = A^{(0)}_{p} \subseteq \dots \subseteq A^{(L)}_{p} = P$ for every $p \in P$, when $L = \Theta(\log \Delta)$ is sufficiently large.
\end{fact}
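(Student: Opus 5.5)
The statement to prove is \Cref{obs:A-value}: $\{p\} = A^{(0)}_{p} \subseteq \dots \subseteq A^{(L)}_{p} = P$ for every $p \in P$, once $L = \Theta(\log \Delta)$ is large enough. The argument copies that of \Cref{obs:A}, since Line~\ref{alg:value:A} of \Cref{alg:value} imposes exactly the sandwich condition of Line~\ref{alg:ralg:s} of \Cref{alg:ralg}. Fix $p \in P$. I will prove the chain of inclusions and the two endpoint equalities separately, using only the defining sandwich
\begin{equation*}
\{q\in P: \dist(p,q) \le \Gamma^{\ell-1}\} \subseteq A^{(\ell)}_p \subseteq \{q\in P: \dist(p,q) \le \Gamma^{\ell}\}
\end{equation*}
and the normalization from \Cref{sec:prelim} that $2 \le \dist(q,q') \le O(\Delta)$ for all distinct $q,q' \in P$.

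\textbf{Inclusions.} The chain $A^{(0)}_p \subseteq \dots \subseteq A^{(L)}_p$ is required directly by Line~\ref{alg:value:A}, so nothing needs to be proved here. This is the same convention under which \Cref{obs:A} was stated: the nestedness is part of what the algorithm constructs, not a consequence of the sandwich alone.

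\textbf{Bottom endpoint.} For $\ell = 0$ the sandwich gives $\{q : \dist(p,q) \le \Gamma^{-1}\} \subseteq A^{(0)}_p \subseteq \{q : \dist(p,q) \le 1\}$. Since $\Gamma \ge 1$, we have $\Gamma^{0} = 1 < 2$. Every $q \neq p$ has $\dist(p,q) \ge 2$, so the right-hand set is exactly $\{p\}$. The left-hand set contains $p$, because $\dist(p,p) = 0$. Hence $A^{(0)}_p = \{p\}$.

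\textbf{Top endpoint.} The lower side of the sandwich at $\ell = L$ gives $A^{(L)}_p \supseteq \{q : \dist(p,q) \le \Gamma^{L-1}\}$. Choose $L$ so that $\Gamma^{L-1} \ge \max_{q \ne q'} \dist(q,q') = O(\Delta)$. Then this set is all of $P$, so $A^{(L)}_p = P$.

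The only delicate point is this choice of $L$ when $\Gamma$ is close to $1$. In that regime $\log_\Gamma \Delta$ could be much larger than $\log \Delta$. I would handle it as the paper implicitly does in \Cref{obs:A}: treat $\Gamma$ as bounded away from $1$, say $\Gamma \ge 2$ (a constant $\Gamma$ is all the memory bound $2^{O(d/\Gamma)}$ needs). Then $L = \lceil \log_\Gamma O(\Delta)\rceil + 1 = \Theta(\log \Delta)$ suffices. Apart from pinning down this constant, the proof is routine.
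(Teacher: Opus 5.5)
Your proof is correct and matches the paper's argument: nestedness is imposed directly by Line~\ref{alg:value:A}, $A^{(0)}_p=\{p\}$ holds because $\Gamma^0=1<2\le\min_{q\neq q'}\dist(q,q')$, and $A^{(L)}_p=P$ holds once $\Gamma^{L-1}\ge\max_{q\neq q'}\dist(q,q')=O(\Delta)$. You are right that this last step needs $\Gamma$ bounded away from $1$ (for $\Gamma=1$ it fails outright); the paper leaves this implicit, and it is harmless in the paper's instantiations, such as $\Gamma=8\epsilon^{-1}$ or $\Gamma=10$.
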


\paragraph{Correctness.}

Consider a specific point $p \in P$. We show that $\cost(p, \By) ~\le~ \hatdist(p,\By) ~\le~ \Gamma^{2z} \cdot \cost(p, \By)$.
Let $\ell_p \in [0:L]$ be the smallest index with $\sum_{q\in A^{(\ell_p)}_p} y_q \ge 1$;
the existence of $\ell_p$ is guaranteed by \Cref{obs:A-value} and that $\|\By\|_{1} \ge 1$.
If $\ell_p = 0$, which means $y_p \ge 1$, it is easy to verify that $\hatdist(p, \By) = \cost(p, \By) = 0$. 
Below, we consider the other cases that $\ell_p \ge 1$.

By Line~\ref{alg:value:s}, we have $s^{(\ell)}_p = \sum_{q \in A^{(\ell)}_p} y_q$ for $\ell \in [0 : \ell_{p} - 1]$ and $s^{(\ell)}_p = 1$ for $\ell \in [\ell_{p} : L]$.
Therefore,
\begin{align}
    \hatdist(p, \By)
    & ~=~ \sum_{\ell \in [L]} (s^{(\ell)}_p - s^{(\ell-1)}_p) \cdot \Gamma^{\ell z}
    \notag\\
    & ~=~ \sum_{\ell \in [\ell_p - 1]} \Big(\sum_{q\in A^{(\ell)}_p \setminus A^{(\ell-1)}_p} y_q\Big) \cdot \Gamma^{\ell z}
    + \Big(1 - \sum_{q\in A^{(\ell_p  - 1)}_p} y_q \Big)\cdot \Gamma^{\ell_{p} z}
    \notag\\
    & ~=~ \min_{{\zeros \le {\Bx}_p \le \By: \|\Bx_p\|_{1} \ge 1}} \sum_{\ell \in [L]} \sum_{q\in A^{(\ell)}_p\setminus A^{(\ell-1)}_p} x_{p,q} \cdot \Gamma^{\ell z}.
\label{eq:phi-optimization}
\end{align}
The inequality $\hatdist(p, \By) \le \Gamma^{2z} \cdot \cost(p, \By)$ follows from the following derivation: Let $\Bx^{*} \sim \By$ be the optimal feasible assignment to $\By$. Then,
\begin{align*}
    \cost(p, \By) &~=~ \sum_{q\in P} x^{*}_{p,q} \cdot \dist^z(p,q)\\
    \mr{\Cref{obs:A-value}}
    &~=~\sum_{\ell \in [L]} \sum_{q\in A^{(\ell)}_p\setminus A^{(\ell-1)}_p} x^{*}_{p,q} \cdot \dist^z(p,q)\\
    \mr{\Cref{obs:A-dist-value}}
    &~\ge~ \sum_{\ell \in [L]} \sum_{q\in A^{(\ell)}_p\setminus A^{(\ell-1)}_p} x^{*}_{p,q} \cdot \Gamma^{(\ell-2)z}\\
    \mr{\eqref{eq:phi-optimization}}
    &~\ge~ \Gamma^{-2z} \cdot \hatdist(p, \By).
\end{align*}
The inequality $\cost(p, \By) \le \hatdist(p, \By)$ follows from the following derivation (recalling the definition of $\cost(p, \By)$ in \Cref{sec:prelim}):
\begin{align*}
    \cost(p, \By)
    &~=~ \min_{{\zeros \le {\Bx}_p \le \By: \|\Bx_p\|_{1} \ge 1}} \sum_{q\in P} x_{p,q} \cdot \dist^z(p, q)\\
    \mr{\Cref{obs:A-dist-value,obs:A-value}}
    &~\le~ \min_{{\zeros \le {\Bx}_p \le \By: \|\Bx_p\|_{1} \ge 1}} \sum_{\ell \in [L]} \sum_{q\in A^{(\ell)}_p\setminus A^{(\ell-1)}_p} x_{p,q} \cdot \Gamma^{\ell z}\\
    \mr{\eqref{eq:phi-optimization}}
    & ~=~ \hatdist(p, \By).
\end{align*}
This finishes the proof of correctness of \Cref{alg:value}.

\paragraph{MPC Implementation.} We then discuss the MPC implementation of \Cref{alg:value} in Euclidean space, which is similar to that for \Cref{alg:ralg} in \Cref{thm:lmp-fl}. Specifically, we also use the geometric aggregation primitive from \cite[Theorem 3.1]{CzumajGJK024} to obtain the values $\{\sum_{q \in A^{(\ell)}_p} y_q\}_{p \in P, \ell \in [0 : L]}$, without explicitly computing the sets $A^{(\ell)}_p$.
Then, the implementation of Line~\ref{alg:value:s} (computing the values $s^{(\ell)}_p$) is straightforward.
Finally, the computation of $\hatdist(p, \By) \gets \sum_{\ell \in [L]} ( s^{(\ell)}_p - s^{(\ell - 1)}_p ) \cdot \Gamma^{\ell z}$ (as required in Line~\ref{alg:value:phi}) is an aggregation task, which can be achieved via sorting and converge-cast (see e.g.~\cite[Section~3]{CzumajGJK024} or \cite[Lemma~5.10]{Cohen-addadKP26}).

Overall, our MPC algorithm for \Cref{lem:cost-estimation} requires local memory $s\ge \poly(d \log (n\Delta))$, runs in $O(\log_s n)$ rounds, and uses total memory $2^{O(d/\Gamma)} \cdot n \cdot \poly(d \log(n\Delta))$.
\end{proof}

\section{\texorpdfstring{MPC Algorithm for Euclidean $(k, z)$-Clustering}{}}
\label{sec:rounding}

In this section, we design MPC algorithms for computing approximate solutions to \textsc{Euclidean (Integral)} \kzC.
Our main result is stated below.

\begin{theorem}[Main Result]\label{thm:clustering-solution-formal}
    For any integer $z\ge 1$,
    there is a randomized MPC algorithm for \textsc{Euclidean} {\kzC} that,
    \begin{itemize}
        \item given parameters $\epsilon\in (0,1)$, $\Delta \ge 1$, integers $n, d, k \ge 1$, and an $n$-point dataset $P \subseteq \R^d$ with aspect ratio $\Delta$ distributed across MPC machines with local memory $s \ge \poly(d \log (n\Delta))$,
       \item computes a center set $C\subseteq P$ with $|C| \le  k$ such that, with high probability,
       \begin{equation*}
            \cost(C) ~\le~ 2^{O(z^2)} \cdot \epsilon^{-O(z)}\cdot \left(\frac{\log n}{\log\log n} \right)^{z} \cdot \OPTcl,
        \end{equation*}
        where $\OPTcl$ denotes the optimal objective value of \textsc{Euclidean} \kzC on $P$.
    \end{itemize} 
    The algorithm runs in $O(\log_s n)$ rounds and uses total memory $2^{O(\epsilon d)}\cdot O(n^{1+\epsilon}\cdot \poly(d\log (n\Delta)))$.
\end{theorem}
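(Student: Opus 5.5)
The plan is to obtain a good fractional solution first and then round it in $O(1)$ stages, each implemented with metric ruling sets, following the three-step framework of~\cite{CharikarGTS99} described in \Cref{sec:tech_overview:rounding}.

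\emph{Fractional solution and cost estimates.} Apply \Cref{thm:clustering-fractional} with $\Gamma := \Theta(1/\epsilon)$. This gives, with high probability, $\By$ with $\|\By\|_1 = k$ and $\cost(\By) \le 2^{O(z^2)}\epsilon^{-O(z)}\OPTclfr$, using total memory $2^{O(\epsilon d)} n\poly(d\log(n\Delta))$. Then apply \Cref{lem:cost-estimation} to obtain estimates $\hatdist(p,\By)$, which are accurate up to factor $\Gamma^{2z}$. By \Cref{lem:integrality-gap}, any integral solution whose cost is within a factor $\alpha$ of $\cost(\By)$ is within $2^{O(z)}\alpha\cdot(\text{fractional factor})$ of $\OPTcl$. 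It therefore suffices to round with a loss of $2^{O(z)}\epsilon^{-O(z)}(\log n/\log\log n)^z$.

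\emph{Step 1 (sparsification).} Define the radius $R_p := (\hatdist(p,\By))^{1/z}$, and handle points with $R_p = 0$ separately, since they are covered by their own opening. Bucket the points by $R_p$ into geometric classes $G_j = \{p : R_p \in [2^j, 2^{j+1})\}$. Process the classes in increasing order of $j$; there are only $O(\log(n\Delta))$ classes, so they can be handled in parallel. Within a class, compute a metric $(c\,2^j, \rho\, c\,2^j)$-ruling set using~\cite[Lemma 6.1]{CzumajG0J25}, where $\rho = O(\epsilon^{-1}\log n/\log\log n)$ is the ruling parameter.

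The cross-class conflicts are resolved as follows. A point of class $j$ is discarded if it lies within $O(\rho 2^j)$ of a ruling-set point of a smaller class. Checking this is an approximate range-emptiness query, done with~\cite[Theorem 3.1]{CzumajGJK024}. One then argues the two required properties:
\begin{itemize}
\item \emph{separation}: $\dist(p,q) \ge \Omega(\max\{R_p,R_q\})$ for selected $p,q$;
\item \emph{covering}: $\dist(p,Q) \le O(\rho)R_p$ for every $p$.
\end{itemize}
Each point's weight is moved to its nearest selected point, found with approximate nearest neighbour. The standard argument of~\cite{CharikarGTS99} shows that the moved instance has fractional cost $O(\rho)^z\cost(\By)$. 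Separation gives each $Q$-point at least $1/2$ unit of $\By$-mass in its private ball, hence $|Q| \le 2k$.

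\emph{Steps 2 and 3.} Step 2 is the $1/2$-integral rounding of~\cite{CharikarGTS99}. It only needs each $q\in Q$ to find its nearest other $Q$-point (approximate NN) and a sort by $\dist(q,Q\setminus\{q\})$ times weight, which is done by~\cite{DBLP:conf/isaac/GoodrichSZ11}. This loses $2^{O(z)}$.

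Step 3 selects $H'\subseteq H$ among the half-open centers with $|H'|\le |H|/2$ and $\dist(p,H') \le O(\rho)\dist(p,H\setminus\{p\})$. Group $H$ by its nearest-neighbour distance $\tau_p$ into geometric classes and take a $(2\tau)$-ruling set in each class. Then enforce $|H'|\le|H|/2$ by charging: each selected point owns a distinct nearest neighbour that is unselected. Combining the steps gives ratio $2^{O(z^2)}\epsilon^{-O(z)}\rho^z$, which is the claimed bound.

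\emph{Main obstacle.} The hardest part will be the cross-class handling in Steps 1 and 3. Ruling sets computed independently per class can conflict across classes. With only $O(1)$ rounds, we cannot process classes sequentially, yet both the separation and the counting argument ($|Q|\le 2k$, $|H'|\le|H|/2$) must still hold. The first thing to try is a hierarchical filter: suppress a candidate whenever a lower-class candidate, or a ruling point within factor $O(\rho)$ of its radius, exists, and argue that the suppressing point covers it within $O(\rho)$ times its own radius. For Step 3, the counting can fail across classes, and there the fallback is to pair points with their nearest neighbours.
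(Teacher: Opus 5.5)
Your pipeline matches the paper's: \Cref{thm:clustering-fractional} with $\Gamma=\Theta(1/\epsilon)$, then a three-step rounding in the style of \cite{CharikarGTS99} built on per-class ruling sets from \cite[Lemma 6.1]{CzumajG0J25}. However, Step 3 as you state it fails, and your ``fallback'' does not contain the missing idea. Independent per-class ruling sets cannot guarantee $|H'|\le|H|/2$, because a point's nearest neighbour may lie in another class.

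Take half-open centers $a=0$, $b=3$, $p=-6$ on a line. Then $\{a,b\}$ forms one class (NN distance $3$), and $p$ is alone in another (NN distance $6$). Any ruling set of $\{p\}$ contains $p$, and any ruling set of $\{a,b\}$ contains one of them, so you select $2>3/2$ points. Your charging breaks: either $p$'s nearest neighbour $a$ is selected, or $p$ and $b$ both charge $a$.

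The paper fixes this with three ingredients:
\begin{itemize}
\item[(i)] A neighbour map $h(p)$, chosen as the closest point among $\nn(p,H\setminus\{p\})$ and the points whose ANN is $p$. It is still a $\Gamma$-ANN but is \emph{monotone}: $\dist(h(p),h(h(p)))\le\dist(p,h(p))$, which raw approximate NNs need not satisfy. Hence $h(p)$ always lies in the same or a lower class (\Cref{claim:monotone-ANN}).
\item[(ii)] The ruling set is computed only on $H'_\ell$, the points of $H_\ell$ that have an $h$-partner inside $H_\ell$. Every point of $H'_\ell$ then has a neighbour within $2^\ell$ inside $H'_\ell$. With separation $2^{\ell+2}$, the exact-NN map injects $R_\ell$ into $H'_\ell\setminus R_\ell$.
\item[(iii)] The excluded ``isolated'' points are covered by induction on $\ell$ through $h(p)$, which by monotonicity lies in a strictly lower class. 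The distances sum geometrically to $\dist(p,C)\le\alphaRS\,2^{\ell+2}$.
\end{itemize}

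There are also two smaller problems in Step 1.

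First, your accounting, as written, compounds the ruling-set loss. If you only know that the moved instance satisfies $\cost_w(\By)\le O(\rho)^z\cost(\By)$, then Step 3's further $O(\rho)^z$ loss yields $(\log n/\log\log n)^{2z}$, not the claimed power $z$. The paper avoids this by moving each $p\in P_\ell$ only to ruling points of levels $\le\ell$, whose estimated cost is at most $(2\Gamma)^{\ell z}/n$. This gives $\cost_w(\By)\le\Gamma^{O(z)}\cost(\By)$ (Property~\ref{step1-p2}), so $\alphaRS^z$ enters only through the additive moving cost in Property~\ref{step1-p3}.

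Second, your tentative filter against all lower-class \emph{candidates} is the right choice. It is the paper's $Q_\ell$, computed by one range query per level and independent of other levels' ruling sets, hence parallel. You still need the paper's truncation, which lumps all points with $\hatdist(p,\By)\le 1/n$ into level $0$ and absorbs the additive $1/n$ per point via $\cost(\By)\ge 1$. Without it, geometric bucketing of $R_p$ yields unboundedly many classes, since $\cost(p,\By)$ can be positive but arbitrarily small.
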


Our techniques also yield the following alternative tradeoff, which achieves a constant approximation but is fully-scalable only when $d$ is small, say $d = o(\log n / \log\log n)$.

\begin{theorem}[An Alternative Tradeoff in Low Dimensions]\label{thm:clustering-solution-lowdim-formal}
    For any integer $z\ge 1$,
    there is a randomized MPC algorithm for \textsc{Euclidean} {\kzC} that,
    \begin{itemize}
        \item given a parameter $\Delta \ge 1$, integers $n, d, k \ge 1$, and an $n$-point dataset $P \subseteq \R^d$ with aspect ratio $\Delta$ distributed across MPC machines with local memory $s \ge 2^{\Omega(d\log d)}\cdot \polylog(n\Delta)$,
       \item computes a center set $C\subseteq P$ with $|C| \le k$ such that, with high probability,
       \begin{equation*}
            \cost(C) ~\le~ 2^{O(z^2)} \cdot \OPTcl.
        \end{equation*}
    \end{itemize} 
    The algorithm runs in $O(\log_s n)$ rounds and uses total memory $2^{O(d\log d)}\cdot O(n\cdot \polylog(n\Delta))$.
\end{theorem}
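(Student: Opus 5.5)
The plan is to follow the same two-stage pipeline as for \Cref{thm:clustering-solution-formal}, changing only the ruling-set primitive. First, I invoke \Cref{thm:clustering-fractional} with a constant $\Gamma$. The total memory there is $2^{O(d/\Gamma)}\cdot n\cdot\poly(d\log(n\Delta))$, which is dominated by the claimed $2^{O(d\log d)}\cdot n\cdot\polylog(n\Delta)$ bound. This yields, with high probability, a fractional center set $\By$ with $\|\By\|_1=k$ and $\cost(\By)\le 2^{O(z^2)}\cdot\OPTclfr$. Next, I apply \Cref{lem:cost-estimation} with the same constant $\Gamma$ to obtain estimates $\hatdist(p,\By)$ that are $\Gamma^{2z}=2^{O(z)}$-approximations of $\cost(p,\By)$. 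These estimates drive the three-step rounding modeled on \cite{CharikarGTS99}: sparsification, partial rounding into half-open and fully-open centers, and final rounding.

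The key point is the approximation loss of the rounding stage. I expect it to be $2^{O(z)}\cdot\alpha_{\calR}^{z}$, where $\alpha_{\calR}$ is the ruling parameter of the metric ruling-set subroutine; $\alpha_{\calR}$ enters only through the covering radius.

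In the sparsification step, I group points by the scale of $\hatdist(p,\By)^{1/z}$ in powers of $2$. In parallel, I compute a ruling set for each group. Then I resolve conflicts across groups by processing scales so that points of smaller cost take priority. This gives a set $Q$ with two properties: separation at least $\Omega(1)$ times the larger cost-radius, and covering within $O(\alpha_{\calR})$ times each point's cost-radius. Moving demands to $Q$ then costs a factor of $2^{O(z)}\alpha_{\calR}^{z}$ by \Cref{lem:triangle-inequality}.

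Partial rounding uses only approximate nearest neighbor queries and sorting, so it loses $2^{O(z)}$. Final rounding groups half-open centers by nearest-neighbor distance and takes a $2\tau$-ruling set per group. Separation gives $|H'|\le|H|/2$, and covering gives $\dist(p,H')\le O(\alpha_{\calR})\dist(p,H\setminus\{p\})$. Combining the stages with \Cref{lem:integrality-gap} yields $\cost(C)\le 2^{O(z^2)}\alpha_{\calR}^{O(z)}\OPTcl$.

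The low-dimensional specialization is then simple. I plug in the ruling-set algorithm of \cite[Lemma 4.1]{CzumajG0J25}, which in $\R^d$ achieves an $O(1)$ ruling parameter. It does so by bucketing into grid cells of side proportional to $\tau/\sqrt d$ and resolving locally in neighborhoods of $2^{O(d\log d)}$ cells. It runs in $O(\log_s n)$ rounds, with local memory $2^{\Omega(d\log d)}\polylog(n\Delta)$ and total memory $2^{O(d\log d)}n\polylog(n\Delta)$. With $\alpha_{\calR}=O(1)$, the bound becomes $2^{O(z^2)}\cdot\OPTcl$. The remaining components, namely range counting, approximate nearest neighbor, and sorting, all fit within these memory bounds. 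Since the ruling sets are invoked only a constant number of times, each time in parallel over $O(\log(n\Delta))$ scales, the round count stays $O(\log_s n)$.

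I expect the main obstacle to be the cross-scale conflict resolution in the sparsification and final-rounding steps. Independent per-scale ruling sets may place points of different scales too close together. The fix I plan to try first is a constant number of sequential passes over geometrically separated scale classes (for example, scales taken mod a constant). Within each pass, a point survives only if no already-selected point of smaller scale lies within a constant multiple of its radius, and these checks are done by approximate range counting. I then need to verify that this preserves both the covering guarantee with only a constant blow-up in $\alpha_{\calR}$ and the bound $|H'|\le|H|/2$. I believe the same argument as for \Cref{thm:clustering-solution-formal} goes through unchanged, since that argument is parametric in $\alpha_{\calR}$.
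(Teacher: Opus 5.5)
Your proposal is correct at the level this theorem needs and matches the paper's proof: obtain $\By$ from \Cref{thm:clustering-fractional} with a constant $\Gamma$ (the paper takes $\Gamma = 10$), round it with the $\alphaRS$-parametric procedure of \Cref{lem:rounding} (loss $\alphaRS^z\Gamma^{O(z)}$), apply \Cref{lem:integrality-gap}, and instantiate the ruling-set oracle with \cite[Lemma~4.1]{CzumajG0J25}, which gives $\alphaRS = 2+\epsilon$ with $\epsilon = 0.1$ and the stated $2^{O(d\log d)}$ memory bounds. The cross-scale conflict you flag as the main obstacle is already resolved inside \Cref{lem:rounding}, so you should cite that lemma rather than rely on your mod-constant sequential passes: there, Step~1 filters each level against \emph{all} lower-level points of $P$ rather than against already-selected ruling points, which decouples the levels so they run in parallel with coverage proved by induction, and Step~3 uses the monotone ANN $h$ of \Cref{claim:monotone-ANN} and discards isolated centers, whereas your fix, as stated, would not stop two points selected from different scales in the same pass from violating separation.
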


Both \Cref{thm:clustering-solution-formal,thm:clustering-solution-lowdim-formal} are obtained by first running \Cref{thm:clustering-fractional} to obtain a fractional solution $\By \in \R_{\ge 0}^{P}$, and then running an MPC \emph{rounding} algorithm (\Cref{lem:rounding}) to convert it to an integral solution (i.e., a set of $k$ centers).
Below, we first describe our MPC rounding algorithm and then provide the proof of \Cref{thm:clustering-solution-formal,thm:clustering-solution-lowdim-formal}.
Notably, our rounding algorithm is in fact designed for general metrics, and its MPC implementation relies on three operations (apart from basic operations such as sorting and broadcasting): (a) metric ruling set computation, (b) range query, and (c) approximate nearest neighbor (ANN) search.

\paragraph{Our MPC Rounding Algorithm.}
We describe our MPC rounding algorithm for a general metric $(P, \dist)$ in \Cref{lem:rounding}.
Throughout this section, for ease of presentation, we assume the following without loss of generality.
We fix an $n$-point dataset $P$ of aspect ratio $\Delta$, and assume it is distributed across MPC machines; in a general metric, we assume each point $p \in P$ can be stored in one word (when transitioning to Euclidean space, the space complexity therefore increases by a factor of $d$).
We assume access to a distance oracle (i.e., given any two points $p, q \in P$, it returns $\dist(p, q)$), as well as an MPC algorithm $\calR$ that supports the following queries, whose performance dominates that of our rounding algorithm: Let $\alphaRS, \Gamma \ge 1$ be two parameters.
\begin{enumerate}[label=(O\arabic*), leftmargin=*]
    \item\label{operation:RS} \emph{metric ruling set computation}: Given a parameter $\tau > 0$ and a subset $Q \subseteq P$ distributed across MPC machines, compute a (metric) $(\tau, \alphaRS\cdot \tau)$-ruling set of $Q$ (see \Cref{def:RS} below).
    \item\label{operation:RQ} \emph{range query}: Given a radius $\tau > 0$ and a set of values $\{v_p\in \R:p\in P\}$ distributed across MPC machines, compute for each point $p\in P$, a value $s_p \in \R$ such that
    $\sum_{q\in P:\dist(p,q)\le \tau} v_q \le s_p \le \sum_{q\in P: \dist(p,q) \le \Gamma\cdot \tau} v_q$.
    \item\label{operation:ANN} \emph{ANN search}: Given a subset $Q\subseteq P$ distributed across MPC machines, compute for each point $p\in P$, a point $\widetilde{\mathrm{NN}}_{\Gamma}(p, Q)\in Q$ such that $\dist(p, \widetilde{\mathrm{NN}}_{\Gamma}(p, Q))\le \Gamma\cdot \dist(p, Q)$.
\end{enumerate}
We use $\LM$ to denote the local memory requirement of the algorithm $\calR$, $\Rd$ to denote the round complexity when the local memory per machine is $s \ge \LM$, and $\GM$ to denote the total memory.
Here, all complexities $\LM, \Rd$, and $\GM$ may depend on all relevant parameters such as the data size $n$, aspect ratio $\Delta$, and parameters $\Gamma, \alphaRS$, etc.; we will discuss concrete bounds in Euclidean space when proving \Cref{thm:clustering-solution-formal,thm:clustering-solution-lowdim-formal} below.
We also note that we do not consider separate algorithms for each of the three operations; although their (state-of-the-art) performance differs, bundling them into a single algorithm $\calR$ does not affect the overall asymptotic bounds of our rounding algorithm.

\begin{definition}[Metric Ruling Set]\label{def:RS}
    Let $(P, \dist)$ be a metric space. We say that a subset $P'\subseteq P$ is a $(\tau, \tau')$-ruling set of $P$, for $\tau' \ge \tau > 0$, if 
    \begin{itemize}
        \item (separation) for every $p\neq q\in P'$, $\dist(p,q) \ge \tau$; and
        \item (coverage) for every $p\in P$, $\dist(p, P') \le \tau'$.
    \end{itemize}
\end{definition}

Our MPC rounding algorithm is stated in the following lemma.

\begin{lemma}[MPC Rounding Algorithm]\label{lem:rounding}
    There exists an MPC rounding algorithm that, given an integer $k\ge 1$ and a fractional solution $\By\in \R_{\ge 0}^{P}$ with $\|\By\|_1 = k$ distributed across MPC machines with local memory $s \ge \polylog(n\Delta) + \LM$, computes a center set $C\subseteq P$ with $|C|\le k$ such that
    \begin{equation*}
        \cost(C) ~\le~ \alphaRS^z\cdot \Gamma^{O(z)}\cdot \cost(\By).
    \end{equation*}
    The algorithm makes $O(\log(n \Delta))$ invocations of $\calR$, runs in $O(\log_s n + \Rd)$ rounds and uses total memory $O(n\cdot \polylog(n\Delta)) +  \GM\cdot O(\log(n \Delta))$.
\end{lemma}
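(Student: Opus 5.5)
We follow the three-step framework of \cite{CharikarGTS99}, as outlined in \Cref{sec:tech_overview:rounding}, and replace each inefficient offline step by ruling-set, range-query, or ANN computations from $\calR$. Throughout, $c_p := \cost(p,\By)$, and we first compute $\Gamma^{O(z)}$-approximations $\hat c_p$ of $c_p$ for all $p$. This uses the range-query operation~\ref{operation:RQ} at $O(\log\Delta)$ geometric scales $\tau=\Gamma^\ell$, mirroring \Cref{alg:value}; we replace the Euclidean primitive by \ref{operation:RQ} and inherit the same $\Gamma^{2z}$ sandwich bound. We use $R_p := \hat c_p^{1/z}$ as the local radius of $p$. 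By Markov's inequality, $\sum_{q:\dist(p,q)\le 2R_p} y_q \ge 1/2$ holds up to $\Gamma^{O(1)}$ factors in the radius.

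\emph{Step 1 (sparsification).} Bucket the points by $R_p\in(2^{j-1},2^j]$, giving $O(\log(n\Delta))$ groups $P_j$. We process scales from small to large, but we want to do it in parallel. For each $j$, compute a $(c\,2^j,\alphaRS c\,2^j)$-ruling set $S_j$ of $P_j$ with one call to~\ref{operation:RS}. Then discard every $s\in S_j$ that lies within distance $c\,2^j$ of some kept point of a smaller scale. This discard test is done via ANN~\ref{operation:ANN} against $\bigcup_{j'<j}S_{j'}$, run in parallel for all $j$ since it only needs the ruling sets and not the outcomes of earlier discards. The test is conservative, so separation holds up to $\Gamma\alphaRS$ factors: any two kept points $p,q$ satisfy $\dist(p,q)\gtrsim \max(R_p,R_q)/(\Gamma\alphaRS)$. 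Every point $p$ is also within $O(\alphaRS\Gamma)R_p$ of a kept point $Q$-point with $R\le R_p$. We move each point's weight to its ANN in $Q$. The moved cost is charged to $\sum_p \dist^z(p,Q)\le (\alphaRS\Gamma)^{O(z)}\sum_p c_p$ via \Cref{lem:triangle-inequality}. The separation then implies that the balls $B(q, R_q/(2\Gamma\alphaRS))$ are disjoint. Each such ball holds $\Omega(1)$ fractional mass only if we shrink appropriately; the standard argument gives $|Q|\le O(1)\cdot k$ after scaling, or alternatively one consolidates $\By$ onto $Q$ as in \cite{CharikarGTS99}, with $\cost$ increasing by $(\alphaRS\Gamma)^{O(z)}$. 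I expect that getting $y'_q\ge1/2$ on $Q$ (i.e.\ $|Q|\le 2k$) with only the $\alphaRS$-weakened separation is the first delicate point. I would choose the ball radius $R_q/(\text{const}\cdot\alphaRS\Gamma)$ and use the Markov bound at that radius, accepting an extra $\alphaRS^{z}$ factor in cost.

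\emph{Step 2 (partial rounding).} Each $q\in Q$ computes its ANN $s(q)$ in $Q\setminus\{q\}$ via~\ref{operation:ANN}. We sort by $w_q\dist^z(q,s(q))$ and fully open the top part, half-opening the rest so that the total is $\le k$, exactly as in \cite{CharikarGTS99}. The cost analysis carries over with an extra $\Gamma^z$ from the approximate NN.

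\emph{Step 3 (final rounding)} is the step I expect to be the main obstacle. Let $H$ be the half-open centers, with $\delta_p:=\dist(p,\nn(p,H\setminus\{p\}))$ computed by ANN. Bucket $H$ by $\delta_p\in(\Gamma 2^{j-1},\Gamma 2^j]$. In each bucket, take a $(2\Gamma 2^j\cdot c,\ \alphaRS\cdot(\cdot))$-ruling set $H'_j$ with $c$ large enough that separation exceeds twice every member's NN distance. Then each chosen point "owns" its nearest neighbour disjointly, so $|H'_j|\le |H_j|/2$. The difficulty is that neighbours may lie in other buckets, and then the charging can collide across buckets. I would fix this by assigning each chosen $p$ the pair $\{p,\mathrm{NN}(p)\}$ and discarding, via ANN, chosen points at a larger scale within distance $O(\delta)$ of a chosen point at a smaller scale. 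Their coverage is inherited at a $(\alphaRS\Gamma)^{O(1)}$ loss, so the pairs are disjoint and $|H'|\le|H|/2$. Coverage gives $\dist(p,H')\le(\alphaRS\Gamma)^{O(1)}\delta_p$, which is what the \cite{CharikarGTS99} analysis needs, so the total cost is $\alphaRS^z\Gamma^{O(z)}\cost(\By)$. All steps use $O(\log(n\Delta))$ parallel invocations of $\calR$ plus sorting, giving the stated rounds and memory.
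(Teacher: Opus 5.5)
Your Step~1 has a genuine gap exactly where you flagged a ``delicate point'', and the workaround you propose would not close it. The $1/2$-lower bound on the consolidated solution is never established. It is what gives $|Q|\le 2k$ (hence $\|\Bytilde\|_1\le k$ in Step~2, hence $|C|\le k$). It is also what justifies the Step~2 cost identity $\cost_w(\Bytilde)=\sum_p w(p)[1-\ytilde_p]^+\dist^z(p,\supp(w)\setminus\{p\})$.

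The obstacle you describe does not exist. A ruling set's separation does not involve $\alphaRS$; only its coverage does. Your ANN-based discard test gives $\dist(p,q)>c\,2^j/\Gamma$ across scales, so kept points are separated by $\Omega(c/\Gamma)\max\{R_p,R_q\}$, not $\max\{R_p,R_q\}/(\Gamma\alphaRS)$. Choosing $c=\Theta(\Gamma^2)$ and using $R_p\ge c_p^{1/z}$ gives $\dist(p,q)\ge 4\Gamma\max\{c_p^{1/z},c_q^{1/z}\}$ for all kept $p\ne q$. From there the paper's argument applies:
\begin{itemize}
\item By \cite[Lemma~5]{CharikarGTS99}, $\sum_{q:\dist^z(p,q)\le 2c_p}y_q\ge 1/2$.
\item Every such $q\ne p$ is not kept, and has $p$ as its unique $\Gamma$-approximate nearest neighbour in $Q$, since $\dist(q,Q\setminus\{p\})\ge(4\Gamma-2^{1/z})c_p^{1/z}>\Gamma\,\dist(p,q)$.
\item So moving each non-kept point's mass to its ANN in $Q$ leaves at least $1/2$ on every kept point, and $|Q|\le 2\|\By\|_1=2k$ exactly.
\end{itemize}
Your Markov bound at radius $R_q/(\mathrm{const}\cdot\alphaRS\Gamma)$ is vacuous. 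Since $R_q\le\Gamma^2c_q^{1/z}$, that radius drops below $c_q^{1/z}$ once $\alphaRS\gg\Gamma$, and such a ball may contain no mass. A bound of the form $|Q|\le O(1)\cdot k$ would also not yield $|C|\le k$.

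Second, your bookkeeping does not put exponent exactly $z$ on $\alphaRS$, and that exponent is the whole content of $\alphaRS^z\Gamma^{O(z)}$; it is what becomes $(\log n/\log\log n)^z$. Bounds like $(\alphaRS\Gamma)^{O(z)}$ or ``an extra $\alphaRS^z$'' are not enough. The concrete culprit is sending each point's weight to its ANN in \emph{all} of $Q$. That ANN can be a kept point of much larger scale, and the only available bound is $c_q^{1/z}\le c_p^{1/z}+\dist(p,q)$, i.e.\ $c_q\le\alphaRS^z\Gamma^{O(z)}(c_p+1/n)$. This gives only $\cost_w(\By)\le\alphaRS^z\Gamma^{O(z)}\cost(\By)$, which composed with Step~3's $\alphaRS^z$ yields $\alphaRS^{2z}$.

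The fix is the paper's: send $p$ at scale $j$ to its ANN among kept points of scale at most $j$ (the paper's $\nn(p,\cup_{j'\le\ell}R_{j'})$). Your own coverage observation shows such a point exists within $O(\alphaRS\Gamma^{O(1)})R_p$. Then the move costs $\alphaRS^z\Gamma^{O(z)}(c_p+1/n)$ and the target's cost is $\Gamma^{O(z)}(c_p+1/n)$. You also need a floor bucket for tiny costs, absorbed via $\cost(\By)\ge1$. With these repairs, your ``rule, then discard against all smaller-scale ruling-set points'' is a valid substitute for the paper's ``filter against all lower-level points, then rule'', with coverage via a geometrically decreasing discard chain.

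Your Step~3 is correct and genuinely different from the paper's. The paper:
\begin{itemize}
\item builds a monotone ANN $h$ with $\dist(h(p),h(h(p)))\le\dist(p,h(p))$;
\item drops points isolated in their bucket and halves each bucket separately;
\item covers the isolated points by induction through $h(p)$.
\end{itemize}
You instead impose cross-scale separation with discard threshold at least $2\Gamma^2 2^j$, so the pairs $\{p,\mathrm{NN}(p)\}$ are globally disjoint. The discard chain then gives coverage $O(\alphaRS\Gamma+\Gamma^2)\dist(p,H\setminus\{p\})$. That is linear in $\alphaRS$, not $(\alphaRS\Gamma)^{O(1)}$ as you wrote.
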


\begin{proof}
    The proof of \Cref{lem:rounding} is deferred to \Cref{sec:rounding-overview}.
\end{proof}

\paragraph{Proof of \Cref{thm:clustering-solution-formal,thm:clustering-solution-lowdim-formal}.}

We now show how \Cref{lem:rounding} implies our MPC algorithms for {\textsc{(Integral)} \kzC} (\Cref{thm:clustering-solution-formal,thm:clustering-solution-lowdim-formal}).

We first invoke the algorithm from \Cref{thm:clustering-fractional} to obtain a fractional solution $\By \in \R_{\ge 0}^{P}$ with $\|\By\|_1 = k$, and then apply the rounding algorithm from \Cref{lem:rounding} to transform $\By$ into an integral solution $C \subseteq P$ with $|C| \le k$.
By combining the guarantees of \Cref{thm:clustering-fractional,lem:rounding}, and the integrality gap (\Cref{lem:integrality-gap}), we can conclude that
\begin{equation*}
    \cost(C) ~\le~ \alphaRS^z\cdot \Gamma^{O(z)}\cdot \OPTcl,
\end{equation*}
with high probability.
This algorithm requires local memory $s \ge \poly(d \log (n\Delta)) + \LM$, runs in $O(\log_s n + \Rd)$ rounds, and uses total memory $2^{O(d/\Gamma)}\cdot O(n\cdot \poly(d\log(n\Delta))) +  \GM\cdot O(\log(n \Delta))$;
note that when running \Cref{lem:rounding} in Euclidean space, the space complexity increases by a factor of $d$.

It remains to plug in a concrete implementation of $\calR$:
For both the range query and ANN search operations, we can use the deterministic algorithm from~\cite{CzumajGJK024}, which requires local memory $s \ge \poly(d \log n)$, runs in $O(\log_s n)$ rounds, and uses total memory $2^{O(d/\Gamma)} \cdot O(n \cdot \poly(d \log n))$, for any $\Gamma \ge 8$. As for ruling set computation, we use algorithms from~\cite{CzumajG0J25}.
Specifically:
\begin{itemize}
    \item For \Cref{thm:clustering-solution-formal}, 
we plug in~\cite[Lemma~6.1]{CzumajG0J25}, which provides a randomized algorithm with $\alphaRS = O(\epsilon^{-1} \cdot \frac{\log n}{\log\log n})$, 
    requires local memory $s \ge \poly(d \log n)$, runs in $O(\log_s n)$ rounds, uses total memory $O(n^{1+\epsilon} \poly(d \log (n\Delta)))$, for any $\epsilon \in (0,1)$, and succeeds with probability at least $1 - \frac{1}{\poly(n\Delta)}$.
    Overall, the algorithm $\calR$ has complexity: $\LM = \poly(d \log n)$, $\Rd = O(\log_s n)$, and $\GM = O(n^{1+\epsilon} \poly(d \log (n\Delta)) + 2^{O(d/\Gamma)} \cdot O(n \cdot \poly(d \log n))$.
    Since the algorithm $\calR$ succeeds with probability at least $1 - \frac{1}{\poly(n\Delta)}$ and \Cref{lem:rounding} invokes $\calR$ at most $O(\log (n\Delta))$ times, a simple union bound implies that the entire algorithm succeeds with high probability.
    \Cref{thm:clustering-solution-formal} then follows by setting $\Gamma = 8\epsilon^{-1}$.

    \item For \Cref{thm:clustering-solution-lowdim-formal}, we plug in~\cite[Lemma~4.1]{CzumajG0J25}, which provides a deterministic algorithm with $\alphaRS = 2 + \epsilon$, requires local memory $s \ge (\epsilon^{-1} d)^{\Omega(d)} \cdot \polylog(n)$, runs in $O(\log_s n)$ rounds, and uses total memory $(\epsilon^{-1} d)^{O(d)} \cdot \tilde{O}(n)$, for any $\epsilon \in (0,1)$.
    Overall, the algorithm $\calR$ has complexity: $\LM = (\epsilon^{-1} d)^{\Omega(d)} \cdot \polylog(n)$, $\Rd = O(\log_s n)$, and $\GM = (\epsilon^{-1} d)^{O(d)} \cdot \tilde{O}(n) + 2^{O(d/\Gamma)} \cdot O(n \cdot \poly(d \log n))$.
    \Cref{thm:clustering-solution-lowdim-formal} then follows by setting both $\Gamma$ and $\epsilon$ to constant values, say $\Gamma = 10$ and $\epsilon = 0.1$.
\end{itemize}
This completes the proof. \qed

\subsection{\texorpdfstring{MPC Rounding Algorithm (Proof of \Cref{lem:rounding})}{}}
\label{sec:rounding-overview}

\paragraph{Preliminaries.}
In this subsection, we prove \Cref{lem:rounding}. 
Throughout the proof, we assume that the input fractional solution satisfies $\|\By\|_1 = k \le n-1$;
the case $\|\By\|_1 = n$ is trivial, since we can simply return $C = P$ as the solution.
Recall that by preprocessing we can ensure that $\min_{p \neq q \in P} \dist(p, q) \geq 2$ and $\max_{p, q \in P} \dist(p, q) \leq O(\Delta)$ (see \Cref{sec:prelim}); we therefore assume these conditions hereafter.
Under these conditions, it is easy to verify that $\cost(\By) \ge 1$.

We now define the clustering cost on a \emph{weighted dataset}, which will be used frequently in our algorithm and analysis.
Given a weight function $w: P \to \R_{\ge 0}$ and a fractional solution $\By'\in \R_{\ge 0}^{P}$, define $\cost_w(\By') := \sum_{p\in P} w(p)\cdot \bardist(p, \By')$. 
Similarly, for a center set $C\subseteq P$, define $\cost_w(C) := \sum_{p\in P} w(p)\cdot \dist^z(p, C)$. 
Let $\supp(w):=\{p\in P: w(p) > 0\}$ denote the support of $w$.

\paragraph{Algorithm Overview.}

At a high level, our MPC rounding algorithm for \Cref{lem:rounding} follows the framework from~\cite{CharikarGTS99}, which primarily consists of three steps:
\begin{itemize}
    \item First, it sparsifies the input dataset $P$ into a \emph{well-separated} weighted dataset to facilitate the subsequent rounding steps.
    \item 
    Next, it \emph{partially rounds} the input fractional solution $\By$ into a new solution $\Bytilde \in \{0, 1/2, 1\}^{P}$.
\item 
    Finally, it rounds the partially rounded $\Bytilde$ into an integral solution.
\end{itemize}
However, the algorithms in~\cite{CharikarGTS99} (particularly those for the first and third steps) are inherently too sequential to be efficiently implemented in MPC.
For these steps, we take alternative approaches that are more parallelizable, where the novel use of the Euclidean ruling set plays a crucial role.

We then describe each step in more detail and state the corresponding performance guarantees of our MPC algorithms.

\paragraph{Step~1: Sparsifying the Dataset to a Well-separated Weighted Set.}
The first step aims to compute a weight function $w : P \to \Z_{\ge 0}$ such that the weighted set $(P, w)$ is well-separated.
By ``well-separated'' we mean that every non-zero weighted point is far apart from each other, and moreover preserves the clustering cost of certain critical solutions, including the input fractional solution $\By$ and every integral solution $C \subseteq P$.

\begin{restatable}[MPC Algorithm for Step 1]{lemma}{RoundingStepOne}\label{lem:rounding-step1}
There exists an MPC algorithm that  computes a weight function $w: P \to \Z_{\ge 0}$ such that the following guarantees hold.
    \begin{enumerate}[label=(1\alph*), leftmargin=*]
    \item \label{step1-p1} 
    For every $p \neq q\in \supp(w)$,
$\dist(p,q) \ge 4\Gamma\cdot \max\big\{\big(\cost(p, \By)\big)^{1/z},\, \big(\cost(q, \By)\big)^{1/z} \big\}$.
\item \label{step1-p2} $\cost_w(\By)\le \Gamma^{O(z)}\cdot \cost(\By)$.

    \item \label{step1-p3} For every center set $C\subseteq P$, it holds that 
$\cost(C) \le \alphaRS^z\cdot \Gamma^{O(z)}\cdot \cost(\By) + 2^{z-1}\cdot \cost_w(C)$.
\end{enumerate}
    The algorithm makes $O(\log(n \Delta))$ invocations of $\calR$, requires local memory $s\ge \polylog (n\Delta) + \LM$, runs in $O(\log_s n + \Rd)$ rounds and uses total memory $O(n\cdot \polylog (n\Delta)) + \GM\cdot O(\log(n \Delta))$.
\end{restatable}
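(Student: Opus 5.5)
The key observation is that property (1b) depends only on the \emph{cost} of the representative that each point is moved to, and not on how far it is moved. So I will construct $w$ by mapping every point $p\in P$ to a representative $\pi(p)\in \supp(w)$ with $\cost(\pi(p),\By)\le \Gamma^{O(z)}\cost(p,\By)$, and set $w(q):=|\pi^{-1}(q)|$. The ruling-set factor $\alphaRS$ then enters only through the distance $\dist(p,\pi(p))$, which matters for (1c) but not for (1b). Concretely, I first run \Cref{lem:cost-estimation} to get $\hatdist(p,\By)$ with $\cost(p,\By)\le\hatdist(p,\By)\le\Gamma^{2z}\cost(p,\By)$. Then I bucket the points into levels $j\in[0:L]$ with $L=O(\log(n\Delta))$: $P_0:=\{p:\hatdist(p,\By)<1\}$, and for $j\ge1$, $P_j:=\{p: 2^{(j-1)z}\le \hatdist(p,\By)<2^{jz}\}$. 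Thus every $p\in P_j$ has $\cost(p,\By)^{1/z}\le \hatdist(p,\By)^{1/z}<2^{j}$.

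\textbf{Construction.} Fix $\tau_j:=c\,\Gamma^{2}2^{j}$ for a suitable constant $c$. In parallel over all levels, I invoke \ref{operation:RS} to get a $(\tau_j,\alphaRS\tau_j)$-ruling set $R_j$ of $P_j$. Next comes pruning across levels. I run the range query \ref{operation:RQ} with radius $\tau_j/\Gamma$ and indicator values on $R_{<j}:=\bigcup_{j'<j}R_{j'}$, again one invocation per level. A point $q\in R_j$ survives, i.e.\ enters $Q_j$, iff the returned count is $0$. This guarantees that no point of $R_{<j}$ lies within $\tau_j/\Gamma$ of $q$. Conversely, if $q$ is deleted, some point of $R_{<j}$ lies within $\tau_j$ of $q$. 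Set $Q:=\bigcup_j Q_j$.

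For a point $p\in P_j$, I then define $\pi(p):=\nn(p,Q_{\le j})$ via \ref{operation:ANN}, with one invocation per level. This gives $O(\log(n\Delta))$ invocations of $\calR$ in total, and the rest is sorting and aggregation to compute $w$. This yields the stated round and memory bounds.

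\textbf{Verification.} (1b): $\pi(p)$ lies at a level $\le j$, so $\cost(\pi(p),\By)\le\hatdist(\pi(p),\By)<2^{jz}\le 2^{z}\max\{1,\hatdist(p,\By)\}$. For $j\ge1$ this gives $\cost(\pi(p),\By)\le 2^z\Gamma^{2z}\cost(p,\By)$. For $j=0$, I need care: since interpoint distances are $\ge2$, a point with $\cost(p,\By)<1$ must have substantial opening at itself. I will handle this either by showing $\pi(p)=p$ suffices whenever $p$ has positive cost below $1$, or by splitting $P_0$ off as its own sub-case where every point is its own representative and separation holds trivially once $\tau_0$ is scaled down. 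I expect this boundary case, together with the exact constants, to need fiddling. Summing over $p$ gives $\cost_w(\By)\le\Gamma^{O(z)}\cost(\By)$.

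(1a): Take $q\neq q'\in Q$ with levels $j'\le j$. If $j'=j$, ruling-set separation gives $\dist(q,q')\ge\tau_j$. If $j'<j$, survival of $q$ gives $\dist(q,q')>\tau_j/\Gamma$. Either way the distance is $\ge 4\Gamma\cdot2^{j}\ge 4\Gamma\max\{\cost(q,\By)^{1/z},\cost(q',\By)^{1/z}\}$ once $c$ is large enough.

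(1c): This is the step I expect to be the main obstacle, namely bounding $\dist(p,Q_{\le j})$ despite cascading deletions. A point $p\in P_j$ has a point $q_0\in R_j$ within $\alphaRS\tau_j$. If $q_0$ was deleted, there is $q_1\in R_{j_1}$, $j_1<j$, within $\tau_j$ of $q_0$. If $q_1$ was deleted, there is $q_2\in R_{j_2}$, $j_2<j_1$, within $\tau_{j_1}$ of $q_1$, and so on. Since the levels strictly decrease, the chain terminates at a survivor in $Q_{\le j}$. The total length is at most $\alphaRS\tau_j+\sum_{i<j}\tau_i\le(\alphaRS+2)\tau_j$, because the $\tau_i$ are geometric. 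Hence $\dist(p,\pi(p))\le\Gamma(\alphaRS+2)\tau_j=\alphaRS\Gamma^{O(1)}2^{j}$, and $2^{jz}\le 2^z\max\{1,\hatdist(p,\By)\}$. Then by \Cref{lem:triangle-inequality},
\[
\cost(C)\le 2^{z-1}\sum_{p}\dist^z(p,\pi(p))+2^{z-1}\cost_w(C),
\]
and the first sum is at most $\alphaRS^z\Gamma^{O(z)}\sum_p\max\{1,\hatdist(p,\By)\}$. Terms with $\hatdist\ge1$ are absorbed into $\Gamma^{O(z)}\cost(\By)$. The level-$0$ points contribute $O(\Gamma^{O(z)}\alphaRS^z)$ each, which I will again control through the $P_0$ special case together with $\cost(\By)\ge1$. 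What remains is to make this $P_0$ treatment compatible with all three properties simultaneously.
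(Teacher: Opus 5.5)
\textbf{There is a genuine gap, and it is the one you flagged: the bottom bucket.} You set $P_0=\{p:\hatdist(p,\By)<1\}$ with a fixed radius $\tau_0=c\Gamma^2$. Your per-point bound $2^{jz}\le 2^z\max\{1,\hatdist(p,\By)\}$ then leaves an additive $\Theta(1)$ per point of $P_0$ in (1b) and $\alphaRS^z\Gamma^{O(z)}$ per point in (1c). The fact $\cost(\By)\ge 1$ absorbs only a constant, not $|P_0|$, which can be $n$.

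Here is a concrete failure. Place $n$ points on a line with spacing $2$, and take $k=n-1$ and $\By=(1-\frac1n)\ones$.
\begin{itemize}
\item Every point has $\cost(p,\By)=2^z/n$, so $\cost(\By)=2^z$, and for large $n$ all points lie in $P_0$.
\item $R_0$ is $c\Gamma^2$-separated with $c\Gamma^2\ge 4$, so it keeps at most $\lceil n/2\rceil$ points.
\item Taking $C:=\supp(w)$ gives $\cost_w(C)=0$ but $\cost(C)\ge\lfloor n/2\rfloor\cdot 2^z$. This violates (1c) once $n\gg\alphaRS^z\Gamma^{O(z)}$.
\end{itemize}
In this instance (1a) only demands separation $4\Gamma(2^z/n)^{1/z}<2$, so keeping every point would have been fine; the coarse $\tau_0$ is what breaks things.

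Neither of your fallbacks repairs this. A point of $P_0$ can have $\cost(p,\By)$ arbitrarily close to $1$. It then demands separation close to $4\Gamma\ge 4$ from every other support point, while interpoint distances can be as small as $2$. So ``every point of $P_0$ is its own representative'' can violate (1a). That separation requirement is dictated by the costs themselves, so no rescaling of $\tau_0$ makes it trivial.

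\textbf{The missing idea is to refine the level structure below cost $1$, so that the lowest level carries only negligible additive error.} The paper starts the hierarchy at scale $1/n$:
\begin{itemize}
\item $P_0=\{\hatdist(p,\By)\le 1/n\}$ and $P_\ell=\{(2\Gamma)^{(\ell-1)z}/n<\hatdist(p,\By)\le(2\Gamma)^{\ell z}/n\}$;
\item $\tau_\ell=4\Gamma(2\Gamma)^{\ell}n^{-1/z}$.
\end{itemize}
Each point then incurs an additive $\Gamma^{O(z)}/n$ in (1b) and $\alphaRS^z\Gamma^{O(z)}/n$ in (1c). These sum to at most $\alphaRS^z\Gamma^{O(z)}\le\alphaRS^z\Gamma^{O(z)}\cost(\By)$, using $\cost(\By)\ge 1$.

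A repair within your own scheme would also work, and even removes the additive term. Let $j$ run down to about $-\log_2(2\Gamma)$, and treat the bucket $\{\hatdist(p,\By)<(2\Gamma)^{-z}\}$ as its own ruling set, where separation does follow from the minimum distance $2$. Higher levels are then pruned against it.

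With either repair, the rest of your argument is sound and is a genuine variant of the paper's:
\begin{itemize}
\item You compute ruling sets of all of $P_j$, prune the ruling-set points afterwards against $R_{<j}$, and bound coverage by a geometric chain.
\item The paper first deletes from $P_\ell$ every point within $\tau_\ell$ of any point of $\bigcup_{j<\ell}P_j$, then computes the ruling set, and proves coverage by induction on the level.
\end{itemize}
Both routes give (1a) and an $O(\alphaRS\tau_j)$ coverage bound using $O(\log(n\Delta))$ calls to $\calR$.
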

\begin{proof}
    The proof can be found in \Cref{sec:rounding-step1}.
\end{proof}

Our algorithm in \Cref{lem:rounding-step1} is different from the one in~\cite{CharikarGTS99} for Step~1,
as the latter is a sequential greedy algorithm that is unsuitable for the MPC setting. 
To obtain a more parallel algorithm, we observe that if all points have the same value of $\bardist(p, \By)$, then the requirements in \Cref{lem:rounding-step1} (i.e., \ref{step1-p1} to \ref{step1-p3}) can be achieved by constructing a metric ruling set (\Cref{def:RS}) and assigning each ruling-set point $p$ a weight equal to the number of points for which $p$ is the (approximate) nearest ruling-set neighbor. Hence, at a high level, our algorithm first groups points with roughly equal values of $\bardist(p, \By)$, and then constructs a ruling set for each group in parallel.

\paragraph{Step~2: Partially Rounding into a Solution in $\{0,1/2,1\}^P$.} 
The second step aims to partially round the input fractional solution $\By$ into $\Bytilde \in \{0, 1/2, 1\}^P$, based on the weight function $w$ computed in Step~1.

\begin{restatable}[MPC Algorithm for Step 2]{lemma}{RoundingStepTwo}
\label{lem:rounding-step2}
    There exists an MPC algorithm that, 
    given 
a weight function $w : P \to \Z_{\ge 0}$ returned by \Cref{lem:rounding-step1}, 
    computes a fractional solution $\Bytilde \in \{0, 1/2, 1\}^{P}$ such that the following guarantees hold.

    \begin{enumerate}[label=(2\alph*), leftmargin=*] 
        \item \label{step2-p3} $\|\Bytilde\|_1 \le \|\By\|_1 = k$.
        \item \label{step2-p1} For every $p \in P$, $\ytilde_p = 0$ if $w(p) = 0$, and $\ytilde_p \in \{1/2, 1\}$ if $w(p) > 0$.
        \item \label{step2-p2} $\cost_w(\Bytilde) \le \Gamma^{O(z)}\cdot \cost_w(\By)$. 
\end{enumerate}
The algorithm makes $O(1)$ invocations of $\calR$, requires local memory $s\ge \polylog n + \LM$, runs in $O(\log_s n + \Rd)$ rounds, and uses total memory $O(n\cdot \polylog n) + \GM$.
\end{restatable}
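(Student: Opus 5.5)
The plan follows the second step of~\cite{CharikarGTS99}, which only needs ANN search~\ref{operation:ANN} and sorting. Let $S := \supp(w)$. Every $p \in S$ gets a ``nearest other support point'' $s(p) := \nn(p, S \setminus \{p\})$. To make this a single ANN call, I will compute ANN of $p$ in $S$ with $p$ excluded, for instance by exploiting the fact that \ref{operation:ANN} returns a point of $Q$ and running it on a random bipartition of $S$ a few times. If that trick is unavailable, I will accept a constant number of invocations of $\calR$. Put $d_p := \dist(p, s(p))$. By~\ref{step1-p1}, $d_p \ge 4\Gamma \cdot \cost(p,\By)^{1/z}$, so every $q$ with $\dist(p,q) < d_p/(2\Gamma)$ lies closer to $p$ than to any other support point.

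\textbf{Step (i), consolidation.} Move the fractional mass of $\By$ onto $S$. Each $p\in S$ receives $y'_p := \min\{1, \sum_{q:\dist(p,q) \le d_p/(2\Gamma)} y_q\}$; by the observation above these balls are disjoint, so $\|\By'\|_1 \le k$. The ball around $p$ has radius at least $2\cost(p,\By)^{1/z}$, so Markov's inequality on $p$'s assignment gives $y'_p \ge 1/2$ when this radius is used. To compute the ball sums I will use a range query~\ref{operation:RQ}. Its $\Gamma$-distortion is absorbed by shrinking the radius by $\Gamma$, which is why $\Gamma$ appears in the definition of $d_p$. For the cost, a point $p\in S$ gets $\cost(p,\By') \le (1-y'_p)\, d_p^z$. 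The leftover mass of $p$'s assignment lies outside the ball, at distance at least $d_p/(2\Gamma)$, so $(1-y'_p)d_p^z \le (2\Gamma)^z \cost(p,\By)$. Thus $\cost_w(\By') \le \Gamma^{O(z)}\cost_w(\By)$.

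\textbf{Step (ii), half-integral rounding.} This is the standard argument. Sort $S$ by $w(p)\,d_p^z$ in decreasing order (one sort). Give $\ytilde_p = 1$ to the first $2k - |S|$ points and $\ytilde_p = 1/2$ to the rest. If $|S| > 2k$, set every $\ytilde_p = 1/2$ and argue as in~\cite{CharikarGTS99}; I expect to need $|S|\le 2k$, which follows from $y'_p\ge 1/2$ and $\|\By'\|_1\le k$. Then $\|\Bytilde\|_1 = (2k-|S|) + (|S| - (2k-|S|))/2 = k$, which gives~\ref{step2-p3} and~\ref{step2-p1}. For~\ref{step2-p2}, a half-open $p$ pays $\tfrac12 d_p^z \cdot \Gamma^{O(z)}$ counting the ANN error, since half of it is served at $s(p)$, which is open at least $1/2$. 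The exchange argument of~\cite{CharikarGTS99} shows $\sum_{p \text{ half}} w(p)\,d_p^z \le 2\sum_{p} w(p)(1-y'_p)d_p^z$, because the points with the largest $w(p)\,d_p^z$ are the ones fully opened. Combining this with step (i) gives~\ref{step2-p2}.

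\textbf{Main obstacle.} The hard part is the exchange inequality in step (ii) once it is combined with the approximate, non-symmetric $s(p)$. The original argument relies on exact nearest neighbours. I would redo it via the LP view: the greedy order is optimal for maximizing $\sum_p \ytilde_p\, w(p)\,d_p^z$ subject to $\ytilde\in[1/2,1]^S$ and $\|\ytilde\|_1 = k$, and $\By'$ is feasible for this LP. Distortions then enter only as $\Gamma^{O(z)}$ factors on $d_p$.

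For complexity, the algorithm uses $O(1)$ calls to $\calR$ (ANN and range queries) plus sorting and prefix sums, each taking $O(\log_s n)$ rounds~\cite{DBLP:conf/isaac/GoodrichSZ11}, with memory $O(n\polylog n)+\GM$.
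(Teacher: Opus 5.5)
\paragraph{Verdict.} Your argument is correct once one unnecessary step is dropped. Step (ii) is the paper's Stage~2; step (i) uses a different but valid intermediate solution; and your plan to compute $\By'$ by range queries would fail.

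\paragraph{Comparison with the paper.} Your step (ii) matches the paper's Stage~2. You sort $\supp(w)$ by $w(p)\,d_p^z$, where $d_p$ is a $\Gamma$-approximate distance to $\supp(w)\setminus\{p\}$, and fully open a prefix of length $\min\{2k-m,m\}$. You then compare against any $1/2$-lower-bounded solution, using that the greedy choice minimizes $\sum_p w(p)(1-u_p)d_p^z$ over $u\in[1/2,1]^{S}$ with $\|u\|_1\le k$.

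The difference is the intermediate solution. The paper moves each non-support point's mass to its $\Gamma$-ANN in $\supp(w)$. It gets $\ytilde_p\ge 1/2$ from \cite[Lemma~5]{CharikarGTS99} together with the $4\Gamma$ separation in \ref{step1-p1}: that separation makes $p$ the unique $\Gamma$-ANN of every $q$ with $\dist^z(p,q)\le 2\cost(p,\By)$. It then bounds the cost by $(1+\Gamma)^z$ by rerouting $\Bx^*$.

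You instead keep only the mass inside the disjoint balls of radius $d_p/(2\Gamma)$; take them open, so they cannot touch in degenerate cases. One inequality, that the leftover mass $\ge 1-y'_p$ of $p$'s assignment lies at distance $\ge d_p/(2\Gamma)\ge 2\cost(p,\By)^{1/z}$, gives both $y'_p\ge 1/2$ and $(1-y'_p)d_p^z\le (2\Gamma)^z\cost(p,\By)$. This is valid and a bit more self-contained, since it needs neither the unique-ANN argument nor \cite{CharikarGTS99}. The paper's intermediate solution has the minor advantage of being computable with one ANN call.

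\paragraph{The range-query step does not work.} \ref{operation:RQ} takes a single radius for all points, so per-point radii already cost one invocation per distance scale. Its answer may also include mass out to $\Gamma$ times the queried radius. This leaves no room to absorb both the ANN distortion and the range-query distortion with only the $4\Gamma$ separation of \ref{step1-p1}:
\begin{itemize}
\item Querying at $d_p/(2\Gamma)$ lets the counted regions overlap. Their radius is up to $d_p/2$, which can be $\Gamma\cdot\dist(p,S\setminus\{p\})/2$. So $\|\By'\|_1\le k$ is lost.
\item Querying at $d_p/(2\Gamma^2)$ only guarantees a radius of $2\cost(p,\By)^{1/z}/\Gamma$, below the Markov threshold. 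Then $y'_p=0$ is possible: take $y_p=0$, all of $p$'s mass at distance $\cost(p,\By)^{1/z}$, the nearest support point at distance $4\Gamma\cost(p,\By)^{1/z}$, and $\Gamma>2$.
\end{itemize}
Fortunately, step (ii) builds $\Bytilde$ from $S$, $w$ and the $d_p$ alone. So $\By'$ is purely an analysis object, defined with exact sums, and the algorithm is just one ANN computation plus one sort.

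\paragraph{Excluding $p$ from its own ANN query.} Random bipartitions need $\Theta(\log n)$ repetitions to separate every point from its nearest neighbour, not a constant number. A deterministic alternative is to split by each of the $O(\log n)$ bits of the point IDs. This is harmless for the round bound. The paper simply treats $\nn(p,\supp(w)\setminus\{p\})$ as a single call.
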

\begin{proof}
    The proof can be found in \Cref{sec:rounding-step2}.
\end{proof}

For Step~2, we apply almost the same approach as in~\cite{CharikarGTS99}, which is already suited for the MPC setting.
Specifically, the algorithm in~\cite{CharikarGTS99} first constructs $\Bytilde$ as a \emph{$1/2$-lower-bounded} fractional solution, where each non-zero entry is at least $1/2$, by moving each zero-weighted center to its nearest non-zero-weighted neighbor.
Then, it sorts the points with non-zero weights according to their distance to the nearest neighbor, and greedily designates a prefix of the points as fully-open (i.e., setting $\tilde{y}_p \gets 1$) and the remaining suffix as half-open (i.e., setting $\tilde{y}_p \gets 1/2$).
The resulting solution $\tilde{\By}$ is then shown to be a desired partially rounded solution.

The only modification we make to their algorithm is using approximate nearest neighbors instead of exact ones.

\paragraph{Step~3: Rounding a Solution in $\{0,1/2,1\}^P$ to an Integral Solution.} 
The third step aims to round the partially rounded solution $\Bytilde \in \{0, 1/2, 1\}^P$ to an integral solution $C \subseteq P$, thus completing the entire rounding procedure.

\begin{restatable}[MPC Algorithm for Step 3]{lemma}{RoundingStepThree}\label{lem:rounding-step3}
There exists an MPC algorithm that, given a weight function $w : P \to \Z_{\ge 0}$ returned by \Cref{lem:rounding-step1} and a fractional solution $\Bytilde\in \{0,1/2,1\}^P$ returned by \Cref{lem:rounding-step2}, computes a center set $C\subseteq P$ such that $|C|\le k$ and
    \begin{equation*}
    \cost_w(C) ~\le~ \alphaRS^z \cdot \Gamma^{O(z)} \cdot \cost_w(\Bytilde).
    \end{equation*}
    The algorithm makes $O(\log\Delta)$ invocations of $\calR$,
    requires local memory $s\ge \polylog (n\Delta) + \LM$, runs in $O(\log_s n + \Rd)$ rounds, and uses total memory $O(n\cdot \polylog(n\Delta)) + \GM\cdot O(\log \Delta)$.
\end{restatable}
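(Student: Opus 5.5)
We follow the final step of Charikar et al., but replace forest coloring with a ruling-set selection. Let $F := \{p : \ytilde_p = 1\}$ and $H := \{p : \ytilde_p = 1/2\}$. By \ref{step2-p1}, $F \sqcup H = \supp(w)$. By \ref{step2-p3}, $|F| + |H|/2 \le k$. We open all of $F$ and pick $H' \subseteq H$ with $|H'| \le |H|/2$, so that $C := F \cup H'$ satisfies $|C| \le k$.

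\textbf{Cost structure.} Before choosing $H'$, we describe $\cost_w(\Bytilde)$. Property \ref{step1-p1} says support points are far apart relative to their fractional costs, and Step 2 moved all opening onto the support. Hence, for $p \in F$ the cost is $0$. For $p \in H$, point $p$ gets $1/2$ from itself and must take the other $1/2$ from some other support point. This gives
\[
\cost(p,\Bytilde) \ge \tfrac12 \dist^z(p, \supp(w)\setminus\{p\}).
\]
So it suffices to find $H'$ with
\[
\dist(p, C) \le \alphaRS\,\Gamma^{O(1)}\,\dist(p, \supp(w)\setminus\{p\}) \quad \text{for every } p \in H.
\]
Let $\delta_p$ be the approximate nearest-neighbor distance of $p$ within $\supp(w)$, computed by \ref{operation:ANN}. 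This costs a $\Gamma$ factor.

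\textbf{Grouping and ruling sets.} If a point $p\in H$ has its approximate nearest neighbor in $F$, it is already served, so we restrict attention to the remaining points of $H$. We bucket these by $\delta_p \in (\Gamma'^{j-1}, \Gamma'^{j}]$ for $j \in [O(\log\Delta)]$, with a suitable constant $\Gamma'$. For each bucket in parallel, we compute via \ref{operation:RS} a $(\tau_j, \alphaRS\tau_j)$-ruling set $R_j$ with $\tau_j \approx 2\Gamma'^{j}$. Coverage gives $\dist(p, R_j) \le \alphaRS\Gamma^{O(1)}\delta_p$. Separation gives the counting bound within a bucket: each ruling point $q$ has its nearest support neighbor at distance $\lesssim \tau_j/2$, and we charge $q$ to its partner. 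This step is the one likely to need care. Naively taking $\bigcup_j R_j$ need not satisfy $|H'| \le |H|/2$, because a partner may lie in another bucket or in $F$, and charges may collide.

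\textbf{Main obstacle: the global bound $|H'|\le|H|/2$.} We first try processing the buckets from the largest scale down. For each $q \in R_j$, we discard $q$ if some point already chosen at a larger scale, or in $F$, lies within $O(\delta_q)$; such a $q$ is served by that point at a constant-factor loss. This discarding is implemented with a range-query count, \ref{operation:RQ}. For the surviving points we then show an injective charge. Each survivor $q$ is charged to its nearest support neighbor $q'$. The pair $\{q,q'\}$ lies in a ball of radius $\delta_q$. By separation, and because survivors are far from larger-scale choices, no two survivors share a partner. Survivors and partners are disjoint. Partners lie in $H$, since points near $F$ were discarded. Together these give $2|H'| \le |H|$.

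Verifying that this discarding preserves the distance guarantee by a geometric-series argument across scales, and that the charging is injective under $\Gamma$-approximate distances, is where we expect the main difficulty. Finally, summing $w(p)\dist^z(p,C) \le \alphaRS^z\Gamma^{O(z)}\,w(p)\cost(p,\Bytilde)$ over $p \in \supp(w)$ yields the claim. The complexity is $O(\log\Delta)$ parallel invocations of $\calR$ plus sorting.
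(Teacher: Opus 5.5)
There is a genuine gap at the step you flagged. With the top-down rule, the claims that survivors and partners are disjoint and that no two survivors share a partner are false, so $|H'|\le|H|/2$ fails. The cause is that nearest-neighbor scale is only one-sidedly stable. If $q'$ is the nearest support neighbor of $q$, then $\delta_{q'}\le\Gamma\dist(q',q)\le\Gamma\delta_q$, but $\delta_{q'}$ can be arbitrarily smaller than $\delta_q$.

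Concretely, take three half-open support points $a,b,c$ on a line at positions $0,1,D$, with $D$ large compared with $\Gamma$, $\Gamma'$ and your discard constant. Put all other support points, including $F$, far away, and repeat this gadget at far-apart locations.
\begin{itemize}
\item $c$ is a ruling point at a large scale and survives.
\item At least one of $a,b$ survives at the small scale, because the only larger-scale chosen point, $c$, is at distance about $D\gg O(\delta_a),O(\delta_b)$.
\item Either $c$'s partner $b$ is itself a survivor, or $c$ and the surviving $a$ share the partner $b$.
\end{itemize}
Either way each gadget contributes $2$ survivors out of $3$ points of $H$.

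There is no slack to absorb this. Step~2 fully opens exactly $2k-m$ points whenever $m>k$, so $|F|+|H|/2=k$ holds with equality whenever $H\neq\emptyset$, and you get $|C|>k$. Separately, a rule that refers to points ``already chosen at a larger scale'' forces the $\Theta(\log\Delta)$ buckets to be processed one after another, each with its own range query. That gives $O(\log\Delta\cdot(\log_s n+\Rd))$ rounds, not $O(\log_s n+\Rd)$.

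Both problems go away if you reverse the direction and make the rule non-adaptive. Discard $q\in R_j$ whenever $F\cup\bigcup_{j'<j}R_{j'}$ (all smaller-scale ruling points, survivors or not) has a point within $c\,\Gamma'^{j}$ of $q$. This is one parallel range query per bucket.
\begin{itemize}
\item \emph{Distance guarantee.} Discard chains now move to strictly smaller scales, so your geometric series is decreasing. Every $p\in H$ ends within $O(\alphaRS\Gamma\Gamma')\cdot\delta_p$ of $C$.
\item \emph{Partner at a smaller scale or in $F$.} Then $q$ itself would have been discarded.
\item \emph{Partner $q'$ at a larger scale $j'$.} Then $\dist(q,q')\le\delta_q\le\Gamma'^{j}<\delta_{q'}$, so $q'$ is discarded.
\item \emph{Equal scales.} These contradict separation, provided $\tau_j>2\Gamma'^{j}$. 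Your $\tau_j\approx 2\Gamma'^j$ sits exactly on the boundary for shared partners; take e.g.\ $4\Gamma'^j$.
\item \emph{Shared partner across scales $j_1<j_2$.} The two survivors are within $\delta_{q_1}+\delta_{q_2}<2\Gamma'^{j_2}$ of each other, so the larger-scale one is discarded.
\end{itemize}
So the charging becomes injective into $H\setminus H'$.

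The paper reaches the same bottom-up structure differently. It buckets $H$ by the distance to a \emph{monotone} ANN $h(p)$ inside $H$, with $\dist(h(p),h(h(p)))\le\dist(p,h(p))$. It builds ruling sets only on the non-isolated part $H'_\ell$, so each $p\in R_\ell$ maps injectively to its nearest neighbor in $H'_\ell\setminus R_\ell$ within the same bucket. It then covers the isolated points of $H_\ell$ by induction from strictly lower levels, which monotonicity guarantees.
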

\begin{proof}
    The proof can be found in \Cref{sec:rounding-step3}.
\end{proof}

In \Cref{lem:rounding-step3}, our overall strategy is the same as in~\cite{CharikarGTS99}: we first add the fully-open centers (i.e., centers $p \in P$ with $\tilde{y}_p = 1$) to the final center set $C$, and then select at most half of the half-open centers (i.e., centers $p \in P$ with $\tilde{y}_p = 1/2$) to include in $C$.
However, our approach for selecting the half-open centers is different from that in~\cite{CharikarGTS99}, as the latter performs vertex coloring of forests and is thus inefficient in the MPC setting.
Our approach is based on the observation that it is sufficient to select half-open centers such that each unselected half-open center has an approximate nearest neighbor selected.
Hence, similar to our algorithm for Step~1, we first group the half-open centers with nearly the same distance to their nearest neighbors, and then construct a ruling set for each group in parallel.
We select these ruling-set points to be added to $C$, and we prove that the number of such points is at most half of the half-open centers.

\paragraph{Proof of \Cref{lem:rounding}.}
Now we are ready to prove \Cref{lem:rounding}.
As discussed, the complete algorithm for \Cref{lem:rounding} runs Steps~1 to~3 sequentially.
Specifically, we first run the algorithm from~\Cref{lem:rounding-step1} to obtain a weight function $w$, then run the algorithm from~\Cref{lem:rounding-step2} to obtain a partially rounded solution $\Bytilde \in \{0, 1/2, 1\}^P$, and finally run the algorithm from~\Cref{lem:rounding-step3}, which ultimately yields an integral solution $C \subseteq P$ with $|C|\le k$.
Hence, the overall performance of the algorithm is the cumulative result of the individual performances of each step: invoking $\calR$ $O(\log(n \Delta))$ times, requiring local memory $s \ge \polylog (n\Delta) + \LM$, running in $O(\log_s n + \Rd)$ rounds, and using total memory $O(n\cdot \polylog(n\Delta)) + \GM \cdot O(\log(n \Delta))$.

Finally, the correctness of the output $C$ follows directly from combining~\Cref{lem:rounding-step1,lem:rounding-step2,lem:rounding-step3}, as follows:
\begin{align*}
\cost_w(C) ~&\le~ \alphaRS^z\cdot \Gamma^{O(z)}\cdot \cost_w(\Bytilde)
\le~ \alphaRS^z\cdot \Gamma^{O(z)}\cdot \cost_w(\By)
\le~ \alphaRS^z\cdot \Gamma^{O(z)}\cdot \cost(\By),
\end{align*}
where the three inequalities follow from \Cref{lem:rounding-step3}, Property~\ref{step2-p2} of \Cref{lem:rounding-step2}, and Property~\ref{step1-p2} of \Cref{lem:rounding-step1}, respectively.
Combining the above result with Property~\ref{step1-p3} of \Cref{lem:rounding-step1}, we conclude that
\begin{align*}
    \cost(C) ~\le~ \alphaRS^z\cdot \Gamma^{O(z)}\cdot \cost(\By) + 2^{z-1}\cdot \cost_w(C) ~\le~ \alphaRS^z\cdot \Gamma^{O(z)}\cdot \cost(\By),
\end{align*}
as required in \Cref{lem:rounding}. 
This completes the proof of \Cref{lem:rounding}. 
\qed

\subsubsection{Step~1: Sparsifying the Dataset to a Well-separated Weighted Set}
\label{sec:rounding-step1}

In this subsection, we prove \Cref{lem:rounding-step1}, which presents an MPC algorithm for Step~1 of the rounding procedure in \Cref{sec:rounding-overview}, namely, computing a well-separated and cost-preserving weight function for the input dataset.
We restate \Cref{lem:rounding-step1} below.

\RoundingStepOne*
\begin{proof}
    We start by describing our algorithm for \Cref{lem:rounding-step1}, then verify that its output satisfies Properties~\ref{step1-p1}~to~\ref{step1-p2}, and finally discuss its MPC implementation.

    \paragraph{Algorithm.}
    We present our algorithm for Step~1 in \Cref{alg:rounding-step1}, which proceeds in three stages:
    \begin{itemize}
        \item \emph{Stage 1 (Lines~\ref{alg:rounding-step1:hcost} to \ref{alg:rounding-step1:partition})} computes a partition $\{P_\ell\}_{\ell\in [L]}$ of the dataset $P$ based on estimates of the individual costs of the points. 
        We refer to the index $\ell \in [0:L]$ of each part as its \emph{level}, and accordingly, we say that points in $P_\ell$ are \emph{at level $\ell$}.

        \item \emph{Stage 2 (Lines~\ref{alg:rounding-step1:for}~to~\ref{alg:rounding-step1:RS})} constructs a collection of metric ruling sets.
        Specifically, for each part $P_{\ell}$ at level $\ell \in [0: L]$, the algorithm first filters out points that are ``close'' to any points at lower levels, and then computes a ruling set for the remaining points in $P_{\ell}$.

        \item \emph{Stage 3 (Lines~\ref{alg:rounding-step1:init}~to~\ref{alg:rounding-step1:return})} finally computes the weight function $w$ by ``moving'' each point to its approximate nearest ruling-set neighbor at the same or lower levels.
    \end{itemize}

    \begin{algorithm}[t]
    \DontPrintSemicolon
    \caption{Compute a weight function $w : P \to \Z_{\ge 0}$ as required by \Cref{lem:rounding-step1}}
        \label{alg:rounding-step1}

        \tcc{Stage 1 (Lines~\ref{alg:rounding-step1:hcost} to \ref{alg:rounding-step1:partition}): computing a cost-based partition}

        For each point $p\in P$, compute a value $\hatdist(p, \By)$ such that $\bardist(p, \By) \le \hatdist(p, \By) \le \Gamma^{2z}\cdot \bardist(p, \By)$ using \Cref{alg:value}
        \label{alg:rounding-step1:hcost}\;

        $L\gets \Theta(\log(n\Delta))$ be sufficiently large
\label{alg:rounding-step1:L}\;

        $P_0 \gets \left\{p\in P : \hatdist(p, \By) \le \frac{1}{n} \right\}$, $P_\ell \gets \left\{p\in P : \frac{(2\Gamma)^{(\ell-1)z}}{n} < \hatdist(p, \By) \le \frac{(2\Gamma)^{\ell z}}{n} \right\}$, $\forall \ell \in [L]$
        
        \label{alg:rounding-step1:partition}
        \vspace{1em}

        \tcc{Stage 2 (Lines~\ref{alg:rounding-step1:for}~to~\ref{alg:rounding-step1:RS}): constructing a ruling set for each level.}
        \For{every $\ell \in [0: L]$}{ \label{alg:rounding-step1:for}
            $\tau_\ell \gets 4\Gamma\cdot (2\Gamma)^{\ell}\cdot n^{-1/z}$\;
            
            Compute a subset $Q_\ell \subseteq P_\ell$ such that
            $\big\{p\in P: \dist(p, \cup_{j = 0}^{\ell - 1} P_j) \le \tau_\ell \big\} \subseteq Q_\ell \subseteq
                \big\{p\in P: \dist(p, \cup_{j = 0}^{\ell - 1} P_j) \le \Gamma\cdot \tau_\ell \big\}$

            \label{alg:rounding-step1:Pprime}
            \tcp*{set $Q_\ell \gets \emptyset$ if $\ell = 0$}

            Compute an $(\tau_\ell, \tau_\ell\cdot \alphaRS)$-ruling set $R_\ell$ of $P_\ell\setminus Q_\ell$
            \label{alg:rounding-step1:RS}

        }

        \vspace{1em}
        \tcc{Stage 3 (Lines \ref{alg:rounding-step1:init} to \ref{alg:rounding-step1:return}): computing the weight function $w$}
        For each $p\in P$, initialize $w(p) \gets 0$ 
        \label{alg:rounding-step1:init} \;
        \For{every $\ell \in [0: L]$}{
            For each $p\in P_{\ell}$, $w\big(\nn(p, \cup_{j = 0}^{\ell} R_j) \big) \gets w\big(\nn(p, \cup_{j = 0}^{\ell} R_j) \big) + 1$
        }

\Return $w$ \label{alg:rounding-step1:return}
    \end{algorithm}

    \begin{fact}\label{fact:step1-partition}
        $\{P_\ell\}_{\ell\in [0:L]}$ computed by Line~\ref{alg:rounding-step1:partition} is a partition of $P$ for sufficiently large $L = \Theta(\log(n \Delta))$.
    \end{fact}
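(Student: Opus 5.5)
The parts $P_0,\dots,P_L$ from Line~\ref{alg:rounding-step1:partition} are disjoint by construction: the ranges $(-\infty,1/n]$ and $\big((2\Gamma)^{(\ell-1)z}/n,\,(2\Gamma)^{\ell z}/n\big]$ for $\ell\in[L]$ are pairwise disjoint intervals. So the plan is to show only that every $p\in P$ falls into one of them. Points with $\hatdist(p,\By)\le 1/n$ (including $\hatdist(p,\By)=0$) lie in $P_0$. For a point with $\hatdist(p,\By)>1/n$, it suffices to bound $\hatdist(p,\By)$ from above by $(2\Gamma)^{Lz}/n$. Then the consecutive intervals, whose union is $(1/n,(2\Gamma)^{Lz}/n]$, cover it.

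For the upper bound I would use the guarantee $\hatdist(p,\By)\le\Gamma^{2z}\cdot\cost(p,\By)$ from \Cref{lem:cost-estimation} (Line~\ref{alg:rounding-step1:hcost}). Next I bound $\cost(p,\By)$ by the maximum power-$z$ distance. Since $\|\By\|_1=k\ge 1$, some feasible assignment $\Bx_p\le\By$ with $\|\Bx_p\|_1=1$ exists (take mass greedily from any coordinates of $\By$). Every distance is at most $O(\Delta)$ by the preprocessing assumption in \Cref{sec:prelim}, so $\cost(p,\By)\le O(\Delta)^z$. Hence $\hatdist(p,\By)\le\Gamma^{2z}\cdot O(\Delta)^z$.

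It remains to choose $L$ so that $(2\Gamma)^{Lz}/n\ge\Gamma^{2z}\cdot O(\Delta)^z$, i.e., $(2\Gamma)^{L}\ge\Gamma^2\cdot O(\Delta)\cdot n^{1/z}$. Since $2\Gamma\ge 2$, any $L\ge 2+\log_2(O(\Delta)\, n)$ works, and this is $\Theta(\log(n\Delta))$ as claimed. The only point requiring care is that $L$ must not depend on $\Gamma$ in a way that breaks the $\Theta(\log(n\Delta))$ bound. This holds because the base $2\Gamma$ grows with $\Gamma$ at least as fast as the factor $\Gamma^2$ is absorbed by two extra levels. There is no real obstacle beyond this routine check.
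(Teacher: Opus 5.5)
Your proof is correct and follows the same route as the paper. Disjointness is immediate from the construction. Coverage follows from the chain $\hatdist(p,\By)\le\Gamma^{2z}\cost(p,\By)\le\Gamma^{2z}\cdot O(\Delta)^{z}\le(2\Gamma)^{Lz}/n$, which uses the $O(\Delta)$ distance bound from preprocessing. Your explicit choice $L\ge 2+\log_2(O(\Delta)\,n)$, which holds uniformly in $\Gamma\ge 1$, simply makes the paper's ``sufficiently large $L=\Theta(\log(n\Delta))$'' concrete.
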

    \begin{proof}
        This fact follows directly from the bound $\hatdist(p, \By) \le \Gamma^{2z} \cdot \cost(p, \By) \le \Gamma^{2z} \cdot \Delta^{O(z)}\le \frac{(2\Gamma)^{Lz}}{n}$ (which holds since $\max_{p,q \in P} \dist(p,q) \le O(\Delta) $ by preprocessing; see \Cref{sec:prelim}) and the computation of $\{P_\ell\}_{i\in [0:L]}$ in Lines~\ref{alg:rounding-step1:L} and~\ref{alg:rounding-step1:partition}.
    \end{proof}

    \paragraph{Proof of Property~\ref{step1-p1}.}
It suffices to show that, for a fixed pair of points $p\neq q\in \supp(w)$, it holds that $\dist(p,q) \ge 4\Gamma\cdot \max\big\{\big(\cost(p, \By)\big)^{1/z},\, \big(\cost(q, \By)\big)^{1/z} \big\}$.
    According to Stage 3 (Lines \ref{alg:rounding-step1:init} to \ref{alg:rounding-step1:return}), it holds that $p,q\in \cup_{\ell = 0}^{L} R_{\ell}$. 
    Without loss of generality, assume that $p\in R_\ell \subseteq P_{\ell}\setminus Q_{\ell}$ and $q\in R_{\ell'} \subseteq P_{\ell'}\setminus Q_{\ell'}$ for $\ell \ge \ell'$.
    By Lines~\ref{alg:rounding-step1:hcost}~and~\ref{alg:rounding-step1:partition},
    \begin{align*}
        \bardist(p, \By) ~&\le~ \hatdist(p, \By) ~\le~ \frac{(2\Gamma)^{\ell z}}{n} \\
        \bardist(q, \By) ~&\le~ \hatdist(q, \By) ~\le~ \frac{(2\Gamma)^{\ell' z}}{n} ~\le~ \frac{(2\Gamma)^{\ell z}}{n}.
    \end{align*}

    If $\ell = \ell'$, which means that both $p$ and $q$ are in the $(\tau_\ell, \tau_\ell \cdot \alphaRS)$-ruling set $R_\ell$ with $\tau_\ell = 4\Gamma\cdot (2\Gamma)^{\ell}\cdot n^{-1/z}$, then by \Cref{def:RS} 
    \begin{equation*}
        \dist(p,q) ~>~ \tau_{\ell} ~=~ 4\Gamma\cdot (2\Gamma)^{\ell}\cdot n^{-1/z} ~\ge~ 4\Gamma\cdot \max\left\{\big(\!\bardist(p, \By)\big)^{1/z},\, \big(\!\bardist(q, \By)\big)^{1/z} \right\}.
    \end{equation*}
    Therefore, Property~\ref{step1-p1} holds for the pair $(p, q)$ in this case.

    If $\ell > \ell'$, then $q\in \cup_{j = 0}^{\ell - 1} P_j$.
    Since $p \in R_\ell \subseteq P_\ell \setminus Q_{\ell}$, according to Line~\ref{alg:rounding-step1:Pprime}, we have
    \begin{equation*}
        \dist(p,q) ~\ge~ \dist(p, \cup_{j = 0}^{\ell - 1} P_j) ~>~ \tau_{\ell} ~=~ 4\Gamma\cdot (2\Gamma)^{\ell}\cdot n^{-1/z} ~\ge~ 4\Gamma\cdot \max\left\{\big(\!\bardist(p, \By)\big)^{1/z},\, \big(\!\bardist(q, \By)\big)^{1/z} \right\}.
    \end{equation*}
    Therefore, Property~\ref{step1-p1} holds for the pair $(p, q)$ in this case as well.

    Combining the above cases, we conclude Property~\ref{step1-p1}.

\paragraph{Proof of Property~\ref{step1-p2}.}
According to Stage 3 (Lines \ref{alg:rounding-step1:init} to \ref{alg:rounding-step1:return}), we have 
    \begin{align}
        \cost_w(\By) ~&=~ 
        \sum_{q\in P} w(q)\cdot \bardist(q, \By) ~=~
        \sum_{q\in P} \sum_{\ell = 0}^{L} \sum_{p\in P_{\ell}: \nn(p, \cup_{j = 0}^{\ell } R_j) = q} \bardist(q, \By) \nonumber\\
        &=~ \sum_{\ell = 0}^{L} \sum_{p\in P_{\ell}} \bardist\big(\nn(p, \cup_{j = 0}^{\ell} R_j), \By\big).
        \label{eq:cost-preservation-y}
    \end{align}
    Now, consider a fixed point $p\in P_{\ell}$ for $\ell\in [0:L]$ and let $q:= \nn(p, \cup_{j = 0}^{\ell} R_j)$.
    By Lines~\ref{alg:rounding-step1:hcost}~and~\ref{alg:rounding-step1:partition}, 
    \begin{equation*}
        \frac{(2\Gamma)^{\ell z}}{n} ~\le~ \left((2\Gamma)^z\cdot \hatdist(p, \By) + \frac{1}{n} \right) ~\le~ \Gamma^{O(z)}\cdot \left(\bardist(p, \By) + \frac{1}{n} \right).
    \end{equation*}
    Since $q = \nn(p, \cup_{j = 0}^{\ell} R_j)\in \cup_{j = 0}^{\ell} R_j$, it follows that 
    \begin{equation*}
        \bardist(q, \By) ~\le~ \hatdist(q, \By) ~\le~ \frac{(2\Gamma)^{\ell z}}{n} ~\le~ \Gamma^{O(z)}\cdot \left( \bardist(p, \By) + \frac{1}{n} \right).
    \end{equation*}
    Plugging this into~\eqref{eq:cost-preservation-y}, we obtain
    \begin{align*}
        \cost_w(\By) ~&=~ \sum_{\ell = 0}^{L} \sum_{p\in P_{\ell}} \bardist\big(\nn(p, \cup_{j = 0}^{\ell} R_j), \By\big)\\
        &\le~ \sum_{\ell =0}^{L}\sum_{p\in P_{\ell}}  \Gamma^{O(z)}\cdot \left(\bardist(p, \By) + \frac{1}{n} \right)\\
        \mr{\Cref{fact:step1-partition}} 
        &\le~ \Gamma^{O(z)}\cdot \left(\cost(\By) + 1\right)\\
        \mr{$\cost(\By)\ge 1$} 
        &\le~ \Gamma^{O(z)}\cdot \cost(\By),
    \end{align*}
    This finishes the proof of Property~\ref{step1-p2}.

\paragraph{Proof of Property~\ref{step1-p3}}
For each $\ell \in  [0:L]$, we define $R^{\mathrm{pre}}_{\ell} := \cup_{j = 0}^{\ell} R_j$ as the prefix union of the ruling sets up to level $\ell$.
    We need the following claim.
    \begin{claim}\label{claim:dist-to-nnR}
        For every $\ell \in [0: L]$ and every $p\in P_\ell$, 
        it holds that 
        \begin{equation*}
        \dist(p, R^{\mathrm{pre}}_\ell) ~\le~ \alphaRS\cdot 4\Gamma^{2}\cdot (2\Gamma)^{\ell + 1}\cdot n^{-1/z}.
    \end{equation*}
    \end{claim}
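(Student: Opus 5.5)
We prove the claim by strong induction on the level $\ell$, using only the coverage property of the ruling sets $R_\ell$ (\Cref{def:RS}) and the definition of $Q_\ell$ in Line~\ref{alg:rounding-step1:Pprime}. The idea is that a point of $P_\ell$ either survives into $P_\ell\setminus Q_\ell$, and is then covered directly by $R_\ell$, or it lies in $Q_\ell$, and is then close to some lower-level point that the induction hypothesis already covers. Throughout, write $D_\ell := \alphaRS\cdot 4\Gamma^{2}\cdot (2\Gamma)^{\ell+1}\cdot n^{-1/z}$ for the target bound. We use $\alphaRS\ge 1$ and $\Gamma\ge 1$.

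\emph{First case: $p\in P_\ell\setminus Q_\ell$.} This covers every $p\in P_0$, since $Q_0=\emptyset$. Here $R_\ell$ is an $(\tau_\ell,\alphaRS\tau_\ell)$-ruling set of $P_\ell\setminus Q_\ell$, so its coverage property gives
\[
\dist(p,R^{\mathrm{pre}}_\ell)\le \dist(p,R_\ell)\le \alphaRS\tau_\ell=\alphaRS\cdot 4\Gamma(2\Gamma)^{\ell}n^{-1/z}.
\]
This is at most $D_\ell$, because $\Gamma\cdot 2\Gamma\ge 1$. This case also settles the base case $\ell=0$.

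\emph{Second case: $p\in Q_\ell$ with $\ell\ge 1$.} By the upper containment in Line~\ref{alg:rounding-step1:Pprime}, there is a point $q\in P_j$ with $j\le \ell-1$ and $\dist(p,q)\le \Gamma\tau_\ell=4\Gamma^2(2\Gamma)^{\ell}n^{-1/z}$.
\begin{itemize}
\item By the induction hypothesis applied at level $j$, together with monotonicity of $D_j$ in $j$, we get $\dist(q,R^{\mathrm{pre}}_j)\le D_j\le D_{\ell-1}=\alphaRS\cdot 4\Gamma^2(2\Gamma)^{\ell}n^{-1/z}$.
\item Since $R^{\mathrm{pre}}_j\subseteq R^{\mathrm{pre}}_\ell$, the triangle inequality gives
\[
\dist(p,R^{\mathrm{pre}}_\ell)\le (1+\alphaRS)\cdot 4\Gamma^2(2\Gamma)^{\ell}n^{-1/z}\le 2\alphaRS\cdot 4\Gamma^2(2\Gamma)^{\ell}n^{-1/z}\le D_\ell .
\]
The middle step uses $\alphaRS\ge 1$, and the last step uses $2\le 2\Gamma$.
\end{itemize}

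Since $P_\ell=(P_\ell\setminus Q_\ell)\sqcup Q_\ell$ when $Q_\ell\subseteq P_\ell$ (as Line~\ref{alg:rounding-step1:Pprime} prescribes), the two cases exhaust $P_\ell$ and the induction is complete.

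The only delicate point is the recursive chaining in the second case. A point at level $\ell$ may be filtered into $Q_\ell$ only because of a point at a lower level $j$, which may itself have been filtered because of an even lower level. The bound survives this chaining because the slack factor $2\Gamma$ between consecutive levels absorbs the additive $\Gamma\tau_\ell$ term, at the cost of one extra factor of $2$.
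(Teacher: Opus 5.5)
Your proof is correct and matches the paper's argument essentially step for step. Both use induction on $\ell$: points of $P_\ell\setminus Q_\ell$ are covered directly by $R_\ell$ within $\alphaRS\tau_\ell$, and points of $Q_\ell$ are chained through a lower-level point $q$ at distance at most $\Gamma\tau_\ell$, using the induction hypothesis, the triangle inequality, $R^{\mathrm{pre}}_j\subseteq R^{\mathrm{pre}}_\ell$, and the bound $(1+\alphaRS)\le \alphaRS\cdot 2\Gamma$.
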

    \begin{proof}
        We prove by induction on the level $\ell$.

        \emph{Base Case: $\ell = 0$.}
        In this case, we have $R^{\mathrm{pre}}_0 = R_0$ and $Q_0 = \emptyset$.
        Therefore, by Line~\ref{alg:rounding-step1:RS}, $R^{\mathrm{pre}}_0$ is an $(\tau_0, \tau_0 \cdot \alphaRS)$-ruling set for $P_0$ with $\tau_0 = 4\Gamma\cdot n^{-1/z}$,
        which implies that 
        \begin{equation*}
            \dist(p, R^{\mathrm{pre}}_0) ~\le~ \alphaRS\cdot \tau_0 ~=~ \alphaRS\cdot 4\Gamma \cdot n^{-1/z} ~\le~ \alphaRS\cdot 4\Gamma^{2}\cdot (2\Gamma)^{\ell + 1}\cdot n^{-1/z}.
        \end{equation*}

        \emph{Induction Step.} Fix a level $\ell \ge 1$. We assume, as an induction hypothesis, that \Cref{claim:dist-to-nnR} holds for all preceding levels $j \in [0:\ell - 1]$. 
        We then establish \Cref{claim:dist-to-nnR} for level $\ell$.
        It suffices to show that \Cref{claim:dist-to-nnR} holds for a fixed $p\in P_{\ell}$. We proceed with the following case analysis:
        \begin{itemize}
            \item If $p\notin Q_{\ell}$, then since $R_\ell$ is a $(\tau_\ell, \tau_\ell \cdot \alphaRS)$-ruling set for $P_\ell\setminus Q_\ell$ with $\tau_\ell = 4\Gamma\cdot (2\Gamma)^{\ell}\cdot n^{-1/z}$, it holds that 
            \begin{equation*}
                \dist(p, R_\ell) ~\le~ \alphaRS\cdot 4\Gamma\cdot (2\Gamma)^{\ell}\cdot n^{-1/z} ~\le~ \alphaRS\cdot 4\Gamma^{2}\cdot (2\Gamma)^{\ell + 1}\cdot n^{-1/z}.
            \end{equation*}
            Since $R_\ell \subseteq R^{\mathrm{pre}}_\ell$, we have     
            $\dist(p, R^{\mathrm{pre}}_\ell) \le \dist(p, R_\ell)$, and thus \Cref{claim:dist-to-nnR} holds for point $p$.

            \item If $p\in Q_\ell$, then $\dist(p, \cup_{j = 0}^{\ell - 1} P_j) \le \Gamma\cdot \tau_{\ell} = 4\Gamma^2\cdot (2\Gamma)^{\ell}\cdot n^{-1/z}$ by~Line~\ref{alg:rounding-step1:Pprime}.
            Therefore,
            there exists $0\le j\le \ell - 1$ and $q\in P_j$ such that $\dist(p,q) ~\le~ 4\Gamma^2\cdot (2\Gamma)^{\ell}\cdot n^{-1/z}$.
By the induction hypothesis, we have 
                $\dist(q, R^{\mathrm{pre}}_j) ~\le~  \alphaRS\cdot 4\Gamma^{2}\cdot (2\Gamma)^{j + 1}\cdot n^{-1/z} ~\le~ \alphaRS\cdot 4\Gamma^{2}\cdot (2\Gamma)^{\ell}\cdot n^{-1/z}$.
            By the triangle inequality, we have 
            \begin{align*}
                \dist(p, R^{\mathrm{pre}}_j) ~&\le~ \dist(p,q) + \dist(q, R^{\mathrm{pre}}_j) \\
                &\le~ 4\Gamma^2\cdot (2\Gamma)^{\ell}\cdot n^{-1/z} + \alphaRS\cdot 4\Gamma^{2}\cdot (2\Gamma)^{\ell}\cdot n^{-1/z}\\
                &\le~ \alphaRS\cdot 4\Gamma^{2}\cdot (2\Gamma)^{\ell + 1}\cdot n^{-1/z}.
            \end{align*}
            Finally, since $R^{\mathrm{pre}}_j \subseteq R^{\mathrm{pre}}_\ell$, \Cref{claim:dist-to-nnR} holds for point $p$.
        \end{itemize} 

        \Cref{claim:dist-to-nnR} then follows from the above induction argument.
    \end{proof}
    Equipped with \Cref{claim:dist-to-nnR}, we now proceed to establish Property~\ref{step1-p3}.
    Consider a fixed point $p \in P$.
    By \Cref{fact:step1-partition}, we assume that $p\in P_\ell$ for some $\ell \in[0:L]$.
    For simplicity, we temporarily simplify the notation by letting $\nn(p) := \nn(p, \cup_{j=0}^{\ell} R_j) = \nn(p, R^{\mathrm{pre}}_{\ell})$.
We have
    \begin{align*}
        \dist^z(p, \nn(p)) ~&\le~ \Gamma^z\cdot \dist^z(p, R^{\mathrm{pre}}_\ell )\\
        \mr{\Cref{claim:dist-to-nnR}}&\le~ \alphaRS^z\cdot \Gamma^{O(z)}\cdot \frac{(2\Gamma)^{\ell z}}{n}\\
        \mr{Line~\ref{alg:rounding-step1:partition}}&\le~ \alphaRS^z\cdot \Gamma^{O(z)}\cdot \left(\hatdist(p, \By) + \frac{1}{n} \right)\\
        \mr{Line~\ref{alg:rounding-step1:hcost}}&\le~ \alphaRS^z\cdot \Gamma^{O(z)}\cdot \left(\bardist(p, \By) + \frac{1}{n} \right)
    \end{align*}
    Therefore, for every center set $C\subseteq P$,
    \begin{align*}
        \cost(C) ~&=~ \sum_{p\in P} \dist^z(p, C)\\
        &\le~ \sum_{p\in P} 2^{z-1}\cdot \left(\dist^z(p, \nn(p)) + \dist^z(\nn(p), C)\right)\\
        &\le~ \alphaRS^z\cdot \Gamma^{O(z)}\cdot \sum_{p\in P} \left(\bardist(p, \By) + \frac{1}{n}  \right) + 2^{z-1}\cdot \sum_{q\in P}\sum_{p\in P: \nn(p) = q} \dist^z(q, C)\\
        &\le~ \alphaRS^z\cdot \Gamma^{O(z)}\cdot \cost(\By) + 2^{z-1}\cdot \sum_{q\in P} w(q)\cdot \dist^z(q, C)\\
        &=~ \alphaRS^z\cdot \Gamma^{O(z)}\cdot \cost(\By) + 2^{z-1}\cdot \cost_w(C),
    \end{align*}
    where the fourth step follows from the fact that $\cost(\By) \ge 1$ and the computation of the weight function $w$ in Stage 3 (Lines~\ref{alg:rounding-step1:init}~to~\ref{alg:rounding-step1:return}).
    This finishes the proof of Property~\ref{step1-p3}.

\paragraph{MPC Implementation.}
We now discuss the MPC implementation of the three stages separately.

For Stage~1, we run \Cref{alg:value} to compute the value $\hatdist(p, \By)$ for each point $p \in P$.
Observe that \Cref{alg:value} mainly relies on computing the values $\{s_p^{(\ell)}\}_{p \in P, \ell \in [0: L']}$ for $L' = \Theta(\log \Delta)$, which can be achieved by running $L' + 1$ range query operations~\ref{operation:RQ} of $\calR$ in parallel. Hence, this step requires local memory $s\ge \polylog (n\Delta) + \LM$, runs in $O(\log_s n + \Rd)$ rounds and uses total memory $O(n\cdot \polylog (n\Delta)) + \GM\cdot O(\log\Delta)$.
The computation of the partition $\{P_\ell\}_{\ell \in [0:L]}$ is then straightforward.

For Stage~2, we first discuss its implementation for a fixed level $\ell \in [0:L]$. 
The only non-trivial step is Line~\ref{alg:rounding-step1:Pprime}, where we need to compute the set $Q_\ell$.
Note that the case $\ell = 0$ is trivial, where $Q_0 = \emptyset$; we then consider the case $\ell \ge 1$.
Let $P^{\mathrm{pre}}_{\ell - 1} := \cup_{j = 0}^{\ell - 1} P_j$. 
To implement Line~\ref{alg:rounding-step1:Pprime}, it suffices to compute a value $b^{(\ell)}_p = |A^{(\ell)}_{p} \cap P^{\mathrm{pre}}_{\ell - 1}|$ for each point $p \in P_{\ell}$, where $A^{(\ell)}_{p}$ is an arbitrary set satisfying
\begin{equation*}
    \big\{q\in P: \dist(p,q) \le \tau_{\ell} \big\} ~\subseteq~
    A^{(\ell)}_p ~\subseteq~
    \big\{q\in P: \dist(p,q) \le \Gamma\cdot \tau_{\ell} \big\}.
\end{equation*}
The set $Q_{\ell}$ is defined as $\{p \in P : b^{(\ell)}_p > 0\}$, and it is easy to verify that this $Q_{\ell}$ satisfies the condition in Line~\ref{alg:rounding-step1:Pprime}.
The computation of the values $b^{(\ell)}_p$ for all $p \in P$  can be handled by the range query operation~\ref{operation:RQ}.
After obtaining $Q_{\ell}$, Line~\ref{alg:rounding-step1:RS} simply invokes the metric ruling set operation~\ref{operation:RS} to compute the ruling set $R_{\ell}$ of $P_{\ell}\setminus Q_{\ell}$.
To complete Stage~2, we run the above procedure for all levels $\ell \in [0:L]$ in parallel, where $L = \Theta(\log(n \Delta))$.
This parallel invocation requires local memory $s \geq \polylog(n\Delta) + \LM$, runs in $O(\log_s n + \Rd)$ rounds and uses total memory $O(n\cdot \polylog (n\Delta)) + \GM\cdot O(\log(n \Delta))$.

For Stage~3, for a fixed level $\ell \in [0:L]$, it involves the ANN search operation~\ref{operation:ANN}.

In total, the above MPC implementation of \Cref{alg:rounding-step1} invokes algorithm $\calR$ $O(\log(n\Delta))$ times, requires local memory $s\ge \polylog (n\Delta) + \LM$, runs in $O(\log_s n + \Rd)$ rounds and uses total memory $O(n\cdot \polylog (n\Delta)) + \GM\cdot O(\log(n \Delta))$.
\end{proof}

\subsubsection{\texorpdfstring{Step~2: Partial Rounding into a Solution in $\{0,1/2,1\}^{P}$}{}}
\label{sec:rounding-step2}

\begin{algorithm}[t]
    \DontPrintSemicolon
    \caption{Partially round $\By\in \R_{\ge 0}^{P}$ into a solution in $\{0,1/2,1\}^{P}$}
    \label{alg:rounding-step2}
    \tcc{Stage 1 (Lines~\ref{alg:step2:init}~to~\ref{alg:step2:aggregate}): computing an $1/2$-lower bounded solution}
    $\Bytilde \gets \By$ \label{alg:step2:init}\;
    For each $p\in P\setminus \supp(w)$, $\ytilde_{\nn(p, \supp(w))} \gets \ytilde_{\nn(p, \supp(w))} + \ytilde_{p}$, and then $\ytilde_p \gets 0$  
    \label{alg:step2:aggregate}\;

    \vspace{1em}
    \tcc{Stage 2 (Lines~\ref{alg:step2:hatdist}~to~\ref{alg:step2:return}): transforming $\Bytilde$ into a solution in $\{0,1/2,1\}^{P}$}

    For each $p\in \supp(w)$, compute a value $\widehat{\dist}(p)$ such that $\dist(p, \supp(w) \setminus \{p\}) \le \widehat{\dist}(p) \le \Gamma\cdot \dist(p, \supp(w) \setminus \{p\})$
    \label{alg:step2:hatdist}
    
    \tcp*{in the trivial case $|\supp(w)| = 1$, we simply define $\widehat{\dist}(p) := 0$}

    Sort $\supp(w)$ as $(p_1, \dots, p_m)$, where $m = |\supp(w)|$, in decreasing order of $w(p)\cdot \big(\widehat{\dist}(p)\big)^z$
    \label{alg:step2:sort}\;

    For each $i\in [m]$, $\ytilde_{p_i}\gets 1/2$
\label{alg:step2:set1}

For each $i\in [\min\{2k - m, m\}]$, $\ytilde_{p_i}\gets \ytilde_{p_i} + 1/2$
    \label{alg:step2:set}
    \tcp*{skip this line if $2k - m \le 0$}

    \Return $\Bytilde$ \label{alg:step2:return}
\end{algorithm}

In this subsection, we prove \Cref{lem:rounding-step2}, which presents an MPC algorithm for Step~2 of the rounding procedure in \Cref{sec:rounding-overview}, namely, partially rounding the input fractional solution to one in $\{0, 1/2, 1\}^P$.
We restate \Cref{lem:rounding-step2} below.

\RoundingStepTwo*
\begin{proof}
We present our algorithm for Step~2 in \Cref{alg:rounding-step2}, which proceeds in two stages: 

\begin{itemize}
    \item \emph{Stage~1 (Lines~\ref{alg:step2:init}~to~\ref{alg:step2:aggregate})} constructs a solution $\Bytilde$ that is \emph{$1/2$-lower bounded} (with respect to $w$); that is, for every $p \in P$, $\ytilde_p \ge 1/2$ if $w(p) > 0$, and $\ytilde_p = 0$ otherwise.
    \item \emph{Stage 2 (Lines~\ref{alg:step2:hatdist}~to~\ref{alg:step2:return})} then further transforms this $1/2$-lower bounded solution $\Bytilde$ into a solution in $\{0,1/2,1\}^{P}$.
\end{itemize}

The following \Cref{lem:correctness-rounding-step2} establishes the correctness of \Cref{alg:rounding-step2}. 
The proof of \Cref{lem:correctness-rounding-step2} is almost identical to that in~\cite[Section 4]{CharikarGTS99}, except that we make a small adaptation by replacing their use of exact nearest neighbors with our use of approximate nearest neighbors.
For completeness, we provide the proof of \Cref{lem:correctness-rounding-step2} in Appendix~\ref{sec:correctness-rounding-step2}.

\begin{lemma}[Correctness of \Cref{alg:rounding-step2}]
    \label{lem:correctness-rounding-step2}
    The fractional solution $\Bytilde$ returned by \Cref{alg:rounding-step2} satisfies Properties~\ref{step2-p3},~\ref{step2-p1}~and~\ref{step2-p2}.
\end{lemma}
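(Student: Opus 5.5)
We verify the three properties in turn, following the analysis of~\cite[Section~4]{CharikarGTS99}, with exact nearest neighbors replaced by the $\Gamma$-approximate ones from~\ref{operation:ANN} and Line~\ref{alg:step2:hatdist}.

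\textbf{Properties~\ref{step2-p3} and~\ref{step2-p1}.} Property~\ref{step2-p1} is immediate from Lines~\ref{alg:step2:set1}--\ref{alg:step2:set}, since every $p_i\in\supp(w)$ receives $1/2$ or $1$ and every point outside $\supp(w)$ is zeroed in Stage~1. For Property~\ref{step2-p3}, the total opening after Stage~2 is $m/2+\tfrac12\min\{[2k-m]^+,m\}$, which equals $k$ when $m\le 2k$ and $m/2$ otherwise. So it suffices to show $m=|\supp(w)|\le 2k$. Here we use Property~\ref{step1-p1}: for every $p\in\supp(w)$, the ball $B_p:=\{q:\dist(p,q)\le 2\Gamma\,\cost(p,\By)^{1/z}\}$ carries at least $1/2$ opening of $\By$. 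This follows by Markov's inequality, because the mass of the optimal assignment of $p$ lying outside $B_p$ is at most $\cost(p,\By)/(2\Gamma)^z\cost(p,\By)\le 1/2$; the case $\cost(p,\By)=0$ means $y_p\ge1$ and is trivial. By Property~\ref{step1-p1} these balls are pairwise disjoint, hence $m/2\le\|\By\|_1=k$.

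\textbf{Property~\ref{step2-p2}.} We analyze the two stages separately. For Stage~1, fix $p\in\supp(w)$. Every $q$ in $B_p$ has $\dist(q,p)<\dist(q,\supp(w)\setminus\{p\})/2$ by separation. Hence its approximate nearest support point $\nn(q,\supp(w))$ lies within distance $\Gamma\dist(q,p)$ of $q$, and is therefore close to $p$ (at distance at most $(\Gamma+1)\dist(q,p)$). Consequently, for every $q$, moving $y_q$ to $\nn(q,\supp(w))$ changes its distance to $p$ by at most a factor $O(\Gamma)$ plus $\dist(p,\supp(w)\setminus\{p\})$. Using this, we show
\[\cost(p,\Bytilde^{\mathrm{Stage1}})\le\Gamma^{O(z)}\cost(p,\By)+(\text{mass of }p\text{ leaving }\{p\})\cdot\Gamma^{O(z)}\dist^z(p,\supp(w)\setminus\{p\}).\]
After Stage~1, $\ytilde_p\ge 1/2$ for every $p\in\supp(w)$, and $p$ sends the remaining $1-\ytilde_p$ to its (approximate) nearest other support point $q_p$. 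This gives
\[\cost(p,\Bytilde)\le\Gamma^{O(z)}(1-\ytilde_p)\widehat{\dist}(p)^z,\]
while Markov plus separation yields $\cost(p,\By)\ge \Omega(\Gamma)^{-z}(1-\ytilde_p)\widehat{\dist}(p)^z$. This is the key comparison showing $\sum_p w(p)(1-\ytilde_p)\widehat{\dist}(p)^z\le\Gamma^{O(z)}\cost_w(\By)$ for the Stage-1 vector.

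For Stage~2, points set to $1$ cost $0$. A point $p$ set to $1/2$ pays $\tfrac12\widehat{\dist}(p)^z\cdot\Gamma^{O(z)}$ when assigned half to itself and half to its neighbor, provided the neighbor is open at least $1/2$, which holds since every support point is. The greedy choice of sorting by $w(p)\widehat{\dist}(p)^z$ in decreasing order and fully opening the first $2k-m$ points then minimizes $\sum_p w(p)(1-\ytilde'_p)\widehat{\dist}(p)^z$ over $\ytilde'\in[1/2,1]^{\supp(w)}$ with total opening at most $k$. Indeed, this is a fractional knapsack whose optimum is integral at these breakpoints, and the Stage-1 vector is feasible for it because its total is $k$. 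Combining the two stages gives $\cost_w(\Bytilde)\le\Gamma^{O(z)}\cost_w(\By)$.

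\textbf{Main obstacle.} The hardest step is the Stage-1 lower bound relating $\cost(p,\By)$ to $(1-\ytilde_p)\widehat{\dist}(p)^z$. Mass from $q$ near $p$ may be routed by the ANN to a different support point, so the argument must charge such mass through the $4\Gamma$ separation margin of Property~\ref{step1-p1}; this margin is presumably why it is chosen to dominate the $\Gamma$ distortion.
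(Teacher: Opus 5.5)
\textbf{Gap: the $1/2$-lower bound after Stage~1.} Your knapsack comparison for Property~\ref{step2-p2} needs the capped Stage-1 vector to be feasible, i.e., to lie in $[1/2,1]^{\supp(w)}$. So it relies on the claim that after Stage~1 every $p\in\supp(w)$ has $\ytilde_p\ge 1/2$. You assert this but never prove it, and the justification you sketch fails.

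With your radius $2\Gamma\cost(p,\By)^{1/z}$, Property~\ref{step1-p1} gives only the following, for $q\in B_p$ and $p'\in\supp(w)\setminus\{p\}$: $\dist(q,p')\ge 4\Gamma\cost(p,\By)^{1/z}-\dist(p,q)\ge\dist(q,p)$. It does not give $\dist(q,p)<\dist(q,p')/2$. Even a factor~$2$ would not help, because $\nn(q,\supp(w))$ is only $\Gamma$-approximate and may legitimately be $p'$. So the $1/2$ mass you locate in $B_p$ need not be routed to $p$. This is exactly the ``main obstacle'' you name and leave open.

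The lower bound is essential. Take $m=2k$: the greedy sets every entry to $1/2$. A vector of total $k$ that puts $1$ on a point with huge $w(p)\widehat{\dist}(p)^z$, $0$ on a point with small $w(p)\widehat{\dist}(p)^z$, and $1/2$ elsewhere has a much smaller knapsack objective. Without the bound, ``greedy $\le$ Stage-1 objective'' does not follow.

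\textbf{Fix.} Shrink the ball to $B'_p:=\{q:\dist^z(p,q)\le 2\cost(p,\By)\}$.
\begin{itemize}
\item Markov still gives $\sum_{q\in B'_p}y_q\ge 1/2$; this is \cite[Lemma~5]{CharikarGTS99}.
\item For $q\in B'_p$, separation now gives $\dist(q,\supp(w)\setminus\{p\})\ge(4\Gamma-2^{1/z})\cost(p,\By)^{1/z}>\Gamma\dist(q,p)$. So $p$ is the only admissible value of $\nn(q,\supp(w))$, and all of $B'_p$'s mass moves to $p$. This is the paper's \Cref{lem:1/2-lb}.
\end{itemize}
Together with mass preservation, this gives $m\le 2k$ directly. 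It also removes a boundary issue in your disjoint-balls argument: closed balls of radius $2\Gamma\cost(p,\By)^{1/z}$ can touch when Property~\ref{step1-p1} is tight.

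With that in place, the rest of your route works and differs mildly from the paper's. Your direct lower bound $\cost(p,\By)\ge(\Gamma(1+\Gamma))^{-z}[1-\ytilde_p]^+\widehat{\dist}(p)^z$ follows from $\dist(p,\nn(q,\supp(w)))\le(1+\Gamma)\dist(p,q)$ alone; no Markov or separation is needed. The paper instead proves the upper bound $\cost(p,\Bytilde^{(1)})\le(1+\Gamma)^z\cost(p,\By)$ and combines it with the exact cost formula for $1/2$-lower-bounded vectors. One minor point: the Stage-2 total equals $m$, not $k$, when $m<k$, though it is still at most $k$.
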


\paragraph{MPC Implementation.}
The MPC implementation of \Cref{alg:rounding-step2} is quite straightforward. 
Stage~1 and Line~\ref{alg:step2:hatdist} of Stage~2 both require ANN search operation~\ref{operation:ANN}.
Line~\ref{alg:step2:sort} is simply an MPC sorting~\cite{DBLP:conf/isaac/GoodrichSZ11}.
The other lines are trivial to implement in constant rounds.
In sum, the above MPC implementation invokes algorithm $\calR$ $O(1)$ times, requires local memory $s\ge \polylog n + \LM$, runs in $O(\log_s n + \Rd)$ rounds, and uses total memory $O(n\cdot \polylog n) + \GM$.
\end{proof}

\subsubsection{\texorpdfstring{Step~3: Rounding a Solution in $\{0, 1/2, 1\}^P$ to an Integral Solution}{}}
\label{sec:rounding-step3}

In this subsection, we prove \Cref{lem:rounding-step3}, which presents an MPC algorithm for Step~3 of the rounding procedure in \Cref{sec:rounding-overview}, namely, rounding the partially rounded solution $\tilde{\By} \in \{0, 1/2, 1\}^P$ from Step~2 (\Cref{lem:rounding-step2}) to an integral solution, i.e., a center set $C \subseteq P$.
We restate \Cref{lem:rounding-step3} below.

\RoundingStepThree*

\begin{algorithm}[t]
    \caption{Rounding $\Bytilde\in \{0,1/2,1\}^P$ into an integral solution $C\subseteq P$}
    \DontPrintSemicolon
    \label{alg:rounding-step3}
        $F \gets \{p\in P: \ytilde_p = 1\}$ and $H \gets \{p\in P: \ytilde_p = 1 / 2\}$
        \label{alg:step3:HF}\;

        For each $p \in H$, compute $h(p) \in H \setminus \{p\}$ as described in \Cref{claim:monotone-ANN}
        \label{alg:step3:h}\;

$L\gets \Theta(\log\Delta)$ be sufficiently large
\label{alg:step3:L}\;
        \For{every $\ell\in [L]$}{ \label{alg:step3:for}
            $H_\ell \gets \{p\in H: 2^{\ell-1}\le \dist(p, h(p)) \le 2^{\ell}\}$
            \label{alg:step3:Hi}\;

            $H_\ell' \gets \{p\in H_\ell: h(p) \in H_\ell ~\text{or}~ \exists q\in H_\ell, h(q) = p\}$
            \label{alg:step3:Uprime}
            \tcp*{excluding isolated centers}

            Compute a $(2^{\ell+2}, \alphaRS\cdot 2^{\ell+2})$-ruling set $R_\ell$ of $H_\ell'$
            \label{alg:step3:RS}\;

        }
        \Return $C\gets F\cup (\cup_{\ell\in[L]} R_\ell)$\label{alg:step3:return}
\end{algorithm}

Our algorithm for \Cref{lem:rounding-step3} is based on the offline algorithm presented in \Cref{alg:rounding-step3}, whose correctness is established in \Cref{lem:step3-size,lem:step3-error}.
At a high level,
the algorithm begins by identifying the fully-open centers $F := \{p \in P : \ytilde_p = 1\}$ and the half-open centers $H := \{p \in P : \ytilde_p = 1/2\}$.
It directly adds the fully-open centers to the center set $C$ and selects at most half of the half-open centers to include in $C$. The former task is trivial, while for the latter, the algorithm needs to compute an ANN $h(p) \in H\setminus\{p\}$ for each half-open center $p \in H$, with a \emph{monotonicity} property (see \Cref{claim:monotone-ANN}).
Then,
the algorithm partitions $H$ into $\{H_\ell\}_{\ell \in [L]}$ based on the distance $\dist(p, h(p))$ and constructs a ruling set for each $H_\ell$ (excluding some ``isolated'' centers; see Line~\ref{alg:step3:Uprime}).
These ruling-set points, which we show are at most half of the half-open centers, are ultimately added to $C$.

\begin{claim}\label{claim:monotone-ANN}
    For every $p \in H$, define $h(p) \in H$ as follows (with ties broken arbitrarily):
    \begin{equation}\label{eq:monotone-ANN}
        h(p) ~:=~ \argmin_{q\in H: q = \nn(p, H\setminus \{p\}) \text{ or } \nn(q, H\setminus \{q\}) = p} \dist(p,q),
    \end{equation}
    Then, the following guarantees hold. 
    \begin{itemize}
        \item for every $p\in P$, we have $h(p)\in H\setminus \{p\}$ and $\dist(p, h(p))\le \Gamma\cdot \dist(p, H \setminus \{p\})$; and
        \item for every $p\in P$, letting $q := h(p)$, we have $\dist(p, h(p)) \ge \dist(q, h(q))$.
    \end{itemize}
\end{claim}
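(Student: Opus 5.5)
Write $a(p) := \nn(p, H\setminus\{p\})$ for the ANN returned by operation~\ref{operation:ANN}, and let $N(p) := \{q \in H : q = a(p) \text{ or } a(q) = p\}$ be the candidate set in \eqref{eq:monotone-ANN}. Since $|H| \ge 2$ is implicit whenever the claim is used, $a(p)$ is well defined. It belongs to $H\setminus\{p\}$, and $a(p) \in N(p)$, so $N(p)$ is nonempty and $h(p)$ exists.

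For the first guarantee, we use that every $q \in N(p)$ differs from $p$. If $q = a(p)$, this holds because $a(p) \in H\setminus\{p\}$. If $a(q) = p$, it holds because $a(q) \in H\setminus\{q\}$. Hence $h(p) \in H\setminus\{p\}$. Since $h(p)$ minimizes $\dist(p,\cdot)$ over $N(p) \ni a(p)$, we get
\[
\dist(p,h(p)) \le \dist(p,a(p)) \le \Gamma\cdot \dist(p,H\setminus\{p\}),
\]
where the last step is the ANN guarantee.

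For the monotonicity guarantee, let $q := h(p)$. The key observation is that the relation defining $N(\cdot)$ is symmetric: $q \in N(p)$ if and only if $p \in N(q)$. Indeed, $q = a(p)$ is the second clause of the membership $p \in N(q)$, and $a(q) = p$ is the first clause. So $p$ is a feasible candidate in the minimization defining $h(q)$, and therefore
\[
\dist(q,h(q)) \le \dist(q,p) = \dist(p,h(p)).
\]

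I expect no real obstacle here. The only points needing care are that arbitrary tie-breaking does not matter, since we only compare minimum values, and the degenerate case $|H| \le 1$, which is excluded or handled by convention. The statement's ``for every $p \in P$'' should be read as ``for every $p \in H$''.
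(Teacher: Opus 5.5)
Your proposal is correct and matches the paper's proof: the paper likewise writes $h(p)$ as the minimizer of $\dist(p,\cdot)$ over the candidate set $V_p$, notes $p\notin V_p$, and compares with the ANN $\tilde h(p)\in V_p$ for the first guarantee. For monotonicity it shows $p\in V_{h(p)}$ by the same two cases you use for symmetry, namely $h(p)=\tilde h(p)$ versus $\tilde h(h(p))=p$.
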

\begin{proof}
    For simplicity, let $\tilde{h}(p) := \nn(p, H \setminus \{p\})$ and $V_p:= \{\tilde{h}(p)\}\cup\{q\in H: \tilde{h}(q) = p\}$; it holds that $p\notin V_p$ since $\tilde{h}(q)\neq q$ for all $q\in H$.
    Then, \eqref{eq:monotone-ANN} is equivalent to $h(p) := \argmin_{q\in V_p} \dist(p, q)$.
It suffices to show that the two properties of \Cref{claim:monotone-ANN} hold for a fixed point $p\in H$.
    
    For the first property, by definition $h(p) = \argmin_{q\in  V_p} \dist(p,q)$, which implies that $h(p)\in V_p\subseteq  H\setminus \{p\}$ and $\dist(p,h(p))\le \dist(p, \tilde{h}(p))$. Since $\tilde{h}(p)$ is a $\Gamma$-ANN of $p$ in $H\setminus\{p\}$, it follows that $h(p)$ is also a $\Gamma$-ANN of $p$ in $H \setminus \{p\}$, i.e., $\dist(p, h(p))\le \Gamma\cdot \dist(p, H\setminus \{p\})$.
    
    For the second property, let $q := h(p) \in V_p = \{\tilde{h}(p)\}\cup\{p'\in H: \tilde{h}(p') = p\}$.
    Observe that: 
    \begin{itemize}
        \item If $q = \tilde{h}(p)$, then $p \in V_q = \{\tilde{h}(q)\}\cup\{p'\in H: \tilde{h}(p') = q\}$.
        \item Otherwise, $q \in \{p' \in H : \tilde{h}(p') = p\}\implies \tilde{h}(q) = p$, which also implies that $p \in V_q$.
    \end{itemize}
    Since $h(q) = \argmin_{p'\in V_q} \dist(q,p')$, it follows that $\dist(q, h(q)) \le \dist(q, p) = \dist(p, h(p))$.

    This finishes the proof.
\end{proof}

Before we analyze the correctness of \Cref{alg:rounding-step3}, we have the following fact that the collections $\{H_{\ell}\}_{\ell\in[L]}$ form a partition of $H$.

\begin{fact}\label{fact:H-partition}
    $\{H_{\ell}\}_{\ell\in[L]}$ form a partition of $H$ for a sufficiently large $L = \Theta(\log \Delta)$.
\end{fact}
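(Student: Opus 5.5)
We need to show that every $p \in H$ lies in exactly one $H_\ell$ with $\ell \in [L]$. Since $H_\ell$ is defined by $2^{\ell-1} \le \dist(p, h(p)) \le 2^\ell$, the plan has two parts: bound $\dist(p,h(p))$ from both sides, and handle the boundary values where the closed intervals overlap.

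\emph{Step 1: a range for $\dist(p,h(p))$.} By \Cref{claim:monotone-ANN}, $h(p) \in H \setminus \{p\}$, so $h(p) \neq p$. The preprocessing assumption (\Cref{sec:prelim}) gives $\dist(p,q) \ge 2$ for distinct points, hence
\[
\dist(p,h(p)) \ge 2 = 2^{1}.
\]
For the upper bound, also from \Cref{sec:prelim}, $\dist(p,h(p)) \le \max_{q,q'\in P}\dist(q,q') \le O(\Delta)$. Choosing $L = \Theta(\log\Delta)$ large enough that $2^{L} \ge \max_{q,q'}\dist(q,q')$, we get $\dist(p,h(p)) \in [2, 2^L]$. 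The intervals $[2^{\ell-1},2^\ell]$ for $\ell \in [L]$ cover $[1, 2^L] \supseteq [2, 2^L]$. Therefore every $p \in H$ belongs to at least one $H_\ell$, and each $H_\ell \subseteq H$ by definition.

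\emph{Step 2: disjointness.} This is the only subtle point. Consecutive intervals share the endpoint $2^{\ell}$, so a point with $\dist(p,h(p))$ exactly a power of two would lie in two consecutive sets. I would resolve this by reading the definition with a half-open convention, e.g. $2^{\ell-1} < \dist(p,h(p)) \le 2^\ell$. Equivalently, ties are assigned to a single level, say the smaller index. This keeps the sets pairwise disjoint and does not affect any later use, since all later bounds only need $\dist(p,h(p)) \in [2^{\ell-1}, 2^{\ell}]$.

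Under this convention, each point falls into exactly one level, and Steps 1 and 2 together give the partition. I expect no real obstacle here; the only care needed is making the boundary convention explicit and checking that $L = \Theta(\log \Delta)$ covers the maximum distance.
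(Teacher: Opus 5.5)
Your argument is correct and is essentially the paper's: the paper also just combines $2 \le \dist(p,h(p)) \le O(\Delta)$ (since $h(p)\neq p$, plus the preprocessing assumption) with the choice of a sufficiently large $L=\Theta(\log\Delta)$ and the definition of $H_\ell$. Your extra observation is valid and the paper's one-line proof skips it: with the closed intervals as written, a point whose $\dist(p,h(p))$ is an exact power of two lies in two consecutive $H_\ell$'s, and disjointness is genuinely needed for $\sum_{\ell}|H_\ell| = |H|$ in \Cref{lem:step3-size}; the half-open convention $2^{\ell-1} < \dist(p,h(p)) \le 2^{\ell}$ repairs this and leaves \Cref{claim:H-distance,claim:R-distance} intact.
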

\begin{proof}
    This fact follows directly from the fact that $2 \le \dist(p, h(p)) \le O(\Delta)$ (by preprocessing; see \Cref{sec:prelim}), the definition that $L = \Theta(\log \Delta)$ is sufficiently large (Line~\ref{alg:step3:L}), and the construction of each $H_\ell$ (Line~\ref{alg:step3:Hi}).
\end{proof}

The following lemma guarantees that \Cref{alg:rounding-step3} always returns a center set of size at most $k$, as required in \Cref{lem:rounding-step3}.

\begin{lemma}[Size Bound]\label{lem:step3-size}
    The center set $C \subseteq P$ returned by \Cref{alg:rounding-step3} has size at most $k$.
\end{lemma}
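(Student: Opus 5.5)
We need $|C| = |F| + \sum_\ell |R_\ell| \le k$. By Property (2a), $\|\Bytilde\|_1 = |F| + |H|/2 \le k$. It therefore suffices to show $\sum_{\ell\in[L]} |R_\ell| \le |H|/2$. Since $\{H_\ell\}$ partitions $H$ (\Cref{fact:H-partition}), the plan is to show $|R_\ell| \le |H_\ell|/2$ for every $\ell$, by a charging (matching) argument that assigns to each ruling-set point $p\in R_\ell$ a distinct partner in $H_\ell \setminus R_\ell$.

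\textbf{Choosing the partner.} Fix $p\in R_\ell \subseteq H_\ell'$. By the definition of $H_\ell'$, either $h(p)\in H_\ell$, or there is $q\in H_\ell$ with $h(q)=p$; let $\pi(p)$ be such a point (prefer $h(p)$ if it lies in $H_\ell$). In both cases $\pi(p)\in H_\ell$, $\pi(p)\neq p$, and $\dist(p,\pi(p))\le 2^{\ell}$. In the first case this holds since $p\in H_\ell$; in the second since $q\in H_\ell$.

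\textbf{Partner is not in $R_\ell$.} The ruling set $R_\ell$ has separation $2^{\ell+2} > 2^\ell$, so $\pi(p)\notin R_\ell$ whenever $\pi(p)\in H_\ell'$. Indeed $\pi(p)\in H_\ell'$: if $\pi(p)=h(p)\in H_\ell$, then $\pi(p)$ has a preimage $p\in H_\ell$; if $\pi(p)=q$ with $h(q)=p\in H_\ell$, then $h(q)\in H_\ell$. Hence $\pi(p)\in H_\ell\setminus R_\ell$.

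\textbf{Injectivity.} If $\pi(p)=\pi(p')=u$ for distinct $p,p'\in R_\ell$, then $\dist(p,p')\le \dist(p,u)+\dist(u,p')\le 2^{\ell+1} < 2^{\ell+2}$, contradicting separation. So $\pi$ is an injection from $R_\ell$ into $H_\ell\setminus R_\ell$. This gives $|R_\ell|\le |H_\ell|-|R_\ell|$, i.e.\ $|R_\ell|\le |H_\ell|/2$.

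Summing over $\ell$ gives $|C|\le |F|+|H|/2\le k$. Note also that $F$ and the $R_\ell\subseteq H$ are disjoint, so the union bound on $|C|$ is valid.

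The only delicate step is checking that the partner lies in $H_\ell$ at the same level and within distance $2^\ell$. This is exactly what the exclusion of isolated centers in $H_\ell'$ guarantees; the monotonicity in \Cref{claim:monotone-ANN} is not needed for the size bound, only presumably for the cost bound.
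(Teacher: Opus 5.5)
Your proof is correct and follows the paper's argument. Both reduce via Property~(2a) to showing $|R_\ell| \le |H_\ell|/2$, then inject $R_\ell$ into the non-ruling-set points of level $\ell$, using the $2^{\ell+2}$ separation both to keep the partner outside $R_\ell$ and to make the map injective. The only cosmetic difference is that the paper first shows $\dist(p, H_\ell'\setminus\{p\}) \le 2^\ell$ using the same witness you use, and then maps $p$ to its exact nearest neighbor $h^*(p)$ in $H_\ell'\setminus\{p\}$, whereas you map $p$ directly to the witness $\pi(p)$.
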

\begin{proof}
According to \Cref{alg:rounding-step3}, we have $|C| = |F| + \sum_{\ell\in[L]}|R_\ell|$.
It suffices to show that $|R_\ell|\le |H_\ell|/2$ for every $\ell\in [L]$; by \Cref{fact:H-partition}, this implies that
\begin{equation*}
    |C|~\le~ |\{p\in P: \ytilde_p = 1\}| + \frac{\sum_{\ell\in [L]}|H_{\ell} |}{2} ~=~ |\{p\in P: \ytilde_p = 1\}| + \frac{|\{p\in P: \ytilde_p = 1/2\}|}{2} ~=~ \|\Bytilde\|_1 \le k,
\end{equation*}
where the last step follows from Property~\ref{step2-p3} of \Cref{lem:rounding-step2}.
Consider a fixed $\ell \in [L]$. We have the following claim.

\begin{claim}\label{claim:H-distance}
    For every $p\in H_\ell'$, it holds that $\dist(p, H_\ell' \setminus \{p\})\le 2^{\ell}$.
\end{claim}
\begin{proof}
    For every $p \in H_\ell'\subseteq H_\ell$, 
    the definition of $H_{\ell}$ in Line~\ref{alg:step3:Hi} implies that $\dist(p, h(p))\le 2^{\ell}$, and
    the definition of $H_\ell'$ in Line~\ref{alg:step3:Uprime} implies that either $h(p) \in H_\ell$ or there exists $q \in H_\ell$ such that $h(q) = p$. 
    \begin{itemize}
        \item If $h(p) \in H_\ell$, we have $h(p) \in H_\ell'$ by Line~\ref{alg:step3:Uprime}, which implies $\dist(p, H_\ell'\setminus\{p\}) \le \dist(p, h(p)) \le 2^{\ell}$.
        \item If there exists $q \in H_\ell$ such that $h(q) = p$, then such a point $q$ must belong to $H_\ell' \subseteq H_\ell$ by Line~\ref{alg:step3:Uprime}.
        Therefore, $\dist(p, H_\ell'\setminus\{p\}) \le \dist(p, q) = \dist(q, h(q)) \le 2^{\ell}$. 
    \end{itemize}
    Combining both cases, we conclude the proof of \Cref{claim:H-distance}.
\end{proof}
Now we are ready to prove that $|R_{\ell}| \le |H_{\ell}| / 2$.
For every $p\in H_\ell'$, let $h^*(p)\in H_\ell'\setminus \{p\}$ denote the exact nearest neighbor of $p$ (with ties broken arbitrarily), i.e., $\dist(p, h^*(p)) = \dist(p, H'_\ell \setminus \{p\})$. 
By \Cref{claim:H-distance}, we have $\dist(p, h^*(p))\le 2^{\ell}$.
Recall that $R_\ell$ is an $(2^{\ell + 2}, \alphaRS\cdot 2^{\ell+2})$-ruling set of $H_\ell'$.
We have the following observations.
\begin{itemize}
    \item For every $p\in R_\ell$, it holds that $h^*(p)\notin R_\ell$. 
    This follows from the fact that $\dist(p, R_\ell \setminus \{p\}) > 2^{\ell + 2}$ (since $R_\ell$ is an $(2^{\ell + 2}, \alphaRS\cdot 2^{\ell+2})$-ruling set), whereas $\dist(p, h^*(p)) \le 2^{\ell}$ .
    \item For every $p \neq q \in R_\ell$, it holds that $h^*(p) \neq h^*(q)$. Otherwise, by the triangle inequality, we would have $\dist(p,q) \le \dist(p, h^*(p)) + \dist(q, h^*(q)) \le 2^{\ell + 1}$, which contradicts the fact that $\dist(p,q) \ge 2^{\ell + 2}$ for all distinct $p, q \in R_\ell$.
\end{itemize}
Based on these observations, we obtain an injective mapping $p \mapsto h^*(p)$ from $R_\ell$ to $H'_\ell \setminus R_\ell$. This implies that $|R_\ell| \le |H'_\ell| / 2 \le |H_\ell|/2$, which, as previously discussed, concludes that $|C| \le k$. 
\end{proof}

The following lemma analyzes the cost of the center set returned by \Cref{alg:rounding-step3}.

\begin{lemma}[Cost Analysis]\label{lem:step3-error}
    The center set $C\subseteq P$ returned by \Cref{alg:rounding-step3} satisfies that 
    \begin{equation*}
        \cost_w(C) \le \alphaRS^z \cdot \Gamma^{O(z)} \cdot \cost_w(\Bytilde).
    \end{equation*}
\end{lemma}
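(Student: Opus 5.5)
We bound the cost point by point. Fix $p\in\supp(w)$. By Property~\ref{step2-p1}, $\ytilde_p\in\{1/2,1\}$, and we compare $\dist^z(p,C)$ against $\cost(p,\Bytilde)$.

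\textbf{Lower bound on $\cost(p,\Bytilde)$.} If $\ytilde_p=1$ then $p\in F\subseteq C$, so $\dist(p,C)=0$ and there is nothing to show. Otherwise $p\in H$. Since $\ytilde_p=1/2$, any feasible assignment for $p$ must send mass at least $1/2$ to points of $\supp(\Bytilde)\setminus\{p\}\subseteq\supp(w)\setminus\{p\}$. Hence
\[
\cost(p,\Bytilde)\ \ge\ \tfrac12\,\dist^z\big(p,\supp(w)\setminus\{p\}\big)=:\tfrac12 D_p^z .
\]
It remains to show $\dist(p,C)\le \alphaRS\cdot\Gamma^{O(1)}\cdot D_p$. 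The difficulty is that $C$ only relates to $H$, while $D_p$ concerns all of $\supp(w)$, which also contains $F$.

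\textbf{Case 1: $D_p$ is attained by a point of $F$, up to a factor.} Specifically, suppose $\dist(p,F)\le \dist(p,H\setminus\{p\})$. Then $\dist(p,C)\le\dist(p,F)=D_p$ and we are done.

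\textbf{Case 2: $D_p=\dist(p,H\setminus\{p\})$.} Let $\ell$ be such that $p\in H_\ell$. By \Cref{claim:monotone-ANN}, $2^{\ell-1}\le \dist(p,h(p))\le \Gamma D_p$.
\begin{itemize}
\item If $p\in H'_\ell$, the coverage property of $R_\ell$ gives $\dist(p,C)\le \dist(p,R_\ell)\le \alphaRS\cdot 2^{\ell+2}\le 8\alphaRS\Gamma D_p$.
\item If $p$ is isolated, i.e.\ $p\in H_\ell\setminus H'_\ell$, we follow the chain $p_0=p$, $p_{i+1}=h(p_i)$. By the monotonicity in \Cref{claim:monotone-ANN}, the distances $\dist(p_i,p_{i+1})$ are non-increasing. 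Since $p$ is isolated, $h(p)$ lies in a strictly lower level. We follow the chain until reaching a point $q$ that lies in some $H'_j$, or until the chain enters a 2-cycle; in a 2-cycle both points share a level and are therefore non-isolated.
\end{itemize}

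\textbf{Main obstacle: controlling the chain length.} The total length of the chain must be bounded by $O(\Gamma D_p)$. Each isolated point drops at least one level, so the successive distances at isolated steps decrease geometrically, giving a geometric sum of order $O(2^\ell)$. The terminal point $q\in H'_j$ then satisfies $\dist(q,R_j)\le \alphaRS 2^{j+2}\le \alphaRS 2^{\ell+2}$.

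A subtlety remains: the chain may stay in one level for several steps. This can happen only at non-isolated points, and we stop at the first non-isolated point. Consecutive isolated points have strictly decreasing levels, so the chain length is at most $\sum_{j\le \ell}2^{j}\le 2^{\ell+1}$. Combining, $\dist(p,C)\le 2^{\ell+1}+\alphaRS 2^{\ell+2}=O(\alphaRS\Gamma D_p)$.

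\textbf{Conclusion.} Summing over $p$ with weights $w(p)$ gives
\[
\cost_w(C)\ \le\ 2\,(O(\alphaRS\Gamma))^z\,\cost_w(\Bytilde)\ =\ \alphaRS^z\,\Gamma^{O(z)}\,\cost_w(\Bytilde),
\]
as claimed.
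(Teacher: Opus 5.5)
Your proof is correct and takes essentially the same route as the paper. Both use the per-point lower bound $\cost(p,\Bytilde)\ge \tfrac12\dist^z(p,\supp(w)\setminus\{p\})$, split on whether the nearest support point lies in $F$ or in $H$, and rely on the monotone neighbor map $h$ with the isolated/non-isolated dichotomy. Your explicit chain $p_{i+1}=h(p_i)$ with strictly decreasing levels is just the unrolled form of the paper's induction on $\ell$ in \Cref{claim:R-distance}; the paper absorbs each step's $2^{\ell}$ into the slack of the bound $\alphaRS\cdot 2^{\ell+2}$ rather than summing a geometric series.
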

\begin{proof}
We need the following technical claim to bound the distance of each point to $C$.
    \begin{claim}\label{claim:R-distance} 
    For every $p\in H$, it holds that $\dist(p, C) \le O(\alphaRS\cdot \Gamma)\cdot \dist(p, H\setminus \{p\})$.
\end{claim}
\begin{proof}
    We prove a stronger statement that for every $\ell\in[L]$ and every $p\in H_\ell$, it holds that 
    \begin{equation}\label{eq:R-distance}
        \dist(p, \cup_{j\in [\ell]} R_j) ~\le~ \alphaRS\cdot 2^{\ell+2}.
    \end{equation}
    Since $\cup_{j\in [\ell]} R_j \subseteq C$ (by Line~\ref{alg:step3:return}) and $2^{\ell - 1}\le \dist(p, h(p)) \le \Gamma\cdot \dist(p, H\setminus \{p\})$ (by \Cref{claim:monotone-ANN} and Line~\ref{alg:step3:Hi}), it follows that $\dist(p, C) \le \dist(p, \cup_{j\in [\ell]} R_j) \le O(\alphaRS\cdot \Gamma)\cdot \dist(p, H\setminus \{p\})$, as desired.
    We prove \eqref{eq:R-distance} by induction on $\ell\in[L]$.

    \emph{Base Case: $\ell = 1$.} 
    For every $p\in H_{1}$, letting $q:=h(p)$, we have $1\le \dist(q, h(q))\le \dist(p, h(p))\le 2$ by \Cref{claim:monotone-ANN}. This implies that $q= h(p)\in H_1$ as well. Therefore, $p\in H_1'$ by Line~\ref{alg:step3:Uprime}. 
    Since $R_1$ is an $(2^{3}, \alphaRS\cdot 2^3)$-ruling set of $H_1'$, we have $\dist(p, R_1) \le \alphaRS\cdot 2^{3}\le \alphaRS\cdot 2^{\ell + 2}$.

    \emph{Inductive Step.} 
    Fix $\ell \in [2: L]$. We assume, as an induction hypothesis, that \eqref{eq:R-distance} holds for every $\ell' \le \ell - 1$ and every $p \in H_{\ell'}$.
    We then establish \eqref{eq:R-distance} for every $p\in H_{\ell}$. We conduct a case analysis.

    \begin{itemize}
        \item If $p\in H'_\ell$, then $\dist(p, R_{\ell})\le \alphaRS\cdot 2^{\ell + 2}$ follows from the fact that $R_{\ell}$ is a $(2^{\ell + 2}, \alphaRS\cdot 2^{\ell + 2})$-ruling set of $H_{\ell}'$.
        \item If $p\in H_\ell \setminus H'_{\ell}$, then it must be that $q:= h(p)\notin H_\ell$. Since $\dist(q, h(q)) \le \dist(p, h(p))$ by \Cref{claim:monotone-ANN}, we have $q\in H_{\ell'}$ for some $\ell' \le \ell - 1$. Hence, by induction hypothesis, it holds that 
        \begin{equation*}
            \dist(q, \cup_{j\in [\ell']}R_j) \le \alphaRS\cdot 2^{\ell' + 2} \le \alphaRS\cdot 2^{\ell + 1}. 
        \end{equation*}
        Since $\dist(p, q) = \dist(p, h(p)) \le 2^{\ell}$ by Line~\ref{alg:step3:Hi}, applying the triangle inequality, we obtain that 
        \begin{equation*}
            \dist(p, \cup_{j\in [\ell']}R_j) ~\le~ \dist(p, q) + \dist(q, \cup_{j\in [\ell']}R_j) ~\le~ 2^{\ell} + \alphaRS\cdot 2^{\ell + 1} ~\le~ \alphaRS\cdot 2^{\ell + 2}.
        \end{equation*}
        Since $\cup_{j\in [\ell']} R_{j} \subseteq \cup_{j\in [\ell]} R_{j}$, we conclude that $\dist(p, \cup_{j\in [\ell]} R_{j})\le \dist(p, \cup_{j\in [\ell']} R_{j}) \le \alphaRS\cdot 2^{\ell+2}$.
    \end{itemize}
    By the principle of induction, \eqref{eq:R-distance} holds for every $\ell \in [L]$ and every $p\in H_{\ell}$.
    This finishes the proof of \Cref{claim:R-distance}.
\end{proof}
Now we are ready to prove \Cref{lem:step3-error}. 
It suffices to show that, for a fixed point $p\in P$ with $w(p) > 0$, it holds that $\dist^z(p, C)\le \alphaRS^z\cdot \Gamma^{O(z)}\cdot \cost(p, \Bytilde)$. 
If $p\in C$, then it is clear that $\dist^z(p, C) = 0\le \bardist(p, \Bytilde)$.
Hence, in the following, we focus on the non-trivial case where $p\notin C$.

Recall that $w(p) > 0$ implies that $\ytilde_p \in \{1/2,1\}$ (Property~\ref{step2-p1} of \Cref{lem:rounding-step2}). Since $p\notin C$,  
according to \Cref{alg:rounding-step3}, we must have $\ytilde_p = \frac{1}{2}$ (i.e., $p\in H$ is half-open).
Then, it is easy to see that $\bardist(p, \Bytilde) = \frac{1}{2} \cdot \dist^z(p, \supp(w) \setminus \{p\})$.
Let $q\in \supp(w) \setminus \{p\}$ be the nearest neighbor of $p$, i.e., $\dist^z(p,q) = \dist^z(p, \supp(w) \setminus \{p\}) = 2\cdot \bardist(p, \Bytilde)$. We conduct the following case study.

\begin{itemize}
    \item If $\ytilde_q = 1$ (i.e., $q\in F$ is fully-open), then according to \Cref{alg:rounding-step3}, we have $q\in C$, which implies that $\dist^z(p, C) \le \dist^z(p, q) = 2\cdot \bardist(p, \Bytilde)$.
    \item If $\ytilde_q = \frac{1}{2}$ (i.e., $q\in H$ is also half-open), then $\dist^z(p, H\setminus \{p\})\le \dist^z(p,q)\le 2\cdot \bardist(p, \Bytilde)$.
    By \Cref{claim:R-distance}, we have $\dist^z(p, C) \le O(\alphaRS^z\cdot \Gamma^{z})\cdot \dist^z(p, H\setminus \{p\})$. As a result, we conclude that $\dist^z(p, C)\le O(\alphaRS^z\cdot \Gamma^{z})\cdot \bardist(p, \Bytilde)$.
\end{itemize}
Overall, we have $\dist^z(p, C)\le O(\alphaRS^z\cdot \Gamma^{z})\cdot \bardist(p, \Bytilde)$.
\Cref{lem:step3-error} then follows by summing over all $p \in P$ with $w(p) > 0$.
\end{proof}

Finally, we prove \Cref{lem:rounding-step3} by presenting an MPC implementation of \Cref{alg:rounding-step3}.
\begin{proof}[Proof of \Cref{lem:rounding-step3}]
    Our algorithm for \Cref{lem:rounding-step3} is derived from an MPC implementation of the offline \Cref{alg:rounding-step3}, so its correctness follows directly from the correctness of \Cref{alg:rounding-step3} in \Cref{lem:step3-size,lem:step3-error}.

    We first discuss how to compute $h(p)$ for each $p \in H$ as described in \Cref{claim:monotone-ANN} in the MPC setting.
    To compute the approximate nearest neighbors $\nn(p, H \setminus \{p\})$ for each $p \in H$, we can invoke the ANN search operation~\ref{operation:ANN}.
    Next, we need to compute $h(p)$ as in \eqref{eq:monotone-ANN}.
    In fact, it suffices to compute the point $\argmin_{q\in H: \nn(q, H\setminus \{q\}) = p} \dist(p,q)$ for every $p\in P$.
    This is a standard aggregation task, and we can use the same approach of \cite[Section~3]{CzumajGJK024} or \cite[Lemma~5.10]{Cohen-addadKP26} to implement this step: 
    Briefly, we first sort the points $q \in H$ in lexicographic order of $\nn(q, H \setminus \{q\})$ (see~\cite{DBLP:conf/isaac/GoodrichSZ11} for MPC sorting).
    Then, for every $p \in P$, the set $\{q \in H : \nn(q, H \setminus \{q\}) = p\}$ is stored on a segment of machines with contiguous IDs, with all machines fully occupied except for at most two boundary machines.
    We can therefore use a converge-cast~\cite{Ghaffari19} to aggregate the minimum distance $\dist(p,q)$ together with the corresponding point $q$ for all segments in parallel.

    For the \textbf{for}-loop in Lines~\ref{alg:step3:for} to~\ref{alg:step3:RS}, we discuss the implementation for a fixed level $\ell \in [L]$; it suffices to run the same process in parallel for all $\ell \in [L]$, increasing the total memory by a factor of $L = \Theta(\log \Delta)$.
    The computation of $H_{\ell}$ in Line~\ref{alg:step3:Hi} is trivial. As for the computation of $H'_{\ell}$ in Line~\ref{alg:step3:Uprime}, we need to check for each point $p \in P$ whether the condition ``$h(p) \in H_{\ell}$ or $\exists q \in H_{\ell}, h(q) = p$'' holds. The condition ``$h(p) \in H_{\ell}$'' is easy to check, while for ``$\exists q \in H_{\ell}, h(q) = p$'', 
    it can also be viewed as an aggregation over the set $\{q \in H : h(q) = p\}$ (where we check whether any points exist in $H_\ell$), and thus we can similarly use sorting and converge-cast to achieve this, as in the computation of $h$.
    Line~\ref{alg:step3:RS} merely requires invoking the metric ruling set operation~\ref{operation:RS}.

Therefore, the above MPC implementation invokes algorithm $\calR$ $O(\log\Delta)$ times, requires local memory $s\ge \polylog (n\Delta) + \LM$, runs in $O(\log_s n + \Rd)$ rounds, and uses total memory $O(n\cdot \polylog(n\Delta)) + \GM\cdot O(\log \Delta)$.
\end{proof}

    \bibliography{ref.bib}
    \bibliographystyle{alphaurl}
    \begin{appendices}
\section{Preprocessing}
\label{sec:aspect-ratio}

In this section, we discuss the MPC preprocessing for dimension reduction and aspect ratio reduction separately; both are fully-scalable and round-efficient.

\paragraph{Dimension Reduction.}
To reduce the dimension to $O(\log n)$, we apply an MPC implementation of the JL transform due to~\cite{AhanchiAHKZ23}, which is fully-scalable and runs in $O(1)$ rounds.

\paragraph{Aspect Ratio Reduction}
In the rest of this section, we present an MPC algorithm for reducing the aspect ratio of the input dataset~$P$ for \kzC.
Specifically, given a parameter $1\le \alpha \le \poly(n)$ and an $n$-point dataset $P \subseteq \R^d$, the algorithm computes a new dataset of the form $f(P) = \{ f(p) : p \in P \}$ for some injective mapping $f : P \to \R^d$, such that
\begin{enumerate}[label = (P\arabic*), leftmargin = *]
    \item\label{aspect-ratio-1} the aspect ratio of $f(P)$ is at most $\poly(n)$, and 
    \item\label{aspect-ratio-2} for any $\alpha$-approximate solution $f(C)\subseteq f(P)$ to \kzC on $f(P)$, the set $C \subseteq P$ is an $2^{O(z)}\alpha$-approximate solution to \kzC on $P$.
\end{enumerate}
The algorithm requires local memory $s\ge \poly(d\log n)$, runs in $O(\log_s n)$ rounds, and uses total memory $O(n\cdot \poly(d\log n))$.
Moreover, the algorithm stores the mapping $f$ as a set of pairs $\{(p, f(p)) : p \in P\}$, so that computing $f^{-1}(C)$ for any given $C \subseteq f(P)$ can be done in $O(\log_s n)$ rounds (see e.g.~\cite{AndoniSSWZ18}).

An offline version of our algorithm is presented in \Cref{alg:reduce-aspect-ratio}.
In the following, we first analyze the correctness of \Cref{alg:reduce-aspect-ratio}, and then discuss its MPC implementation.

\begin{algorithm}[t]
    \caption{Reducing the aspect ratio}
    \DontPrintSemicolon
    \label{alg:reduce-aspect-ratio}
    Compute a value $\eta$ such that $\OPTcl \le \eta \le \beta\cdot \OPTcl$, where $\beta = n^{O(z)}$
    \label{alg:reduce-aspect-ratio:eta}\;
     
    Let $\alpha' = 2^{O(z)}\cdot \alpha$ be sufficiently large\;

    Let $p_1, \dots, p_n$ be an arbitrary ordering of the points in $P$\;    

    Let $T\gets \{p_{i,j} \in \R: p_i = (p_{i,1},\dots,p_{i,d})\in P, j\in [d] \}$, i.e. the set of all coordinate values
    \label{alg:reduce-aspect-ratio:T}

    \tcp*{remove duplicates}

    Sort $T$ in non-decreasing order, denoting the resulting sequence by $t_1, \dots, t_{m}$ ($m = |T|\le nd$)
    \label{alg:reduce-aspect-ratio:compute-sort}\;

    For every $i\in [m]$, compute $h(t_i) \gets t_1 + \sum_{j = 2}^{i} \min\{t_j - t_{j-1},\, (\alpha'\cdot \eta)^{1/z} \}$
    \label{alg:reduce-aspect-ratio:compute-h}\;

    For every $p_i = (p_{i,1},\dots, p_{i,d})\in P$, compute $g(p_i) \gets (h(p_{i,1}),\dots, h(p_{i,d}))$
    \label{alg:reduce-aspect-ratio:map}\;

    For every $i\in [n]$, 
    let $f(p_i) \gets$ the nearest point of $g(p_i)$ in $\mathrm{Net}_i := \{u\cdot \frac{\eta^{1/z}}{n \sqrt{d} \cdot \beta^{1/z}} + i\cdot \frac{\eta^{1/z}}{n^{100} \sqrt{d} \cdot \beta^{1/z}} : u\in \Z \}^d$
    \label{alg:reduce-aspect-ratio:round}

    \Return $f(P)$
\end{algorithm}

\paragraph{Proof of Property~\ref{aspect-ratio-1}.}
Fix $p_i,p_j\in P$ with $i\neq j$. We first show that $\dist(f(p_i), f(p_j)) \ge \frac{(\OPTcl)^{1/z}}{\poly(n)}$.
According to Line~\ref{alg:reduce-aspect-ratio:round}, it is easy to verify that $\min_{p\in \mathrm{Net}_i, q\in \mathrm{Net}_j} \dist(p,q) \ge \frac{\eta^{1/z}}{n^{100}\sqrt{d}\cdot \beta^{1/z}} \ge \frac{(\OPTcl)^{1/z}}{\poly(n)}$.
Therefore, $\dist(f(p_i), f(p_j)) \ge \frac{(\OPTcl)^{1/z}}{\poly(n)}$ because $p_i\in \mathrm{Net}_i$ and $p_j\in \mathrm{Net}_j$.

We then show that $\dist(f(p_i), f(p_j)) \le \poly(n) \cdot (\alpha \cdot \OPTcl)^{1/z}$.
We need the following fact.

\begin{fact}\label{fact:reduce-aspect-ratio-move}
    For every $p,q\in P$, $|\dist(g(p), g(q)) - \dist(f(p), f(q))| \le \frac{2(\OPTcl)^{1/z}}{n}$.
\end{fact}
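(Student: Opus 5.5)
The plan is to bound the movement from $g$ to $f$ coordinate-wise and then use the triangle inequality twice. By Line~\ref{alg:reduce-aspect-ratio:round}, $f(p_i)$ is the point of $\mathrm{Net}_i$ nearest to $g(p_i)$. $\mathrm{Net}_i$ is a shifted axis-parallel grid whose side length is $\delta := \frac{\eta^{1/z}}{n\sqrt{d}\cdot\beta^{1/z}}$. Since the shift is applied identically in every coordinate, every point of $\R^d$ lies within $\ell_\infty$-distance $\delta/2$ of $\mathrm{Net}_i$. Hence $\|g(p_i)-f(p_i)\|_2 \le \sqrt{d}\cdot\delta/2 = \frac{\eta^{1/z}}{2n\cdot\beta^{1/z}}$ for every $i$.

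Next I would use the guarantee of Line~\ref{alg:reduce-aspect-ratio:eta}, namely $\eta \le \beta\cdot\OPTcl$. This gives $\eta^{1/z}/\beta^{1/z} \le (\OPTcl)^{1/z}$, so each point moves by at most $\frac{(\OPTcl)^{1/z}}{2n}$ when passing from $g$ to $f$.

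Finally I would apply the triangle inequality. For any $p,q\in P$,
\[
\dist(f(p),f(q)) \le \dist(f(p),g(p)) + \dist(g(p),g(q)) + \dist(g(q),f(q)),
\]
and symmetrically with the roles of $f$ and $g$ exchanged. Each displacement term is at most $\frac{(\OPTcl)^{1/z}}{2n}$, so $|\dist(g(p),g(q)) - \dist(f(p),f(q))| \le \frac{(\OPTcl)^{1/z}}{n}$, which is within the claimed bound $\frac{2(\OPTcl)^{1/z}}{n}$.

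There is no real obstacle here. The only point needing care is the claim that the nearest net point lies within $\sqrt d\,\delta/2$, which is a covering-radius statement for a shifted cubic lattice. I would state it explicitly: rounding each coordinate independently to the nearest element of the one-dimensional shifted grid gives a net point of $\mathrm{Net}_i$, so the true nearest point is at least as close.
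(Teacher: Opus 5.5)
Your proof is correct and follows the same route as the paper: bound how far each point moves from $g(p)$ to $f(p)$ using the spacing of $\mathrm{Net}_i$, convert this bound using $\eta \le \beta\cdot\OPTcl$, and apply the triangle inequality at both endpoints. The only difference is that the paper charges each point the full cell diameter $\sqrt{d}\,\delta$, which gives its factor $2$. Your covering-radius bound $\sqrt{d}\,\delta/2$ is a factor $2$ sharper and still implies the stated claim.
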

\begin{proof}
    The fact follows directly from Line~\ref{alg:reduce-aspect-ratio:round}, where the transformation from $g(p)$ to $f(p)$ only changes the pairwise distance by at most $2 \cdot \frac{\eta^{1/z}}{n \cdot \beta^{1/z}} \le \frac{2(\OPTcl)^{1/z}}{n}$.
\end{proof}

By \Cref{fact:reduce-aspect-ratio-move},
it suffices to show that $\dist(g(p_i), g(p_j)) \le \poly(n) \cdot (\alpha \cdot \OPTcl)^{1/z}$.
Notice that the construction of $h$ in Line~\ref{alg:reduce-aspect-ratio:compute-h} ensures that for every two points, their corresponding coordinates after applying the mapping $h$ differ by at most $m \cdot (\alpha' \cdot \eta)^{1/z}\le \poly(n) \cdot (\alpha \cdot \OPTcl)^{1/z}$.
Therefore, $\dist(g(p_i), g(p_j)) \le \poly(n) \cdot (\alpha \cdot \OPTcl)^{1/z}$ follows directly, which, as stated, implies that $\dist(f(p_i), f(p_j)) \le \poly(n) \cdot (\alpha \cdot \OPTcl)^{1/z}$.

Combining the above results, we confirm that the aspect ratio of $f(P)$ is at most $\poly(n)$.

\paragraph{Proof of Property~\ref{aspect-ratio-2}.}
For every solution $C \subseteq P$, define $\cost^g(C) := \sum_{p \in P} \dist^z(g(p), g(C))$ as the clustering cost after applying the map $g$, and define $\cost^f(C) := \sum_{p \in P} \dist^z(f(p), f(C))$ as the clustering cost after applying the map $f$.

We first have the following fact.

\begin{fact}\label{fact:reduce-aspect-ratio-h}
    For every $p_i, p_j\in P$, the following guarantees hold.
    \begin{itemize}
        \item If $\dist(p_i, p_j)\le  (\alpha'\cdot \OPTcl)^{1/z}$, then $\dist(g(p_i), g(p_j)) = \dist(p_i,p_j)$.
        \item If $\dist(g(p_i), g(p_j)) <  (\alpha'\cdot \OPTcl)^{1/z}$, then $\dist(g(p_i), g(p_j)) = \dist(p_i,p_j)$.
    \end{itemize}
\end{fact}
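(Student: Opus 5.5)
The plan is to reduce both bullets to a single coordinatewise statement about the map $h$ from Line~\ref{alg:reduce-aspect-ratio:compute-h}, and then sum over coordinates.

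\textbf{Coordinatewise structure of $h$.} For two coordinate values $t_a \le t_b$ in $T$ we have
\[
h(t_b)-h(t_a)=\sum_{j=a+1}^{b}\min\{t_j-t_{j-1},\,(\alpha'\eta)^{1/z}\}.
\]
So $h$ is nondecreasing, and $|h(t_b)-h(t_a)|\le |t_b-t_a|$. Moreover, equality holds if and only if no consecutive gap $t_j-t_{j-1}$ with $a<j\le b$ exceeds $(\alpha'\eta)^{1/z}$. Whenever some gap in the range does exceed this threshold, its contribution alone is $(\alpha'\eta)^{1/z}$, and therefore $|h(t_b)-h(t_a)|\ge(\alpha'\eta)^{1/z}$. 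This gives a dichotomy for each coordinate $c$: either $|h(p_{i,c})-h(p_{j,c})|=|p_{i,c}-p_{j,c}|$, or $|h(p_{i,c})-h(p_{j,c})|\ge(\alpha'\eta)^{1/z}$. In the second case, also $|p_{i,c}-p_{j,c}|>(\alpha'\eta)^{1/z}$.

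\textbf{First bullet.} Suppose $\dist(p_i,p_j)\le(\alpha'\OPTcl)^{1/z}$. Every coordinate satisfies $|p_{i,c}-p_{j,c}|\le\dist(p_i,p_j)\le(\alpha'\OPTcl)^{1/z}\le(\alpha'\eta)^{1/z}$, using $\OPTcl\le\eta$. Every gap lying between $p_{i,c}$ and $p_{j,c}$ is at most their difference, hence at most the threshold. Each coordinate is therefore in the equality case, and summing squares gives $\dist(g(p_i),g(p_j))=\dist(p_i,p_j)$.

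\textbf{Second bullet.} Suppose $\dist(g(p_i),g(p_j))<(\alpha'\OPTcl)^{1/z}\le(\alpha'\eta)^{1/z}$. Then every coordinate difference of $g$ is strictly below $(\alpha'\eta)^{1/z}$, which rules out the second case of the dichotomy. Hence every coordinate is preserved exactly, and the distances are equal.

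\textbf{Main obstacle.} The only delicate point is the boundary case of the dichotomy, namely a gap exactly equal to the threshold. In that case $\min$ returns the full gap, so it still counts as preserved, and the dichotomy remains consistent. Everything else is routine.
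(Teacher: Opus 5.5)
Your proposal is correct and follows essentially the same argument as the paper: a coordinatewise analysis of $h$ showing that each coordinate difference is either preserved exactly or is at least $(\alpha'\eta)^{1/z}$, combined with $\OPTcl \le \eta$. The paper proves the second bullet via the contrapositive, while you argue it directly and spell out the telescoping-sum structure of $h$ more explicitly, but the reasoning is the same.
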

\begin{proof}
    We first prove the first guarantee.
    By Line~\ref{alg:reduce-aspect-ratio:map}, $g(p_i) = (h(p_{i,1}), \dots, h(p_{i,d}))$ and $g(p_j) = (h(p_{j,1}), \dots, h(p_{j,d}))$. For every $u \in [d]$, since $|p_{i,u} - p_{j,u}| \le \dist(p_i,p_j) \le (\alpha' \cdot \OPTcl)^{1/z}\le (\alpha' \cdot \eta)^{1/z}$, we have $h(p_{i,u}) - h(p_{j,u}) = p_{i,u} - p_{j,u}$ by Line~\ref{alg:reduce-aspect-ratio:compute-h}. Therefore, we have $\dist(g(p_i), g(p_j)) = \dist(p_i, p_j)$.

    We then prove the second guarantee by considering its contrapositive. 
    If $\dist(g(p_i), g(p_j)) \neq \dist(p_i,p_j)$, then there exists $u \in [d]$ such that $h(p_{i,u}) - h(p_{j,u}) \neq p_{i,u} - p_{j,u}$. 
    By Line~\ref{alg:reduce-aspect-ratio:compute-h}, this implies
    $|h(p_{i,u}) - h(p_{j,u})| \ge (\alpha' \cdot \eta)^{1/z} \ge (\alpha' \cdot \OPTcl)^{1/z}$,
    and therefore $
    \dist(g(p_i), g(p_j)) \ge (\alpha' \cdot \OPTcl)^{1/z}$.
\end{proof}

Let $C^{*} \subseteq P$ be the optimal solution to \kzC on $P$.
For every $p \in P$, let $c^* \in C^*$ denote the closest center in $C^*$ to $p$ (with ties broken arbitrarily); we have $\dist(p, c^{*}) \le (\OPTcl)^{1/z}$, which implies that $\dist(g(p), g(c^*)) = \dist(p, c^{*})$ by \Cref{fact:reduce-aspect-ratio-h}. Therefore, $\cost^g(C^*)\le \cost(C^*)$. 
Then, by \Cref{fact:reduce-aspect-ratio-move} and the generalized triangle inequality (\Cref{lem:triangle-inequality}), we have $\cost^f(C^*) \le 2^{z-1}\cdot \cost^g(C^*) + 2^{z-1}\cdot n\cdot (\frac{2(\OPTcl)^{1/z}}{n})^z \le 2^{O(z)}\cdot \cost(C^*)$.
This means that the optimal \kzC objective on the final dataset $f(P)$ is at most $2^{O(z)}$ times that of the input dataset $P$.

Then, consider an arbitrary solution $f(C) \subseteq f(P)$ with $|f(C)| = k$ (and thus $|C| = k$), and suppose that $f(C)$ is $\alpha$-approximate to \kzC on $f(P)$. Then we have $\cost^f(C) \le 2^{O(z)} \cdot \alpha \cdot \OPTcl$. By \Cref{fact:reduce-aspect-ratio-move} and the generalized triangle inequality (\Cref{lem:triangle-inequality}), it follows that $\cost^g(C) \le 2^{O(z)} \alpha \cdot \OPTcl$. Finally, recall that $\alpha' = 2^{O(z)} \alpha$ is sufficiently large; for every $p\in P$, 
since $\dist^z(g(p), g(C)) \le \cost^g(C) < \alpha'\cdot \OPTcl$, by 
\Cref{fact:reduce-aspect-ratio-h} we have that $\dist^z(p, C) \le \dist^z(g(p), g(C))$. 
Therefore, $\cost(C) \le \cost^g(C) \le 2^{O(z)}\alpha\cdot \OPTcl$.
This confirms that $C$ is a $2^{O(z)} \alpha$-approximate solution to \kzC on the input dataset $P$.

\paragraph{MPC Implementation of \Cref{alg:reduce-aspect-ratio}.}
Line~\ref{alg:reduce-aspect-ratio:eta} requires computing a coarse estimate of the optimal \kzC objective on $P$.
To this end, we invoke the MPC \textsc{$k$-Center} algorithm from~\cite[Lemma 7.2]{CzumajG0J25} to compute a value $E > 0$ that approximates the optimal \textsc{$k$-Center} objective within a factor of $\poly(n)$.
It is easy to verify that the value $\eta := n \cdot E^z$ approximates the optimal \kzC objective within a factor of $\beta = n^{O(z)}$, i.e., $\OPTcl \le \eta \le \beta \cdot \OPTcl$.

Then, Line~\ref{alg:reduce-aspect-ratio:T} is trivial, Line~\ref{alg:reduce-aspect-ratio:compute-sort} is simply an MPC sorting, and Line~\ref{alg:reduce-aspect-ratio:compute-h} can be achieved via an MPC prefix-sum computation (see~\cite{DBLP:conf/isaac/GoodrichSZ11} for both MPC sorting and MPC prefix-sum computation).

Finally, the rounding step in Line~\ref{alg:reduce-aspect-ratio:round} can be easily implemented in a single MPC round.

Overall, the complexity of the above implementation is dominated by the MPC \textsc{$k$-Center} algorithm from~\cite[Lemma 7.2]{CzumajG0J25}, which requires local memory $s \ge \poly(d \log n)$, runs in $O(\log_s n)$ rounds, and uses total memory $O(n \cdot \poly(d \log n))$.

\section{\texorpdfstring{Proof of \Cref{lem:correctness-kzC}}{}}
\label{sec:missing-proof-value}

\begin{fact}
\label{obs:lambda0}
    $\OPTfl[\lambda_{0}] = \cost(\By^{(0)}) + \lambda_{0} \cdot \|\By^{(0)}\|_{1}$.
\end{fact}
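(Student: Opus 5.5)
The plan is to show two inequalities. First, $\By^{(0)} = \ones$ is a feasible solution of the fractional facility location problem, so $\OPTfl[\lambda_0] \le \cost(\By^{(0)}) + \lambda_0\|\By^{(0)}\|_1$ holds trivially. Second, any feasible $\By$ costs at least $\lambda_0 n$, which is exactly the value of $\By^{(0)}$. Since $\lambda_0 = 2^{0\cdot z} = 1$, we have $\cost(\By^{(0)}) = 0$: each $p$ is assigned fully to itself ($x_{p,p}=1\le y^{(0)}_p$), so the objective of $\By^{(0)}$ is $n$. It therefore remains to show $\cost(\By) + \|\By\|_1 \ge n$ for every feasible $\By$.

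For the lower bound, fix $\By$ with $\|\By\|_1\ge 1$ and an optimal feasible assignment $\Bx\sim\By$. Split each point's unit demand into a self part and an external part. We have $x_{p,p}\le y_p$, so $\sum_p x_{p,p}\le \|\By\|_1$. The remaining mass $\sum_{q\ne p} x_{p,q}\ge 1-x_{p,p}$ is routed to points at distance at least $2$, by the preprocessing assumption $\min_{p\ne q}\dist(p,q)\ge 2$. This external mass therefore costs at least $2^z(1-x_{p,p})\ge 1-x_{p,p}$. Summing over $p$ gives
\[
\cost(\By)+\|\By\|_1 \;\ge\; \sum_p (1-x_{p,p}) + \sum_p x_{p,p} \;=\; n .
\]
One small check is needed: if $x_{p,p}>1$ for some $p$, we first cap it at $1$, which is harmless because we can take $\sum_q x_{p,q}=1$ without loss of generality.

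There is no real obstacle here. The only points to get right are recognizing that $\lambda_0=1$ and invoking the minimum-distance-$2$ normalization from \Cref{sec:prelim}. Together these make the per-unit cost of both opening and connecting at least $1$, so they cannot be beaten by $\By^{(0)}$, whose per-unit cost is exactly $1$.
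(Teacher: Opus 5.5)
Your proof is correct and follows the paper's argument. The upper bound comes from feasibility of $\By^{(0)}=\ones$. The lower bound uses $\lambda_0=1$ and the minimum-distance normalization to show that every feasible solution pays at least one unit per point, either as opening (via $x_{p,p}\le y_p$) or as connection cost to a distinct point. The paper compresses your per-point split into the single inequality $\cost(\By^{*})\ge n-\|\By^{*}\|_1$, which is exactly what your decomposition establishes.
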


\begin{proof}
Denote by $\By^{*} \in \R_{\ge 0}^{P}$ the optimal solution to the \textsc{Fractional Power-$z$ facility Location} problem with $\lambda = \lambda_0$; we have 
$\OPTfl[\lambda_{0}]
= \cost(\By^{*}) + \lambda_{0} \cdot \|\By^{*}\|_{1}
\ge (n - \|\By^{*}\|_{1}) + \lambda_{0} \cdot \|\By^{*}\|_{1}
= n
= \cost(\By^{(0)}) + \lambda_{0} \cdot \|\By^{(0)}\|_{1}$,
where the second step holds since $\min_{p\neq q\in P} \dist(p,q) \ge 1$ (see \Cref{sec:prelim}), and the last step holds since $\By^{(0)}$ is the all-ones vector (see Line~\ref{alg:kzC:y0}).
We also have
$\OPTfl[\lambda_{0}]
\le \cost(\By^{(0)}) + \lambda_{0} \cdot \|\By^{(0)}\|_{1}$ given its optimality.
Combining both equations finishes the proof.
\end{proof}

\begin{fact}\label{obs:lambdaL}
    $\|\By^{(L)}\|_{1} \le 2$.
\end{fact}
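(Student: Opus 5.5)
The plan is to bound each coordinate of $\By^{(L)}$ directly from its defining equation, using only a crude inequality and the choice $\lambda_L = 2^{Lz}$. No LMP machinery is needed. Recall from \Cref{thm:lmp-fl} and its proof that $\By^{(L)} = \Bralg/\lambda_L$. Here each $\ralg_p$ is the solution of the $\MPbeta$ equation on the distorted distance $\Tilde{\dist}$, namely
\[
\sum_{q\in P}\big[\ralg_p - \beta\cdot \Tilde{\dist}^z(p,q)\big]^+ = \lambda_L ,
\]
with $\beta = 2^{-z-1}\Gamma^{-2z}$ and $\dist \le \Tilde{\dist} \le \Gamma^2\cdot\dist$.

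The key step uses $[a]^+\ge a$ termwise, which gives
\[
\lambda_L \ge n\,\ralg_p - \beta\sum_{q\in P}\Tilde{\dist}^z(p,q).
\]
This rearranges to
\[
\ralg_p \le \frac{1}{n}\Big(\lambda_L + \beta\sum_{q\in P}\Tilde{\dist}^z(p,q)\Big).
\]
This bound holds without any case analysis on which terms are active. Let $D = O(\Delta)$ be the maximum pairwise distance (see \Cref{sec:prelim}). Then $\beta\,\Tilde{\dist}^z(p,q) \le \beta\Gamma^{2z}D^z = 2^{-z-1}D^z$. Summing over $p\in P$ yields
\[
\|\Bralg\|_1 \le \lambda_L + n\cdot 2^{-z-1}D^z .
\]

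It then remains to check that $n\,2^{-z-1}D^z \le \lambda_L = 2^{Lz}$. This holds as soon as $L \ge \log_2 D + \frac{1}{z}\log_2 n$, which is guaranteed by taking $L = \Theta(\log(n\Delta))$ sufficiently large, as in Line~\ref{alg:kzC:L}. Crucially, the product $\beta\Gamma^{2z} = 2^{-z-1}$ does not depend on $\Gamma$, so $L$ need not depend on $\Gamma$. With this choice, $\|\By^{(L)}\|_1 = \|\Bralg\|_1/\lambda_L \le 2$.

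The only step needing care is confirming that the distortion factor $\Gamma^{2z}$ cancels against $\beta$, so that $L$ does not need to grow with $\Gamma$. The rest is routine.
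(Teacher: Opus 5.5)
Your argument is correct, but it takes a different route from the paper. The paper uses the subroutine of \Cref{thm:lmp-fl} purely as a black box. Because $\cost(\By^{(L)})\ge 0$, the LMP guarantee alone forces $\lambda_L\|\By^{(L)}\|_1\le \OPTfl[\lambda_L]$. Opening a single unit of facility anywhere gives $\OPTfl[\lambda_L]\le \lambda_L+n\Delta^{O(z)}$, and taking $L$ large finishes.

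You instead open the box. You use the fact, established inside the proof of \Cref{thm:lmp-fl}, that the output is $\Bralg/\lambda_L$ with $\Bralg$ solving the $\MPbeta$ equation on $\Tilde{\dist}$. You then average that defining equation via $[a]^+\ge a$ to get $\|\Bralg\|_1\le\lambda_L+n\,2^{-z-1}D^z$.

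The paper's route is a one-liner given what is already proved, and it is implementation-agnostic. It stays valid if Line~\ref{alg:kzC:yi} is replaced by any other LMP algorithm, and it never has to look at how $\beta$ and $\Gamma$ interact.

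Your route is more elementary and self-contained. It never invokes the much harder LMP analysis of \Cref{lem:FMP-distance-distortion}, only the defining equation and the upper distortion bound. It therefore shows that this fact is a structural property of MP-type solutions rather than a consequence of their approximation quality. The price is dependence on the specific form of the subroutine, which is exactly why you needed the check $\beta\Gamma^{2z}=2^{-z-1}$. That check is right, and any smaller $\beta$ would only help.
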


\begin{proof}
First, observe that $\OPTfl[\lambda_{L}] \le \lambda_{L} + n\cdot \Delta^{O(z)}$ by considering any solution $\By'$ with $\|\By'\|_{1} = 1$. Then, by \Cref{thm:lmp-fl}, the solution $\By^{(L)}$ satisfies that
\begin{align*}
    \lambda_{L}\cdot \|\By^{(L)}\|_{1}
    ~\le~ \OPTfl[\lambda_{L}]
    ~\le~ \lambda_{L} +  n\Delta^{O(z)}
~\le~ 2\lambda_{L},
\end{align*}
where the last step holds since $\lambda_{L} = 2^{L z}$ and $L = \Theta(\log(n\Delta))$ is sufficiently large. Hence, $\|\By^{(L)}\|_{1} \le 2$, which finishes the proof.
\end{proof}

We are now ready to prove \Cref{lem:correctness-kzC}.

\begin{proof}[Proof of \Cref{lem:correctness-kzC}]
    By \Cref{obs:lambda0,obs:lambdaL}, there exists some $\ell \in [L]$ such that $\|\By^{(\ell-1)}\|_{1} \ge k$ and $\|\By^{(\ell)}\|_{1} \le k$. Consequently, Line~\ref{alg:kzC:istar} selects an arbitrary such index $\ell^{*} \in [L]$, and the output of \Cref{alg:kzC} is $\By := \alpha\cdot \By^{(\ell^{*}-1)} + (1 - \alpha)\cdot \By^{(\ell^{*})}$, where $\alpha \in [0, 1]$ is chosen (in Line~\ref{alg:kzC:alpha}) such that $\|\By\| = k$.
    Let $\Bx^{(\ell^{*}-1)} \sim \By^{(\ell^{*}-1)}$ and $\Bx^{(\ell^{*})} \sim \By^{(\ell^{*})}$ be the optimal feasible assignment with respect to $\By^{(\ell^{*}-1)}$ and $\By^{(\ell^{*})}$, respectively. It is easy to verify that $\Bx := \alpha\cdot \Bx^{(\ell^{*}-1)} + (1- \alpha)\cdot \Bx^{(\ell^{*})}$ is feasible with respect to $\By = \alpha\cdot \By^{(\ell^{*}-1)} + (1 - \alpha)\cdot \By^{(\ell^{*})}$. As a result,
    \begin{align*}
        \cost(\By) ~&\le~ \asncost(\Bx) ~=~ \sum_{p\in P}\sum_{q\in P} \left(\alpha\cdot \x^{(\ell^{*}-1)}_{p,q} + (1- \alpha)\cdot \x^{(\ell^{*})}_{p,q} \right)\cdot \dist^z(p,q) \\
        &=~ \alpha\cdot \asncost(\Bx^{(\ell^{*}-1)}) + (1-\alpha)\cdot \asncost(\Bx^{(\ell^{*})}) ~=~ \alpha\cdot \cost(\By^{(\ell^{*}-1)}) + (1-\alpha)\cdot \cost(\By^{(\ell^{*})}).
    \end{align*}
    By \Cref{thm:lmp-fl}, we have 
    \begin{align*}
        \cost(\By^{(\ell^{*}-1)}) ~&\le~ 2^{O(z^2)} \cdot \Gamma^{2z} \cdot \left(\OPTfl[\lambda_{\ell^{*}-1}] - \lambda_{\ell^{*}-1} \cdot \|\By^{(\ell^{*}-1)}\|_{1}\right)\\
        &\le~ 2^{O(z^2)} \cdot \Gamma^{2z} \cdot \left(2^z\cdot \OPTfl[\lambda_{\ell^{*}-1}] - 2^z\cdot \lambda_{\ell^{*}-1} \cdot \|\By^{(\ell^{*}-1)}\|_{1}\right)\\
        &=~ 2^{O(z^2)} \cdot \Gamma^{2z} \cdot \left(2^z\cdot \OPTfl[\lambda_{\ell^{*}-1}] - \lambda_{\ell^{*}} \cdot \|\By^{(\ell^{*}-1)}\|_{1}\right),
    \end{align*}
    where the last step follows from the fact that $\lambda_{\ell^{*}} = 2^z\lambda_{\ell^{*}-1}$; and 
    \begin{equation*}
        \cost(\By^{(\ell^{*})}) ~\le~ 2^{O(z^2)} \cdot \Gamma^{2z} \cdot \left(\OPTfl[\lambda_{\ell^{*}}] - \lambda_{\ell^{*}} \cdot \|\By^{(\ell^{*})}\|_{1}\right).
    \end{equation*}
    Therefore,
    \begin{align*}
        \cost(\By) ~&\le~ 2^{O(z^2)} \cdot \Gamma^{2z} \cdot \left(2^z\alpha\cdot  \OPTfl[\lambda_{\ell^{*}-1}] + (1-\alpha)\cdot \OPTfl[\lambda_{\ell^{*}}]- \lambda_{\ell^{*}} \cdot \|\By\|_{1} \right).
    \end{align*}
    Then, let $\By^{*}\in \R_{\ge 0}^{P}$ with $\|\By^{*}\|_{1} = k$ denote the optimal solution to the \textsc{Fractional Euclidean} \kzC problem. We have $2^z\alpha\cdot \OPTfl[\lambda_{\ell^{*}-1}] \le 2^z\alpha\cdot ( \lambda_{\ell^{*}-1} \cdot k + \cost(\By^{*})) = \alpha\cdot \lambda_{\ell^{*}} \cdot k + 2^z\alpha\cdot \cost(\By^{*})$ and $(1-\alpha)\cdot \OPTfl[\lambda_{\ell^{*}}] \le (1-\alpha)\cdot \lambda_{\ell^{*}} \cdot k + (1-\alpha)\cdot \cost(\By^{*})$. Hence, 
    \begin{align*}
        \cost(\By) ~&\le~ 2^{O(z^2)} \cdot \Gamma^{2z} \cdot \left(2^z\alpha\cdot  \OPTfl[\lambda_{\ell^{*}-1}] + (1-\alpha)\cdot \OPTfl[\lambda_{\ell^{*}}]- \lambda_{\ell^{*}} \cdot \|\By\|_{1} \right)\\
        &\le~ 2^{O(z^2)} \cdot \Gamma^{2z} \cdot \Big( \lambda_{\ell^{*}} \cdot k + (2^z\alpha + 1 - \alpha)\cdot \cost(\By^{*}) - \lambda_{\ell^{*}} \cdot \|\By\|_{1}\Big)\\
        \mr{$\|\By\|_{1} = k$}
        &\le~ 2^{O(z^2)} \cdot \Gamma^{2z} \cdot \cost(\By^{*}) ~=~ 2^{O(z^2)} \cdot \Gamma^{2z} \cdot \OPTclfr.
    \end{align*}
    This finishes the proof.
\end{proof}
         \section{\texorpdfstring{Proof of \Cref{lem:correctness-rounding-step2}}{}}
\label{sec:correctness-rounding-step2}

We analyze the two stages of \Cref{alg:rounding-step2} separately.

\paragraph{Analysis of Stage 1.} 
For now, we let $\Bytilde$ denote the solution computed by Stage 1. 
Clearly, Stage~1 preserves the total mass of the fractional centers.
Thus, we have the following fact.

\begin{fact}\label{fact:equal-1norm}
    The solution $\Bytilde$ computed by Stage 1 (Lines~\ref{alg:step2:init}~to~\ref{alg:step2:aggregate}) satisfies that $\|\Bytilde\|_1 = \|\By\|_1 = k$.
\end{fact}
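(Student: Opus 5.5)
The plan is a direct mass-conservation argument over the two lines of Stage~1. Line~\ref{alg:step2:init} initializes $\Bytilde \gets \By$, which gives $\|\Bytilde\|_1 = \|\By\|_1 = k$ at the outset. It then remains to show that the loop in Line~\ref{alg:step2:aggregate} leaves $\|\Bytilde\|_1$ unchanged.

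Each iteration of that loop handles a point $p\in P\setminus\supp(w)$ and consists of two moves. First it adds the current value $\ytilde_p$ to $\ytilde_{q}$, where $q := \nn(p,\supp(w))$. By the specification of the ANN operation~\ref{operation:ANN}, $q$ lies in $\supp(w)$, so $q\neq p$. Second it resets $\ytilde_p$ to $0$. The net effect on $\sum_{t\in P}\ytilde_t$ is $+\ytilde_p-\ytilde_p=0$, so each iteration preserves the total.

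I would also check that the iterations do not interfere with one another. Every recipient $q$ lies in $\supp(w)$, and every donor $p$ lies outside $\supp(w)$. Hence no donor ever receives mass, and each donor's value when it is processed is still its original $y_p$. The outcome is therefore independent of the processing order; the updates can equivalently be carried out simultaneously, as in the MPC implementation. In that simultaneous view the final vector is
\[
\ytilde_q = y_q + \sum_{p\notin\supp(w):\ \nn(p,\supp(w))=q} y_p \quad \text{for } q\in\supp(w), \qquad \ytilde_p = 0 \quad \text{otherwise}.
\]
Summing this over $q$ reproduces exactly $\sum_{p\in P}y_p=k$, since each $y_p$ is counted once.

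The only point needing care is that $\supp(w)$ must be nonempty, so that $\nn(\cdot,\supp(w))$ is defined. This holds because the weights produced in Stage~3 of \Cref{alg:rounding-step1} sum to $n\ge 1$. Beyond that, the argument is purely bookkeeping, and I expect no real obstacle.
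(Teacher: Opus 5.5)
Your proposal is correct and follows the same reasoning as the paper, which justifies this fact in one line: Stage~1 only transfers mass from points outside $\supp(w)$ to their approximate nearest neighbors inside $\supp(w)$, so $\|\Bytilde\|_1$ is preserved. Your extra checks, that donors never receive mass (so the result is order-independent) and that $\supp(w)$ is nonempty, spell out details the paper leaves implicit.
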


The following lemma establishes that the solution computed by Stage 1 is $1/2$-lower bounded.

\begin{lemma}\label{lem:1/2-lb}
The solution $\Bytilde$ computed by Stage 1 (Lines~\ref{alg:step2:init}~to~\ref{alg:step2:aggregate}) satisfies that, for every $p \in P$, $\ytilde_p \ge 1/2$ if $w(p) > 0$, and $\ytilde_p = 0$ otherwise.
\end{lemma}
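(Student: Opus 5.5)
The second half of the statement is immediate from Line~\ref{alg:step2:aggregate}. For $p\notin\supp(w)$ the algorithm sets $\ytilde_p \gets 0$. No later update in Stage~1 adds mass to such a $p$, because every transfer goes to $\nn(p,\supp(w))\in\supp(w)$. All the work is therefore in showing $\ytilde_q\ge 1/2$ for every $q\in\supp(w)$. After Stage~1,
\[
\ytilde_q ~\ge~ y_q + \sum_{p\notin\supp(w):\,\nn(p,\supp(w))=q} y_p ,
\]
so it suffices to exhibit, for each $q\in\supp(w)$, a collection of points carrying at least $1/2$ unit of original mass $\By$, all of which are either $q$ itself or are redirected to $q$.

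The natural candidate is the ball $B_q:=\{p\in P:\dist(p,q)\le 2^{1/z}(\cost(q,\By))^{1/z}\}$. By Markov's inequality applied to $q$'s optimal fractional assignment $\Bx^*_q\le\By$, which has total mass $1$ and cost $\cost(q,\By)$, at most $1/2$ of the assignment mass lies at $\dist^z(p,q)>2\cost(q,\By)$. Hence $\sum_{p\in B_q} y_p\ge\sum_{p\in B_q}x^*_{q,p}\ge 1/2$. If $\cost(q,\By)=0$, then $y_q\ge 1$ and we are done directly.

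It remains to show that every $p\in B_q\setminus\{q\}$ with $y_p>0$ is either outside $\supp(w)$ and redirected to $q$, or does not exist. Two facts handle this.
\begin{itemize}
\item \textbf{No other support point lies in $B_q$.} Property~\ref{step1-p1} gives $\dist(q,q')\ge 4\Gamma(\cost(q,\By))^{1/z}$ for every $q'\in\supp(w)\setminus\{q\}$. Since $4\Gamma>2^{1/z}$, no such $q'$ lies in $B_q$.
\item \textbf{Every $p\in B_q$ is redirected to $q$.} Because $q\in\supp(w)$, we have $\dist(p,\supp(w))\le\dist(p,q)\le 2^{1/z}c$, where $c:=(\cost(q,\By))^{1/z}$. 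The ANN guarantee therefore places $\nn(p,\supp(w))$ within $\Gamma\,2^{1/z}c$ of $p$, and hence within $(\Gamma+1)2^{1/z}c\le 4\Gamma c$ of $q$ (using $z\ge1$ and $\Gamma\ge1$). Suppose this ANN were some $q'\ne q$ in $\supp(w)$. Then $\dist(q,q')\le(\Gamma+1)2^{1/z}c$, while Property~\ref{step1-p1} requires $\dist(q,q')\ge 4\Gamma c$.
\end{itemize}

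The second fact is the main obstacle: the argument needs the strict inequality $(\Gamma+1)2^{1/z}c<4\Gamma c$. This holds whenever $\Gamma>1$. In the boundary case $\Gamma=1$ the ANN is exact, so the redirected point is a true nearest support point. It can then be chosen to be $q$, since $\dist(p,q)\le 2^{1/z}c<2c\le\dist(q,q')-\dist(p,q)\le\dist(p,q')$. If the constants turn out to be too tight, I would shrink the Markov radius (say to $(2\cost(q,\By))^{1/z}$ with a slightly smaller threshold), which leaves room under the $4\Gamma$ separation. With this in hand, all mass in $B_q$ ends up at $q$, so $\ytilde_q\ge 1/2$.
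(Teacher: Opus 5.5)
Your proposal is correct and is essentially the paper's proof: the paper invokes \Cref{claim:lemma5Charikar} (Charikar et al.'s Lemma~5) for the $1/2$-mass bound that you rederive via Markov, and then uses Property~\ref{step1-p1} plus the triangle inequality to show that every point of the ball other than its center lies outside $\supp(w)$ and has the center as its unique $\Gamma$-approximate nearest neighbor in $\supp(w)$. Your $\Gamma=1$ aside has a slip, since $2^{1/z}c<2c$ fails when $z=1$. The paper's strict inequality $(4\Gamma-2^{1/z})c>2\Gamma c$ degenerates at the same corner $\Gamma=z=1$, and both are repaired by noting that Markov in fact gives mass at least $1/2$ on the open ball $\{p:\dist^z(p,q)<2\cost(q,\By)\}$; this makes all your inequalities strict for every $\Gamma\ge 1$ and removes the need for a case split.
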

\begin{proof}

For every point $p \in P$ with $w(p) = 0$ (i.e., $p \in P \setminus \supp(w)$), according to Stage 1 (Lines~\ref{alg:step2:init} to~\ref{alg:step2:aggregate}), we have $\ytilde_p = 0$ because its value is transferred to another point in Line~\ref{alg:step2:aggregate}.

We then consider a fixed point $p\in P$ with $w(p) > 0$ (i.e., $p\in \supp(w)$).
We need a claim from~\cite[Lemma 5]{CharikarGTS99}, which we restate below using our notation.
\begin{claim}[{\cite[Lemma 5]{CharikarGTS99}}]\label{claim:lemma5Charikar}
    For every $p\in P$, $\sum_{q\in P: \dist^z(p,q)\le 2\cdot \bardist(p, \By)} y_q \ge 1/2$.
\end{claim}
Consider a fixed $q \in P \setminus \{p\}$ with $\dist^z(p,q) \le 2 \cdot \bardist(p, \By)$. We have the following observations:
\begin{itemize}
    \item $q\notin \supp(w)$; otherwise, it would contradict $\dist^z(p, \supp(w) \setminus \{p\}) \ge (4\Gamma)^z \cdot \bardist(p, \By) > 2 \cdot \bardist(p, \By)$ (Property~\ref{step1-p1} of \Cref{lem:rounding-step1}). 
    \item $p$ is the unique $\Gamma$-approximate nearest neighbor of $q$ in $\supp(w)$. This follows from the following derivation: 
    \begin{align*}
    \dist(q, \supp(w) \setminus \{p\}) ~&\ge~ \dist(p, \supp(w) \setminus \{p\}) - \dist(p, q) ~\ge~ (4\Gamma - 2^{1/z})\cdot \left(\bardist(p, \By)\right)^{1/z} \\
    &>~ 2\Gamma\cdot \left(\bardist(p, \By)\right)^{1/z} 
    ~\ge~ \Gamma\cdot \dist(p,q).
    \end{align*}
\end{itemize}
Combining the above observations, and according to Line~\ref{alg:step2:aggregate}, the value $y_q$ is transferred and added to $\ytilde_p$.
As a result, the value of $\ytilde_p$ (immediately after Stage 1 ends) is at least the sum of the values of $y_q$ over all $q \in P$ such that $\dist^z(p,q) \le 2\bardist(p, \By)$. Combining this fact with \Cref{claim:lemma5Charikar}, it follows that
\begin{equation*}
    \ytilde_p ~\ge~ \sum_{q \in P : \dist^z(p,q) \le 2\bardist(p, \By)} y_q ~\ge~ 1/2.
\end{equation*}
This finishes the proof.
\end{proof}

As a direct consequence of \Cref{fact:equal-1norm,lem:1/2-lb}, the
support of the weight function is at most~$2k$.

\begin{fact}\label{fact:suppw}
    The solution $\Bytilde$ computed by Stage 1 (Lines~\ref{alg:step2:init}~to~\ref{alg:step2:aggregate}) satisfies that  $m = |\supp(w)| \le 2k$.
\end{fact}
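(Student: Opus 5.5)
The statement is Fact~\ref{fact:suppw}: after Stage~1, $m = |\supp(w)| \le 2k$. I would derive it by combining the mass-preservation Fact~\ref{fact:equal-1norm} with the $1/2$-lower-boundedness of Lemma~\ref{lem:1/2-lb}, in a short counting argument.

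\emph{Step 1.} Use Lemma~\ref{lem:1/2-lb} to bound the solution below. It says that after Stage~1 every $p \in \supp(w)$ has $\ytilde_p \ge 1/2$, and every other point has $\ytilde_p = 0$. Since all entries of $\Bytilde$ are nonnegative, this gives
\[
\|\Bytilde\|_1 ~=~ \sum_{p \in \supp(w)} \ytilde_p ~\ge~ \tfrac{1}{2}\cdot|\supp(w)| ~=~ \tfrac{m}{2}.
\]

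\emph{Step 2.} Use Fact~\ref{fact:equal-1norm} to bound it above. Stage~1 only moves fractional mass from points outside $\supp(w)$ onto points of $\supp(w)$, so $\|\Bytilde\|_1 = \|\By\|_1 = k$. Combining the two steps gives $m/2 \le k$, i.e., $m \le 2k$.

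The only point that needs care is that both cited results refer to the \emph{same} vector, namely $\Bytilde$ immediately after Line~\ref{alg:step2:aggregate}, and not the $\{0,1/2,1\}$-valued vector returned at the end of the algorithm. Both statements are explicitly phrased for the Stage~1 output, so this holds.

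No step is a real obstacle; the substantive work, proving the $1/2$ lower bound via \cite[Lemma~5]{CharikarGTS99} and the separation Property~\ref{step1-p1}, is already done in Lemma~\ref{lem:1/2-lb}.
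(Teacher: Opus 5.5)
Your proposal is correct and is the same argument the paper uses. The paper states the fact as a direct consequence of Fact~\ref{fact:equal-1norm} and Lemma~\ref{lem:1/2-lb}, i.e., $k = \|\Bytilde\|_1 = \sum_{p\in\supp(w)} \ytilde_p \ge m/2$.
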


Then, the following lemma relates the cost of the solution $\Bytilde$ (immediately after Stage 1 ends) to that of the input solution $\By$.

\begin{lemma}\label{lem:step2.1}
    The solution $\Bytilde$ computed by Stage 1 (Lines~\ref{alg:step2:init}~to~\ref{alg:step2:aggregate}) satisfies that
    \begin{equation*}
        \cost_w(\Bytilde) \le \Gamma^{O(z)}\cdot \cost_w(\By).
    \end{equation*}
\end{lemma}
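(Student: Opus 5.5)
We bound $\cost(p,\Bytilde)$ separately for each $p\in\supp(w)$ and then sum the per-point bounds with weights $w(p)$. Fix $p\in\supp(w)$, and let $\Bx^*\sim\By$ be an optimal assignment, so that $\cost(p,\By)=\sum_{q}x^*_{p,q}\dist^z(p,q)$. Stage 1 moves the mass $y_q$ of every $q\notin\supp(w)$ to $\pi(q):=\nn(q,\supp(w))$; for $q\in\supp(w)$ set $\pi(q):=q$. We take as candidate assignment for $p$ the pushed-forward assignment $x'_{p,t}:=\sum_{q:\pi(q)=t}x^*_{p,q}$. It satisfies $\sum_t x'_{p,t}=1$, and $x'_{p,t}\le\sum_{q:\pi(q)=t}y_q=\ytilde_t$, so $\Bx'\sim\Bytilde$ is feasible. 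Hence
\begin{equation*}
\cost(p,\Bytilde)~\le~\sum_q x^*_{p,q}\,\dist^z(p,\pi(q)).
\end{equation*}

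The next step is to bound each term $\dist(p,\pi(q))$. By the triangle inequality and the ANN guarantee,
\begin{equation*}
\dist(p,\pi(q))~\le~\dist(p,q)+\dist(q,\pi(q))~\le~\dist(p,q)+\Gamma\cdot\dist(q,\supp(w))~\le~(1+\Gamma)\,\dist(p,q),
\end{equation*}
where the last inequality holds because $p\in\supp(w)$ gives $\dist(q,\supp(w))\le\dist(q,p)$. This also covers the case $q\in\supp(w)$, where $\pi(q)=q$ and $\dist(p,\pi(q))=\dist(p,q)$. Therefore
\begin{equation*}
\cost(p,\Bytilde)~\le~(1+\Gamma)^z\sum_q x^*_{p,q}\dist^z(p,q)~=~\Gamma^{O(z)}\cost(p,\By).
\end{equation*}
Multiplying by $w(p)$ and summing over $\supp(w)$ gives $\cost_w(\Bytilde)\le\Gamma^{O(z)}\cost_w(\By)$, as required.

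The only point needing care is the feasibility of the pushed-forward assignment, since the lemma compares $\cost$ under the two openings $\By$ and $\Bytilde$. Points with zero weight need no bound at all. For $p\in\supp(w)$, the bound $x^*_{p,q}\le y_q$ aggregates correctly over each fiber $\pi^{-1}(t)$: the $\ytilde_t$ after Stage 1 is exactly $\sum_{q\in\pi^{-1}(t)}y_q$, because every $t\in\supp(w)$ keeps its own $y_t$ and receives the mass of all $q\notin\supp(w)$ with $\pi(q)=t$. This holds even though Stage 1 performs the transfers sequentially, since it moves only the original values $y_q$ of non-support points to support points. Beyond this, the argument needs no property of $w$ other than $p\in\supp(w)$; in particular, the separation Property~\ref{step1-p1} is not used.
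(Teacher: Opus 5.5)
Your proposal is correct and follows the paper's proof essentially step for step. It uses the same map $\pi$, the same pushed-forward assignment $\tilde{x}_{p,q'}=\sum_{q:\pi(q)=q'}x^*_{p,q}$ (feasible because $\tilde{y}_{q'}=\sum_{q:\pi(q)=q'}y_q$), and the same bound $\dist(p,\pi(q))\le(1+\Gamma)\dist(p,q)$ obtained from $p\in\supp(w)$, which the paper likewise derives without using the separation property.
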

\begin{proof}
    We first explicitly express the solution $\Bytilde$ computed in Stage 1.
    For every $q \in P$, we define $\pi(q) \in P$ as follows:
    \begin{itemize}
        \item If $w(q) = 0$ (i.e., $q \in P \setminus \supp(w)$), then $\pi(q) := \nn(q, \supp(w))$.
        \item If $w(q) > 0$ (i.e., $q\in \supp(w)$), then $\pi(q) := q$. 
    \end{itemize}
    According to Stage 1 (Lines~\ref{alg:step2:init}~to~\ref{alg:step2:aggregate}), for every $q'\in P$, we have 
    \begin{align*}
        \ytilde_{q'} ~:=~ \sum_{q\in P: \pi(q) = q'} y_{q}.
    \end{align*}
    Consider fixed points $p\in \supp(w)$ (i.e., $w(p) > 0$) and $q\in P$.
    If $w(q) = 0$, then we have 
    \begin{align*}
        \dist(p, \pi(q)) ~&\le~ \dist(p, q) + \dist(q, \pi(q)) \\
        \mr{$\pi(q) = \nn(q, \supp(w))$} &\le~ \dist(p,q) + \Gamma\cdot \dist(q, \supp(w))\\
        \mr{$p\in \supp(w)$} &\le (1+ \Gamma )\cdot \dist(p, q).
    \end{align*}
    If $w(q) > 0$, then by definition we have $\pi(q) = q$, and thus $\dist(p, \pi(q)) = \dist(p,q)$. Overall, we have 
    \begin{equation}\label{eq:dist-pi}
        \dist(p,\pi(q)) ~\le~ (1+\Gamma)\cdot \dist(p,q).
    \end{equation}

    We now define a feasible assignment with respect to $\Bytilde$. Let $\Bx^*\sim \By$ be the optimal feasible assignment with respect to $\By$. We define a new assignment 
    $\bm{\tilde{x}}$ as follows: 
    For every $p \in P$ and $q' \in P$,
    \begin{equation*}
        \tilde{x}_{p,q'} ~:=~ \sum_{q\in P: \pi(q) = q'} x^*_{p, q} 
    \end{equation*}
    Clearly, the resulting assignment $\bm{\tilde{x}}\sim \Bytilde$ (i.e., $\bm{\tilde{x}}$ is feasible with respect to $\Bytilde$).
    
    Now, we are ready to prove \Cref{lem:step2.1}. It suffices to show that $\cost(p, \Bytilde) \le \Gamma^{O(z)} \cdot \cost(p, \By)$ for a fixed $p \in P$ with $w(p) > 0$; the lemma then follows directly, since by definition $\cost_w(\Bytilde) = \sum_{p \in P} w(p) \cdot \cost(p, \Bytilde)$ and $\cost_w(\By) = \sum_{p \in P} w(p) \cdot \cost(p, \By)$.
    Since $\bm{\tilde{x}}\sim \Bytilde$, we have 
    \begin{align*}
        \cost(p, \Bytilde) ~&\le~ \sum_{q'\in P} \tilde{x}_{p,q'} \cdot \dist^z(p,q')\\
        &=~ \sum_{q'\in P} \sum_{q\in P: \pi(q) = q'} x^*_{p, q}\cdot \dist^z(p, q')\\
        &=~ \sum_{q\in P} x^*_{p, q}\cdot \dist^z(p, \pi(q))\\
        \mr{\eqref{eq:dist-pi}}&\le~ (1+\Gamma)^{z}\cdot \sum_{q\in P} x^*_{p,q}\cdot \dist^z(p, q)\\
        \mr{Def. of $\Bx^*$}&=~ (1+\Gamma)^z\cdot \cost(p, \By).
    \end{align*}
    This finishes the proof.
\end{proof}

\paragraph{Analysis of Stage 2.} 
Then, we verify that Stage 2 (Lines~\ref{alg:step2:hatdist}~to~\ref{alg:step2:return}) further transforms the solution computed by Stage 1 (Lines~\ref{alg:step2:init}~to~\ref{alg:step2:aggregate}) into a desired $\{0,1/2,1\}$-solution satisfying Properties~\ref{step2-p3}, \ref{step2-p1}, and \ref{step2-p2}.

Property~\ref{step2-p3} follows directly from the construction and \Cref{fact:suppw}: $\|\Bytilde\|_1 = \frac{1}{2}\cdot (m + \min\{2k - m, m\}) \le k$.
Property~\ref{step2-p1} is directly ensured by Stage~2 (Lines~\ref{alg:step2:hatdist}~to~\ref{alg:step2:return}). Hence, in the rest of the analysis, we focus on establishing Property~\ref{step2-p2}.

Let $\Bytilde^{(1)}$ and $\Bytilde^{(2)}$ denote the $1/2$-lower bounded solution computed by Stage 1 (Lines~\ref{alg:step2:init} to \ref{alg:step2:aggregate}) and the $\{0, 1/2, 1\}$-solution obtained after Stage 2 (Lines~\ref{alg:step2:hatdist} to \ref{alg:step2:return}), respectively.
Since both $\Bytilde^{(1)}$ and $\Bytilde^{(2)}$ are $1/2$-lower-bounded, every point $p \in P$ with $w(p) > 0$ is assigned to at most two centers: $p$ itself and its nearest neighbor in $\supp(w)$.
Formally, we have 
\begin{align*}
    \cost_w(\Bytilde^{(1)}) ~&=~ \sum_{p\in \supp(w)} w(p)\cdot \left[1 - \ytilde^{(1)}_p \right]^{+}\cdot \dist^z(p, \supp(w) \setminus \{p\}) \\
    &\ge~ \Gamma^{-z}\cdot \sum_{p\in \supp(w)} w(p)\cdot \left[1 - \ytilde^{(1)}_p \right]^{+}\cdot \big(\widehat{\dist}(p)\big)^z\\
    \cost_w(\Bytilde^{(2)}) ~&=~ \sum_{p\in \supp(w)} w(p)\cdot \left[1 - \ytilde^{(2)}_p \right]^{+}\cdot \dist^z(p, \supp(w) \setminus \{p\})\\
    &\le~ \sum_{p\in \supp(w)} w(p)\cdot \left[1 - \ytilde^{(2)}_p \right]^{+}\cdot \big(\widehat{\dist}(p)\big)^z.
\end{align*}
Observe that, to construct a $1/2$-lower bounded solution $\Bytilde$ with $\|\Bytilde\|_1 \le k$ that minimizes the value $\sum_{p \in \supp(w)} w(p) \cdot [1 - \ytilde_p]^{+} \cdot \big(\widehat{\dist}(p)\big)^z$, the greedy construction in Stage~2 is optimal.
Therefore, $\cost_w(\Bytilde^{(2)}) ~\le~ \Gamma^z\cdot \cost_w(\Bytilde^{(1)})$.
Combining the above result with \Cref{lem:step2.1}, we conclude that Property~\ref{step2-p2} holds.

This finishes the proof of \Cref{lem:correctness-rounding-step2}.
\qed     \end{appendices}
\end{document}